\documentclass[11pt]{article}
\usepackage{library}

\newcommand{\rev}{\mathrm{rev}}

\newcommand{\Res}{\mathrm{Res}}
\newcommand{\Disc}{\mathrm{Disc}}
\newcommand{\Esym}{\mathsf{Esym}}
\newcommand{\Trunc}{\hspace{4pt}\mathsf{Trunc}\hspace{4pt}}
\newcommand{\size}{\mathrm{size}}
\newcommand{\depth}{\mathrm{depth}}
\newcommand{\Syl}{\mathrm{Syl}}
\newcommand{\Char}{\mathrm{char}}

\newcommand{\N}{\mathbb N}

\newcommand{\C}{\mathbb C}
\newcommand{\K}{\mathbb K}
\newcommand{\F}{\mathbb F}
\newcommand{\Q}{\mathbb Q}

   \newcommand{\cB}{{\mathcal{B}}}
\newcommand{\cC}{{\mathcal{C}}}   \newcommand{\cD}{{\mathcal{D}}}
   
\newcommand{\cG}{{\mathcal{G}}}   
   
   \newcommand{\cL}{{\mathcal{L}}}
   
\newcommand{\cO}{{\mathcal{O}}}   
   
   \newcommand{\cT}{{\mathcal{T}}}

\newcommand{\bZ}{\mathbf{Z}}

\newcommand{\hf}{{\hat{f}}}
\newcommand{\hg}{{\hat{g}}}

\newcommand{\tf}{{\tilde{f}}}
\newcommand{\tg}{{\tilde{g}}}

\newcommand{\abs}[1]{\left| {#1} \right|}

\usepackage{amsthm}
\usepackage{aliascnt}
\usepackage{hyperref}
\usepackage{cleveref}

\newtheorem{theorem}{Theorem}[section]

\newaliascnt{lemma}{theorem}
\newtheorem{lemma}[lemma]{Lemma}
\aliascntresetthe{lemma}

\newaliascnt{proposition}{theorem}

\aliascntresetthe{proposition}

\newaliascnt{corollary}{theorem}
\newtheorem{corollary}[corollary]{Corollary}
\aliascntresetthe{corollary}

\theoremstyle{definition}

\newaliascnt{definition}{theorem}
\newtheorem{definition}[definition]{Definition}
\aliascntresetthe{definition}

\newaliascnt{remark}{theorem}
\newtheorem{remark}[remark]{Remark}
\aliascntresetthe{remark}

\newaliascnt{example}{theorem}

\aliascntresetthe{example}

\newaliascnt{claim}{theorem}
\newtheorem{claim}[claim]{Claim}
\aliascntresetthe{claim}

\crefname{theorem}{Theorem}{Theorems}
\crefname{lemma}{Lemma}{Lemmas}
\crefname{proposition}{Proposition}{Propositions}
\crefname{corollary}{Corollary}{Corollaries}
\crefname{definition}{Definition}{Definitions}
\crefname{remark}{Remark}{Remarks}
\crefname{example}{Example}{Examples}
\crefname{claim}{Claim}{Claims}

\numberwithin{equation}{section}

\DeclareMathOperator{\Diag}{Diag}

\newcommand{\X}{\mathbf{X}}
\def\poly{\operatorname{poly}}
\newcommand{\0}{\bm 0}
\newcommand{\balpha}{\bm \alpha}

\RestyleAlgo{ruled}

\title{Deterministic Algorithms for Low Individual Degree Factors of Sparse Polynomials }
\author{Somnath Bhattacharjee\thanks{University of Toronto, Canada. Email: \texttt{somnath.bhattacharjee@mail.utoronto.ca}. Research partially supported
by an NSERC Discovery Grant.}\and Rishabh Kothary\thanks{University of Toronto, Canada. Email: \texttt{rishabh.kothary@mail.utoronto.ca}.  Research partially supported
by an NSERC Discovery Grant.}\and Shanthanu S. Rai\thanks{Tata Institute of Fundamental Research, Mumbai, India. Email: \texttt{shanthanu.rai@tifr.res.in}. Research supported by the Department of Atomic Energy, Government of India, under project number RTI4014, TCS Research Fellowship and in part by research grants from Google Research, Premji Invest and SERB.}\and Shubhangi Saraf\thanks{University of Toronto, Canada. Email: \texttt{shubhangi.saraf@utoronto.ca}.   Research partially supported
by an NSERC Discovery Grant and a McLean Award.}}
\date{}

\begin{document}
\maketitle
\begin{abstract}
    We study factoring algorithms for general sparse polynomials and sparse polynomials of bounded individual degree and prove the following results.

\begin{enumerate}
\item
We give a deterministic polynomial-time algorithm which takes as input an $n$-variate $s$-sparse polynomial
$f$ of bounded individual degree $d$ and outputs a list $\cL$ of circuits which
contains all factors of $f$, although there might be additional spurious circuits in the list. The algorithm runs in time $\poly(n, s^d)$. Additionally, we can ensure that every circuit in the list has constant depth. Our algorithm works over all fields of characteristic 0 or sufficiently large characteristic.

Our result generalizes a recent result of Chuyoon and Shpilka~\cite{chuyoon2026factorization} that gives a $\poly(n, s^d)$-time algorithm for recovering all \emph{sparse} factors of $f$ (without spurious factors). As a corollary, we can also recover \emph{all} factors of $f$ (without spurious factors) in time $\poly(n, s^{d^2 \log n})$, and recover the algorithmic result of Bhargava, Saraf and Volkovich~\cite{BSV20} and its improvement by Chuyoon and Shpilka~\cite{chuyoon2026factorization}. Both the above consequences follow from known interpolation and divisibility testing techniques.

\item We give a deterministic quasipolynomial-time algorithm which takes as input a general $n$-variate $s$-sparse polynomial
$f$ of (unbounded) individual degree $D$ and outputs a list $\cL$ of polynomials which
contains all factors of $f$ that have bounded individual degree $d$. The algorithm runs in time $\poly(D^{d \log s}, s^{d^2 \log n})$ and works over arbitrary fields. The list may again contain spurious elements. Our result strengthens results of Dutta, Sinhababu and Thierauf~\cite{DST24} and Kumar, Ramanathan and Saptharishi~\cite{KRS24} which give algorithms to recover all factors of $f$ of bounded total degree.

A consequence of our algorithm is a new upper bound on the total number of bounded individual degree factors of a sparse polynomial.

\end{enumerate}
\end{abstract}
\newpage
\tableofcontents
\newpage

\section{Introduction}

Polynomial factorization is a central problem in computational algebra and algebraic complexity, with connections to coding theory, cryptography, pseudorandomness and derandomization. We refer the reader to the recent survey of Bhargav, Dwivedi, and Saxena~\cite{BDSSurvey25} for a detailed account.

Foundational work by Kaltofen~\cite{Kaltofen89}, and also Kaltofen and Trager~\cite{KT90}, showed that multivariate polynomial factorization can be performed efficiently by randomized algorithms when the input is given by an algebraic circuit. These works also established the remarkable structural fact that factors of low-degree polynomials computed by small algebraic circuits themselves have small algebraic circuits. Even prior to these works, in the sparse setting, von zur Gathen and Kaltofen gave randomized algorithms for factoring sparse multivariate polynomials, with running time depending polynomially on the sparsity of the input and of the output factors~\cite{vzGK85}.
Some very natural questions that arise are the following: can such algorithms be derandomized, and can one prove strong upper bounds on the complexity of factors of sparse polynomials?

By the work of Kopparty, Saraf and Shpilka~\cite{KSS15}, we know that for general algebraic circuits, derandomizing multivariate factorization is essentially \emph{equivalent} to derandomizing polynomial identity testing. Since this work, there has been significant progress on understanding if this equivalence extends to other subclasses of algebraic circuits, and this work will focus on the class of sparse polynomials. 

In a recent work, Bhattacharjee, Kumar, Rai, Ramanathan, Saptharishi and Saraf proved closure under factorization for formulas and constant-depth circuits, and using this closure, extended the connection of derandomizing PIT and derandomizing factoring to these restricted but fundamental algebraic models \cite{BKRSS25, BKRRSS25a}. Thus, for several natural circuit classes, deterministic factorization and PIT are now known to be tightly linked. Despite this progress, we are still quite far from understanding whether this equivalence is true for sparse polynomials.
On the one hand, sparse polynomials are among the simplest algebraic models, and efficient deterministic PIT and reconstruction algorithms for them have been known since the work of Klivans and Spielman~\cite{KS01} and related earlier works~\cite{GK87, BOT88, GKS90, CDGK91}. On the other hand, we are still far from obtaining a deterministic polynomial-time algorithm for factoring general sparse polynomials.
There has nevertheless been significant progress on important special cases.

For constant-depth circuits, and in particular for sparse polynomials, Kumar, Ramanathan, Saptharishi and Volk gave a deterministic subexponential-time algorithm that outputs a list of candidate circuits containing all relevant irreducible factors, but the list may also contain spurious circuits \cite{KRSV26}.
This issue of having spurious circuits in the list was overcome in the constant-depth (and hence also sparse) setting by \cite{BKRSS25}, who gave a deterministic subexponential-time algorithm that outputs all irreducible factors of a constant-depth circuit, together with their multiplicities. Their algorithm yields, in particular, the first subexponential-time deterministic algorithm for factoring sparse polynomials, since sparse polynomials are depth-two circuits. The subsequent closure-under-factorization result of \cite{BKRRSS25a} gave a cleaner structural explanation and strengthened the connection between PIT and factorization for constant-depth circuits and formulas. However, these general constant-depth algorithms are still subexponential, and do not exploit the full special structure of sparse polynomials.

Kumar, Ramanathan and Saptharishi, and independently Dutta, Sinhababu and Thierauf, gave quasipolynomial-time deterministic algorithms for recovering all constant-degree factors of sparse polynomials \cite{KRS24,DST24}\footnote{These works also do more. \cite{KRS24} gave a subexponential-time algorithm for outputting all the constant-degree factors of constant-depth circuits. \cite{DST24} showed that computing all the constant-degree factors of a polynomial in a class $\cC$ reduces to PIT for the class $\cC$ and divisibility tests of the polynomial by constant-degree polynomials.}.
However, to recover higher degree factors, very little is known. Indeed, in the setting of sparse polynomials, other than subexponential algorithms of~\cite{BKRSS25, BKRRSS25a}, we know nothing better even for the task of finding multilinear factors.
The following is an open question from~\cite{DST24}: \emph{Can we find bounded individual degree factors of a sparse polynomial (that may not have bounded individual degree) in deterministic quasi-polynomial time?}.

A more tractable and very natural setting is that of factoring sparse polynomials of bounded individual degree. Here the input polynomial is sparse, and in addition the exponent of each variable is at most some bounded parameter $d$. The cases $d=1$ and $d=2$ were handled by Shpilka and Volkovich and by Volkovich, respectively, who gave efficient deterministic factorization algorithms for sparse multilinear and sparse multiquadratic polynomials \cite{SV10,Volkovich17}. For general constant $d$, Bhargava, Saraf and Volkovich gave the first quasipolynomial-time deterministic algorithm for factoring sparse polynomials of bounded individual degree \cite{BSV20}.

A key ingredient in their work was a structural theorem: if an $s$-sparse $n$-variate polynomial has individual degree at most $d$, then every factor has sparsity at most $s^{O(d^2 \log n)}$. The algorithmic framework by \cite{BSV20} has the useful feature that an improved sparsity bound would immediately yield an improved factoring algorithm; in particular, a polynomial sparsity bound would imply a polynomial-time deterministic factorization algorithm in this setting.

Further evidence for this conjectural polynomial sparsity bound was given by Bisht and Volkovich, who obtained polynomial-time algorithms for several problems that would follow from such a bound, including identity testing for certain depth-four circuits and testing whether a sparse polynomial is an exact power \cite{BV25}.

Very recently, Chuyoon and Shpilka made further progress and gave deterministic polynomial-time algorithms for finding all sparse divisors of bounded-individual-degree sparse polynomials, and more generally developed algorithms for recovering sparse factors in several related settings \cite{chuyoon2026factorization}. This shows that, at least if one is only interested in factors that are themselves sparse, polynomial-time deterministic algorithms are possible in the bounded-individual-degree regime. However, the full factorization problem remains elusive, since the currently known bounds do not rule out factors of sparse polynomials (with bounded individual degree) having super-polynomially many monomials.

 \subsection{Our Results}
 In this work, we make progress toward deterministic polynomial-time factorization of sparse polynomials of bounded individual degree, as well as the problem of finding bounded individual degree factors of general sparse polynomials.

 We first state our results informally.

\begin{enumerate}

 \item Our first main result gives a deterministic polynomial-time algorithm which, on input an $s$-sparse polynomial of individual degree at most $d$, outputs a list of candidate factors that contains all factors of the input polynomial that are not divisible by a monomial. The list may also contain spurious candidates. Thus our result is in the spirit of the list-producing algorithm of Kumar, Ramanathan, Saptharishi and Volk~\cite{KRSV26} for constant-depth circuits, but in the bounded-individual-degree sparse setting we obtain a polynomial-time algorithm rather than a subexponential-time one.

 Our algorithm also recovers and strengthens some of the earlier mentioned results. In the bounded-individual-degree setting, efficient reconstruction and divisibility testing for sparse polynomials allow us to recover the polynomial-time sparse-divisor algorithms of Chuyoon and Shpilka~\cite{chuyoon2026factorization}, as well as the quasipolynomial-time factorization algorithm of Bhargava, Saraf and Volkovich~\cite{BSV20}.

\item In the setting of general sparse polynomials, we give a quasipolynomial-time deterministic algorithm that outputs a list containing all factors of individual degree at most $d$, again allowing the possibility of spurious candidates.

Thus we make progress towards the aforementioned open question of~\cite{DST24}. Our results essentially show that factoring in this setting reduces to divisibility testing. If we could efficiently test whether a sparse polynomial of bounded individual degree divides another sparse polynomial, then we would be able to prune out the list and output only the true factors.

When specialized to factors of constant total degree, known reconstruction and divisibility-testing tools~\cite{Forbes15} allow the list to be pruned, thus recovering the quasipolynomial-time algorithms of Kumar, Ramanathan and Saptharishi~\cite{KRS24} and of Dutta, Sinhababu and Thierauf~\cite{DST24} for constant-degree factors of sparse polynomials.

\end{enumerate}
We hope that our results are a useful step toward the long-standing goal of deterministic polynomial-time factorization of sparse polynomials.
We now state our results more formally.

\paragraph{Factoring sparse polynomials of bounded individual degree:}
As our first result, we give a deterministic polynomial-time algorithm which takes as input an $n$-variate $s$-sparse polynomial
$f$ of bounded individual degree $d$ and outputs a list $\cL$ of circuits which
contains all factors of $f$ that are not divisible by a monomial. The algorithm runs in time $\poly(n, s^d)$.

The list may contain spurious circuits which do not correspond to factors of $f$. Additionally, we can ensure that every circuit in the list has constant depth.

\begin{theorem}[Main Theorem 1]\label{thm:main1}
   Let $\F$ be any field such that $\Char(\F)$ is $0$ or greater than $d$. Let $f(X_1,\dots, X_n)\in \F[X_1, \dots, X_n]$ be an $s$-sparse polynomial with individual degree at most $d$. Then there is an algorithm that runs in time $\poly(s^d, d!, n)$ and outputs an $s^d$-size list of circuits $\cL$ such that every factor of $f$ that is not divisible by a monomial is computed by some circuit in $\cL$. The algorithm also outputs the maximal monomial divisor of $f$.

   Moreover, each circuit in $\cL$ has size $\poly(s^d, n)$ and depth $\cO(1)$, and if  $|\F| > \poly(n, s^d)$ the circuits are over $\F$.
   \end{theorem}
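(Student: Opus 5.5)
Monomial factors are handled directly: the maximal monomial divisor of $f$ is $\prod_i X_i^{e_i}$, where $e_i$ is the minimum exponent of $X_i$ over the monomials of $f$. We divide it out and may assume $f$ has no monomial factor. From now on we aim to produce a list of $s^d$ candidate constant-depth circuits that contains every factor $g \mid f$.

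\textbf{Step 1: a shift that exposes factors through low-degree roots.} The rough idea is to reduce to bivariate factoring plus Newton iteration (Hensel lifting), but with the lifting done in a way that yields constant-depth circuits. The specific plan is to pick a variable, say $X_1$. Since the individual degree is at most $d$, write $f = \sum_{j\le d} f_j(X_2,\dots,X_n) X_1^j$, with each $f_j$ $s$-sparse. Any factor $g$ has $\deg_{X_1} g \le d$. We would like to view $g$ as the minimal-polynomial-type product $\prod_{\alpha\in S}(X_1-\phi_\alpha)$, where the $\phi_\alpha$ are power-series roots of $f$ in $X_1$ after a suitable shift $\X \mapsto \X+\balpha$. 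The set $S$ ranges over subsets of at most $d$ roots, giving at most $2^d$ choices per root structure. To get the $s^d$ bound, we instead guess the coefficient data: the factor $g$ is determined, up to scaling, by its restriction to a low-dimensional subspace, and over a bounded-individual-degree sparse $f$ one can make $f(\balpha + \cdot)$ squarefree and monic in $X_1$ using a hitting set for $s^{O(d)}$-sparse polynomials (the discriminant $\Disc_{X_1}(f)$ is a $(2d-1)\times(2d-1)$ determinant in the $f_j$, hence has sparsity at most $s^{2d}$). The Klivans--Spielman hitting set then gives a $\poly(n,s^d)$-size set of shifts, one of which works.

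\textbf{Step 2: constant-depth root approximation.} With $\tf(X_1,\mathbf{Y}) = f(\balpha+\cdot)$ squarefree and monic, the roots $\phi$ at $\mathbf{Y}=\0$ are among the $\le d$ roots of a univariate polynomial. The $s^d$ guesses come from choosing which univariate factor over $\F$ to lift; there are at most $d$ of these per shift. Each root is then lifted to a power series in $\mathbf{Y}$ using the explicit constant-depth formula for roots via the Taylor expansion $\phi = \sum_k \frac{(-1)^k}{k!}\dots$ (Lagrange inversion style). Here $\Char(\F)>d$ or $0$ and the $d!$ factor in the runtime enter. This is the approach used in \cite{BKRRSS25a} to show that roots of constant-depth circuits have constant depth. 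The truncation degree is the total degree of $g$, which is at most $nd$; the homogeneous components can be extracted by interpolation in a constant-depth way. Multiplying out the chosen roots (and symmetrizing over Galois conjugates to return to $\F$, which is where $|\F|$ large is used) gives a constant-depth circuit of size $\poly(s^d,n)$ for the truncation of $g$. Since $g$ has degree at most $nd$, the truncation equals $g$.

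\textbf{Main obstacle.} The main obstacle is keeping the circuits small. Naive Newton iteration on an $n$-variate $f$ produces circuits whose size depends on $\deg f$ in a non-constant-depth way. More critically, a general subset of roots could yield exponentially many candidates. I would first try to show that the irreducible factors of $\tf(X_1,\0)$ (at most $d$) and their products (at most $2^d \le s^d$) suffice, with one candidate per product per shift. I would also try to bound the depth via the Lagrange-inversion expression for $\phi$ in terms of the sparse coefficients $f_j$. That expression has depth $O(1)$ and size $\poly(s^{d}, n)$ once truncated to degree $nd$.
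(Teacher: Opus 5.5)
There is a genuine gap in Steps 1--2: a shift $\X\mapsto\X+\balpha$ cannot make $f$ monic in $X_1$. The leading coefficient $f_k(X_2,\dots,X_n)$ is in general a non-constant polynomial. After shifting it only becomes a unit in the power-series ring.

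So for a factor $g=\sum_{i\le t}g_iX_1^i$ of $f$, the product of the corresponding power-series roots is $\prod_{\alpha\in S}(X_1-\phi_\alpha)=g/g_t$ with $g_t\in\F[X_2,\dots,X_n]$. This is a genuine power series, and truncating it at degree $nd$ does not return $g$. Factors of $f$ not involving $X_1$ (these divide every $f_j$) are never produced by any subset of roots at all.

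This is not a corner case. For $f=\prod_{i=1}^{\log s}(X_i^d-1)$ over $\C$ there are $2^{d\log s}=s^d$ factors not divisible by a monomial. A fixed shift yields only the $2^d$ subset-products of the roots in $X_1$. Making $f$ monic by a generic linear change of variables instead destroys sparsity and raises the $X_1$-degree to about $nd$, giving $2^{nd}$ subsets. Separately, if $f$ has a repeated factor then $\Disc_{X_1}(f)\equiv 0$, so no shift makes it squarefree. You need the weaker ``Hensel-ready'' condition instead.

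The missing idea is the paper's recursion on the leading or constant coefficient. Write $f=\sum_{i=0}^k f_iX_n^i$ with $f_0,f_k\neq 0$ (after removing the monomial divisor); one of $f_0,f_k$ has sparsity at most $s/2$.

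\begin{itemize}
\item If it is $f_k$, pass to the normalization $\hf=f(\X_{n-1},X_n/f_k)f_k^{k-1}$, which is monic of $X_n$-degree at most $d$. Each factor $g$ yields the monic factor $\hg=g(\X_{n-1},X_n/f_k)f_k^t/g_t$ of $\hf$, and $g=\hg(\X_{n-1},X_nf_k)\,g_t/f_k^t$ with $g_t\mid f_k$. Your power-series-root method is the right tool for the monic piece $\hf$ and gives $2^d$ candidates. The candidates for $g_t$, and the $X_n$-free factors, must come from a recursive call on $f_k$.
\item If $f_0$ is the sparser one, use the reverse normalization $\tf=f(\X_{n-1},X_nf_0)/f_0$ and $g_0\mid f_0$ symmetrically.
\end{itemize}

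Recursing on the sparser coefficient gives depth $\log s$ and $L(s)\le 2^dL(s/2)$, which is where $s^d$ comes from.

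Two further issues arise once you normalize:
\begin{itemize}
\item $\hf$ can be $s^d$-sparse, so $\Disc_{X_n}(\hf)$ has sparsity about $s^{d^2}$. The paper keeps the running time $\poly(s^d)$ by choosing $\balpha$ to preserve the rank of $\Syl_{X_n}(f,\partial_{X_n}f)$ for the original $f$, which is related to that of $\hf$ by diagonal scalings. This also handles repeated factors.
\item The combination step recovers factors only up to monomial multiples and introduces divisions by powers of $f_k$ or $f_0$. The paper removes these at the end by division elimination and Klivans--Spielman-based stripping of monomial divisors.
\end{itemize}
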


\begin{remark}
The above runtime suppresses polynomial dependence on the running time of factoring degree $d$ univariate polynomials over $\F$. Over $\Q$, this is just polynomial in $d$ and the input bit complexity. However, for fields of characteristic $p$, there is also a polynomial dependence on $p$. See \cref{thm:compute-all-factors-of-bounded-degree} and \cref{rem:dependence-on-field} for a more detailed statement.
\end{remark}

\begin{remark}
If $|\F|< \poly(n, s^d)$, we just work over a suitable field extension that makes the size sufficiently large for our computations. In this case, the output circuits will be over the field extension. 
\end{remark}
\begin{remark}
\label{rem:factor-not-divisible-by-monomial}
The algorithm outputs only factors not divisible by a monomial. This is unavoidable, since if the product of \(X_1^d\cdots X_n^d\) and \(g\) divides a sparse polynomial \(f\), then \(X_1^{i_1}\cdots X_n^{i_n}g\) also divides \(f\) for all \(0\leq i_1,\ldots,i_n\leq d\), yielding \((d+1)^n\) factors. The omitted monomial part is easy to recover: for \(f=\sum_{j=1}^s c_j X_1^{{\balpha}_{j, 1}} \cdots X_n^{{\balpha}_{j, n}}\), the maximal monomial divisor is \(\prod_{i=1}^n X_i^{\min_{j\in[s]}{\balpha}_{j,i}}\).
\end{remark}

\begin{remark}
Our algorithm implicitly gives an upper bound on the total number of factors of an $s$-sparse polynomial of bounded individual degree that are not divisible by a monomial. Such an upper bound was also obtained by Chuyoon and Shpilka~\cite{chuyoon2026factorization} using different techniques. 
\end{remark}

Given the above theorem, as a special case one can recover the recent result of Chuyoon and Shpilka~\cite{chuyoon2026factorization}\footnote{The~\cite{chuyoon2026factorization} result holds over all fields.}, which gives a $\poly(s^d, d!, n)$-time algorithm to return all $s$-sparse factors of an $s$-sparse polynomial with individual degree at most $d$.
Indeed, given the list of circuits produced by our algorithm, we can run a sparse polynomial interpolation algorithm (say by~\cite{KS01}) on each circuit to recover the sparse polynomials computed by them. We then test each interpolated polynomial for divisibility against the input sparse polynomial using the divisibility testing algorithm designed by \cite{Volkovich17}. This retains precisely the $s$-sparse divisors in the list and rejects all remaining candidates.

Another immediate corollary is a quasipolynomial-time deterministic algorithm to recover all factors of any $s$-sparse polynomial $f$ with bounded individual degree $d$. \cite{BSV20} showed that every factor of $f$ has quasipolynomial sparsity $\poly(n, s^{d^2 \log{n}})$. Using this, they obtained an algorithm that ran in time $\poly(n, s^{d^7 \log{n}})$. This was improved to $\poly(n, s^{d^2 \log{n}})$ in \cite{chuyoon2026factorization}.

We can recover this latter bound on running time as a corollary of \cref{thm:main1}. Indeed, by the quasipolynomial sparsity bound of~\cite{BSV20}, every factor of $f$ appears %
in the output list and has only  $\poly(n, s^{d^2 \log{n}})$ monomials if expressed in the monomial representation. As in the sparse-factor case discussed above, we run (quasipolynomial) sparse polynomial interpolation on each circuit in the list, and then use divisibility testing to retain precisely those interpolated polynomials that divide the input polynomial $f$. This yields all factors of $f$ within the same $\poly(n,s^{d^2\log n})$ running time.

We formally prove the above corollaries in \cref{cor:cs-and-bsv-generalization} (also see \cref{rem:cs-and-bsv-generalization}).

\paragraph{Factoring general sparse polynomials}
Note that when the input polynomial $f$ has bounded individual degree $d$, all factors of $f$ automatically have bounded individual degree $d$. We also provide efficient factoring algorithms for the case of \emph{general} sparse polynomials with no bound on individual degree. However, we only show how to recover bounded individual degree factors of $f$. As a first step, we show that number of factors of bounded individual degree of a general $s$-sparse polynomial cannot be too large.

 In the theorem below, it will be useful to think of $D$ as a large parameter, and $d$ as a small fixed constant or slowly growing parameter. In the statement below, we use $\binom{D}{\leq d}$ to denote the expression $\sum_{i=0}^d \binom{D}{i}$. Note that this quantity is upper bounded by $D^{O(d)}$.

\begin{theorem}[Divisor bound for general sparse polynomials]\label{thm:divbound}
Let $f$ be any $n$-variate $s$-sparse polynomial of individual degree at most $D$. Then $f$ can have at most $\binom{D}{\leq d}^{\log s}$ factors of individual degree at most $d$ that are not divisible by any monomial. Moreover, this bound is tight.
\end{theorem}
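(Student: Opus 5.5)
We would prove the upper bound by induction on the sparsity $s$, showing $N(s) \le \binom{D}{\le d}\cdot N(\lceil s/2\rceil)$, where $N(s)$ is the maximum number of individual-degree-$\le d$ factors not divisible by a monomial that an $s$-sparse, individual-degree-$\le D$ polynomial can have. Unrolling this from the base case $N(1)=1$ (a monomial has no such non-constant factor) gives $\binom{D}{\le d}^{\lceil \log s\rceil}$. Tightness comes from the product $f=\prod_{i=1}^{k}(X_i^{D}-1)$, which has $s=2^k$ monomials and individual degree $D$. Every product $\prod_i g_i$ with $g_i$ a monic divisor of $X_i^D-1$ of degree $\le d$ is a factor of individual degree $\le d$ that is not divisible by a monomial. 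Over a field containing $D$ distinct $D$-th roots of unity, there are exactly $\binom{D}{\le d}$ choices for each $g_i$, hence $\binom{D}{\le d}^{\log s}$ factors in total.

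For the inductive step we use Newton polytopes. If $g\mid f$, then $\mathrm{Newt}(f)=\mathrm{Newt}(g)+\mathrm{Newt}(h)$. So for any weight vector $w$, the $w$-lowest homogeneous part satisfies $f_{w,\min}=g_{w,\min}h_{w,\min}$, and similarly for the $w$-highest part. Assume $f$ is not a single monomial times a constant, and choose $w$ so that $f_{w,\min}\neq f_{w,\max}$. These two parts have disjoint supports, so one of them, say $f_{w,\min}$, has at most $s/2$ monomials. We then encode each factor $g$ by the pair consisting of:
\begin{itemize}
\item[(a)] the face restriction $g_{w,\min}$, divided by its maximal monomial divisor, which is a factor of the $\le s/2$-sparse polynomial $f_{w,\min}$ of individual degree $\le d$ and not divisible by a monomial;
\item[(b)] a ``lift'' datum that, given (a), determines $g$.
\end{itemize}
The number of choices in (a) is at most $N(s/2)$ by induction, after also noting that the monomial shift is determined by $\mathrm{Newt}(g)$. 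The main lemma to prove is therefore that, for a fixed choice in (a), at most $\binom{D}{\le d}$ factors $g$ remain.

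For the lifting lemma we would first try a one-parameter reduction. Substitute $X_i\mapsto t^{w_i}X_i$ and view $f$ as a polynomial in $t$ over $\F(\X)$, or equivalently restrict to a generic monomial curve. Here $w$ should be chosen as a primitive direction along which $f$ has $t$-degree at most $D$ up to normalisation, e.g.\ $w=e_i$ for a variable $X_i$ on which $f$ genuinely depends non-monomially. Then $g$ corresponds to a factor of $t$-degree $\le d$ of a univariate polynomial of degree $\le D$ over $\overline{\F(\X)}$. Such a factor is specified by a sub-multiset of size $\le d$ of the $\le D$ roots, and (a) pins down its normalisation, giving the $\binom{D}{\le d}$ count. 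When $f$ is squarefree in $t$ this is immediate; repeated roots only decrease the count of sub-multisets.

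The main obstacle is making the choice of $w$ compatible with both requirements at once. The direction must split the support so that one extreme face has at most $s/2$ terms, and along that same direction $f$ must have degree at most $D$, so the univariate count applies. Our first attempt is to take $w=e_i$ for a coordinate $X_i$ for which the lowest $X_i$-degree part and the highest $X_i$-degree part of $f$ differ. If the smaller side is not at most $s/2$ in the needed sense, we would refine by recursing on the smaller of the two extreme $X_i$-slices.
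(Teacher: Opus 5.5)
Once you commit to $w=e_i$, your argument is the counting recursion behind the paper's proof, which reads the bound off the list size in \cref{thm:general-sparse-recursive-algorithm}. The sparser of the two extreme $X_i$-slices (the paper's $f_0$ or $f_k$) has at most $\lfloor s/2\rfloor$ terms. A factor $g$ is then determined by one of at most $\binom{D}{\le d}$ root sub-multisets in $X_i$ (the paper's $\hat g$ or $\tilde g$, with the empty sub-multiset giving the factors that do not involve $X_i$), together with the matching extreme coefficient known up to a monomial; the requirement that $g$ have no monomial divisor fixes that monomial. Your tightness example is also the paper's.

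The ``obstacle'' you raise does not occur, because the two extreme $X_i$-slices have disjoint supports, as you note yourself. The general Newton-polytope weight is unnecessary. Also recurse on $\lfloor s/2\rfloor$ rather than $\lceil s/2\rceil$: this gives exponent $\lfloor\log s\rfloor\le\log s$, whereas $\lceil\log s\rceil$ would exceed the claimed bound when $s$ is not a power of two.
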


We prove this theorem formally in \cref{sec:recovering-bounded-individual-degree-factors}.
Prior to this work, no such nontrivial bound was known even for the number of multilinear factors.

We also give a deterministic quasipolynomial-time algorithm which takes as input a general multivariate $s$-sparse polynomial $f$ of individual degree at most $D$ and outputs a list $\cL$ of polynomials which contains all factors of $f$ that have individual degree at most $d$. The list may contain additional spurious elements.

Again, in the theorem below, it will be useful to think of $D$ as a large parameter, and $d$ as a small fixed constant or slowly growing parameter.

\begin{theorem}[Main Theorem 2]\label{thm:main2}
   Let $\F$ be an arbitrary field. Let $f(X_1,\dots, X_n)\in \F[X_1, \dots, X_n]$ be an $s$-sparse polynomial with individual degree at most $D$. Then there is an algorithm that runs in time $\poly(s^{d^2 \log n}, \binom{D}{\leq d}^{\log s}, n)$ and outputs a $\binom{D}{\leq d}^{\log s}$-size list of polynomials $\cL$ such that every factor of $f$ that is not divisible by a monomial and that has individual degree at most $d$ lies in $\cL$.
\end{theorem}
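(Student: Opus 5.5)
The plan is to prove \cref{thm:main2} together with the divisor bound of \cref{thm:divbound}, through a single recursion on the sparsity $s$ whose depth is $\log s$. At each level there are at most $\binom{D}{\leq d}$ choices. Write $N(s)$ for the maximum number of factors of individual degree at most $d$, not divisible by a monomial, that an $s$-sparse $f$ of individual degree at most $D$ can have. The goal is the recursion $N(s)\le \binom{D}{\le d}\cdot N(\lceil s/2\rceil)$ with $N(1)=1$. The base case holds because a monomial has no non-constant factor that is not divisible by a monomial. The algorithm is the constructive version of this recursion: each candidate list at level $s/2$ is lifted to a candidate list at level $s$.

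\textbf{Halving step.} Let $g\mid f$ have individual degree at most $d$ and not be divisible by any monomial. Assume $f$ is not a monomial times a polynomial in fewer variables; otherwise first strip the maximal monomial divisor as in \cref{rem:factor-not-divisible-by-monomial}. Then some variable $X_i$ occurs in $f$ with at least two distinct degrees. Write $f=\sum_{j} f_j X_i^j$ and let $j_{\min},j_{\max}$ be the extreme degrees. The parts $f_{j_{\min}}$ and $f_{j_{\max}}$ have disjoint supports, so one of them has at most $s/2$ monomials. Take the trailing one for definiteness. Since $X_i\nmid g$, the trailing coefficient $g|_{X_i=0}$ is nonzero. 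Because trailing coefficients multiply, it divides $f_{j_{\min}}$. It is still not divisible by a monomial, and it has individual degree at most $d$. For the leading-coefficient case, use the leading coefficient $g_{\mathrm{top}}$ instead. This gives the recursive input: an $(s/2)$-sparse polynomial, whose admissible factors are at most $N(s/2)$ in number by induction. Each admissible factor of $f_{j_{\min}}$ is a candidate for $g|_{X_i=0}$.

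\textbf{Lifting step.} This is the main obstacle: showing that a fixed $g_0=g|_{X_i=0}$ extends to at most $\binom{D}{\le d}$ factors $g$. I would fix a point $\mathbf a$ for the other variables, at which $f$ stays of degree $j_{\max}$ in $X_i$. Then $g(\mathbf a,X_i)$ is a factor of degree at most $d$ of a univariate polynomial of degree at most $D$. Choosing which of the (at most $D$) roots, with multiplicity, it collects gives at most $\binom{D}{\le d}$ choices.

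The remaining claim is that $g$ is determined by the pair consisting of $g_0$ and the univariate factor $g(\mathbf a,X_i)$. I would try to prove this by a Hensel or Newton-iteration argument. First shift the other variables by $\mathbf a$. Then lift the factorization $f=g\cdot(f/g)$ in the ideal generated by the shifted variables, starting from the univariate factorization. The coprimality needed for uniqueness of lifting is the delicate point when $f$ has repeated factors. I would handle it by grouping $g$ with its full associated power of each irreducible factor, and using $g_0$ to pin down how the group splits. If that fails, I would fall back on taking $\mathbf a$ from a hitting set, so that distinct irreducible factors stay coprime after substitution. The tightness part of \cref{thm:divbound} I would get from a product construction over $\log s$ disjoint sets of variables. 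An example is $\prod_{k=1}^{\log s}(Y_k^D-1)$, which is $s$-sparse and has $\binom{D}{\le d}^{\log s}$ factors of degree at most $d$ when the $D$-th roots of unity are distinct.

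\textbf{Algorithm.} The algorithm mirrors the recursion. Recursively compute the candidate list for $f_{j_{\min}}$ or $f_{j_{\max}}$, whichever has at most $s/2$ monomials. For each candidate $g_0$ and each subset-factor of $f(\mathbf a,X_i)$ of degree at most $d$, perform the lifting. The result is recovered in monomial form by sparse interpolation~\cite{KS01}, using the sparsity bound $s^{O(d^2\log n)}$ of \cite{BSV20}. That bound applies because every candidate has individual degree at most $d$ and divides a polynomial that has bounded individual degree after the relevant truncation. This accounts for the $s^{d^2\log n}$ factor in the running time. The list size is $\binom{D}{\le d}^{\log s}$, and each lift costs $\poly(s^{d^2\log n},n,D)$, which gives the stated time bound. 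Working over an arbitrary field only requires univariate factorization and, if $|\F|$ is too small, a field extension for the choice of $\mathbf a$.
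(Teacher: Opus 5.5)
Your halving step matches the paper's. The gap is the lifting step, which you yourself flag as the main obstacle. It is not established, and the route you sketch cannot give the algorithm within the claimed bounds.

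Lifting from a single point $\mathbf a$ (Newton iteration on roots, or Hensel lifting of a univariate factorization) needs a point at which distinct irreducible factors of $f$ in $\F(\X_{-i})[X_i]$ stay coprime, and squarefree for root lifting. These factors can have $X_i$-degree up to $D$ and no sparsity bound, so no efficient deterministic way to hit their resultants is known. The Sylvester-matrix device from Main Theorem 1 would need PIT for minors of a $(2D-1)\times(2D-1)$ matrix, i.e.\ for $(2D)!\,s^{O(D)}$-sparse polynomials. That is far outside $\poly(s^{d^2\log n},\binom{D}{\le d}^{\log s},n)$ when $D\gg d$, and it is exactly the obstruction the paper identifies for unbounded $D$. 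Taking a union over many candidate points would multiply the list at each of the $\log s$ levels and destroy the $\binom{D}{\le d}^{\log s}$ bound.

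Over fields of small characteristic, root lifting fails outright: an inseparable factor such as $X_i^p-X_1$ specializes to $(X_i-a_1^{1/p})^p$ at every point. So the claim that only univariate factoring and a field extension are needed is wrong. Lifting $f=g\cdot(f/g)$ is also the wrong object, since $g$ and $f/g$ need not be coprime. And $g_0$ cannot ``pin down how the group splits'': its only role is to fix a scalar in $\F(\X_{-i})$.

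The missing idea is the normalization identity combined with a low-variate, rather than zero-dimensional, projection. On the trailing side, $\tilde g=g(\X,X_i f_0)/g_0$ is a reversed-monic factor of $\tilde f=f(\X,X_i f_0)/f_0$, and $g=\tilde g(\X,X_i/f_0)\cdot g_0$. On the leading side, $g=\hat g(\X,X_i f_k)\,g_t/f_k^t$ with $\hat g$ monic dividing $\hat f$. Since $\tilde g$ is determined by a sub-multiset of at most $d$ of the at most $D$ roots of $\tilde f$ over $\overline{\F(\X_{-i})}$, this gives the factor $\binom{D}{\le d}$ per level with no specialization at all.

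Algorithmically, the paper substitutes the $4$-variate Klivans--Spielman generator $\cG(\bZ_4)$ for the remaining variables. The polynomial $\tilde f(\cG(\bZ_4),X_i)$ is $5$-variate and can be factored exactly and deterministically over any field. Its reversed-monic factors of $X_i$-degree at most $d$ number at most $\binom{D}{\le d}$ and include $\tilde g(\cG(\bZ_4),X_i)$. Substituting $X_i\mapsto X_i/f_0(\cG(\bZ_4))$ and multiplying by $g_0(\cG(\bZ_4))$, taken from the recursive list, gives $g(\cG(\bZ_4),X_i)$. From this $g$ is reconstructed, since it is $s^{4d^2\log n}$-sparse by \cite{BSV20}. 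No Hensel-ready point is ever needed.

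Separately, your assertion that $g|_{X_i=0}$ is not divisible by a monomial is false (take $g=X_1+X_2$ and $i=1$), so it is not covered by your induction hypothesis. The paper instead recurses on lists of $d$-monomial-equivalent factors ($g\prec_d h$). This suffices because the combining formula is linear in $g_0$ (resp.\ $g_t$), and the maximal monomial divisor is stripped at the end.
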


\begin{remark}
The above runtime suppresses polynomial dependence on the running time of factoring degree $D$ univariate polynomials over $\F$. Over $\Q$, this is just polynomial in $D$ and the input bit complexity. However, for fields of characteristic $p$, there is also a polynomial dependence on $p$. See \cref{thm:general-sparse-bounded_ind_deg_factor} and \cref{rem:dependence-on-field-all-factors} for a more detailed statement.
\end{remark}
\begin{remark}
It follows from our algorithm that additionally each polynomial in $\cL$ has at most $\poly(s^{d^2 \log n})$ monomials.
\end{remark}

\begin{remark}
If we only wanted to recover a list containing all $s'$-sparse factors of individual degree at most $d$, then our algorithm runtime can be improved to $\poly(s', \binom{D}{\leq d}^{\log s}, n)$. By the factor sparsity bound of~\cite{BSV20}, every factor of $f$ of individual degree at most $d$ has sparsity at most $s^{\cO(d^2 \log n)}$. Thus, setting $s'=s^{\cO(d^2 \log n)}$ recovers the parameters in Main Theorem 2.
\end{remark}

\cref{thm:main2} makes progress towards an open problem by Dutta, Sinhababu and Thierauf~\cite{DST24}, who asked if all \emph{bounded individual degree} (sparse) factors of a general sparse polynomial can be found in deterministic quasipolynomial-time. We make progress on this problem by outputting a list of all bounded individual degree factors, but with the caveat that the list may contain spurious elements. Our result in particular also makes progress towards answering a question of Volkovich~\cite{vol15} about finding multilinear factors of a general sparse polynomial. %

\cref{thm:main2} reduces the problem of computing all bounded-individual-degree factors of a sparse polynomial to divisibility testing. Indeed, if one can deterministically test whether an input sparse polynomial of bounded individual degree divides a general sparse polynomial, then one can run this test on every polynomial in the output list and prune the list down to exactly the true factors.

An interesting special case of the above theorem is the problem of recovering factors of constant total degree. In this setting, the quasipolynomial-time divisibility tests of Forbes~\cite{Forbes15} allow us to prune the spurious candidates from our list and output exactly the constant-total-degree factors. This recovers the results of Dutta, Sinhababu and Thierauf~\cite{DST24} and Kumar, Ramanathan and Saptharishi~\cite{KRS24} in the setting of factoring sparse polynomials. Indeed, if the divisibility test can be done in polynomial-time, then a simple modification of our algorithm will give a polynomial-time factoring algorithm.

\subsection{Proof Overview}
\label{sec:proof-overview}

Recall that \cite{BSV20} gives a deterministic quasipolynomial-time algorithm for factoring \(n\)-variate \(s\)-sparse polynomials of bounded individual degree \(d\), with running time \(\poly(n, s^{d^7\log n})\). More recently, this was improved in \cite{chuyoon2026factorization} to \(\poly(n, s^{d^2\log n})\). Moreover, and perhaps more significantly, the latter work gives a deterministic \emph{polynomial-time} algorithm to output all \(s\)-sparse factors of such polynomials in time \(\poly(s^d, d!, n)\).

Since our algorithms are inspired by and build upon the algorithms in \cite{BSV20} and \cite{chuyoon2026factorization}, it will be instructive to see high-level sketches of these algorithms, and to understand where they fail for the more general task we are attempting. Thus through the proof overview we will discuss our techniques in conjunction with those from \cite{BSV20} and \cite{chuyoon2026factorization}.

The algorithms in \cite{BSV20}, \cite{chuyoon2026factorization}, and the algorithms from our main theorems can all be viewed as a composition of the following steps: (0) a structural theorem, (1) normalization to obtain a monic polynomial, (2) factoring the normalized polynomial, (3) a recursive call to a simpler polynomial, and (4) recovering factors of the original polynomial.

Before we delve into a discussion of these steps, we first set up some notation.

We use $\X_n$ to denote the tuple of variables $(X_1,\dots,X_n)$ and often just abbreviate it to $\X$.

Let $f(\X,Y)\in\F[\X,Y]$ be an $s$-sparse polynomial of individual degree $d$ and suppose that
\[
    f(\X,Y)=f_kY^k+f_{k-1}Y^{k-1}+\cdots+f_1Y+f_0,
\]
where for each $i\in\{0,\dots,k\}$, we have $f_i\in\F[\X]$ and $f_k\neq 0$.

\subsubsection{Sparse Polynomials of Bounded Individual Degree}

We first give a proof sketch of our first main result (\cref{thm:main1}).

\paragraph{Step 0: Structural theorem.}

The starting point of~\cite{BSV20} is a structural theorem.
\cite{BSV20} shows, using tools from discrete geometry and the theory of Newton polytopes, that if an $n$-variate polynomial $f$ of individual degree $d$ has $s$ monomials,
then any factor of $f$ can have at most $s^{O(d^2 \log n)}$ monomials.

In \cite{chuyoon2026factorization}, again using the theory of Newton polytopes, it is shown that if an $n$-variate polynomial $f$ of individual degree $d$ has $s$ monomials,
then the total number of factors (and in particular sparse factors) not divisible by a monomial is at most $s^d$.
In order to efficiently output all sparse factors, it is crucial for the analysis in \cite{chuyoon2026factorization} to show that the number of sparse factors is polynomially bounded.

In our proof of \cref{thm:main1}, we will give a new and implicit proof of the \cite{chuyoon2026factorization} factor bound. The bound on the output list size and number of factors will follow naturally from the analysis of the algorithm.

As in the formal theorem statement, all factor bounds in this discussion are for factors that are not divisible by a monomial. This restriction is unavoidable if one wants a polynomial-size output list: monomial multiples of a fixed factor can already give exponentially many divisors. Algorithmically, we handle this by outputting the maximal monomial divisor of the input separately; the recursive factor-finding procedure only has to recover the non-monomial part.

Our analysis of \cref{thm:main2} will also implicitly imply the divisor bound for general sparse polynomials, i.e., \cref{thm:divbound}.

\paragraph{Step 1: Normalization}

A standard preprocessing step in the factorization literature is a normalization procedure that converts the input polynomial to one that is monic in some chosen variable, say $Y$, i.e., whose leading coefficient is 1 (or reverse-monic, i.e., whose constant term is 1).

Let the \emph{normalization} of $f(\X,Y)$ be the monic polynomial

  \[
        \hat f(\X,Y)
        =
        f\!\left(\X,\frac{Y}{f_k}\right)f_k^{\,k-1}.
    \]

Assuming $f_0\neq 0$, the \emph{reverse normalization}\footnote{This operation is called normalization in~\cite{chuyoon2026factorization}. We call it reverse normalization to distinguish it from the monic normalization above: it scales the constant coefficient in $Y$ to $1$, rather than the leading coefficient.} of $f(\X,Y)$ is the reversed-monic polynomial
    \[
        \tilde f(\X, Y)
        \;=\;
        \frac{f(\X,Yf_0)}{f_0}.
    \]

\cite{BSV20} deals with normalization and \cite{chuyoon2026factorization} deals with reverse normalization, but either step has essentially the same effect: the resulting polynomial has no factor independent of $Y$. Since the degree in $Y$ is at most $d$ and $\hat f$ is monic, $\hat f$ has at most $d$ irreducible factors (counted with multiplicity). So, all its factors can be obtained by multiplying subcollections of these irreducible factors. Though the entire analysis in \cite{chuyoon2026factorization} is presented with respect to reverse normalization, it could just as easily and equivalently have been presented with respect to normalization, and in the rest of the overview, for simplicity, we will assume
\cite{chuyoon2026factorization} uses normalization.

A crucial aspect of our algorithms, both for \cref{thm:main1} and for \cref{thm:main2}, is that we will make use of \emph{both} normalization and reverse normalization.

\paragraph{Step 2: Factoring the normalized polynomial.}

The algorithm in \cite{BSV20} shows how to completely factor $\hat f(\X,Y)$ in quasipolynomial time. The algorithm in \cite{chuyoon2026factorization} aims only to recover those factors that correspond to \emph{sparse} factors of the original polynomial $f$, and shows how to do this in polynomial time. There are some subtleties in making this formal, but we will not go into the details here.

Our algorithm (for \cref{thm:main1}) shows how to find a $2^d$-size list of circuits that contains \emph{all} factors of $\hat f(\X,Y)$ in polynomial time. One caveat is that the list may contain some circuits that do not represent factors, and unfortunately we do not know how to detect these spurious circuits and prune them out.

For now, just to get a bird's-eye view of the entire algorithm, let us not go into details of the monic factorization, but assume that it can be done. We first describe how this can be used to recover all factors of $f(\X,Y)$ efficiently.

Note that for our analysis to work, we will also need to have an efficient way of finding a $2^d$-size list of circuits that contains \emph{all} factors of $\tilde f(\X,Y)$ (the reverse-normalized polynomial) in polynomial time, and let us for now assume that this can be done as well.

\paragraph{Step 3: Recursively factoring $f_k(\X)$ or $f_0(\X)$.}
Both~\cite{BSV20} and \cite{chuyoon2026factorization} recursively factor $f_k(\X)$ and then show how these factors can be combined with the factors of $\hat f(\X,Y)$
to obtain the factors of $f(\X,Y)$.

We will need to be able to factor \emph{either} $f_k(\X)$ \emph{or} $f_0(\X)$, and show how
these factors can be combined with the factors of $\hat f(\X,Y)$ or $\tilde f(\X,Y)$ to recover factors of $f(\X,Y)$. The reason for keeping both options is that one of $f_k$ and $f_0$ has sparsity at most half the sparsity of $f$, and recursing on the sparser one will keep the recursion depth to at most $\log s$.

\paragraph{Step 4: From factors of $\hat f(\X,Y)$ (or $\tilde f(\X,Y)$) and $f_k(\X)$ (or $f_0(\X)$) to factors of $f(\X,Y)$.}

We will use the fact that factors of $f(\X,Y)$ that depend on the variable $Y$ are closely related to factors of $\hat f(\X,Y)$.
Indeed, for every factor $g(\X, Y)$ of $Y$-degree $r$ of $f(\X, Y)$, if
\[
    g(\X,Y)=g_rY^r+g_{r-1}Y^{r-1}+\cdots+g_1Y+g_0,
\]

then the following polynomial

\[
    \hat g(\X, Y) = g\left(\X, \frac{Y}{f_k(\X)}\right) \cdot \frac{f_k(\X)^r}{g_r(\X)}
\]

is monic in $Y$ and it can be shown by Gauss' Lemma that it is a factor of $\hat f(\X,Y)$. Moreover $g_r(\X)$ is a factor of $f_k(\X)$, since the leading coefficient (in $Y$) of $g(\X, Y)$ divides the leading coefficient (in $Y$) of $f(\X, Y)$.

Thus to recover all factors of $f(\X,Y)$, it suffices to learn factors of $\hat f(\X, Y)$ and factors of $f_k(\X)$ and suitably \emph{combine} them.

Indeed, \[
    g(\X, Y) = \hat g(\X, Y \cdot f_k(\X)) \cdot \frac{g_r(\X)}{f_k(\X)^r}
\]

The factors of $f_k(\X)$ will contain all factors of $f(\X,Y)$ that do not depend on $Y$, as well as factors that can be combined with factors of
$\hat f(\X,Y)$ to recover factors of $f(\X,Y)$ that do depend on $Y$. In some sense this is what both \cite{BSV20} and \cite{chuyoon2026factorization} also do.

However, one key step that both \cite{BSV20} and \cite{chuyoon2026factorization} can perform is \emph{testing} whether the purported factor returned by the combining step is indeed a true factor. Thus their algorithms can prune out all the spurious circuits and keep only the true factors. Indeed this notion of keeping only true factors can be attained at each step of the recursive procedure, and hence the final list size output by the algorithm is upper bounded by the total number of irreducible factors (in \cite{BSV20}) and the total number of factors\footnote{More precisely, only those not divisible by a monomial.} (in \cite{chuyoon2026factorization}).

In our case, since the testing or pruning step is something we do not know how to do, \emph{even if we knew an upper bound on the total number of factors}, following the above algorithmic scheme would not let us control the size of the output list. This is because there could potentially be $n$ recursive calls, and at each recursive step the list size grows by a multiplicative factor of $2^d$. This would make the final list size output by the algorithm exponential in $n$.

This is where our ability to choose between normalization and reverse-normalization becomes useful: it lets us decide whether to recurse on $f_k$ or on $f_0$.

In past algorithms, when the algorithm was recursively factoring $f_k(\X)$, one reason $f_k(\X)$ was a simpler polynomial was that it depended on one fewer variable. One could have also carried out a very similar scheme and recursed on $f_0(\X)$, again a polynomial on one fewer variable. However, notice that if $f(\X,Y)$ has $s$ monomials, then since $f_k(\X)$ and $f_0(\X)$ contribute to distinct monomials in $f$, one of them has sparsity at most $s/2$. Thus, if at each step the algorithm chooses to recurse on the polynomial with lower sparsity, then the recursion depth is at most $\log s$.

This also gives the informal divide-and-conquer bound on the list size. At each recursive level, the factoring of the normalized or reverse-normalized polynomial contributes a factor of at most $2^d$ to the list size, while the sparsity of the polynomial on which we recurse drops by a factor of two. Thus, at the level of this overview, one should think of the recurrence as
\[
    L(s)\leq 2^d\cdot L(s/2),
\]
and hence $L(s)\leq (2^d)^{\log s}=s^d$. This is the basic reason our algorithm produces an $s^d$-size list, and also why it implicitly recovers the corresponding bound on the number of non-monomial factors.

For completeness, we spell out the steps needed if the algorithm chooses to recurse on $f_0(\X)$.

For this, note that factors of $f(\X,Y)$ that depend on the variable $Y$ are also closely related to factors of the reverse-monic polynomial $\tilde f(\X,Y)$.
Indeed, for every factor $g(\X, Y)$ of $Y$-degree $r$ of $f(\X, Y)$, if
\[
    g(\X,Y)=g_rY^r+g_{r-1}Y^{r-1}+\cdots+g_1Y+g_0,
\]

then the following polynomial

\[
    \tilde g(\X, Y) = g\left(\X, Y \cdot f_0(\X)\right) \cdot \frac{1}{g_0(\X)}
\]

is reverse-monic in $Y$ and it can be shown that it is a factor of $\tilde f(\X,Y)$. Moreover $g_0(\X)$ is a factor of $f_0(\X)$.

Thus to recover all factors of $f(\X,Y)$, it suffices to learn factors of $\tilde f(\X, Y)$ and factors of $f_0(\X)$ and suitably \emph{combine} them.

Indeed, \[
    g(\X, Y) = \tilde g\left(\X, \frac{Y}{f_0(\X)}\right) \cdot g_0(\X)
\]

The factors of $f_0(\X)$ will contain all factors of $f(\X,Y)$ that do not depend on $Y$, as well as factors that can be combined with factors of
$\tilde f(\X,Y)$ to recover factors of $f(\X,Y)$ that do depend on $Y$.

Strictly speaking, the formal algorithm first produces monomial-equivalent candidates: for every desired factor $g$, the recursive list contains some monomial multiple of $g$. A final post-processing step removes the largest monomial divisor from each candidate. This does not change the above list-size recurrence, but it is the reason the formal proof is phrased in terms of monomial-equivalent factors before the final output list is produced.

\paragraph{Details of Step 2: Factoring the normalized polynomial}

The algorithm in \cite{BSV20} shows how to completely factor $\hat f(\X,Y)$. Recall that the algorithm from~\cite{BSV20} runs in quasipolynomial time and returns
monomial representations of all factors. The proof in~\cite{BSV20} is in the style of the Kaltofen-Trager~\cite{KT90, Kaltofen89} factoring algorithm and we omit the details.
A crucial step is to find a projection to a univariate polynomial just in $Y$ that keeps coprime factors of the original polynomial coprime even after the projection. For this, \cite{BSV20} needs to set the $\X$ variables to a value $\balpha$ which keeps the resultant (in $Y$) of every pair of irreducible factors nonzero. They show that this can be done in quasipolynomial time since the irreducible factors have sparsity at most $s^{O(d^2 \log n)}$ and have $Y$-degree at most $d$, and the low degree of the $Y$ variable and the sparsity of the factors imply that these resultants are also quasipolynomially sparse.

We would like to run in \emph{polynomial time} and recover \emph{all} factors, though possibly with some spurious factors. We show how, in polynomial time, to output a $2^d$-size list of constant-depth circuits such that each factor of $\hat f(\X,Y)$ is a member of this list.

At a high level, our factoring algorithm first projects $\hat f(\X,Y)$ to a univariate polynomial in $Y$, finds the roots of the univariate projection, lifts these roots to power series roots of the shifted polynomial, and then combines these roots to recover the various potential factors of $\hat f(\X,Y)$.

A well-known lemma, namely \textit{Hensel's lemma}, says that if $\hat{f}$ is monic in $Y$ and satisfies some suitable conditions, then $\hf$ splits completely into power series roots (see \cref{Hensel lemma}).

Hensel's lemma (\cref{Hensel lemma}) essentially says that any polynomial $P(\X,Y)$ that is monic in $Y$ admits a factorization into power series roots if it satisfies the following two conditions:
\begin{enumerate}
    \item For every irreducible factor $g(\X,Y)$ of $P(\X,Y)$, $g(\mathbf{0},Y)$ is square-free.
    \item For every pair of distinct irreducible factors $g_1(\X,Y)$ and $g_2(\X,Y)$ of $P$, $g_1(\mathbf{0},Y)$ and $g_2(\mathbf{0},Y)$ are coprime.
\end{enumerate}

Our algorithm for factoring will first find an $\balpha$ such that $\hf_{\balpha}(\X,Y):=\hf(\X+\balpha,Y)$ satisfies the above conditions. For this, we essentially need an assignment $\balpha\in\F^n$ such that, after shifting by $\balpha$ and then setting $\X=\mathbf{0}$, all irreducible factors of $\hf$ remain square-free and pairwise coprime.

In the special case where $\hat f(\X,Y)$ is square-free, it is well known that any $\balpha$ which hits the discriminant (i.e., keeps it nonzero) of $\hf$ (see \cref{def: Discriminant}) will work here. The discriminant in this case is $(2d)!s^{d^2}$-sparse, and hence such an $\balpha$ can be found using PIT algorithms for sparse polynomials. In the general case, it turns out that it suffices to find an $\balpha$ that \emph{maintains the rank of the Sylvester matrix} of $\hf$ and $\partial_Y\hf$. Naively this again leads to PIT for $(2d)!s^{d^2}$-sparse polynomials, but we use the relation between the Sylvester matrices of $f$ and $\hat f$ to reduce the relevant hitting-set computation to polynomials of sparsity $(2d)!s^d$. This is what keeps the running time polynomial in $s^d$ (instead of $s^{d^2}$). In a slightly different context (for divisibility testing of products of sparse polynomials), the underlying rank-preservation criterion was already observed in~\cite{chuyoon2026factorization} and earlier in~\cite{BV25}; moreover, these works observed that finding such a projection can be reduced to finding a hitting set for a certain sparse polynomial.

Once we find this $\balpha$, we consider $\hf_{\balpha}(\X,Y):=\hf(\X+\balpha,Y)$.
We will compute all factors of $\hf_{\balpha}(\X,Y)$ and then recover all factors of $\hf(\X,Y)$ by undoing the shift.

Note that this shifting will destroy the sparsity of $\hf$, but to find the power series roots, we actually do not need the sparsity assumption any more; it was only needed to find the $\balpha$.

Once we have the shifted polynomial $\hf_{\balpha}$, there are standard ways, such as Newton iteration, to compute these power series roots up to any desired degree precision. For our purposes, however, we use Furstenberg's theorem~\cite{Furstenberg67}, which was recently used by \cite{BKRRSS25a} in the context of factorization of constant-depth circuits. The advantage is that Furstenberg's theorem gives additional structure on the power series roots of $\hat f(\X,Y)$, which will be useful to conclude that our final output circuits have constant depth\footnote{We can obtain constant-depth circuits for each homogeneous component of the power series roots. One can obtain a general circuit by using standard Newton iteration.}.

We will now see why we get a list of $2^d$ candidate factors.

First we note that $\hf_{\balpha}(\X,Y)$ has $d$ ($=\deg_Y(\hf)$) power series roots, counted with multiplicity. In other words, there are power series $\varphi_1(\X),\dots,\varphi_d(\X)\in\overline{\F}[[\X]]$ such that we can write
\[
    \hf_{\balpha}(\X,Y)=\prod_{i=1}^d(Y-\varphi_i(\X)).
\]
Moreover, any factor $\hg_{\balpha}(\X,Y)$ of $\hf_{\balpha}(\X,Y)$ is of the form $\displaystyle\prod_{j\in S}(Y-\varphi_j(\X))$ for some subset $S\subseteq[d]$, where repeated roots are listed with their multiplicities. Hence to find any factor $\hg_{\balpha}$, it is enough to find the set of power series $\{\varphi_i\}$ up to sufficiently high precision, combine the appropriate subsets of these roots, and then truncate to the required degree. Since there are at most $2^d$ such subsets, this gives the desired $2^d$-size list.

\subsubsection{Bounded Individual Degree Factors of General Sparse Polynomials}
\label{sec:proof-overview-bounded-individual-degree-factors}

We now describe the ideas that go into the algorithm for \cref{thm:main2}.

Recall that in this setting the input is a general $n$-variate $s$-sparse polynomial $f(\X, Y)$ where each variable has individual degree $D$. We do not assume $D$ is bounded or constant. However, the goal is to output all factors of $f(\X, Y)$ that have bounded individual degree $d$.

\paragraph{Step 0: Structural theorem.}
In order to have an efficient algorithm for outputting all bounded-individual-degree factors (not divisible by a monomial) of a general sparse polynomial, one needs to show that the number of such factors is small. We show that any such polynomial $f(\X, Y)$ can have at most $\binom{D}{\leq d}^{\log s}$ such factors, and indeed this bound is tight.
Instead of separately proving this bound, the bound will naturally follow from the structure of the algorithm.

\paragraph{Step 1: Normalization}
Just as before, we will have a normalization or reverse normalization step.

\paragraph{Step 2: Factoring the normalized (or reverse-normalized) polynomial}
The main issue here is that the monic polynomial $\hat f(\X,Y)$ now might have large or unbounded $Y$-degree. In the previous algorithm, the boundedness of the $Y$-degree was crucial. It allowed us to argue about the sparsity of the discriminant of $\hat f(\X,Y)$, and this allowed us to efficiently find a projection of the $\X$ variables that preserved the coprimality of factors. This is what made factoring via power-series methods, such as Newton iteration or Furstenberg's theorem, effective. Another issue is that $\hat f(\X,Y)$ may not even be sparse any more. These challenges are significant, and indeed we do not try to recover all factors of $\hat f(\X,Y)$. We aim for a more relaxed goal: recovering only those factors that are somewhat sparse (at most $\poly(s^{d^2 \log n})$ monomials) and have $Y$-degree at most $d$.

The reason we only need to focus on somewhat sparse polynomials is that by the sparsity bound of~\cite{BSV20}, any factor of $f(\X,Y)$ that has individual degree at most $d$ can have at most $s^{O(d^2 \log n)}$ monomials. Moreover, it is not hard to see that any monic (in $Y$) polynomial of degree $D$ can have at most $\binom{D}{\leq d}$ factors of $Y$-degree at most $d$.

To achieve this relaxed goal, we use a key idea from the algorithm of~\cite{chuyoon2026factorization}: applying the Klivans-Spielman generator~\cite{KS01} for sparse polynomials. We will use this generator in the parameter settings needed for PIT and reconstruction for polynomials of sparsity $s^{O(d^2 \log n)}$.

In essence, this is a variable-reduction procedure that has the net effect of
``preserving'' all the information in the sparse factors. We map $\hat f(\X,Y)$ to $\hat f(\cG(\bZ),Y)$, a polynomial with a constant number of variables.
We then factor the constant-variate polynomial $\hat f(\cG(\bZ),Y)$ and ``lift" the factors by interpolation to recover all potential \emph{sparse} factors of $\hat f(\X,Y)$.

Note that since only the sparse factors are being recovered, we do not need any sophisticated factoring algorithm for large-variate polynomials. It suffices to factor constant-variate polynomials
and have a way of interpolating and recovering any sparse factor $\hat g(\X,Y)$ of $\hat f(\X,Y)$ from the corresponding factor $\hat g(\cG(\bZ),Y)$ of $\hat f(\cG(\bZ),Y)$.

There is a subtle issue with the above plan. To make the whole factoring scheme work, it will not be enough to recover sparse factors of $\hat f(\X,Y)$; we will also need to do more. We elaborate on this in Step 4 of the algorithm.

\paragraph{Step 3: Recursively factor $f_k(\X)$ or $f_0(\X)$.}
Just as before, we will either recursively factor $f_k(\X)$ or $f_0(\X)$ depending on their sparsity.

\paragraph{Step 4: From factors of $\hat f(\X,Y)$ (or $\tilde f(\X,Y)$) and $f_k(\X)$ (or $f_0(\X)$) to factors of $f(\X,Y)$.}

Let us assume for now that in Step 3, we recursed on $f_k(\X)$ and thus we need to find factors of $\hat f(\X,Y)$.

This step has most of the same components as in the previous algorithm, but with some crucial differences.

One significant difference is that in Step 2, we recover a larger list of $\binom{D}{\leq d}$ potential factors of $\hat f(\X, Y)$, and this list is not guaranteed to contain all factors of $\hat f(\X, Y)$, but only those that are somewhat sparse (at most $\poly(s^{d^2 \log n})$ monomials) and have $Y$-degree at most $d$. When we recursively factor $f_k(\X)$, we are guaranteed to recover all factors (not divisible by a monomial) that have individual degree at most $d$. We need to show that this suffices for recovering all factors of $f(\X,Y)$ of individual degree at most $d$.

The issue is that a bounded-individual-degree factor $g$ of $f$ does not necessarily give rise to a sparse factor $\hat g$ of the normalized polynomial, even when $g$ itself is sparse.

Let $g(\X, Y)$ be a factor of $f(\X,Y)$ that has individual degree at most $d$. As a first step, by the sparsity bound of~\cite{BSV20}, $g(\X,Y)$ must be somewhat sparse: it can have at most $s^{O(d^2 \log n)}$ monomials.
If $g(\X,Y)$ does not depend on $Y$, then it will be output by the recursive call on $f_k(\X)$. Let us therefore assume that $g(\X,Y)$ depends on $Y$ and has $Y$-degree $r$, where $r \leq d$. If
\[
    g(\X,Y)=g_rY^r+g_{r-1}Y^{r-1}+\cdots+g_1Y+g_0,
\]

then

\[
    \hat g(\X, Y) = g\left(\X, \frac{Y}{f_k(\X)}\right) \cdot \frac{f_k(\X)^r}{g_r(\X)}
\]

is monic in $Y$, and, as we argued before, is a factor of $\hat f(\X,Y)$.

The sparsity of $\hat g(\X,Y)$ is controlled by the sparsity of $g(\X,Y)$ except for the division by $g_r(\X)$. Because of this division, we do not have a useful upper bound on the sparsity of $\hat g(\X,Y)$. Thus the algorithm cannot write out $\hat g(\X,Y)$, even though the use of generators gives us access to $\hat g(\cG(\bZ), Y)$.

Instead of interpolating and recovering $\hat g(\X,Y)$, we apply the combining step to $\hat g(\cG(\bZ), Y)$ and $g_r(\cG(\bZ))$ (by iterating over all factors of $f_k(\X)$) to recover $g(\cG(\bZ), Y)$. We then use interpolation to recover $g(\X, Y)$, which is guaranteed to be sparse.

Thus, instead of first recovering $\hat g(\X,Y)$ and then recovering $g(\X, Y)$ by the following combining step with $g_r(\X)$,

\[
    g(\X, Y) = \hat g(\X, Y \cdot f_k(\X)) \cdot \frac{g_r(\X)}{f_k(\X)^r}
\]

we combine $\hat g(\cG(\bZ),Y)$ with $g_r(\cG(\bZ))$ to obtain
\[
    g(\cG(\bZ), Y) = \hat g(\cG(\bZ), Y \cdot f_k(\cG(\bZ))) \cdot \frac{g_r(\cG(\bZ))}{f_k(\cG(\bZ))^r}
\]

Once we have $g(\cG(\bZ), Y)$, since $g(\X, Y)$ has sparsity at most $s^{O(d^2 \log n)}$, we can interpolate to recover $g(\X, Y)$. This idea of doing the combining step in conjunction with the use of generators is inspired by the algorithms in~\cite{chuyoon2026factorization}, although in a different setting of parameters.

Note that for the wrong choice of $g_r(\X)$, we will recover a sparse polynomial which does not correspond to a true factor, and these will contribute to the spurious elements in our list.

There is one final issue caused by monomial factors. In the formal algorithm, we first output a list of monomial-equivalent candidates: for every desired factor $g$, the list contains some monomial multiple of $g$. A final post-processing step then removes the largest monomial divisor from each candidate. Since at each recursive step we recurse on the smaller of $f_k$ and $f_0$, the recursion has depth at most $\log s$. Together with the $\binom{D}{\leq d}$ possible bounded-$Y$-degree factors considered at each level, this gives the claimed list-size bound, and hence also the divisor bound.

\subsection{Organization}

We begin in \cref{sec:preliminaries} by summarizing some well-known results on
algebraic circuits, polynomial factorization, and sparse polynomials. Readers familiar with
these notions may skip directly to
\cref{sec:factor-normalized-and-reverse-normalized-polynomial} and return to
\cref{sec:preliminaries} as needed. In
\cref{sec:factor-normalized-and-reverse-normalized-polynomial}, we show how to
factor normalized and reverse-normalized polynomials. In
\cref{sec:recovering-all-factors}, we present our first
main algorithm, which recovers all factors of sparse polynomials of bounded
individual degree. Finally, in \cref{sec:recovering-bounded-individual-degree-factors}, we present our
second main algorithm, which recovers all bounded-individual-degree factors of
general sparse polynomials.

\section{Preliminaries}
\label{sec:preliminaries}

\paragraph{Bit-complexity convention:}
We will summarize several standard algorithms that take polynomials and circuits as input.
Over fields of characteristic 0, the running time of these algorithms also depends polynomially on the bit
complexity $B$ of the coefficients and constants appearing in the input polynomials and
circuits. Over finite fields, our algorithms have a polylogarithmic dependence on the field size. %
For brevity and ease of exposition, we suppress this dependence throughout. %

\paragraph{Convention on field size:} For the rest of the paper we will assume that   the field we are working over is sufficiently large. If not, we work over a suitable extension.

\subsection{Notation}
Throughout this paper, $[n]$ denotes the set $\{1, 2, \dots, n\}$. We use $\X_n$ to denote the tuple of variables $(X_1,\dots,X_n)$. When the number of variables is clear from context, we abbreviate $\X_n$ to $\X$. We say a polynomial $f(\X_n)$ has individual degree $d$ if for each $i \in [n]$, the degree in $X_i$ is at most $d$.

For parameters $a_1,\dots,a_k$, we write $\poly(a_1,\dots,a_k)$ to denote a quantity bounded by $(a_1+\cdots+a_k)^c$ for some absolute constant $c$.

For field elements or rational functions $a_1,\dots,a_m$, we write $\Diag(a_1,\dots,a_m)$ for the $m\times m$ diagonal matrix with diagonal entries $a_1,\dots,a_m$.

When a polynomial is given explicitly as a sum of monomials, its \emph{bit complexity} is the total number of bits needed to write down its coefficients together with the exponents appearing in its sparse representation.

We will often work with a list of algebraic circuits (defined in \cref{sec:algebraic-circuit}), say $\cL$. For a polynomial $g$, we sometimes abuse notation and write $g \in \cL$ to mean that there exists a circuit in $\cL$ that computes $g$.

We will use $\Esym_r(S)$ for a finite subset $S$ to represent the elementary symmetric polynomial of degree $r$ evaluated on the elements of $S$.

For any polynomial (or formal power series) $P(T)$, $P(T) \Trunc T^{D+1}$ will denote the polynomial obtained by discarding all terms of degree at least $D+1$ in $T$.

\begin{definition}
    \label{def:monomial-equivalent-factor}
    Let $g,h \in \F[\X]$ and $r \in \N$. We write $g \prec_r h$ if there exists a monomial $M$ whose individual degree is at most $r$ such that $h = gM$.
    In this case, we say that $h$ is \emph{$r$-monomial equivalent} to $g$. When the parameter $r$ is clear from context, we simply say that $h$ is monomial equivalent to $g$.
\end{definition}

Let $f(\X,Y)\in\F[\X,Y]$ and suppose
\[
    f(\X,Y)=f_dY^d+f_{d-1}Y^{d-1}+\cdots+f_1Y+f_0,
\]
where for each $i\in\{0,\dots,d\}$, we have $f_i\in\F[\X]$ and $f_d\neq 0$.

\begin{itemize}
    \item $f(\X,Y)$ is said to be \emph{monic in $Y$} if $f_d=1$.

    \item $f(\X,Y)$ is said to be \emph{reversed-monic in $Y$} if $f_0=1$.

    \item The \emph{normalization} of $f(\X,Y)$ is the monic polynomial
    \[
        \hat f(\X,Y)
        =
        f\!\left(\X,\frac{Y}{f_d}\right)f_d^{\,d-1}.
    \]

    \item Assuming $f_0\neq 0$, the \emph{reverse normalization} of $f(\X,Y)$ is the reversed-monic polynomial
    \[
        \tilde f(\X, Y)
        \;=\;
        \frac{f(\X,Yf_0)}{f_0}.
    \]

    \item The \emph{reversal} of $f$ in $Y$, denoted by $\rev_Y(f)$, is defined as
    \[
        \rev_Y(f)
        =
        Y^d f(\X,1/Y)
        =
        f_0Y^d+f_1Y^{d-1}+\cdots+f_d.
    \]
\end{itemize}

We note below that the reversal of a product equals the product of the reversals. Under the mild assumption that the constant term in $Y$ is non-zero, this lets us transfer irreducibility and factorization statements between a polynomial and its reversal.

\begin{lemma}\label{lem: reversefactor}
    Let $f(\X,Y), g(\X,Y), h(\X,Y) \in \F[\X,Y]$. If $f = gh$, then
    \[
        \rev_Y(f) = \rev_Y(g)\rev_Y(h).
    \]
\end{lemma}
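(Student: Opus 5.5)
The plan is to compute $\rev_Y(gh)$ directly from the definition $\rev_Y(f)=Y^{\deg_Y f} f(\X,1/Y)$. The key point is that the $Y$-degree is additive under multiplication. Write $a=\deg_Y(g)$ and $b=\deg_Y(h)$, with leading coefficients $g_a,h_b\in\F[\X]$ (both nonzero). The top $Y$-coefficient of $gh$ is $g_ah_b$. Since $\F[\X]$ is an integral domain, $g_ah_b\neq 0$, so $\deg_Y(f)=a+b$. Here we assume $g,h\neq 0$; if either is zero then $f=0$, and the identity holds trivially under the convention $\rev_Y(0)=0$.

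Next, substitute $Y\mapsto 1/Y$, working in $\F(\X)(Y)$ or simply in $\F[\X][Y,Y^{-1}]$. Evaluation at $1/Y$ is a ring homomorphism $\F[\X][Y]\to\F[\X][Y,Y^{-1}]$, so
\[
f(\X,1/Y)=g(\X,1/Y)\,h(\X,1/Y).
\]
Multiplying both sides by $Y^{a+b}=Y^{a}\cdot Y^{b}$ gives
\[
\rev_Y(f)=Y^{a+b}f(\X,1/Y)=\bigl(Y^{a}g(\X,1/Y)\bigr)\bigl(Y^{b}h(\X,1/Y)\bigr)=\rev_Y(g)\rev_Y(h).
\]
Each factor on the right is a genuine polynomial in $Y$, by the coefficient formula in the definition.

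The only real obstacle is the degree additivity. This is exactly where we use that the coefficient ring $\F[\X]$ has no zero divisors. Without it, the normalizing power $Y^{\deg_Y f}$ could differ from $Y^{a+b}$, and the identity would fail by a power of $Y$. Everything else is routine.
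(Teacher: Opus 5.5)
Your proof is correct: additivity of $\deg_Y$ (because $\F[\X]$ is an integral domain), together with the fact that $Y\mapsto 1/Y$ is a ring homomorphism into $\F[\X][Y,Y^{-1}]$, gives $\rev_Y(gh)=Y^{a+b}g(\X,1/Y)h(\X,1/Y)=\rev_Y(g)\rev_Y(h)$. The paper states this lemma without proof, and your argument is the standard verification it takes for granted; your zero-polynomial convention covers the case that the paper's definition (which requires $f_d\neq 0$) leaves undefined.
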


\begin{corollary}
    Let $f(\X,Y) \in \F[\X,Y]$, viewed as a polynomial in $\F(\X)[Y]$, and assume $f_0 \neq 0$. Then $f$ is irreducible in $\F(\X)[Y]$ if and only if $\rev_Y(f)$ is irreducible in $\F(\X)[Y]$.
\end{corollary}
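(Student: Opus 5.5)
Since $\rev_Y$ preserves products, I will transfer any factorization of $\rev_Y(f)$ in $\F(\X)[Y]$ back to one of $f$, and vice versa. One subtlety needs care first. The definition $\rev_Y(f)=Y^{\deg_Y f}f(\X,1/Y)$ depends on the $Y$-degree, and the $Y$-degree of $\rev_Y(f)$ equals $\deg_Y f$ exactly when $f_0\neq 0$. The hypothesis $f_0\neq 0$ therefore gives two facts: $\deg_Y \rev_Y(f)=\deg_Y f$, and $\rev_Y(\rev_Y(f))=f$, since reversing the coefficient list twice returns it. The statement is also over the field $\F(\X)$, so I will use \cref{lem: reversefactor} in its version over $\F(\X)[Y]$. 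The proof of that lemma is purely formal: from $f=gh$ we get $Y^{a+b}f(1/Y)=Y^ag(1/Y)\,Y^bh(1/Y)$ with $a=\deg_Y g$, $b=\deg_Y h$, and this works verbatim with coefficients in any field.

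For the forward direction, suppose $\rev_Y(f)=uv$ with $u,v\in\F(\X)[Y]$ of positive $Y$-degree. The constant term of $\rev_Y(f)$ is $f_d$, which is nonzero, so the constant terms of $u$ and $v$ are nonzero. By the observation above, $\rev_Y(u)$ and $\rev_Y(v)$ have the same degrees as $u$ and $v$, and those degrees are positive. Applying \cref{lem: reversefactor} gives $\rev_Y(\rev_Y(f))=\rev_Y(u)\rev_Y(v)$. Since $f_0\neq0$, the left side is $f$, so $f$ is reducible.

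For the converse, suppose $f=gh$ is a nontrivial factorization. Because $f_0\neq 0$, both $g$ and $h$ have nonzero constant terms, so their reversals keep their positive degrees. Then $\rev_Y(f)=\rev_Y(g)\rev_Y(h)$ is a nontrivial factorization of $\rev_Y(f)$.

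There is also an edge case. A unit or constant $f$, meaning $\deg_Y f=0$, is treated symmetrically, because then $\rev_Y(f)=f$. The only real obstacle is the degree bookkeeping: reversal can drop the degree when the constant term vanishes, which would turn a nontrivial factor into a unit. The hypothesis $f_0\neq 0$, inherited by every factor, is exactly what rules this out.
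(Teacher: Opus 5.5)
Your proof is correct and follows the route the paper intends. The paper gives no separate argument: it states this corollary as an immediate consequence of the preceding lemma that $\rev_Y$ is multiplicative. You supply the bookkeeping it leaves implicit, namely that $f_0\neq 0$ forces every factor to have a nonzero constant term, so reversal preserves $Y$-degrees and $\rev_Y(\rev_Y(f))=f$, together with the degree-zero case and the check that the product lemma holds verbatim over $\F(\X)$.
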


\begin{corollary}
    Let $f = g_1^{e_1} g_2^{e_2} \cdots g_k^{e_k}$ be the irreducible factorization of $f(\X,Y) \in \F[\X,Y]$ in $\F(\X)[Y]$. If each factor $g_i$ has non-zero constant term in $Y$, then
    \[
        \rev_Y(f) = \rev_Y(g_1)^{e_1} \rev_Y(g_2)^{e_2} \cdots \rev_Y(g_k)^{e_k}
    \]
    is the irreducible factorization of $\rev_Y(f)$.
\end{corollary}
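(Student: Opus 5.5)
The plan is to get the product formula from \cref{lem: reversefactor}, and then check the three properties that make it the irreducible factorization: each factor is irreducible, no factor is a unit, and distinct factors are non-associate in $\F(\X)[Y]$.

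\textbf{Product formula.} We work throughout in $\F(\X)[Y]$, where $\rev_Y$ is defined by the same formula $Y^{\deg_Y p}\,p(\X,1/Y)$. The identity of \cref{lem: reversefactor} holds verbatim there, because $\deg_Y(gh)=\deg_Y g+\deg_Y h$ over the integral domain $\F(\X)[Y]$. Applying the lemma repeatedly (induction on the number of factors, counted with multiplicity) to $f=g_1^{e_1}\cdots g_k^{e_k}$ gives
\[
\rev_Y(f)=\rev_Y(g_1)^{e_1}\cdots\rev_Y(g_k)^{e_k}.
\]

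\textbf{Each factor is irreducible.} Since each $g_i$ is irreducible and has nonzero constant term in $Y$, the preceding corollary shows that $\rev_Y(g_i)$ is irreducible in $\F(\X)[Y]$.

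\textbf{No factor is a unit.} The $Y$-leading coefficient of $\rev_Y(g_i)$ is the constant term of $g_i$, which is nonzero. Hence $\deg_Y\rev_Y(g_i)=\deg_Y g_i\ge 1$, so $\rev_Y(g_i)$ is not a unit.

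\textbf{Distinct factors are non-associate.} This is the only point needing care. The key observation is that if $p$ has nonzero constant term in $Y$, then $\deg_Y\rev_Y(p)=\deg_Y p$, and therefore $\rev_Y(\rev_Y(p))=p$. Suppose $\rev_Y(g_i)=c\,\rev_Y(g_j)$ for some $i\neq j$ and some unit $c\in\F(\X)^{\times}$. Reversal commutes with multiplication by a nonzero scalar, so applying $\rev_Y$ to both sides gives $g_i=c\,g_j$. This contradicts the assumption that the $g_i$ are pairwise non-associate in the given factorization.

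\textbf{Conclusion.} By unique factorization in the PID $\F(\X)[Y]$, the displayed product is the irreducible factorization of $\rev_Y(f)$, with the same multiplicities $e_i$. No step is a real obstacle. The only subtlety is the degree-preservation fact, which is exactly where the hypothesis of nonzero constant terms is used.
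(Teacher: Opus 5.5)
Your proposal is correct and matches the route the paper intends. The paper states this corollary without a separate proof, treating it as immediate from the product formula in \cref{lem: reversefactor} together with the preceding corollary, which gives irreducibility of each $\rev_Y(g_i)$. Your explicit check that distinct $\rev_Y(g_i)$ remain non-associate, using $\rev_Y(\rev_Y(p))=p$ whenever $p$ has nonzero constant term, supplies the one detail the paper leaves unstated.
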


\subsection{Algebraic circuits}
\label{sec:algebraic-circuit}

Algebraic circuits are a standard way to compactly represent multivariate polynomials in algebraic complexity theory.
\begin{definition}
    An \emph{algebraic circuit} $C$ over a field $\F$ is a directed acyclic graph whose internal gates are labeled by $+$ or $\times$, and whose leaves are labeled by variables or elements of $\F$. We allow gates to have arbitrary fan-in. The circuit computes the polynomial obtained in the natural way by evaluating gates from the leaves to the output gate.

    The \emph{size} of $C$, denoted $\size(C)$, is the number of wires in $C$. The \emph{depth} of $C$, denoted $\depth(C)$, is the length of the longest directed path from an input leaf to the output gate.

    A \emph{circuit with division gates} is defined similarly, except that internal gates may also be labeled by $\div$. Such a circuit computes a rational function over $\F$; in some applications below, this rational function will be promised to equal a polynomial.
\end{definition}

We will also use interpolation to recover coefficients with respect to a single variable while preserving efficient circuit representations. The following standard lemma is due to Michael Ben-Or.

\begin{lemma}[Interpolation]
\label{lem:circuit-coefficient-interpolation}
    Let $C$ be an algebraic circuit of size $m$ and depth $\Delta$ computing a polynomial
    \[
        f(\X,Y)=\sum_{i=0}^{d} f_i(\X)Y^i,
    \]
    where each $f_i(\X)\in\F[\X]$. Suppose $\abs{\F}>d$. Then, given $C$, one can construct circuits
    $C_0,\dots,C_d$ computing $f_0,\dots,f_d$, respectively, each of size
    $\poly(m,d)$ and depth at most $\Delta+2$. Moreover, these circuits can be
    constructed in time $\poly(m,d)$.

\end{lemma}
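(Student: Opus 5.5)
The plan is the classical Ben-Or interpolation argument, tracking depth. Fix $d+1$ distinct elements $\beta_0,\dots,\beta_d\in\F$; they exist since $\abs{\F}>d$. For each $j$, let $C^{(j)}$ be the circuit obtained from $C$ by replacing every leaf labeled $Y$ with the constant $\beta_j$. Then $C^{(j)}$ computes
\[
    f(\X,\beta_j)=\sum_{i=0}^d f_i(\X)\beta_j^i ,
\]
and it has size at most $m$ and depth at most $\Delta$.

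These equations say that the vector $(f(\X,\beta_0),\dots,f(\X,\beta_d))^{T}$ equals $V\cdot(f_0,\dots,f_d)^{T}$, where $V=(\beta_j^i)_{j,i}$ is a $(d+1)\times(d+1)$ Vandermonde matrix. Since the $\beta_j$ are distinct, $V$ is invertible over $\F$. Writing $V^{-1}=(\gamma_{i,j})$, we get
\[
    f_i(\X)=\sum_{j=0}^d \gamma_{i,j}\, f(\X,\beta_j)\qquad\text{for each } i .
\]
Computing $V^{-1}$ over $\F$ takes $\poly(d)$ field operations.

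For the construction, let $C_i$ consist of the $d+1$ copies $C^{(0)},\dots,C^{(d)}$ feeding a single $+$ gate. Each wire into that gate carries the scalar $\gamma_{i,j}$, implemented as a $\times$ gate with the constant $\gamma_{i,j}$. This adds at most two layers, so $\depth(C_i)\le\Delta+2$. The size is $(d+1)m+O(d)=\poly(m,d)$, and the whole construction clearly runs in time $\poly(m,d)$. Alternatively, the copies $C^{(j)}$ can be shared across all $i$; this does not change the bounds.

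There is no real obstacle. The only points to check are that the substitution $Y\mapsto\beta_j$ does not increase the depth, and that the linear combination costs only the two extra layers (scalar multiplication and then addition), which gives the stated bound $\Delta+2$.
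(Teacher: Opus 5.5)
Your proposal is correct. It is exactly the standard Ben-Or interpolation argument that the paper invokes: the paper states this lemma without proof and attributes it to Ben-Or. Your depth accounting (substituting constants for $Y$ costs nothing, then one layer of scalar $\times$ gates and one $+$ gate) gives precisely the claimed $\Delta+2$ bound.
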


We will routinely truncate a polynomial computed by a circuit by discarding its
higher-degree terms in one of the variables. We formalize this notation below
and record that such truncations can be computed efficiently.

\begin{definition}[Truncation]
For a polynomial $f(\X,Z)\in\F[\X,Z]$ and an integer $D\geq 0$, we denote by
$f \Trunc Z^D$ the polynomial obtained from $f$ by discarding all monomials whose
degree in $Z$ is at least $D$. Equivalently, if
\[
    f(\X,Z)=\sum_{i\geq 0} f_i(\X)Z^i,
\]
then
\[
    f \Trunc Z^D = \sum_{0\leq i<D} f_i(\X)Z^i.
\]
\end{definition}

The following lemma shows that, given a circuit computing $f(\X,Z)$, one can
efficiently compute a circuit for the truncation of $f$ without significantly
increasing the size or depth.

\begin{lemma}
\label{lem:circuit-truncation}
Let $C$ be an algebraic circuit of size $m$ and depth $\Delta$ over $\F$
computing a polynomial $f(\X,Z)\in\F[\X,Z]$ with $\deg_Z(f)=d$. Suppose
$\abs{\F}>d$. Then, for every $D\in\N$, one can construct, in
time $\poly(m,d)$, a circuit of size $\poly(m,d)$ and depth at most $\Delta+4$
computing $f \Trunc Z^D$.
\end{lemma}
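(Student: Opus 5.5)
The plan is to reduce truncation to coefficient extraction via \cref{lem:circuit-coefficient-interpolation}, and then recombine the coefficients with one extra layer of gates. First, if $D > d$ there is nothing to do: $f \Trunc Z^D = f$ and we output $C$ itself. So assume $D \le d$.

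\textbf{Step 1 (coefficients).} Write $f(\X,Z)=\sum_{i=0}^{d} f_i(\X)Z^i$. Since $\abs{\F}>d$, \cref{lem:circuit-coefficient-interpolation} (with $Y$ renamed to $Z$) applies to $C$. In time $\poly(m,d)$ it yields circuits $C_0,\dots,C_d$ computing $f_0,\dots,f_d$, each of size $\poly(m,d)$ and depth at most $\Delta+2$.

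\textbf{Step 2 (recombination).} We keep only $C_0,\dots,C_{D-1}$ and build
\[
    \sum_{0\le i<D} C_i \cdot Z^i .
\]
The power $Z^i$ is a single product gate of fan-in $i$ whose inputs are all the leaf $Z$; arbitrary fan-in is allowed. This adds one multiplication layer on top of each $C_i$, followed by one addition gate. The depth is therefore at most $\Delta+2+2=\Delta+4$. The size is at most $\sum_i \size(C_i) + O(d^2)$, which is $\poly(m,d)$. All constructions are explicit, so the running time is $\poly(m,d)$.

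\textbf{Main obstacle.} The only point needing care is the depth accounting. It must be checked that the multiplication gate for $C_i\cdot Z^i$ sits at depth at most $\Delta+3$. This holds because the $Z^i$ gate has depth $1 \le \Delta+2$, so the longest path into the product gate still comes through $C_i$. The top sum gate then sits at depth at most $\Delta+4$. If one also wants to avoid duplicating shared subcircuits among the $C_i$, one can note that the interpolation construction shares its evaluation gates anyway. In any case size is only required to be polynomial, so this does not matter.
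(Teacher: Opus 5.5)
Your proposal is correct and follows the paper's proof essentially verbatim. You extract the coefficients $f_0,\dots,f_d$ via \cref{lem:circuit-coefficient-interpolation} at depth $\Delta+2$, then recombine them as $\sum_{0\le i<D} C_i Z^i$ with one multiplication layer and one addition layer, reaching depth $\Delta+4$; your separate handling of $D>d$ is a harmless extra that the paper leaves implicit.
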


\begin{proof}
Write
\[
    f(\X,Z)=\sum_{i=0}^{d} f_i(\X)Z^i .
\]
By \cref{lem:circuit-coefficient-interpolation}, given $C$, we can construct
circuits $C_0,\dots,C_d$ computing $f_0,\dots,f_d$, respectively, each of size
$\poly(m,d)$ and depth at most $\Delta+2$, in time $\poly(m,d)$. Taking the sum
\[
    \sum_{0\leq i<D} C_i(\X) Z^i
\]
therefore gives a circuit computing $f \Trunc Z^D$. This adds one multiplication
layer to form the terms $C_i(\X)Z^i$ and one addition layer to sum them. Hence
the resulting circuit has size $\poly(m,d)$ and depth at most $\Delta+4$.
\end{proof}

We will later need to recover factors of the original polynomial from factors
obtained from the corresponding normalized and reverse-normalized polynomials. At the level of
circuits, this leads to substitutions of the form $Y\mapsto Y\cdot f$ and
$Y\mapsto Y/f$. A direct implementation by composing the two input circuits
would stack their depths, which is too large for our applications. The following
lemma shows that, using interpolation (\cref{lem:circuit-coefficient-interpolation}) with respect to $Y$, these substitutions
can instead be carried out with only a constant additive overhead over the
larger of the two input depths.

\begin{lemma}
\label{lem:circuit-depth-on-compose}
Let $g(\X,Y)\in \F[\X,Y]$, let $f(\mathbf{Z})\in \F[\mathbf{Z}]$, and let
$d=\deg_Y(g)$. Suppose $g$ is computable by a circuit $C_g$ of size $s_g$ and
depth $\Delta_g$, and $f$ is computable by a circuit $C_f$ of size $s_f$ and
depth $\Delta_f$. Assume $\abs{\F}>d$. Then one can construct circuits with
division gates computing
\[
    g(\X,Y\cdot f)
    \qquad\text{and}\qquad
    g(\X,Y/f),
\]
each of size $\poly(s_g,s_f,d)$ and depth at most
$\max\{\Delta_g,\Delta_f\}+6$. Moreover, each output circuit may be arranged to
have a multiplication gate at the top.
\end{lemma}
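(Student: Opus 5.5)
The plan is to avoid composing $C_f$ into $C_g$ directly, since substituting a depth-$\Delta_f$ circuit for $Y$ inside $C_g$ would stack the two depths. Instead, we first expose the $Y$-dependence of $g$ explicitly using interpolation. By \cref{lem:circuit-coefficient-interpolation} applied to $C_g$ (with $\abs{\F}>d$), we obtain circuits $C_0,\dots,C_d$ computing the coefficients $g_0(\X),\dots,g_d(\X)$ in
\[
    g(\X,Y)=\sum_{i=0}^{d} g_i(\X)Y^i .
\]
Each $C_i$ has size $\poly(s_g,d)$ and depth at most $\Delta_g+2$. Then
\[
    g(\X,Yf)=\sum_{i=0}^{d} g_i(\X)\,Y^i f^i,
    \qquad
    g(\X,Y/f)=\frac{\sum_{i=0}^{d} g_i(\X)\,Y^i f^{d-i}}{f^{d}} .
\]

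For $g(\X,Yf)$, we build each term $g_i\cdot Y^i\cdot f^i$ with a single multiplication gate of fan-in $2i+1$. Its children are the output of $C_i$, $i$ copies of the leaf $Y$, and $i$ copies of the output of $C_f$ (arbitrary fan-in, and repeated wires into a product gate, are allowed). This gate sits at depth $\max\{\Delta_g+2,\Delta_f\}+1$. A single addition gate then sums the $d+1$ terms, giving depth $\max\{\Delta_g,\Delta_f\}+4$.

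For $g(\X,Y/f)$, the numerator is built in the same way using the powers $f^{d-i}$. We then add one division gate whose denominator is a product gate computing $f^d$ from $d$ copies of $C_f$. This adds one more layer, for depth at most $\max\{\Delta_g,\Delta_f\}+5$.

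To satisfy the requirement that the output have a multiplication gate at the top, we wrap each output in a trivial fan-in-one product gate, or equivalently a product with the constant $1$. This costs one more layer and keeps both depths within $\max\{\Delta_g,\Delta_f\}+6$. For the division case, an alternative is to write the output as the product of the numerator and the gate computing $1/f^d$.

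The size of each construction is the sizes of the $C_i$, plus $s_f$ (we share one copy of $C_f$ and reuse its output wire), plus $O(d^2)$ extra wires. This totals $\poly(s_g,s_f,d)$, and the whole construction clearly runs in polynomial time.

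The only delicate point is the depth bookkeeping: interpolation contributes $+2$, and powering, summing, dividing and the top product each contribute at most one layer. A second point to check is that the shared circuit $C_f$ is used as a DAG, with fan-out allowed, rather than duplicated; otherwise the size bound would not hold. Neither of these is a real obstacle, and everything else is routine.
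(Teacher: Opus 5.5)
Your proposal is correct and follows essentially the same route as the paper: you interpolate $g$ in $Y$ via \cref{lem:circuit-coefficient-interpolation} to obtain $g_0,\dots,g_d$ at depth $\Delta_g+2$, build each summand with a single product gate, sum, and add a trivial top multiplication gate. The only cosmetic difference is that for $g(\X,Y/f)$ you clear denominators and use one division by $f^d$, whereas the paper places a division gate $g_iY^i/f^i$ in each summand before summing; both stay within depth $\max\{\Delta_g,\Delta_f\}+6$ and have denominators that are powers of $f$, which is what \cref{clm:sparse-denominators} later needs.
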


\begin{proof}
Write
\[
    g(\X,Y)=\sum_{i=0}^{d} g_i(\X)Y^i .
\]
By \cref{lem:circuit-coefficient-interpolation}, applied to $C_g$ with respect
to $Y$, we can construct circuits $C_0,\dots,C_d$ computing
$g_0,\dots,g_d$, respectively. Each $C_i$ has size $\poly(s_g,d)$ and depth at
most $\Delta_g+2$.

To compute $g(\X,Y\cdot f)$, for each $0\leq i\leq d$ construct a circuit for
\[
    g_i(\X)Y^i f^i
\]
by multiplying the output of $C_i$, $i$ copies of the input $Y$, and $i$ copies
of $C_f$. Summing these circuits over all $i$ gives
\[
    \sum_{i=0}^{d} g_i(\X)Y^i f^i
    =
    g(\X,Y\cdot f).
\]
The size is $\poly(s_g,s_f,d)$, and the depth is at most
$\max\{\Delta_g,\Delta_f\}+5$ after adding, if necessary, a multiplication-by-$1$
gate at the output.

The construction for $g(\X,Y/f)$ is analogous. For each $0\leq i\leq d$, compute
\[
    \frac{g_i(\X)Y^i}{f^i}
\]
using a division gate, where the numerator is the product of the output of
$C_i$ with $i$ copies of $Y$, and the denominator is the product of $i$ copies
of $C_f$. Summing these rational functions gives
\[
    \sum_{i=0}^{d} \frac{g_i(\X)Y^i}{f^i}
    =
    g(\X,Y/f).
\]
Again the size is $\poly(s_g,s_f,d)$, and the depth is at most
$\max\{\Delta_g,\Delta_f\}+6$ after adding, if necessary, a
multiplication-by-$1$ gate at the output.
\end{proof}

Finally, we will use the following lemma on symmetric polynomials to ensure that
the algorithm remains over the ground field. Later, when recovering factors from
the normalized and reverse-normalized polynomials, it will be straightforward to
describe the relevant factors after passing to an extension field containing the
roots of a suitable univariate polynomial. However, our algorithm needs to output
circuits over the original field $\F$. The lemma below allows us to rewrite the
required symmetric expressions over the base field: elementary symmetric polynomials in rational functions evaluated at these roots admit small circuits over the base field.

Andrews and Wigderson~\cite{AW24} proved such a statement over fields of
characteristic zero or sufficiently large characteristic, and
\cite{BKRRSS25b} extended it to all sufficiently large fields, with no
restriction on the characteristic. The proof is constructive and yields the
stated polynomial-time construction.

\begin{lemma}[\cite{AW24,BKRRSS25b}]
\label{lem:esym-of-g/h-on-roots-of-f}
Let $\F$ be a sufficiently large field, say $|\F|\geq (mdn)^c$ for some absolute
constant $c>0$. Let $q(Y)\in \F[Y]$ and
$g(\X_n,Y),h(\X_n,Y)\in \F[\X_n,Y]$ be monic in $Y$, of degrees
$d_q,d_g,d_h$, respectively. Suppose that $q,g,h$ are
given by circuits of size at most $m$ and depth at most $\Delta$. Assume moreover that
$h(\X_n,\sigma_i)$ is nonzero as a polynomial over $\overline{\F}$ for every
root $\sigma_i$ of $q$. Let
$d=\max\{d_q,d_g,d_h\}$. If $\sigma_1,\dots,\sigma_{d_q}$ are the roots of $q$
over $\overline{\F}$, counted with multiplicity, then for every
$1\leq r\leq d_q$, one can construct, in time $\poly(m,n,d)$, circuits
$A(\X)$ and $B(\X)$ over $\F$, each of size $\poly(m,n,d)$ and depth
$\Delta+\cO(1)$, such that
\[
    \Esym_r\left(
        \left\{
            \frac{g(\X,\sigma_i)}{h(\X,\sigma_i)}
            : i\in[d_q]
        \right\}
    \right)
    =
    \frac{A(\X)}{B(\X)} .
\]
\end{lemma}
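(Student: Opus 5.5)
We prove \cref{lem:esym-of-g/h-on-roots-of-f} by first turning the elementary symmetric function of the quotients $g(\X,\sigma_i)/h(\X,\sigma_i)$ into a coefficient of a univariate polynomial, and then computing that polynomial through resultants. Introduce a fresh variable $T$ and set
\[
    P(\X,T) \;=\; \prod_{i=1}^{d_q}\bigl(h(\X,\sigma_i)\,T - g(\X,\sigma_i)\bigr).
\]
Up to sign, the coefficient of $T^{d_q-r}$ in $P$ is $\Esym_r(\{g(\X,\sigma_i)\})$ twisted by the $h$'s. Dividing by $\prod_i h(\X,\sigma_i)$ gives the following identity, valid because every $h(\X,\sigma_i)\neq 0$:
\[
    \frac{P(\X,T)}{\prod_i h(\X,\sigma_i)} \;=\; \prod_i\Bigl(T-\frac{g(\X,\sigma_i)}{h(\X,\sigma_i)}\Bigr).
\]
So the target quantity equals $(-1)^r\,[T^{d_q-r}]P \,/\, [T^{d_q}]P$, and it suffices to build small, constant-depth circuits over $\F$ for $P$ and then extract its $T$-coefficients.

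To get $P$ over the base field, use that $q$ is monic. By the product formula for the resultant,
\[
    \prod_{i}F(\sigma_i) \;=\; \Res_Y\bigl(q(Y),F(\X,T,Y)\bigr), \qquad F(\X,T,Y)=h(\X,Y)T-g(\X,Y).
\]
Hence $P=\Res_Y(q,F)$. This is the determinant of the Sylvester matrix, a matrix of size at most $2d$ whose entries are the $Y$-coefficients of $q$ and of $F$. Those coefficients have circuits over $\F$ of size $\poly(m,d)$ and depth $\Delta+O(1)$, obtained by interpolation in $Y$ (\cref{lem:circuit-coefficient-interpolation}). The key steps are then:
\begin{enumerate}
    \item Build the Sylvester matrix entries as above.
    \item Compute its determinant in constant depth.
    \item Interpolate in $T$ (again \cref{lem:circuit-coefficient-interpolation}, needing $|\F|>d_q$) to pull out the coefficients of $T^{d_q}$ and $T^{d_q-r}$.
    \item Set $B=[T^{d_q}]P=\prod_i h(\X,\sigma_i)$, which is nonzero by hypothesis, and $A=(-1)^r[T^{d_q-r}]P$.
\end{enumerate}

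The main obstacle is step 2, computing a $2d\times 2d$ determinant of circuit entries while adding only $O(1)$ to the depth. A generic determinant circuit has depth $\Theta(\log^2 d)$, so composing with it would break the depth bound. I would first try interpolation in an auxiliary variable. Write each entry as a polynomial in $\X,T$ of degree at most $\poly(d)\cdot\deg$, and use that the determinant is a polynomial of degree at most $2d$ in the entries. Substituting $\X\mapsto \balpha+ Z\X$, computing the determinant as a univariate function of $Z$ through its values at $\poly$ many points, and interpolating would each need field size $\poly(m,n,d)$, which is exactly the largeness hypothesis.

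The cleaner route is to observe that the determinant, viewed as a function of the entries, is a fixed polynomial of degree $2d$ in $\le 4d^2$ variables. By the constant-depth determinant/iterated matrix multiplication results underlying \cite{AW24,BKRRSS25b}, it has a constant-depth circuit of size $\poly(d)$ in characteristic $0$ or large characteristic. For general fields of size at least $(mdn)^c$, it has a constant-depth circuit with division gates, of size $\poly(d)$ (via Newton identities replaced by the large-field tricks of \cite{BKRRSS25b}). Composing this with the depth-$(\Delta+O(1))$ entry circuits gives depth $\Delta+O(1)$ and size $\poly(m,n,d)$. Only the composition with a constant-depth determinant circuit is genuinely nontrivial.
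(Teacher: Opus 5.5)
Your reduction is correct. With $F=h(\X,Y)T-g(\X,Y)$ you have $P=\prod_i F(\X,T,\sigma_i)=\Res_Y(q,F)$, the leading $T$-coefficient is $\prod_i h(\X,\sigma_i)\neq 0$, and the target equals $(-1)^r[T^{d_q-r}]P/[T^{d_q}]P$, so steps 1, 3 and 4 are fine. The gap is step 2, and as written it cannot be repaired. The generic $N\times N$ determinant has no $\poly(N)$-size constant-depth circuit:
\begin{itemize}
\item iterated matrix multiplication is a projection of a polynomial-size determinant;
\item the Limaye--Srinivasan--Tavenas lower bounds (extended to all fields by Forbes) show that iterated matrix multiplication needs superpolynomial size at every fixed depth;
\item division gates do not help, since over large fields they can be eliminated with $\cO(1)$ extra depth by Strassen's truncation trick.
\end{itemize}
So there is no constant-depth determinant result underlying \cite{AW24,BKRRSS25b} for you to invoke. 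Their point is exactly that \emph{structured} determinants such as resultants are in constant depth even though the generic one is not. Your fallback also fails. Fixing $Z$ to a field element leaves $\X$ symbolic, so each ``value'' is still the determinant of a polynomial matrix, and nothing is gained. The paper itself does not prove this lemma; it imports it from \cite{AW24,BKRRSS25b}, and the step you are missing is precisely their content.

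The idea you need is that your determinant is a norm. Because $q\in\F[Y]$ has constant coefficients, $\Res_Y(q,F)=\det F(C_q)$ for the companion matrix $C_q$. This is a symmetric function of the roots, and its power sums are easy to compute over $\F$:
\begin{itemize}
\item for $k\le d_q$, extract $F^k=\sum_j c_{k,j}(\X,T)Y^j$ by interpolation in $Y$;
\item then $\pi_k:=\sum_i F(\X,T,\sigma_i)^k=\sum_j p_j\,c_{k,j}(\X,T)$, where $p_j=\sum_i\sigma_i^j\in\F$ is computed from the coefficients of $q$.
\end{itemize}
When $\Char(\F)=0$ or $\Char(\F)>d$, use Newton's identities in the form $\prod_i(1-u y_i)=\exp\bigl(-\sum_{k\ge1}p_k(y)u^k/k\bigr)$. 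Truncate at degree $d_q$ in $u$, evaluate as a sum of at most $d_q+1$ powers of $-\sum_{k\le d_q}\pi_k u^k/k$, and interpolate in $u$. The coefficient of $u^{d_q}$ is $(-1)^{d_q}P$, obtained with polynomial size and only $\cO(1)$ extra depth; your steps 3 and 4 then finish the proof. This covers every use of the lemma in this paper, which assumes characteristic $0$ or greater than $d$. For the lemma as stated, over large fields of small characteristic where Newton's identities fail, you genuinely need the different technique of \cite{BKRRSS25b}.
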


\subsection{Deterministic Univariate Factorization}\label{UnivariateFactorization}

We briefly summarize some standard results on deterministic algorithms for univariate polynomial factorization.

\begin{definition}
    \label{def:time-taken-to-factor-univariates}
    Let $f \in \F[X]$ be an univariate polynomial of degree $d$. We write $\cT(\F,d)$ for the time taken to factor $f$ over $\F$.
\end{definition}

Univariate factorization over the rational numbers can be performed in deterministic polynomial time using the celebrated LLL algorithm of Lenstra, Lenstra, and Lov\'asz~\cite{LLL82}.

\begin{theorem}[Univariate factorization over rationals {\cite{LLL82}}]
\label{thm:univariate-factorization-rationals}
There is a deterministic algorithm that, given a polynomial $f\in \Q[x]$ of degree $d$ and coefficient bit complexity at most $B$, computes the irreducible factorization of $f$ over $\Q$ in time $\poly(d,B)$. Thus, over $\Q$, $\cT(\Q,d)$ is polynomial in $d$ and in the coefficient bit complexity of the input.
\end{theorem}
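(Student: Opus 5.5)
The plan is to follow the classical Zassenhaus/LLL framework: reduce to a square-free primitive integer polynomial, factor modulo a small prime, Hensel-lift that factorization to a high $p$-adic precision, and then use lattice basis reduction to recover the true integer factors. First, clear denominators so that $f \in \mathbb{Z}[x]$ is primitive. Compute $\gcd(f, f')$ by the subresultant/Euclidean algorithm, and recursively obtain a square-free decomposition $f=\prod_i u_i^{i}$. In characteristic $0$ this is valid, and all intermediate coefficients have bit size $\poly(d,B)$ by the Mignotte bound. It therefore suffices to factor a square-free primitive $u\in\mathbb{Z}[x]$ of degree $d$ whose coefficients have $\poly(d,B)$ bits.

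Next, choose the smallest prime $p$ not dividing $\mathrm{lc}(u)\cdot\Res(u,u')$. Since the resultant is nonzero of bit size $\poly(d,B)$, such a $p$ exists with $p=\poly(d,B)$ and is found by trial. Then $u \bmod p$ is square-free, and Berlekamp's algorithm factors it deterministically in time $\poly(d,p)$. Using quadratic Hensel lifting, lift this factorization to one modulo $p^k$, where $k$ is chosen so that $p^k > 2^{\cO(d^2)}\cdot \|u\|^{\cO(d)}$, a bound derived from Mignotte's factor-coefficient bound. This keeps $k\log p=\poly(d,B)$.

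The core step is to recover, from a modular irreducible factor $h$ with $\deg h = \ell$, the unique irreducible factor $u_0$ of $u$ over $\mathbb{Z}$ with $h \mid u_0 \bmod p$. For each candidate degree $m=\ell,\dots,d$, consider the lattice $\Lambda_m$ of integer polynomials of degree less than or equal to $m$ that are divisible by $h$ modulo $p^k$. It is spanned by $p^k x^i$ for $i<\ell$ and $h x^j$ for $j\le m-\ell$. Run LLL on $\Lambda_m$ and take the first reduced vector $b_1$. The key lemma states that if $g\in\Lambda_m$ satisfies $\|g\|^{\deg u}\|u\|^{m} < p^{k\ell}$, then $\gcd(u,g)$ is nontrivial, because their resultant is divisible by $p^{k\ell}$ yet bounded in absolute value by Hadamard. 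This lemma, combined with the LLL approximation factor $2^{(m-1)/2}$, shows that for the least $m$ with $m \ge \deg u_0$, the polynomial $u_0$ equals $\gcd(u,b_1)$ up to sign. Extracting $u_0$ this way, dividing it out, and repeating yields the full factorization in time $\poly(d,B)$.

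I expect the main obstacle to be the bookkeeping in this last step, namely verifying that a single precision $k$ with $k\log p=\poly(d,B)$ simultaneously makes the resultant argument work for all $m$ and keeps the LLL vector short. The approach there is to fix $k$ from the worst case $m=d$ using Mignotte's bound $\|u_0\|\le\binom{m}{\lfloor m/2\rfloor}\|u\|$. LLL itself runs in time polynomial in dimension and bit size, which gives the claimed $\poly(d,B)$ overall.
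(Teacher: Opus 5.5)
The paper gives no proof of this statement; it simply quotes \cite{LLL82}, and your outline is the Lenstra--Lenstra--Lov\'asz argument itself. Most of your steps are fine. This includes the claim $p^{k\ell}\mid\Res(u,g)$, which holds because the image of the Sylvester map lies in the index-$p^{k\ell}$ sublattice of polynomials divisible by $h$ modulo $p^k$.

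\textbf{The gap: identifying $u_0$.} Your lemma only gives that $\gcd(u,g)$ is nontrivial. At $m=\deg u_0$, this together with $\deg b_1\le m$ does not exclude $b_1=u_1q$, where $u_1$ is a different irreducible factor of $u$ and $h$ divides $q$ modulo $p^k$. So the conclusion $\gcd(u,b_1)=\pm u_0$ is not justified. You need the stronger conclusion $u_0\mid g$, which is the form actually proved in \cite{LLL82}.

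Within your framework, the quickest fix is to run your resultant argument on the pair $(u_0,g)$ instead of $(u,g)$:
\begin{itemize}
\item Since $u \bmod p$ is square-free, $\bar h$ is coprime to $\overline{u/u_0}$, and the B\'ezout relation lifts modulo $p^k$. Hence $h\bmod p^k$ divides $u_0\bmod p^k$.
\item If $\gcd(u_0,g)=1$, then $p^{k\ell}\le\|u_0\|^{m}\|g\|^{\deg u_0}$.
\item This is impossible once $p^k$ also absorbs Mignotte's bound $\|u_0\|\le\binom{d}{\lfloor d/2\rfloor}\|u\|$. The required precision is still of the form $2^{\cO(d^2)}\|u\|^{\cO(d)}$.
\item Irreducibility of $u_0$ then gives $u_0\mid g$.
\end{itemize}

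\textbf{Smaller repairs.}
\begin{itemize}
\item The algorithm does not know $\deg u_0$, so ``the least $m\ge\deg u_0$'' must be detectable. Iterate $m=\ell,\ell+1,\dots$ and stop at the first $m$ for which $\bar h$ divides the reduction modulo $p$ of $\gcd(u,b_1)$. This rule is sound: any divisor of $u$ whose reduction is divisible by $\bar h$ is divisible by $u_0$, so the test cannot pass when $m<\deg u_0$. At $m=\deg u_0$, the strengthened lemma gives $\gcd(u,b_1)=\pm u_0$.
\item $\Lambda_m$ has rank $m+1$, so the LLL approximation factor is $2^{m/2}$, not $2^{(m-1)/2}$. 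This does not change the shape of the precision bound.
\item When you divide out $u_0$ and repeat, either keep the original $u$ in the lattice analysis and discard the modular factors dividing $\bar u_0$, or choose $k$ large enough to cover the cofactors via Mignotte's bound.
\end{itemize}
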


Polynomial factorization over finite fields has been extensively studied; see von zur Gathen and Gerhard~\cite[Chapter~14]{von_zur_Gathen_Gerhard_2013} for more details.
The following standard deterministic bound is sufficient for our purposes.

\begin{theorem}[{\cite{Berlekamp70, Shoup90}}]
\label{thm:univariate-factorization-finite-fields}
There is a deterministic algorithm that, given a polynomial $f\in \F_{p^m}[x]$ of degree $d$, computes the irreducible factorization of $f$ over $\F_{p^m}$ in time $\poly(d,m,p)$. In other words, $\cT(\F_{p^m},d)=\poly(d,m,p)$.
\end{theorem}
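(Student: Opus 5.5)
The statement is a standard result, and the plan is to reconstruct Berlekamp's deterministic algorithm with the refinement that splitting only requires enumerating the prime field $\F_p$, not all of $\F_{p^m}$. Throughout, let $q=p^m$. All arithmetic in $\F_q$ (represented as $\F_p[t]/(\mu(t))$ for an irreducible $\mu$ of degree $m$, assumed given) and in $\F_q[x]/(f)$ costs $\poly(d,m,\log p)$ per operation. Univariate gcd's cost $\poly(d)$ field operations.

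\textbf{Square-free reduction.} First reduce to the square-free case. Compute $\gcd(f,f')$. If $f'=0$, then $f(x)=g(x^p)=(g^{(1/p)}(x))^p$, where $g^{(1/p)}$ is obtained by applying the inverse Frobenius $a\mapsto a^{p^{m-1}}$ to each coefficient; recurse on $g^{(1/p)}$. Otherwise split off $f/\gcd(f,f')$ and iterate, as in the standard Yun/Musser square-free decomposition adapted to characteristic $p$. This produces square-free $f_1,\dots,f_k$ with $f=\prod f_i^{e_i}$ in $\poly(d,m,\log p)$ time. Hence assume $f$ is square-free with irreducible factors $f=g_1\cdots g_t$.

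\textbf{Berlekamp algebra.} Next build the Berlekamp algebra
\[
\mathcal{B}=\{h\in\F_q[x]/(f): h^q\equiv h \bmod f\}.
\]
By the Chinese remainder theorem, $\mathcal{B}\cong \F_q^t$, since $h\equiv c_j \bmod g_j$ with $c_j\in\F_q$. Compute it as the kernel of $Q-I$, where $Q$ is the $d\times d$ matrix whose rows are $x^{qi}\bmod f$. The powers $x^{qi}\bmod f$ are obtained by repeated squaring in $O(m\log p)$ multiplications, and the kernel by Gaussian elimination. Then $t=\dim\mathcal{B}$, and $f$ is irreducible iff $t=1$.

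\textbf{Deterministic splitting (main obstacle).} The difficulty is to split deterministically. Naively, for each basis vector $h$ of $\mathcal{B}$ one would try $\gcd(f,h-c)$ over all $c\in\F_q$, which costs $q=p^m$ and is not $\poly(m,p)$. The fix I would try is to pass to the prime field via the trace $\mathrm{Tr}(y)=y+y^p+\cdots+y^{p^{m-1}}$. Let $a_1,\dots,a_m$ be an $\F_p$-basis of $\F_q$, and let $h_1,\dots,h_t$ be a basis of $\mathcal{B}$. Consider the elements $\mathrm{Tr}(a_\ell h_j)\bmod f$. Each lies in $\mathcal{B}$ and takes values in $\F_p$ modulo every $g_i$, and together they still separate any two irreducible factors. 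The separation holds because if $g_i,g_{i'}$ agreed on all of them, nondegeneracy of the trace form would force $h_j\equiv h_j$ values to coincide for every $j$, contradicting the fact that $\mathcal{B}\cong\F_q^t$ separates components.

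For each such element $u$ and each $c\in\F_p$, compute $\gcd(F,u-c)$ for every current partial factor $F$, and replace $F$ by the nontrivial pieces. There are $mt\cdot p$ gcd computations per refinement pass and at most $t\le d$ refinements, so the total cost is $\poly(d,m,p)$. The resulting pieces are exactly the $g_i$, because the $u$'s separate all pairs. Combining with the multiplicities from the square-free stage gives the full factorization.

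The one step needing care is this separation claim. I would verify it carefully via the CRT coordinates: the map $c\mapsto(\mathrm{Tr}(a_\ell c))_\ell$ is injective on $\F_q$, and this injectivity is what makes the trace elements separate factors. This is where the linear dependence on $p$ (rather than on $q$) comes from, matching the bound $\cT(\F_{p^m},d)=\poly(d,m,p)$.
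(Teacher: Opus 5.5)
Your proposal is correct and follows essentially the route behind the citation. The paper gives no proof of its own; the cited Berlekamp method is the same as yours: the Berlekamp subalgebra of the square-free part, then the trace $\F_{p^m}\to\F_p$, then exhaustive splitting by $\gcd(F,u-c)$ over $c\in\F_p$, which gives the $\poly(d,m,p)$ bound. One small repair: the garbled phrase ``$h_j\equiv h_j$ values'' should say that if $\mathrm{Tr}(a_\ell c_{ij})=\mathrm{Tr}(a_\ell c_{i'j})$ for all $\ell$, then $c_{ij}=c_{i'j}$ for all $j$, where $c_{ij}\in\F_{p^m}$ is the residue of $h_j$ modulo $g_i$.
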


Since specifying the input polynomial $f$ requires only $\cO(dm\log p)$ bits, the running time in \cref{thm:univariate-factorization-finite-fields} can be super-polynomial in the bit complexity of the input when the characteristic $p$ is large. Finally, we state the time complexity of the following folklore result on factorization of constant-variate polynomials.

\begin{lemma}
    \label{lem: Constantvariate factorization}
    Let $f(\X,Y)$ be a $(c+1)$-variate polynomial of degree $D$ over $\F$. Suppose $f$ is monic or reversed-monic in $Y$. Then all factors of $f$ of $Y$-degree at most $d$ can be computed in time $\poly(D^{cd},\cT(\F,D))$.
\end{lemma}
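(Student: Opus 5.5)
We handle the case where $f$ is monic in $Y$; the reversed-monic case reduces to it. If $f$ is reversed-monic, then $\rev_Y(f)$ is monic of the same $Y$-degree. By \cref{lem: reversefactor} and its corollaries, the factors of $f$ correspond bijectively to the factors of $\rev_Y(f)$ via $g\mapsto \rev_Y(g)$. Every factor of $f$ has constant term dividing $1$, so it is nonzero and reversal preserves $Y$-degree. So from now on $f(\X,Y)$ is monic in $Y$, with $\X=(X_1,\dots,X_c)$, total degree $D$ and $Y$-degree at most $D$.

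The plan is a Kronecker-style reduction to univariate factorization. Gauss' lemma applied to a monic polynomial shows that every factor of $f$, normalized up to a scalar, is itself monic in $Y$. Its total degree is at most $D$, since the degree of a product is additive. Choose $B=D+1$ and the Kronecker substitution $X_i\mapsto T^{B^{i}}$, $Y\mapsto T$, or better, keep $Y$ and map $X_i\mapsto Y^{B^i}$. This substitution is injective on monomials of individual degree less than $B$, so it is injective on $f$ and on all its factors. The resulting univariate polynomial $u(T)$ has degree at most $D\cdot B^{c}=\poly(D^{c})$.

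Next factor $u$ into irreducibles over $\F$ in time $\cT(\F,D^{O(c)})$. Every factor $g$ of $f$ maps to a factor $\kappa(g)$ of $u$, and $\kappa(g)$ is a product of a sub-multiset of the irreducible factors of $u$. To stay within the claimed time bound, enumerate only the sub-multisets that could be images of factors of $Y$-degree at most $d$. Use the fact that $f$ has at most $D$ irreducible factors over $\F(\X)$, counted with multiplicity, and that the factors of $Y$-degree at most $d$ are products of at most $d$ of them. This gives at most $\binom{D}{\le d}\le D^{O(d)}$ candidates for $g$ at the level of $f$.

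To avoid the combinatorics of $u$'s factors, which can split much more finely than $f$'s, do the following. First obtain the irreducible factors of $f$ themselves by the standard reduction: apply a shift making $f(\mathbf 0,Y)$ square-free after removing repeated factors, which is possible since $\F$ is large and $c$ is constant. Then Hensel-lift the univariate factors to degree $D$ and recombine. Since $c$ is constant, there are at most $D^{O(c)}$ coefficients to handle. Recombination can be done by linear algebra or by brute force over subsets of at most $d$ lifted factors, so the brute force costs $D^{O(d)}$ subsets times $\poly(D^{c})$ work each.

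Finally, for each candidate product of at most $d$ irreducible factors, counted with multiplicity, verify that it divides $f$ by dense division in $\poly(D^{c})$ time and output it. The total time is $\poly(D^{cd},\cT(\F,D))$.

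The main obstacle is the recombination step. The univariate image may split into many more factors than $f$ does, so true factors must be reconstructed from lifted pieces without exceeding $D^{O(cd)}$ time. I would first try restricting attention to subsets of at most $d$ lifted factors, since a factor of $Y$-degree at most $d$ lifts to at most $d$ linear or low-degree pieces. I would then rely on dense trial division to discard the wrong combinations.
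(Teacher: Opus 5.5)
The paper gives no proof of this lemma; it is quoted as folklore, so your proposal has to stand on its own. The route you settle on from the fourth paragraph onward is the standard argument. You pass to the square-free part, shift so that its specialization at $\X=\mathbf 0$ is square-free, and factor that univariate polynomial in time $\cT(\F,D)$. You then Hensel-lift to precision $D$, try all subsets of at most $d$ lifted factors, then all sub-multisets of at most $d$ of the resulting irreducible factors, and confirm each candidate by dense trial division. When $\Char(\F)=0$ or $\Char(\F)>D$, and $\F$ is large, this is complete. It even gives $\binom{\deg_Y f}{\le d}\cdot\poly(D^{c})+\cT(\F,\deg_Y f)$, with the number of candidates governed by the $Y$-degree alone.

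Two clean-ups are needed.
\begin{itemize}
\item Delete the Kronecker paragraphs. Factoring $u$ costs $\cT(\F,D^{O(c)})$, which is outside the allowed $\cT(\F,D)$ dependence. There is also no effective way to \emph{enumerate only the sub-multisets that could be images}.
\item Replace \emph{I would first try} by the actual reason, which is uniqueness of Hensel lifting. The specialized factors are pairwise coprime. Hence a monic factor $g$ of the shifted square-free part is congruent modulo $\langle\X\rangle^{D+1}$ to the product of exactly those lifted factors whose specializations divide $g(\mathbf 0,Y)$. Each of these has positive $Y$-degree, so at most $d$ occur. Since $\deg g\le D$, the truncated product equals $g$.
\end{itemize}

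The genuine gap is the characteristic. The lemma has no hypothesis on $\F$, and it is used for \cref{thm:main2}, which is claimed over arbitrary fields. Your step of removing repeated factors and shifting until the specialization is square-free silently assumes $Y$-separability, and it breaks once $\Char(\F)=p\le D$.

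Over $\F_p$, the polynomial $f=X_1^p+Y^p=(Y+X_1)^p$ is monic in $Y$ and has the factor $Y+X_1$ of $Y$-degree $1$. But $\partial_Y f=0$, so the square-free part computed as $f/\gcd(f,\partial_Y f)$ is $1$, and the factor is lost. Polynomials of this shape do arise in \cref{sec:recovering-bounded-individual-degree-factors}: the $2$-sparse input $X_1^p+X_n^p$ gives $\hat f_{\cG}=(X_n+\cG_1(\bZ_4))^p$.

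When $p\le d$ it is worse. Take $f=Y^p-X_1$, which is irreducible and must itself be output. For every $\balpha$, $f(\balpha,Y)=Y^p-\alpha_1=(Y-\gamma)^p$ with $\gamma^p=\alpha_1$ in $\overline{\F}$. So no shift makes any specialization square-free, and Hensel lifting in $Y$ cannot start.

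To cover arbitrary fields you need an additional separable-factorization step. This means splitting off the $Y$-inseparable part and handling multiplicities divisible by $p$, for instance via $p$-th roots over a perfect field or by changing the distinguished variable. Otherwise you should state the result only for $\Char(\F)=0$ or $\Char(\F)>D$, where your argument goes through.
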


\subsection{Sylvester Matrix, Resultant, and Discriminant}

\label{subsec:resultant_discriminant}

We recall the definitions of the Sylvester matrix, the resultant, and the discriminant of univariate polynomials. A more thorough treatment of this material can be found in \cite{von_zur_Gathen_Gerhard_2013}.

Resultants give an algebraic way to detect when two univariate polynomials have a common root.

\begin{definition} \label{def:sylvester matrix}
	Let $f(X) = \sum_{i=0}^{d_1} f_i X^i$ and $g(X) = \sum_{i=0}^{d_2} g_i X^i$ be polynomials in $\F[X]$ of degrees $d_1$ and $d_2$, respectively.
	The \emph{Sylvester matrix} of $f$ and $g$ is the $(d_1 + d_2) \times (d_1 + d_2)$ matrix given below, where blank entries are zero:
	\[
		\Syl(f,g) \coloneqq \begin{pmatrix}
			f_{d_1} & & & & g_{d_2} & & & & & \\
			f_{d_1-1} & f_{d_1} & & & g_{d_2-1} & g_{d_2} & & & & \\
			\vdots & \vdots & \ddots & & \vdots & \vdots & \ddots & & & \\
			\vdots & \vdots & & f_{d_1} & g_1 & \vdots & & \ddots & & \\
			\vdots & \vdots & & f_{d_1-1} & g_0 & \vdots & & & \ddots & \\
			\vdots & \vdots & & \vdots & & g_0 & & & & g_{d_2} \\
			f_0 & \vdots & & \vdots & & & \ddots & & & \vdots \\
			& f_0 & & \vdots & & & & \ddots & & \vdots \\
			& & \ddots & \vdots & & & & & \ddots & \vdots \\
			& & & f_0 & & & & & & g_0
		\end{pmatrix}.
	\]
	The \emph{resultant} of $f$ and $g$ is $\Res(f,g) \coloneqq \det \Syl(f,g)$.
\end{definition}

Discriminants give an algebraic way to detect when a univariate polynomial has a repeated root.

\begin{definition}\label{def: Discriminant}
	Let $f(X) \in \F[X]$ be a univariate polynomial.
	The \emph{discriminant} of $f$, denoted $\Disc(f)$, is defined as $\Disc(f) \coloneqq  \Res(f,f')$, where $f'$ is the formal derivative of $f$.
\end{definition}

For $f,g \in \F[\X,Y]$, we write $\Syl_Y(f,g)$, $\Res_Y(f,g)$, and $\Disc_Y(f)$ for the Sylvester matrix, resultant, and discriminant obtained by viewing $f$ and $g$ as polynomials in $Y$ over $\F(\X)$.

The following well-known lemma shows a strong connection between the kernel of the Sylvester matrix and the degree of the gcd. This was also used in other recent factoring algorithms by \cite{Volkovich17,chuyoon2026factorization}.
For completeness, we include a proof in \cref{Appendix:resultant-discriminant}.

\begin{lemma}\label{lem:resultant-prop}
    Let $\F$ be a field. Let $f,g \in \F[X]$ be two non-zero polynomials. Then $\dim\ker(\Syl(f,g)) = \deg(\gcd(f,g))$.
\end{lemma}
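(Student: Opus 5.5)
Write $d=\deg(\gcd(f,g))$, $d_1=\deg f$ and $d_2=\deg g$. The plan is to identify the kernel of $\Syl(f,g)$ with the space of polynomial pairs $(u,v)$ satisfying $uf+vg=0$ under a degree constraint, and then parametrize that space through the gcd.

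\textbf{Step 1 (the bijection).} Reading off the columns of $\Syl(f,g)$: the first $d_2$ columns hold the coefficients of $X^{d_2-1}f,\dots,f$, and the last $d_1$ columns hold those of $X^{d_1-1}g,\dots,g$, all written in the basis $X^{d_1+d_2-1},\dots,1$. So a vector $(a,b)\in\F^{d_2}\times\F^{d_1}$ is in the kernel exactly when the polynomials $u=\sum a_iX^{i}$ with $\deg u<d_2$ and $v=\sum b_jX^j$ with $\deg v<d_1$ satisfy $uf+vg=0$. This is a linear bijection, so
\[
\dim\ker\Syl(f,g)=\dim V,\qquad V=\{(u,v): \deg u<d_2,\ \deg v<d_1,\ uf+vg=0\}.
\]
One should check that the orientation of the columns matches the displayed matrix, but this is routine.

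\textbf{Step 2 (parametrizing $V$).} Let $h=\gcd(f,g)$ and write $f=hf_1$, $g=hg_1$ with $\gcd(f_1,g_1)=1$, where $\deg f_1=d_1-d$ and $\deg g_1=d_2-d$. Since $h\neq 0$, the relation $uf=-vg$ is equivalent to $uf_1=-vg_1$. Because $f_1$ and $g_1$ are coprime, $g_1\mid u$. Writing $u=wg_1$ then forces $v=-wf_1$. The constraint $\deg u<d_2$ becomes $\deg w<d$, and the constraint $\deg v<d_1$ gives the same condition. Conversely, every $w$ with $\deg w<d$ yields an element of $V$, and the map $w\mapsto (wg_1,-wf_1)$ is injective because $g_1\neq 0$. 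Hence $V$ is isomorphic to the space of polynomials of degree less than $d$, which has dimension $d$.

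\textbf{Degenerate cases.} When $d=0$ the kernel is trivial, and the argument above still applies. If $d_1=0$ or $d_2=0$, the Sylvester matrix has one empty block. The same argument then gives a kernel of dimension $0=\deg\gcd$: for instance, if $f$ is a nonzero constant, the matrix is $f$ times the identity.

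\textbf{Main obstacle.} There is no genuine obstacle. The only point that needs care is the column and row indexing convention in Step 1; the rest is unique factorization in $\F[X]$.
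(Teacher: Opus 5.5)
Your proof is correct and follows essentially the same route as the paper: you identify $\ker\Syl(f,g)$ with the pairs $(u,v)$ satisfying $\deg u<\deg g$, $\deg v<\deg f$ and $uf+vg=0$, and then use the coprimality of $f/\gcd(f,g)$ and $g/\gcd(f,g)$ to write these pairs as $(wg_1,-wf_1)$ with $\deg w<\deg\gcd(f,g)$. Your explicit check of the degenerate cases ($d=0$, or a constant $f$ or $g$) is a small addition that the paper leaves implicit.
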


In particular, for univariate polynomials, the resultant is nonzero exactly when the polynomials are coprime, and the discriminant is nonzero exactly when the polynomial is square-free.
Applying the preceding lemma to a polynomial and its derivative relates the kernel of the corresponding Sylvester matrix to the repeated factors of the polynomial.

\begin{corollary}\label{cor:resultant-prop}
    Let $f \in \F[X]$ be a monic non-constant polynomial of degree at most $d$, where $\Char(\F)$ is either $0$ or greater than $d$. Suppose $f=\prod_i h_i^{e_i}$ for some monic non-constant polynomials $h_i\in\F[X]$. Then
    \begin{equation*}
        \dim\ker(\Syl(f,f')) \geq \sum_i (e_i-1)\deg(h_i).
    \end{equation*}
    Moreover, equality holds if the $h_i$ are distinct pairwise coprime irreducible polynomials.
\end{corollary}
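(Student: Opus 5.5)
The plan is to reduce everything to \cref{lem:resultant-prop}, which gives $\dim\ker(\Syl(f,f')) = \deg(\gcd(f,f'))$, and then to compute $\gcd(f,f')$ from the factorization of $f$.

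\textbf{Applying the lemma.} We need $f'\neq 0$. Since $f$ is monic and non-constant of degree $m\le d$, the leading coefficient of $f'$ is $m$. This is nonzero in $\F$ because $\Char(\F)$ is $0$ or greater than $d\ge m$. So \cref{lem:resultant-prop} applies, and it remains to bound $\deg\gcd(f,f')$.

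\textbf{Inequality.} Set $G=\prod_i h_i^{e_i-1}$. Clearly $G\mid f$. By the product rule,
\[
    f' = \sum_i e_i\, h_i'\, h_i^{e_i-1}\prod_{j\neq i} h_j^{e_j}.
\]
Each summand is divisible by $G$, so $G\mid f'$. Therefore $G\mid\gcd(f,f')$, and
\[
    \dim\ker(\Syl(f,f')) = \deg\gcd(f,f') \ge \deg G = \sum_i (e_i-1)\deg(h_i).
\]
This step does not need the $h_i$ to be distinct or coprime.

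\textbf{Equality.} Now assume the $h_i$ are distinct, pairwise coprime and irreducible. Then $\gcd(f,f')$ divides $f$, so by unique factorization it equals $\prod_i h_i^{a_i}$ with $a_i\le e_i$. We already have $a_i\ge e_i-1$, so it suffices to show $h_i^{e_i}\nmid f'$. Write
\[
    f' = h_i^{e_i-1}\Big(e_i h_i'\prod_{j\neq i}h_j^{e_j} + h_i\cdot R_i\Big)
\]
for some polynomial $R_i$. We need $h_i\nmid e_i h_i'\prod_{j\ne i}h_j^{e_j}$, and since $h_i$ is irreducible, hence prime, it is enough to check each factor:
\begin{itemize}
    \item $e_i\neq 0$ in $\F$, because $1\le e_i\le d<\Char(\F)$ (or $\Char(\F)=0$).
    \item $h_i'$ is nonzero, since its leading coefficient is $\deg h_i\le d$, which is nonzero in $\F$. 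It has degree less than $\deg h_i$, so $h_i\nmid h_i'$.
    \item $h_i\nmid h_j$ for $j\ne i$, by pairwise coprimality.
\end{itemize}
Hence $a_i=e_i-1$ for every $i$, $\gcd(f,f')=G$ up to a unit, and equality holds.

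The main care point is the characteristic hypothesis. It is used three times: for $f'\neq0$, for $e_i\neq0$, and for $h_i'\neq0$. Beyond that, the argument is routine.
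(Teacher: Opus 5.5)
Your proof is correct and takes the same route as the paper. You show that $\prod_i h_i^{e_i-1}$ divides both $f$ and $f'$ and apply \cref{lem:resultant-prop} to get the inequality, then use the characteristic hypothesis to pin down $\gcd(f,f')$ exactly for equality. Your explicit check that $h_i^{e_i}\nmid f'$, and that $f'\neq 0$ so the lemma applies, spells out what the paper compresses into the remark that each $h_i$ is coprime to its derivative.
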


A proof of the above corollary appears in~\cite{chuyoon2026factorization} and we include a proof in \cref{Appendix:resultant-discriminant}.

Now we discuss some useful properties of the resultant and discriminant of univariate polynomials in the variable $Y$ whose coefficients come from $\F[\X]$. The following lemma is well-known, and for details we refer to \cite{von_zur_Gathen_Gerhard_2013,AW24}.
\begin{lemma}\label{ResultantFact}
Let $\F$ be a field whose characteristic is either $0$ or greater than $d$. Let
$f,g\in\F[\X,Y]$ be monic in $Y$, with $Y$-degrees at most $d$. Then, for any ${\balpha}\in\F^n$, we have
    \begin{enumerate}
        \item $\Syl_Y(f,g)({\balpha})=\Syl(f({\balpha},Y),g({\balpha},Y))$ and $\Res_Y(f,g)({\balpha})=\Res(f({\balpha},Y),g({\balpha},Y))$.
        \item If $f({\balpha},Y)$ and $g({\balpha},Y)$ are coprime, then $\Res_Y(f,g)({\balpha})\neq0$.
        \item If $f({\balpha},Y)$ is square-free, then $\Disc_Y(f)({\balpha})\neq0$.
    \end{enumerate}
\end{lemma}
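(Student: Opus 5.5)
We prove the three items in order; (1) is the base, and (2) and (3) follow from it together with \cref{lem:resultant-prop}. The basic observation is that evaluation at $\balpha$ is a ring homomorphism $\F[\X]\to\F$, applied entrywise to a matrix.

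\textbf{Item (1).} Since $f$ and $g$ are monic in $Y$, their leading coefficients are $1$. Hence $f(\balpha,Y)$ and $g(\balpha,Y)$ have exactly the same $Y$-degrees as $f$ and $g$; no degree drops under specialization. The Sylvester matrix $\Syl_Y(f,g)$ is built from the coefficients $f_i(\X)$ and $g_j(\X)$ placed in fixed positions determined only by these degrees. Evaluating entrywise at $\balpha$ therefore yields exactly $\Syl(f(\balpha,Y),g(\balpha,Y))$. The determinant is a polynomial expression in the entries and commutes with ring homomorphisms, so $\Res_Y(f,g)(\balpha)=\Res(f(\balpha,Y),g(\balpha,Y))$. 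Monicity is the only subtle point here: without it the sizes of the two Sylvester matrices could differ.

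\textbf{Item (2).} If $f(\balpha,Y)$ and $g(\balpha,Y)$ are coprime, their gcd has degree $0$. By \cref{lem:resultant-prop} the kernel of $\Syl(f(\balpha,Y),g(\balpha,Y))$ is trivial, so the determinant is nonzero. Item (1) then gives $\Res_Y(f,g)(\balpha)\neq 0$.

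\textbf{Item (3).} This is where the characteristic hypothesis enters, and it is the step needing care. We cannot apply (2) directly, because $\partial_Y f$ is not monic: its leading coefficient is $k$, where $k=\deg_Y f\le d$. Since $\Char(\F)$ is $0$ or greater than $d$, $k\neq 0$ in $\F$, so $\partial_Y f$ has $Y$-degree exactly $k-1$, and so does its specialization. Differentiation in $Y$ commutes with setting $\X=\balpha$, so $(\partial_Y f)(\balpha,Y)=f(\balpha,Y)'$.

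The argument of item (1) uses only that the leading coefficients are nonzero constants, so it applies verbatim to the pair $(f,\partial_Y f)$. Hence $\Disc_Y(f)(\balpha)=\Res(f(\balpha,Y),f(\balpha,Y)')$. Alternatively, rescale $\partial_Y f$ by $1/k$ to make it monic; this changes the resultant only by a nonzero power of $k$.

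If $f(\balpha,Y)$ is square-free, then in characteristic $0$ or greater than $\deg$ it is coprime to its derivative. Indeed, a repeated irreducible factor $h$ would divide $f'$, and conversely any $h$ dividing both $f$ and $f'$ forces $h^2\mid f$, because $h'\neq 0$ when $\Char(\F)$ exceeds $\deg h$. Therefore \cref{lem:resultant-prop} again gives $\Res\neq 0$, and so $\Disc_Y(f)(\balpha)\neq 0$.
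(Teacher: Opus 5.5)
Your proof is correct. The paper gives no proof of this lemma and only cites it as well known (von zur Gathen--Gerhard; Andrews--Wigderson). Your argument is the standard one. Monicity fixes the shape of the Sylvester matrix, so it commutes with specialization at $\balpha$. For item (3), the same holds because $\partial_Y f$ has nonzero constant leading coefficient $\deg_Y f$ in characteristic $0$ or $>d$. After that, \cref{lem:resultant-prop} together with the separability of square-free polynomials in such characteristic finishes the proof. This tacitly assumes $\deg_Y f\geq 1$, which the discriminant requires anyway.
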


We next record how the Sylvester matrix changes under the substitution $Y\mapsto Y\cdot g(\X)$.

\begin{lemma}\label{lem:Sylvester-substitution-identity}
    Let $f(\X,Y) \in \F[\X,Y]$ satisfy $\deg_Y(f)=k$, and let $g(\X) \in \F(\X)$ be non-zero. Define $\bar f(\X,Y)\coloneqq f(\X,Y\cdot g(\X))$. Then
    \begin{equation*}
        \Syl_Y(\bar f, \bar f') =
        \Diag\left(1/g,1/g^2,\dots,1/g^{2k-1}\right)
        \cdot \Syl_Y(f,f')
        \cdot
        \Diag\left(g^{k+1},g^{k+2},\dots,g^{2k-1},g^{k+1},g^{k+2},\dots,g^{2k}\right).
    \end{equation*}
\end{lemma}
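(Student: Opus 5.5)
The plan is to compare the two Sylvester matrices entry by entry, using the fact that the substitution $Y\mapsto Y\cdot g$ rescales the $i$-th coefficient by $g^i$. Throughout, we read $f$ and $f'$ as having formal $Y$-degrees $k$ and $k-1$. Then $\Syl_Y(f,f')$ is a $(2k-1)\times(2k-1)$ matrix with $k-1$ columns of shifted coefficients of $f$ followed by $k$ columns of shifted coefficients of $f'$. This matches the sizes of the two diagonal matrices in the statement. We use the same formal degrees for $\bar f$ and $\bar f'$. In particular, even if the leading coefficient $kf_k$ of $f'$ vanishes in the given characteristic, the same matrix convention is applied to both sides, so the comparison is unaffected.

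First, we compute the coefficients. Write $f=\sum_{i=0}^k f_iY^i$. Then $\bar f=\sum_i f_i g^i Y^i$, so $\bar f_i=f_ig^i$. Differentiating in $Y$ (with $g$ independent of $Y$) gives
\[
\bar f'=\sum_i i f_i g^i Y^{i-1},
\]
so the coefficient of $Y^j$ in $\bar f'$ is $(j+1)f_{j+1}g^{j+1}=f'_j\,g^{j+1}$.

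Next, we index the matrix. Number the rows $r=1,\dots,2k-1$ from the top, as in \cref{def:sylvester matrix}. For $1\le j\le k-1$, the $j$-th $f$-column has entry $f_{k+j-r}$ in row $r$, interpreted as $0$ when the index is out of range. For $1\le j\le k$, the $j$-th $f'$-column has entry $f'_{k-1+j-r}$ in row $r$.

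Substituting the barred coefficients:
\begin{itemize}
\item the $(r,j)$ entry of an $f$-column of $\Syl_Y(\bar f,\bar f')$ is
\[
f_{k+j-r}\,g^{k+j-r}=g^{-r}\cdot f_{k+j-r}\cdot g^{k+j};
\]
\item the $(r,j)$ entry of an $f'$-column is
\[
f'_{k-1+j-r}\,g^{k+j-r}=g^{-r}\cdot f'_{k-1+j-r}\cdot g^{k+j}.
\]
\end{itemize}
Zero entries stay zero on both sides. Hence each entry of the new matrix equals the corresponding entry of $\Syl_Y(f,f')$, multiplied on the left by the row factor $g^{-r}$ and on the right by a column factor. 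The column factor is $g^{k+j}$ for $j=1,\dots,k-1$ on the $f$-block, and $g^{k+j}$ for $j=1,\dots,k$ on the $f'$-block.

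Finally, multiplying a matrix on the left by $\Diag(1/g,\dots,1/g^{2k-1})$ scales row $r$ by $g^{-r}$. Multiplying on the right by
\[
\Diag\left(g^{k+1},\dots,g^{2k-1},g^{k+1},\dots,g^{2k}\right)
\]
scales the columns by exactly these column factors. This gives the claimed identity.

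The only real obstacle is keeping the indexing consistent: the derivative index shift must produce the exponent $k+j-r$ in both blocks. That is the check I would do carefully; everything else is routine.
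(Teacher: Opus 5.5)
Your proposal is correct and follows the same route as the paper: you write $\bar f_i = f_i g^i$ and $\bar f'_j = f'_j g^{j+1}$, give the entries of both Sylvester matrices explicitly, and observe that every new entry equals $g^{-r}\cdot(\text{old entry})\cdot g^{k+j}$, which is exactly the stated two-sided diagonal scaling. You index the $f'$-block by local column $j$ (entry $f'_{k-1+j-r}$), while the paper writes the global-column formula $(j-i+1)f_{j-i+1}$; these are the same, and your remark about fixing the formal degree $k-1$ for $f'$ is extra care that the paper does not need, since it assumes characteristic $0$ or greater than $d$.
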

\begin{proof}
    Write
    \begin{equation*}
        f(\X,Y) = \sum_{i=0}^k f_i(\X)Y^i
    \end{equation*}
    with $f_k\neq 0$. Then
    \begin{equation*}
        \bar{f}(\X,Y) = \sum_{i=0}^k f_i(\X)g(\X)^i Y^i.
    \end{equation*}
    Since $\deg_Y(f')=k-1$, the matrices $\Syl_Y(f,f')$ and $\Syl_Y(\bar f,\bar f')$ have size $(2k-1)\times(2k-1)$. For $i,j\in[2k-1]$, we have
    \begin{align*}
        \Syl_Y(f,f')_{i,j} &= \begin{cases}
            f_{k-(i-j)} & j \in [k-1] \text{ and } 0\leq i-j \leq k,\\
            (j-i+1) f_{j-i+1} & k \leq j \leq 2k-1 \text{ and } 0 \leq j-i \leq k-1,\\
            0 & \text{otherwise,}
        \end{cases}\\
        \Syl_Y(\bar f,\bar f')_{i,j} &= \begin{cases}
            f_{k-(i-j)}g^{k-(i-j)} & j \in [k-1] \text{ and } 0 \leq i-j \leq k,\\
            (j-i+1)f_{j-i+1}g^{j-i+1}  & k \leq j \leq 2k-1 \text{ and } 0 \leq j-i \leq k-1,\\
            0 & \text{otherwise.}
        \end{cases}
    \end{align*}
    These entrywise identities can be rewritten as
    \begin{equation*}
        \Syl_Y(\bar f, \bar f') =
        \Diag\left(1/g,1/g^2,\dots,1/g^{2k-1}\right)
        \cdot \Syl_Y(f,f')
        \cdot
        \Diag\left(g^{k+1},g^{k+2},\dots,g^{2k-1},g^{k+1},g^{k+2},\dots,g^{2k}\right).
    \end{equation*}
    This proves the lemma.
\end{proof}

\begin{remark}
    Since the lemma allows $g(\X)$ to be an element of $\F(\X)$, it also applies to substitutions of the form $Y\mapsto Y/h(\X)$ for non-zero $h(\X)\in\F[\X]$, by taking $g(\X)=1/h(\X)$.
\end{remark}

\subsection{Factors of Normalized and Reverse-Normalized Polynomials}

We record two simple facts showing that factors behave well under the
normalization and reverse-normalization operations defined above. In both cases,
divisibility is immediate over the rational function field $\F(\X)$; the point
is that, after the chosen scaling, the relevant factor lies in $\F[\X,Y]$ and is
monic or reversed-monic in $Y$. We will use the following elementary form of
Gauss's lemma.

\begin{lemma}[Gauss's lemma]
\label{lem:gauss-lemma}
Let $\mathbb A$ be a unique factorization domain with fraction field $\K$. If
$p(Y),q(Y)\in \mathbb A[Y]$ and $p(Y)$ is primitive as a polynomial over $\mathbb A$, then
\[
    p(Y)\mid q(Y) \text{ in } \K[Y]
    \qquad\Longrightarrow\qquad
    p(Y)\mid q(Y) \text{ in } \mathbb A[Y].
\]
In particular, the conclusion holds if $p(Y)$ is monic or reversed-monic.
\end{lemma}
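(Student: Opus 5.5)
The plan is to prove the contrapositive-free direct form. Work in the UFD $\mathbb A[Y]$ and suppose $p\mid q$ in $\K[Y]$, say $q=p\cdot u$ with $u\in\K[Y]$. The goal is to show that every coefficient of $u$ already lies in $\mathbb A$. Clear denominators: choose a nonzero $c\in\mathbb A$ (for instance the product of the denominators of the coefficients of $u$) so that $cu\in\mathbb A[Y]$. Then write $cu=\mathrm{cont}(cu)\cdot u_0$, where $u_0\in\mathbb A[Y]$ is primitive and the content is taken as a gcd of coefficients in the UFD $\mathbb A$. This gives
\[
    c\,q = \mathrm{cont}(cu)\, p\, u_0 .
\]

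The key step is the classical Gauss lemma: a product of primitive polynomials over a UFD is primitive. I would prove it prime by prime. For any prime $\pi\in\mathbb A$, reduce modulo $\pi$. Since $\mathbb A/(\pi)$ is an integral domain, $(\mathbb A/(\pi))[Y]$ is also an integral domain. Neither $\bar p$ nor $\bar u_0$ is zero, so $\overline{p u_0}\neq 0$, and hence $\pi\nmid p u_0$.

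It follows that $\mathrm{cont}(p u_0)$ is a unit. Taking contents on both sides of the displayed identity, and using that content is well defined up to units and multiplicative, we get $c\cdot \mathrm{cont}(q)=\mathrm{cont}(cu)$ up to a unit. In particular $c$ divides $\mathrm{cont}(cu)$. Therefore
\[
    u=\frac{\mathrm{cont}(cu)}{c}\,u_0\in\mathbb A[Y],
\]
which is exactly divisibility in $\mathbb A[Y]$. The only step needing care is this content bookkeeping: contents in a UFD are defined only up to units, and multiplicativity $\mathrm{cont}(ab)=\mathrm{cont}(a)\mathrm{cont}(b)$ follows from the primitive-product statement. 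I expect no real obstacle here.

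For the ``in particular'' clause, note that a monic polynomial has a coefficient equal to $1$. The same holds for a reversed-monic polynomial, since its constant term is $1$. So in either case the gcd of the coefficients is a unit, and $p$ is primitive. In the paper's application we take $\mathbb A=\F[\X]$ and $\K=\F(\X)$, where $\mathbb A$ is a UFD.

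As an alternative shortcut for the monic case, one could run division with remainder in $\mathbb A[Y]$, which is possible because the leading coefficient of $p$ is $1$. Uniqueness of the quotient and remainder in $\K[Y]$ then forces the remainder to be $0$ and the quotient to lie in $\mathbb A[Y]$. The reversed-monic case reduces to this one by applying $\rev_Y$ and \cref{lem: reversefactor}. Some care is needed there, since trailing zeros of $q$ in $Y$ can be factored out because $p(\X,0)=1$. I would nonetheless present the content argument, because it handles everything uniformly.
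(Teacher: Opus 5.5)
The paper gives no proof of this lemma. It quotes it as the classical elementary form of Gauss's lemma and uses it only in \cref{lem:factor-of-normalized-poly} and \cref{lem:factor-of-rev-normalized-poly}, so there is no argument of the paper's to compare against. Your content argument is the standard textbook proof, and it is correct. Since $\mathbb A$ is a UFD, each prime $\pi$ generates a prime ideal, so reducing mod $\pi$ shows that $p u_0$ is primitive. Comparing contents in $c\,q=\mathrm{cont}(cu)\,p\,u_0$ then gives $c\mid \mathrm{cont}(cu)$, hence $u\in\mathbb A[Y]$.

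Two small points are worth adding:
\begin{itemize}
\item Treat $q=0$ separately. There $u=0$, and $\mathrm{cont}(cu)$ and $u_0$ are undefined, but the conclusion is trivial.
\item The content fact you actually use is only $\mathrm{cont}(a g)\sim a\,\mathrm{cont}(g)$ for a scalar $a\in\mathbb A$, together with $\mathrm{cont}(p u_0)$ being a unit. The first is a plain gcd identity in a UFD, so you do not need general multiplicativity.
\end{itemize}

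Your two routes trade generality in different directions. The content argument gives the full primitive statement but genuinely needs $\mathbb A$ to be a UFD. The division-with-remainder route covers only monic and reversed-monic $p$, which are the only cases the paper ever uses, but it works over any integral domain $\mathbb A$.

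For the reversed-monic case, the care about trailing zeros is unnecessary. From $q=pu$ in $\K[Y]$ you get $\deg q=\deg p+\deg u$, hence $\rev_Y(q)=\rev_Y(p)\rev_Y(u)$. Here $\rev_Y(p)\in\mathbb A[Y]$ is monic, because its leading coefficient is the constant term $1$ of $p$. The monic case then gives $\rev_Y(u)\in\mathbb A[Y]$. The coefficients of $\rev_Y(u)=Y^{\deg u}u(1/Y)$ are exactly those of $u$ in reverse order, so $u\in\mathbb A[Y]$.
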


\begin{lemma}
\label{lem:factor-of-normalized-poly}
Let
\[
    f(\X,Y)=\sum_{i=0}^{d} f_i(\X)Y^i,
    \qquad
    g(\X,Y)=\sum_{i=0}^{t} g_i(\X)Y^i
\]
be polynomials in $\F[\X,Y]$ with $f_d\neq 0$ and $g_t\neq 0$. Suppose
$g\mid f$ in $\F[\X,Y]$, and define
\[
    \hat f(\X,Y)
    =
    f(\X,Y/f_d)\, f_d^{d-1},
    \qquad
    \hat g(\X,Y)
    =
    g(\X,Y/f_d)\,\frac{f_d^t}{g_t}.
\]
Then $\hat g\in \F[\X,Y]$ and $\hat g\mid \hat f$ in $\F[\X,Y]$.
\end{lemma}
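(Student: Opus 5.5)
Write $f = gh$ in $\F[\X,Y]$ with $h=\sum_{i=0}^{d-t} h_i Y^i$. Comparing leading coefficients in $Y$ gives $f_d = g_t h_{d-t}$, so $g_t \mid f_d$ in $\F[\X]$. The plan is to first show that $\hat g$ is a polynomial and monic in $Y$, then obtain the divisibility over $\F(\X)$, and finally pass to $\F[\X,Y]$ with Gauss's lemma (\cref{lem:gauss-lemma}).

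\textbf{Polynomiality.} Expanding the definition,
\[
    \hat g(\X,Y)=\sum_{i=0}^{t} g_i\, f_d^{\,t-i}\, g_t^{-1}\, Y^i .
\]
The top coefficient ($i=t$) equals $1$, so $\hat g$ is monic in $Y$. For $i<t$ we have $t-i\ge 1$, so the coefficient is $g_i f_d^{t-i-1}\cdot (f_d/g_t)$. This is a polynomial because $g_t\mid f_d$. Hence $\hat g\in\F[\X,Y]$.

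\textbf{Divisibility over $\F(\X)$.} Substitute $Y\mapsto Y/f_d$ in $f=gh$ and multiply by $f_d^{d-1}$:
\[
    \hat f = g(\X,Y/f_d)\,h(\X,Y/f_d)\,f_d^{d-1}
    = \hat g\cdot\Bigl( h(\X,Y/f_d)\, g_t\, f_d^{\,d-1-t}\Bigr).
\]
The bracketed factor lies in $\F(\X)[Y]$, which shows $\hat g \mid \hat f$ in $\F(\X)[Y]$. (One could also check directly that this cofactor is the polynomial $\hat h$: its coefficients are $h_j g_t f_d^{d-1-t-j}$, and the $j=d-t$ term equals $h_{d-t}g_t f_d^{-1}=1$. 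This is not needed.)

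\textbf{Lifting to $\F[\X,Y]$.} Apply \cref{lem:gauss-lemma} with $\mathbb A=\F[\X]$, which is a UFD, and $\K=\F(\X)$. Here $\hat g$ is monic, hence primitive, and $\hat f\in\F[\X][Y]$. It follows that $\hat g\mid \hat f$ in $\F[\X][Y]=\F[\X,Y]$.

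\textbf{Degenerate case.} If $t=0$, then $\hat g = g\cdot f_d^0/g_0=1$, and the statement is trivial. The only substantive step is establishing $g_t\mid f_d$, and this is immediate from comparing leading coefficients. I do not expect any real obstacle.
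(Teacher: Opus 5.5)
Your proposal is correct and matches the paper's proof step for step. You get $g_t\mid f_d$ from leading coefficients, use the same coefficient expansion to show $\hat g$ is a monic polynomial, obtain divisibility in $\F(\X)[Y]$ from the substitution $Y\mapsto Y/f_d$, and finish with \cref{lem:gauss-lemma}. Your explicit cofactor $h(\X,Y/f_d)\,g_t f_d^{d-1-t}$ just makes concrete the paper's remark that the substitution preserves divisibility.
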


\begin{proof}
Since $g\mid f$, the leading coefficient $g_t$ divides $f_d$ in $\F[\X]$. Hence
\[
    \hat g(\X,Y)
    =
    \sum_{i=0}^{t} g_i(\X)Y^i\frac{f_d^{t-i}}{g_t}
\]
lies in $\F[\X,Y]$, and it is monic in $Y$. Moreover, over $\F(\X)[Y]$, the
substitution $Y\mapsto Y/f_d$ preserves divisibility, and hence
$\hat g\mid \hat f$ in $\F(\X)[Y]$. Since $\hat g$ is monic, by \cref{lem:gauss-lemma},
$\hat g\mid \hat f$ in $\F[\X,Y]$.
\end{proof}

\begin{lemma}
\label{lem:factor-of-rev-normalized-poly}
Let
\[
    f(\X,Y)=\sum_{i=0}^{d} f_i(\X)Y^i,
    \qquad
    g(\X,Y)=\sum_{i=0}^{t} g_i(\X)Y^i
\]
be polynomials in $\F[\X,Y]$ with $f_0\neq 0$ and $g_0\neq 0$. Suppose
$g\mid f$ in $\F[\X,Y]$, and define
\[
    \tilde f(\X,Y)
    =
    \frac{f(\X,Y f_0)}{f_0},
    \qquad
    \tilde g(\X,Y)
    =
    \frac{g(\X,Y f_0)}{g_0}.
\]
Then $\tilde g\in \F[\X,Y]$ and $\tilde g\mid \tilde f$ in $\F[\X,Y]$.
\end{lemma}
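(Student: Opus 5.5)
The argument mirrors the proof of \cref{lem:factor-of-normalized-poly}, with the reversed-monic form of Gauss's lemma taking the place of the monic one. The plan has three steps: show that $g_0\mid f_0$, show that $\tilde g$ is a polynomial that is reversed-monic in $Y$, and then transfer the divisibility $g\mid f$ through the substitution $Y\mapsto Yf_0$ and conclude with \cref{lem:gauss-lemma}.

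\emph{Step 1: $g_0\mid f_0$.} Write $f=gh$ with $h\in\F[\X,Y]$. Setting $Y=0$ gives $f_0=g_0h_0$, where $h_0=h(\X,0)$. Since $f_0\neq 0$, also $h_0\neq 0$, and $g_0\mid f_0$ in $\F[\X]$.

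\emph{Step 2: $\tilde g$ is a reversed-monic polynomial.} Expanding,
\[
    \tilde g(\X,Y)=\sum_{i=0}^{t}\frac{g_i f_0^{i}}{g_0}\,Y^i .
\]
The $i=0$ term is $1$. For $i\geq 1$ the coefficient is $g_i\cdot f_0^{i-1}\cdot (f_0/g_0)$, which lies in $\F[\X]$ by Step 1. So $\tilde g\in\F[\X,Y]$ and $\tilde g$ is reversed-monic in $Y$. The same computation applied to $f$ (with $g_0$ replaced by $f_0$) shows that $\tilde f$ is a reversed-monic polynomial. Alternatively this follows because $f(\X,Yf_0)-f_0$ is divisible by $Yf_0$.

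\emph{Step 3: divisibility.} Substituting $Y\mapsto Yf_0$ into $f=gh$ gives
\[
    f(\X,Yf_0)=g(\X,Yf_0)\,h(\X,Yf_0).
\]
Dividing both sides by $f_0=g_0h_0$ yields $\tilde f=\tilde g\cdot\tilde h$, where
\[
    \tilde h=\frac{h(\X,Yf_0)}{h_0}.
\]
Thus $\tilde g\mid\tilde f$ in $\F(\X)[Y]$. By Step 2, $\tilde g$ is reversed-monic, hence primitive over the UFD $\F[\X]$. \cref{lem:gauss-lemma} then gives $\tilde g\mid\tilde f$ in $\F[\X][Y]=\F[\X,Y]$. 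In fact Step 2 applied to the factor $h$ already shows $\tilde h\in\F[\X,Y]$, so Gauss's lemma is not strictly needed.

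The only point needing care is Step 2: a naive $Y$-adic expansion suggests dividing by $g_0$ with no justification. The fix is to use $g_0\mid f_0$ together with $i\geq 1$, which supplies at least one factor of $f_0$ to absorb the denominator $g_0$. Everything else is routine.
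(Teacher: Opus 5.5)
Your proof is correct and follows essentially the same route as the paper: deduce $g_0\mid f_0$, expand $\tilde g$ coefficientwise to see it is a reversed-monic polynomial, transport $g\mid f$ through the substitution $Y\mapsto Yf_0$, and conclude with Gauss's lemma. Your extra observation is a small but genuine simplification: the cofactor $\tilde h=h(\X,Yf_0)/h_0$ is itself a polynomial (since $h_0\mid f_0$), so the divisibility holds in $\F[\X,Y]$ directly and the appeal to Gauss's lemma is unnecessary.
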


\begin{proof}
Since $g\mid f$, the constant coefficient $g_0$ divides $f_0$ in $\F[\X]$. Thus
\[
    \tilde g(\X,Y)
    =
    \sum_{i=0}^{t} g_i(\X)Y^i\frac{f_0^i}{g_0}
\]
lies in $\F[\X,Y]$. Moreover, over $\F(\X)[Y]$, the substitution
$Y\mapsto Y f_0$ preserves divisibility, and hence
$\tilde g\mid \tilde f$ in $\F(\X)[Y]$. Since $\tilde g$ is reversed-monic, by
\cref{lem:gauss-lemma}, $\tilde g\mid \tilde f$ in $\F[\X,Y]$.
\end{proof}

\subsection{Power Series Roots of Monic Polynomials} \label{Subsec: Power-series root}

In this section, we discuss the factorization of monic polynomials into power series roots. To guarantee the existence of such power series roots, we require the polynomial to satisfy a suitable non-degeneracy condition, formalized in the following definition. Once this condition holds, we will describe how to compute the corresponding power series roots algorithmically.
\begin{definition}
    \label{def:hensel-ready}
    Let $f(\X_n,Y) \in \F[\X_n, Y]$ and suppose $f(\X_n,Y)=f_dY^d+f_{d-1}Y^{d-1}+\dots+f_1Y+f_0$ where $f_0, \dots, f_d \in \F[\X_n]$ and $f_d \neq 0$. Let the irreducible factorization of $f$ be $f=h_{1}^{e_{1}}h_{2}^{e_{2}}\dots h_{r}^{e_{r}}$. We call $f$ \textbf{Hensel-ready} at $\X_n=\balpha\in\F^n$ if $\Res_Y(h_i, h_j)(\balpha)\neq 0$ for all $i\neq j$ and $\Disc_Y(h_i)(\balpha)\neq0$ for all $i \in [r]$.
\end{definition}
Once we find such a point $\balpha$, we may apply the shift $\X_n \mapsto \X_n+\balpha$. The resulting polynomial is Hensel-ready at $\X_n=\mathbf{0}$. We can therefore factor the shifted polynomial around the origin, and finally undo the shift to recover the corresponding factorization of the original polynomial. Thus, without loss of generality, we will assume throughout this section that the polynomial under consideration is Hensel-ready at $\X_n=\mathbf{0}$.

Suppose $f$ is a monic polynomial that is Hensel-ready at $\0$,
we have $$f({\0},Y)=h_{1}({\0},Y)^{e_{1}}h_{2}({\0},Y)^{e_{2}}\dots h_{r}({\0},Y)^{e_{r}}$$
then for each $i \neq j \in [r]$, $\gcd(h_i(\0,Y),h_j(\0,Y))=1$ and $h_j(\0,Y)$ is a square-free polynomial. Let $\cB:=\{\beta_1,\dots,\beta_d\}$ be the multiset of roots of $f(\0,Y)$ over a field extension of $\F$. The following well-known lemma states that each such univariate root can be uniquely lifted to a power series root of $f(\X, Y)$ in $Y$. This lemma appeared in many forms in the literature of commutative algebra and polynomial factorization; for references, see \cite{eisenbud1995commutative, ruiz1993,BDSSurvey25}. For completeness, we include a proof of the lemma in the appendix (see \cref{Proof: Hensel lemma}).

\begin{lemma}[Hensel's lemma]\label{Hensel lemma}
    Let $f(\X,Y)$ be a monic polynomial in $Y$ with $\deg_Y(f)=d$, and let the irreducible factorization of $f$ be $f(\X,Y)=h_{1}^{e_{1}}(\X,Y)h_{2}^{e_{2}}(\X,Y)\dots h_{r}^{e_{r}}(\X,Y)$. Suppose $f$ is Hensel-ready at $\X=\0$ and $\cB=\{\beta_1,\dots,\beta_d\}$ is the multiset of roots of $f(\0,Y)$. Then
    \[
        f(\X,Y)=\prod_{i=1}^d(Y-\Phi_i(\X))
    \]
    where $\Phi_i(\X)\in\F(\beta_i)[[\X]]$, $\Phi_i(\0)=\beta_i$, and $\Phi_i=\Phi_j$ if and only if $\beta_i=\beta_j$.
\end{lemma}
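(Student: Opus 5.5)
The plan is to build the power series roots one irreducible factor at a time and then multiply them together. Fix an irreducible factor $h_j$ of degree $d_j=\deg_Y h_j$. Because $f$ is monic, Gauss's lemma (\cref{lem:gauss-lemma}) lets us take every $h_j$ monic in $Y$ up to scaling, so $f=\prod_j h_j^{e_j}$ with each $h_j$ monic. The Hensel-ready hypothesis gives $\Disc_Y(h_j)(\0)\neq 0$. By \cref{ResultantFact} (discriminant of monic polynomials commutes with evaluation; here $\deg$ of leading coefficient is preserved since monic), $h_j(\0,Y)$ is therefore square-free. Consequently $h_j(\0,Y)$ has $d_j$ distinct simple roots $\beta$ in $\overline\F$.

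The first step is to lift each simple root $\beta$ of $h_j(\0,Y)$ to a power series $\Phi\in\F(\beta)[[\X]]$ with $\Phi(\0)=\beta$ and $h_j(\X,\Phi)=0$. This is the standard simple-root Hensel/Newton argument. Suppose $\Phi^{(m)}$ is correct modulo $\langle\X\rangle^{m}$. Set $\Phi^{(m+1)}=\Phi^{(m)}-h_j(\X,\Phi^{(m)})/\partial_Y h_j(\beta)$, keeping only the degree-$m$ part of the correction. This is well defined because $\partial_Y h_j(\0,\beta)\neq 0$ for a simple root. The correction is unique at each degree, which gives both existence and uniqueness of the lift. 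All coefficients stay in $\F(\beta)$.

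The second step is to factor $h_j$ itself. Take the $d_j$ distinct lifts $\Phi_{j,1},\dots,\Phi_{j,d_j}$. Their constant terms are distinct, so they are distinct power series, and each is a root of $h_j$ in the integral domain $\overline\F[[\X]]$. Dividing $h_j$ successively by the monic factors $Y-\Phi_{j,\ell}$ then gives $h_j=\prod_{\ell}(Y-\Phi_{j,\ell})$, using monicity and degree count. Raising to the $e_j$ and multiplying over $j$ yields $f=\prod_{i=1}^d (Y-\Phi_i)$. By construction each root of $f(\0,Y)$ is the constant term of the corresponding $\Phi_i$, so the multiset $\{\Phi_i(\0)\}$ equals $\cB$.

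The final step is the claim that $\Phi_i=\Phi_j$ if and only if $\beta_i=\beta_j$, which I expect to be the only subtle point. One direction is immediate: equal series have equal constant terms. For the converse, suppose $\beta_i=\beta_j=\beta$.
\begin{itemize}
\item If the two roots come from different factors $h_a\neq h_b$, then $\beta$ is a common root of $h_a(\0,Y)$ and $h_b(\0,Y)$. This gives $\Res_Y(h_a,h_b)(\0)=\Res(h_a(\0,Y),h_b(\0,Y))=0$, contradicting Hensel-readiness.
\item So both roots come from the same $h_a$. There $\beta$ is a simple root of $h_a(\0,Y)$, and the uniqueness in the first step forces the two lifts to be the same series.
\end{itemize}
Hence equal constant terms force equal power series, which completes the argument.
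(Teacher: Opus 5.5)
Your proposal is correct and follows essentially the same route as the paper: reduce to each irreducible factor $h_j$, Newton-lift each simple root of $h_j(\mathbf{0},Y)$ with a uniquely determined correction at every step, and multiply the resulting linear factors. The differences are cosmetic. You lift directly degree by degree in $\F(\beta)[[\X]]$ instead of first treating $n=1$ and then substituting $X_i\mapsto TX_i$. You obtain $h_j=\prod_\ell (Y-\Phi_{j,\ell})$ by successive division in the domain $\overline{\F}[[\X]]$ where the paper simply appeals to unique factorization, and you spell out the resultant argument for the ``if and only if'' clause.
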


The preceding lemma shows that each power series root of $f(\X,Y)$ is uniquely determined by its constant term, which is a root of $f(\0,Y)$. We will therefore write $\Phi_\beta$ for the power series root corresponding to the root $\beta$ of $f(\0,Y)$. Moreover, since $\overline{\F}[[\X]]$ is a UFD, the above factorization is the unique factorization of $f$ into linear factors in $\overline{\F}[[\X]][Y]$.

The following corollary shows that every monic factor of $f(\X,Y)$ arises from a factor of $f(\0,Y)$, by lifting the corresponding sub-multiset of roots of $f(\0,Y)$ to their power series roots.

\begin{corollary}\label{cor: potential factor}
    Let $f(\X,Y)$ be a monic polynomial with $\deg_Y(f)=d$ and such that $f$ is Hensel-ready at $\X=\0$. Let $\cB=\{\beta_1,\dots,\beta_d\}$ be the multiset of univariate roots of $f(\0,Y)$. Let $g(\X,Y)\in\F[\X,Y]$ be monic in $Y$, with $\deg_Y(g) = k$, and suppose that $g$ divides $f$. Then there exists a multiset $S\subseteq \cB$ with $|S|=k$ such that $g=\displaystyle\prod_{\beta\in S}(Y-\Phi_{\beta})$ and $\Esym_j(S)\in\F$ for all $j\in[k]$. %
\end{corollary}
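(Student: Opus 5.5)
The plan is to work in $\overline{\F}[[\X]][Y]$, where \cref{Hensel lemma} gives the factorization $f=\prod_{i=1}^d(Y-\Phi_{\beta_i})$ into linear factors. Since $\overline{\F}[[\X]]$ is a UFD (indeed a local domain) and each $Y-\Phi_{\beta_i}$ is monic of degree one, these linear factors are prime elements of $\overline{\F}[[\X]][Y]$.

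The first step is to locate the roots of $g$. Write $f=gh$ with $h\in\F[\X,Y]$. Because $f$ and $g$ are monic in $Y$, so is $h$. Since $Y-\Phi_{\beta_i}$ is prime and divides $gh$, it divides $g$ or $h$. I would argue more cleanly as follows. Set $\bar g=g(\0,Y)$, which is monic of degree $k$ and divides $f(\0,Y)$. So $\bar g=\prod_{\beta\in S}(Y-\beta)$ for a sub-multiset $S\subseteq\cB$ with $|S|=k$, and $\Esym_j(S)$ is, up to sign, a coefficient of $\bar g\in\F[Y]$. Hence $\Esym_j(S)\in\F$ for all $j\in[k]$.

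It remains to show $g=\prod_{\beta\in S}(Y-\Phi_\beta)$. Consider the factorization $\overline{\F}[[\X]][Y]\ni f=\prod(Y-\Phi_{\beta_i})$, which is unique. Write $g$ as a product of monic irreducible factors of $f$ over $\overline{\F}[[\X]]$. Every monic divisor of $f$ in $\overline{\F}[[\X]][Y]$ is a product $\prod_{\beta\in T}(Y-\Phi_\beta)$ for some sub-multiset $T$: by unique factorization its irreducible factors are among the $Y-\Phi_{\beta_i}$ with multiplicity bounded, and monicity fixes the unit to be $1$. So $g=\prod_{\beta\in T}(Y-\Phi_\beta)$. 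Setting $\X=\0$ and using $\Phi_\beta(\0)=\beta$ gives $\bar g=\prod_{\beta\in T}(Y-\beta)$, so $T=S$ as multisets.

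The main subtle point is the multiset bookkeeping when roots repeat: distinct multisets $T$ must give distinct products, and the multiplicities in $T$ cannot exceed those in $\cB$. Both follow from uniqueness of factorization in $\overline{\F}[[\X]][Y]$ together with $\Phi_i=\Phi_j\iff\beta_i=\beta_j$ from \cref{Hensel lemma}, since multiplicities of $\Phi_\beta$ in $f$ match multiplicities of $\beta$ in $\cB$.

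One also has to justify that $\overline{\F}[[\X]][Y]$ is a UFD, as needed for this uniqueness. It is a polynomial ring over a UFD, so this holds, and the $\Phi_i$ live in $\F(\beta_i)[[\X]]\subseteq\overline{\F}[[\X]]$.
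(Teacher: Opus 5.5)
Your proof is correct and follows the argument the paper intends. The paper gives no separate proof; it only notes, just before the statement, that $f=\prod_i(Y-\Phi_{\beta_i})$ is the unique factorization in $\overline{\F}[[\X]][Y]$, so a monic divisor $g$ must be a product of a sub-multiset of these linear factors, that sub-multiset is pinned down by setting $\X=\0$, and the $\Esym_j(S)$ are up to sign the coefficients of $g(\0,Y)\in\F[Y]$.
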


We now explain how to compute the power series $\Phi_\beta$ algorithmically. We first reduce the task to computing power series roots of bivariate polynomials. Introduce a new variable $T$ and replace each $X_i$ by $TX_i$. Define $F(T,Y):=f(T\X,Y)$, which we view as a polynomial in $T$ and $Y$ over the coefficient ring $\F[\X]$. Since $F(0,Y)=f(\0,Y)$, Hensel's lemma applies to $F$ as well. Moreover, suppose the power series root of $f(\X,Y)$ corresponding to $\beta$ has the homogeneous decomposition $\Phi_\beta(\X)=\sum_{r\geq 0}\Phi_\beta^{(r)}(\X)$, where each $\Phi_\beta^{(r)}(\X)$ is homogeneous of degree $r$. Then the corresponding power series root of $F(T,Y)$ is $\Phi_\beta(T)=\sum_{r\geq 0}\Phi_\beta^{(r)}(\X)T^r$. Thus, the coefficient of $T^r$ in $\Phi_\beta(T)$ recovers precisely the degree-$r$ homogeneous component of $\Phi_\beta(\X)$. By a slight abuse of notation, we will also write $\Phi_\beta(T)$ for the power series root of $F(T,Y)$ corresponding to $\beta$.

At this point, any standard polynomial-time algorithm for computing power series roots (such as Newton iteration or Hensel lifting) would work.
Obtaining the additional properties we need requires more work: the coefficients of the roots, and hence eventually the factors, should have constant-depth circuits, and the output circuits computing the factors should use only constants from the base field $\F$. For this, we use some results from the recent works~\cite{BKRRSS25a, BKRRSS25b} on factoring constant-depth circuits, which we summarize below.

For the rest of the section, assume that $\Char(\F)=0$ or $\Char(\F)>d$. Let $F(T,Y)\in \F[\X][T,Y]$ be monic in $Y$, with $\deg_Y(F)=d$ and $\deg_T(F)=D$, and suppose that $F$ is Hensel-ready at $T=0$. Let $\cB$ be the set of roots of $F(0,Y)$, and let $\beta$ be a root of $F(0,Y)$ of multiplicity $e$. By \cref{Hensel lemma}, $F(T,Y)$ has a power series root $\Phi_{\beta}$ of multiplicity $e$. We now give an explicit formula for $\Phi_{\beta}(T)$. Such a formula was obtained in~\cite{BKRRSS25a} when $e=1$\footnote{The factoring algorithm of \cite{BKRRSS25a} first shows how to reduce to the case $e=1$ over characteristic zero fields and fields where the characteristic does not divide $e$, and hence does not have to consider expressions for power series roots for larger values of $e$.}, and over fields of characteristic dividing $e$, it used Hasse derivatives to obtain an expression. Here, we use standard derivatives to obtain a variation of the formula in~\cite{BKRRSS25a} for general $e$. The proof follows by a close adaptation of the arguments of~\cite{BKRRSS25a}; for completeness, we include the details in the appendix.

 Note that, since the multiplicity of $\beta$ in $F(0,Y)$ is $e$, $\partial_{Y^{e-1}}F(0,Y)$ is a non-zero polynomial in $Y$. Define the Laurent series $R_e\in \F[\X][T]((Z))$ in $Z$

$$R_{e}(Z):=Z+\sum_{m\geq 0}^{2e(D+1)}[Y^{em+e-2}]\Big\{\Big(\frac{e!}{\partial_{Y^{e}}F(0,Z)}\Big)^{m+1}\frac{\partial_{Y}F(T,Y+Z)}{e}\Big(\frac{\partial_{Y^{e}}F(0,Z)}{e!}Y^e-F(T,Y+Z)\Big)^m\Big\} $$
 Here $[Y^r]\{P(Y)\}$ denotes the coefficient of $Y^r$ in $P(Y)$.

One can observe that if we set $e=1$ in the above definition, we get the $R(Z)$ defined in \cite[Theorem 4.3]{BKRRSS25a}.
For more on the background and motivation of the origins of the above expression, the reader is encouraged to look at ~\cite{BKRRSS25a}.
The exact expression for $R_{e}(Z)$ is provided for completeness, but what we really need for the algorithm is the following two properties of $R_e(Z)$:
\begin{enumerate}
     \item If $F(T,Y)=f(T\X,Y)$, then $R_e(Z)$ can be written as a ratio of $A(\X,T,Z)$ and $B(Z)$, where both polynomials are computable by polynomial-size constant-depth circuits (since $f$ is sparse).
     \item $R_e(\beta)$ computes a truncated version of $\Phi_\beta$.
 \end{enumerate}

 So in particular, we are able to obtain constant-depth circuits (with division gates) that compute suitable truncations of the power series $\Phi_\beta$.
 We formally state this fact below (and prove it in the appendix, \cref{Proof: closed form root}). In the setting of $e=1$, such a result was also provided in~\cite{BKRRSS25a}.

\begin{theorem}
    [\cite{BKRRSS25a}]\label{Closed form root}
   Let $\F$ be a field of characteristic 0 or greater than $d$.  Let $F(T,Y) \in \F[\X][T,Y]$ be monic in $Y$ and Hensel-ready at $T=0$, furthermore assume $F(T,Y)$ can be computed by a circuit of size $s'$ and depth $\Delta$. Let $\beta$ be a root of $F(0,Y)$ with multiplicity $e$, let $\deg_Y(F)=d$ and $\deg_T(F)=D$. Let $R_{e}$ be as defined above. Then \begin{enumerate}
       \item $R_e(Z)$ can be expressed as a ratio of two polynomials $A(\X,T,Z)$ and $B(Z)$ where $A(\X,T,Z)\in\F[\X,T,Z]$ and $B(Z)\in\F[Z]$. Furthermore both $A(\X,T,Z)$ and $B(Z)$ can be computed by circuits of size $\poly(n,s',d,D)$ and depth $\Delta+\cO(1)$.
       \item $B(\beta) \neq 0$.
       \item $\Phi_{\beta}(T)\Trunc T^{D+1}=R_{e}(\beta)\Trunc T^{D+1}$
   \end{enumerate} 
\end{theorem}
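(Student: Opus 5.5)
We prove the three items of \cref{Closed form root} separately, treating $R_e$ as a formal object in $\F[\X][T]((Z))$ and only specializing $Z=\beta$ at the end.

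\emph{Items 1 and 2.} Write $c(Z):=\partial_{Y^e}F(0,Z)/e!\in\F[Z]$. Every summand of $R_e$ has the form $c(Z)^{-(m+1)}\cdot[Y^{em+e-2}]\{P_m\}$, where
\[
P_m(\X,T,Y,Z):=\tfrac1e\,\partial_YF(T,Y+Z)\bigl(c(Z)Y^e-F(T,Y+Z)\bigr)^m .
\]
This $P_m$ is a polynomial in $\X,T,Y,Z$. Clearing the common denominator $B(Z):=c(Z)^{2e(D+1)+1}$ expresses $R_e=A/B$ with
\[
A:=Z\,B(Z)+\sum_m c(Z)^{2e(D+1)-m}\,[Y^{em+e-2}]\{P_m\}.
\]

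For circuit size and depth, $F(T,Y+Z)$ is obtained from the circuit for $F$ by substitution, which adds $O(1)$ depth. The derivatives $\partial_Y F$ and $\partial_{Y^e}F(0,Z)$, as well as the coefficient extraction $[Y^{r}]$, are all handled by the interpolation lemma (\cref{lem:circuit-coefficient-interpolation}). Each of these costs only $O(1)$ extra depth and $\poly(\cdot)$ size, because $\deg_Y P_m\le (m+1)d\le\poly(d,D)$. The powers $(\cdot)^m$ and the final sum add constant depth. The division by $e$ and by $e!$ is legitimate since $\Char\F>d\ge e$.

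Item 2 is immediate: $\beta$ has multiplicity exactly $e$ in $F(0,Y)$, so $\partial_{Y^e}F(0,\beta)\ne0$, and hence $B(\beta)\ne0$.

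\emph{Item 3.} This is the real content. The plan follows \cite{BKRRSS25a}, with Lagrange inversion replaced by its $e$-th root analogue. Set $\Psi:=\Phi_\beta(T)-\beta$, so $\Psi\in T\,\F(\beta)[[T]]$. Let $G(T,Y):=F(T,Y+\beta)$. Then
\[
G(0,Y)=c(\beta)Y^e\,u(Y),\qquad u(0)=1,
\]
and $\Psi$ is a root of $G$ of multiplicity $e$. Hence $G=(Y-\Psi)^e H$, where $H(0,0)\ne0$ by Hensel-readiness. The idea is to rewrite $G=0$ as a fixed-point equation $Y^e=\Lambda(T,Y)$, with
\[
\Lambda:=Y^e-G/c(\beta),
\]
so that $\Lambda$ vanishes to order $>e$ in $Y$ modulo $T$.

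The residue formula then applies. For a polynomial $G$ with exactly $e$ roots (with multiplicity) near $0$ in $\overline{\F((T))}$, the power sum of those roots is the residue
\[
\oint Y\,\frac{\partial_Y G}{G}\,dY,
\]
taken over the small roots. Here all $e$ small roots coincide with $\Psi$, so this power sum equals $e\Psi$. Expanding
\[
\frac{1}{G}=\frac{1}{c(\beta)\bigl(Y^e-\Lambda\bigr)}=\sum_m \frac{\Lambda^m}{c(\beta)^{m+1}Y^{e(m+1)}}
\]
as a geometric series turns the residue into coefficient extractions $[Y^{e(m+1)-2}]$. This is after integrating by parts, or equivalently after writing the residue of $\partial_YG\cdot\Lambda^m/Y^{e(m+1)}$ multiplied by $Y$.

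Dividing by $e$ produces exactly the $m$-th summand of $R_e(\beta)-\beta$. The sign convention for $\Lambda$ versus $c\,Y^e-F$ must be matched during this step. The truncation at $m\le 2e(D+1)$ is harmless modulo $T^{D+1}$. Each factor $\Lambda$ contributes either a power of $T$ or extra $Y$-order, so terms with $m$ large have $T$-degree greater than $D$. The key estimate to verify is that $[Y^{em+e-2}]\Lambda^m$ is divisible by $T^{\lceil m/(e+1)\rceil}$ or similar.

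\emph{Main obstacle.} The hard step is justifying the residue/power-sum identity rigorously for multiplicity $e>1$ in characteristic $>d$, and doing so purely formally. I would carry this out in $\F(\beta)((T))((Y))$, or via Newton–Puiseux, using the factorization $G=(Y-\Psi)^eH$ with $H$ a unit in $\F(\beta)[[T,Y]]$. Concretely:
\[
\operatorname{Res}_Y\Bigl(Y\,\frac{\partial_Y G}{G}\Bigr)=e\Psi+\operatorname{Res}_Y\Bigl(Y\,\frac{\partial_Y H}{H}\Bigr)=e\Psi,
\]
where the second residue vanishes because $Y\,\partial_YH/H$ is a power series. Reconciling this with the geometric expansion requires care, since expanding $1/(Y-\Psi)$ in the ring of Laurent series in $Y$ presumes $|\Psi|<|Y|$. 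I would make this precise with $T$-adic ordering, substituting $Y\mapsto$ a scaled variable, and checking convergence of $\sum_m$ in the $(T,Y)$-adic sense.
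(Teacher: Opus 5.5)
Your plan is correct. For items 1 and 2 it is the paper's argument: $B$ is a power of $\partial_{Y^e}F(0,Z)$, derivatives and $[Y^r]$-extraction are done by interpolation, and $B(\beta)\neq0$ follows from the exact multiplicity.

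For item 3 you reach the key identity by a different route. The paper does not prove that identity. It imports Furstenberg's diagonal theorem in the multiplicity-$e$ form of \cite[Theorem~3.1]{BKRRSS25a}, namely $\varphi(T)=\cD\bigl(Y^2\partial_YG(TY,Y)/(e\,G(TY,Y))\bigr)$. It then expands $1/G(TY,Y)$ as a geometric series in $1-G(TY,Y)/(cY^e)$, reads off the diagonal coefficients, and undoes $T\mapsto TY$. Since $\cD(\phi)(T)=[Y^0]\{\phi(T/Y,Y)\}$, that diagonal is exactly $[Y^{-1}]\{Y\,\partial_YG/(eG)\}$. So your logarithmic-derivative computation from $G=(Y-\Psi)^eH$ is a self-contained proof of precisely the cited identity. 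It shows where Hensel-readiness and the factor $1/e$ enter, at the cost of building the ring yourself.

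The obstacle you flag disappears if you work in $K((Y))[[T]]$ with $K=\F(\beta)(\X)$:
\begin{itemize}
\item $Y-\Psi$ is a unit there, since $\Psi\in T\,K[[T]]$.
\item $H$ is a unit of $K[[T,Y]]$ with $H(0,0)=c(\beta)$.
\item With $Q:=c(\beta)Y^e-G$, the $T^k$-coefficient of $(Q/Y^e)^m$ has $Y$-order at least $\max(m-k(e+1),-ke)$.
\end{itemize}
Hence $1/G=\sum_m Q^m/(c^{m+1}Y^{e(m+1)})$ converges coefficientwise, and $[Y^{-1}]$ commutes with the sum. No Puiseux series are needed.

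Two bookkeeping fixes. First, with your $\Lambda=Y^e-G/c$ one gets $1/G=\sum_m\Lambda^m/(c\,Y^{e(m+1)})$, so to land on $c^{-(m+1)}$ you should take $\Lambda:=Q$. Second, the cutoff must be argued for $[Y^{em+e-2}]\{\partial_YG\cdot Q^m\}$, not for $\Lambda^m$ alone. If $s$ factors of $Q$ carry positive $T$-degree, then $(m-s)(e+1)\le em+e-2$. So the $T$-order is at least $(m-e+2)/(e+1)$, rather than your $\lceil m/(e+1)\rceil$. This still exceeds $D$ once $m>2e(D+1)$.
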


\subsection{Sparse Polynomials}
\label{sec:sparse-polynomials}

We say that a polynomial $f(\X)\in\F[\X]$ is \emph{$s$-sparse} if it has at most $s$ monomials with non-zero coefficients. We denote the sparsity of $f$ by $\|f\|$. Thus, if $f$ has exactly $s$ non-zero monomials, then $\|f\|=s$.

Klivans and Spielman~\cite{KS01} gave a beautiful deterministic hitting-set construction for sparse polynomials. We will use the following form of their result.

\begin{theorem}[PIT for sparse polynomials {\cite{KS01}}]
\label{thm:sparse_pit}
Let $n,s,d \in \N$ and let $\F$ be a field with $\abs{\F}\geq \poly(n,s,d)$. There exists an explicit hitting set $S_{n,s,d}\subseteq \F^n$ of size $\poly(n,s,d)$ for the class of $s$-sparse polynomials in $\F[\X_n]$ of individual degree at most $d$. That is, every non-zero polynomial in this class evaluates to a non-zero value at some point of $S_{n,s,d}$. 

As a consequence, there exists a hitting set $S_{n,s,d,l}\subseteq\F^n$ of size $\poly(n,s,d,l)$ for the class of products of $l$ $s$-sparse polynomials, each with individual degree at most $d$. Moreover, such a hitting set can be constructed deterministically in time $\poly(n,s,d,l)$.%
\end{theorem}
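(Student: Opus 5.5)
The first part is the Klivans–Spielman hitting set, which I take as given. What needs arguing is the consequence: a hitting set for products of $l$ polynomials, each $s$-sparse with individual degree at most $d$. I would not expand the product, since its sparsity could be as large as $s^l$. Instead I would go back into the structure of the Klivans–Spielman construction and use the fact that its guarantee is about \emph{separating monomials}, not about the number of monomials in the product.

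Recall the construction. One picks a Kronecker-type map $X_i \mapsto T^{k_i}$ (with $k_i$ taken modulo a suitable prime $p$ applied to a base-$(d+1)$ encoding, or via the Agrawal–Kayal–Saxena style map). This sends distinct monomials of a polynomial in the class to distinct exponents of $T$. The separation holds for all but a few choices of the prime (or parameter) $p$, and the number of bad choices is at most $\poly(s,\log)$ times the pairwise collision count. The polynomial stays non-zero after this map, and becomes univariate of degree $\poly(n,s,d)$. Trying $\poly(n,s,d)$ values of $T$ and ranging over all candidate parameters then gives a hitting set of size $\poly(n,s,d)$.

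For a product $g = g_1\cdots g_l$, I would choose the parameter so that it is simultaneously good for every factor $g_j$. The bad parameters for each $g_j$ number at most $\poly(n,s,d)$, so a union bound over the $l$ factors leaves a good parameter among $\poly(n,s,d,l)$ candidates. Under a good map each $g_j(T^{k_1},\dots,T^{k_n})$ is a non-zero univariate polynomial of degree $\poly(n,s,d)$. Their product is then non-zero, since $\F[T]$ is an integral domain, and has degree $l\cdot\poly(n,s,d)$. Evaluating at $l\cdot\poly(n,s,d)+1$ field elements therefore hits it, and this requires $|\F|\ge \poly(n,s,d,l)$, which I take to be covered by the field-size convention.

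The resulting set has size $\poly(n,s,d,l)$ and is explicitly enumerable in the same time. The only genuinely delicate step is the first one: confirming that the Klivans–Spielman bad-parameter count for a single $s$-sparse polynomial is polynomial, so that the union bound over $l$ factors costs only a factor of $l$. Everything after that is elementary.
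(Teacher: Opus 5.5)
The paper gives no proof of this statement: it attributes the first part to Klivans--Spielman and asserts the product version as a consequence. Your derivation of that consequence is correct.

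The step you flag does hold. Under $X_i\mapsto T^{(d+1)^{i-1}\bmod p}$, two distinct monomials of individual degree at most $d$ collide only if $p$ divides the difference of their base-$(d+1)$ encodings. That difference is a non-zero integer of absolute value below $(d+1)^n$. So one $s$-sparse polynomial has at most $\binom{s}{2}n\log_2(d+1)$ bad primes. For Klivans--Spielman's own map $X_i\mapsto T^{k^{i-1}\bmod p}$ with a fixed prime $p>d$, there are at most $\binom{s}{2}(n-1)$ bad values of $k$.

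One small slip: the parameter range has to grow by a factor of about $l$, so the univariate images $g_j(T^{k_1},\dots,T^{k_n})$ have degree $\poly(n,s,d,l)$ rather than $\poly(n,s,d)$. This is harmless.

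Your field-size caveat is genuinely needed. The product statement requires $|\F|\geq\poly(n,s,d,l)$: when $\F$ is finite, $\prod_{a\in\F}(X_1-a)$ is a product of $|\F|$ two-sparse multilinear polynomials that vanishes on all of $\F^n$. The paper's global field-size convention covers this.

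Note that the consequence also follows black-box from the first sentence alone, which is presumably what ``as a consequence'' intends. Take any hitting set $H=\{h_1,\dots,h_m\}$ for the class. Choose distinct $w_1,\dots,w_m\in\F$ with Lagrange basis polynomials $L_i$, and put $\cG(W)=\sum_i L_i(W)h_i$.
\begin{itemize}
\item Since $\cG(w_i)=h_i$, every non-zero $P$ in the class satisfies $P\circ\cG\neq 0$, with degree at most $nd(m-1)$.
\item Composition commutes with products and $\F[W]$ is a domain, so $\prod_{j=1}^{l}(P_j\circ\cG)\neq 0$, with degree at most $l\,nd(m-1)$.
\item Evaluating $\cG$ at any $l\,nd(m-1)+1$ field elements therefore gives the required hitting set.
\end{itemize}
The four-variate Klivans--Spielman generator used later for sparse reconstruction works equally well here, since a reconstruction generator is in particular a hitting-set generator.

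Compared with your argument, this route needs no knowledge of how the sparse hitting set is built and no bad-parameter count, so it applies verbatim to any class with hitting sets. Your union bound over hash parameters instead stays inside the Klivans--Spielman construction. It reaches the same $\poly(n,s,d,l)$ bound with somewhat more explicit parameters, but buys nothing essential here.
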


The Klivans--Spielman construction can also be viewed as giving a low-variate
generator for sparse polynomials, and this yields a black-box reconstruction
algorithm. A generator is a tuple
\[
    \cG(\bZ) = \bigl(\cG_1(\bZ), \ldots, \cG_n(\bZ)\bigr),
\]
where \(\bZ = (Z_1,\ldots,Z_c)\) and each \(\cG_i(\bZ) \in \F[\bZ]\). For sparse
polynomial reconstruction, one wants \(\cG\) to have low degree and to preserve
enough information about every \(s\)-sparse polynomial \(P \in \F[\X_n]\) of
individual degree at most \(d\), so that \(P\) can be recovered from black-box
access to the low-variate polynomial \(P \circ \cG\). The construction in
\cite{KS01} provides such a generator. The following theorem, which is
essentially \cite[Theorem~3.2]{chuyoon2026factorization}, states the resulting
reconstruction guarantee.

\begin{theorem}[Sparse reconstruction via the Klivans--Spielman generator]
    \label{Thm: Sparse Reconstruction}
    Let \(n,s,d \in \N\), and let \(S \subseteq \F\) be a set\footnote{We canonically choose $S$ to be the first $m$ elements in the field, according to some fixed standard enumeration.} of size
    \(|S| \eqcolon m \geq \poly(n,s,d)\). There exists a generator
    \(\cG^{KS}_{(s,d,n)}(Z_1, Z_2, Z_3, Z_4) =
    \bigl(\cG_1(\bZ_4),\ldots,\cG_n(\bZ_4)\bigr)\), such that each
    \(\cG_i \in \F[\bZ_4]\) has degree at most \(\poly(n,s,d)\), with the following
    property. Given the evaluations of \(P \circ \cG^{KS}_{(s,d,n)}\)
    on all points of \(S^4\), any \(s\)-sparse polynomial
    \(P \in \F[\X_n]\) of individual degree at most \(d\) can be reconstructed
    in monomial representation in time \(\poly(n,s,d)\). Moreover,
    \(\cG^{KS}_{(s,d,n)}\) can be constructed in time
    \(\poly(n,s,d)\).
\end{theorem}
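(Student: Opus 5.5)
The plan is to build $\cG^{KS}_{(s,d,n)}$ explicitly out of three ingredients, one for each role a variable has to play: a Kronecker-type hash in $Z_1$, a selector in $Z_3$ ranging over a polynomial family of hash parameters, and a probe in $Z_2$, steered by $Z_4$, that reads off the exponent of a single variable. Reconstruction is then univariate interpolation plus bookkeeping.

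\emph{Step 1: the hash.} For a prime $p$ and an integer $k=d+1$, set $w_i(p)=k^{i-1}\bmod p$. Consider the map
\[
\balpha\mapsto \sum_i \alpha_i w_i(p)\quad\text{on exponent vectors }\balpha\in\{0,\dots,d\}^n .
\]
Take two distinct exponent vectors $\balpha\neq\balpha'$ in the support of an $s$-sparse $P$. Their base-$k$ integers $N_{\balpha}\neq N_{\balpha'}$ are below $k^n$, so the difference $N_{\balpha}-N_{\balpha'}$ has at most $n\log k$ prime divisors. Over all $\binom{s}{2}$ pairs, at most $s^2 n\log k$ primes are bad, so among the first $\poly(n,s,d)$ primes some $p$ makes $P(Z_1^{w_1(p)},\dots,Z_1^{w_n(p)})$ a univariate polynomial in which all $s$ monomials stay distinct. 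I have to be careful here: agreement of the integers modulo $p$ is not the same as agreement of the weighted sums $\sum_i\alpha_iw_i(p)$ over $\mathbb{Z}$. I would fix this in the standard Klivans--Spielman way, reducing the combined exponent modulo $p$, i.e.\ working in $\F[Z_1]/(Z_1^p-1)$ implicitly by reading exponents mod $p$, or by bounding via $\sum_i \alpha_i w_i < ndp$. This is the main obstacle: getting an injectivity argument that survives the reduction, with all degrees $\poly(n,s,d)$. If that fails, I fall back on the original KS construction, where $w_i=k^{i-1}\bmod p$ is combined with a second modulus in $Z_2$.

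\emph{Step 2: making the generator independent of $p$.} Index the candidate primes $p_1,\dots,p_t$ with $t=\poly(n,s,d)$ by points $\gamma_1,\dots,\gamma_t\in S$, and let $L_j(Z_3)$ be the Lagrange basis polynomials. Put
\[
\cG_i(\bZ_4)=\Bigl(\sum_{j=1}^t L_j(Z_3)\,Z_1^{w_i(p_j)}\Bigr)\cdot\bigl(M_i(Z_4)\,Z_2+1-M_i(Z_4)\bigr),
\]
where $M_i$ is the Lagrange polynomial equal to $1$ at the $i$-th point of $S$ and $0$ at the other first $n$ points. Every $\cG_i$ has degree $\poly(n,s,d)$, and the generator is computable in time $\poly(n,s,d)$. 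To check the specialisations: setting $Z_3=\gamma_j$ selects prime $p_j$; setting $Z_4$ to the $\ell$-th point replaces $X_\ell$ by $Z_1^{w_\ell}Z_2$ and every other $X_i$ by $Z_1^{w_i}$.

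\emph{Step 3: reconstruction.} Because $m=|S|$ exceeds every relevant degree, the values on $S^4$ determine the restricted polynomials by interpolation. The procedure is as follows.
\begin{itemize}
\item For each $j$ and each $\ell$, interpolate the bivariate polynomial $Q_{j,\ell}(Z_1,Z_2)$.
\item For each $j$, the polynomial $Q_{j,\ell}(Z_1,1)$ does not depend on $\ell$; call it $Q_j$. Its number of monomials is at most $s$, with equality for good $p_j$, so choose the $j$ that maximises the number of monomials of $Q_j$.
\item For that $j$, each $Z_1$-monomial $Z_1^{e}$ of $Q_j$ corresponds to a unique monomial $c\,\X^{\balpha}$ of $P$. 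Its coefficient is $c$, and $\alpha_\ell$ is the $Z_2$-degree of the coefficient of $Z_1^e$ in $Q_{j,\ell}$.
\end{itemize}
This exponent read-off uses no division by integers, so it works in every characteristic. One caveat: maximising the monomial count certifies injectivity only if collisions cannot cancel down to a smaller count that happens to look the same. I would handle this by noting that a good $p$ exists and attains the maximum $\|P\|$, while any $p$ with a collision yields strictly fewer terms or inconsistent $Z_2$-data. The whole procedure runs in time $\poly(n,s,d)$.
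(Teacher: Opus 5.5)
Your proposal is correct. The paper gives no proof of this statement; it imports it from \cite[Theorem~3.2]{chuyoon2026factorization}, i.e.\ from the Klivans--Spielman generator. Your argument is therefore a self-contained replacement for that citation. It makes the four roles explicit: a Kronecker-mod-$p$ hash in $Z_1$, a Lagrange selector over $\poly(n,s,d)$ candidate primes in $Z_3$, and a probe $Z_2$ steered by a Lagrange selector in $Z_4$ that reads off one exponent at a time. It is characteristic-free and has transparent degree bounds; the citation route buys only brevity.

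You should delete two hedges, because neither point is an obstacle.

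In Step~1, $\sum_i \alpha_i w_i(p)\equiv N_{\balpha} \pmod p$. So if two weighted sums agree over $\mathbb{Z}$, then $p \mid N_{\balpha}-N_{\balpha'}$. Hence any prime dividing none of the at most $\binom{s}{2}$ nonzero differences already makes the substitution injective on the support of $P$; each difference has at most $n\log_2(d+1)$ prime factors, so such a prime exists among the first $\poly(n,s,d)$ primes. The bound $\sum_i\alpha_i w_i(p)\le nd(p-1)$ gives the degree bound. Your suggested ``fix'' of reducing the combined exponent modulo $p$ is unnecessary. It is also not realisable by a polynomial substitution evaluated on $S^4$, since it amounts to evaluating at $p$-th roots of unity.

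In Step~3, the number of monomials of $Q_j$ is at most the number of distinct images of the support. This is strictly less than $\|P\|$ as soon as two support points collide. So $Q_j$ has exactly $\|P\|$ monomials (not $s$, as you wrote) if and only if $p_j$ is injective on the support. The maximum is attained by a good prime, and no appeal to ``inconsistent $Z_2$-data'' is needed.
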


In \cref{sec:recovering-all-factors}, we will first give an algorithm that produces a list of candidate circuits (with division gates) that are monomial equivalent to factors of the input polynomial $f$. In the post-processing step, we will remove the suitable monomial divisors from these candidates and eliminate division gates. For this purpose, we record the following simple consequences of the Klivans--Spielman hitting-set and reconstruction results above. For brevity, we shift their proofs to \cref{app:sparse-poly} in the appendix.

\begin{corollary}
\label{cor:factor_sparse_monomial_division}
Let $\F$ be a field with $\abs{\F}\geq \poly(n,s,d)$, and let $i\in[n]$.
Suppose $f(\X_n)$ is a factor of a non-zero $s$-sparse polynomial of individual degree at most $d$, and suppose additionally that $f$ is not divisible by a monomial.
Let $C$ be an algebraic circuit of size $m$ and degree $D$ computing a non-zero polynomial $g(\X_n)\in\F[\X_n]$ such that $f \prec_d g$. Then there exists a deterministic algorithm that, given $C$, computes the largest power of $X_i$ dividing $g$ in time $\poly(n,s,d, D, m)$.
\end{corollary}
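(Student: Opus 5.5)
Since $f\prec_d g$, we have $g=fM$ for some monomial $M$ of individual degree at most $d$. Because $f$ is not divisible by any monomial, the largest power of $X_i$ dividing $g$ is exactly $X_i^{e}$, where $e=\deg_{X_i}(M)\le d$. So it suffices to find the largest $e\in\{0,\dots,d\}$ with $X_i^e\mid g$. Write
\[
g=\sum_{j\ge 0} g_j(\X_{-i})X_i^j,
\]
where $\X_{-i}$ denotes the variables other than $X_i$. Then $e$ is the smallest index $j$ with $g_j\neq 0$. The plan is therefore: for $j=0,1,\dots,d$, build a circuit for $g_j$ and test it for zero, and output the first $j$ where the test succeeds.

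The circuits are easy to obtain. $C$ has degree at most $D$, so $\deg_{X_i}(g)\le D$. Interpolation in $X_i$ (\cref{lem:circuit-coefficient-interpolation}, with $X_i$ playing the role of $Y$) gives circuits $C_j$ for the $g_j$ of size $\poly(m,D)$, in time $\poly(m,D)$. We only need $j\le d$.

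The main step is to test $g_j$ for zero deterministically, using the structure of $g$. Set $e'=\deg_{X_i}(M)$ and $M=X_i^{e'}M'$. Then $g_j=M'\cdot f_{j-e'}$, where $f=\sum_\ell f_\ell X_i^\ell$. Since $f$ divides an $s$-sparse polynomial $P$ of individual degree at most $d$, $f$ also has individual degree at most $d$. Its coefficients $f_\ell$, though, need not be sparse; this is the main obstacle. To get around it, I would use Gauss's lemma over $\F[\X_{-i}]$. Write $P=fh$ and let $\ell_0$ be the lowest $X_i$-degree appearing in $f$. Then the lowest nonzero $X_i$-coefficient of $P$ equals $f_{\ell_0}h_{\ell'_0}$. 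That coefficient is a sub-sum of $P$, so it is $s$-sparse with individual degree at most $d$. Hence $f_{\ell_0}$, and therefore $g_{e'}=M'f_{\ell_0}$, divides a nonzero $s$-sparse polynomial. For every $j<e'$ we have $g_j=0$ identically. So the question reduces to deciding, for each $j$ in increasing order, whether $g_j\equiv 0$, knowing that the first nonzero $g_j$ is a monomial multiple (individual degree at most $d$) of a factor of an $s$-sparse polynomial.

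To decide this, take a nonzero factor $q$ of an $s$-sparse polynomial $Q$ with individual degree at most $d$, and write $Q=q r$. Any point $\balpha$ with $Q(\balpha)\neq0$ also satisfies $q(\balpha)\neq 0$. The Klivans--Spielman hitting set $S_{n,s,d}$ (\cref{thm:sparse_pit}) contains such a point for $Q$. For the monomial multiplier, use the product version $S_{n,s,d,l}$ with $l=n+1$, treating $M'$ as a product of at most $n$ one-sparse polynomials. This gives a point where $M'q$ is nonzero.

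The algorithm is then: for $j=0,\dots,d$, evaluate $C_j$ on the set $S_{n,s,d,n+1}$ and stop at the first $j$ where some evaluation is nonzero. Every $j<e'$ evaluates to zero everywhere because $g_j\equiv0$. At $j=e'$, the hitting-set property guarantees a nonzero evaluation. The total running time is $\poly(n,s,d,D,m)$. The only delicate point is the Gauss-lemma argument showing that $f_{\ell_0}$ divides a sparse polynomial; I expect this to go through as outlined, since the leading (here, lowest) coefficient of a product is the product of the corresponding coefficients.
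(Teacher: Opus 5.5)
Your proposal is correct and is essentially the paper's proof: interpolate the $X_i$-coefficients $g_j$, use multiplicativity of lowest $X_i$-coefficients to see that the first nonzero one is $M'$ times $f|_{X_i=0}$, which divides a nonzero $s$-sparse coefficient of the sparse multiple of $f$, and then scan $j=0,1,\dots$ on a Klivans--Spielman hitting set. The differences are cosmetic: the paper handles the monomial by noting that $M'$ times that sparse coefficient is itself $s$-sparse of individual degree at most $2d$, so it uses $S_{n-1,s,2d}$ instead of a product hitting set; it also interpolates with the $X_i$-degree bound $2d$ rather than $D$, which keeps the interpolation lemma's field-size requirement within the assumed $\abs{\F}\ge\poly(n,s,d)$.
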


\begin{lemma}[Division elimination]
    \label{lem:division-elimination}
    Let $C$ be a circuit of size $m$ and depth $\Delta$ with division gates over
    $\F$, computing a polynomial $f \in \F[\X_n]$ of individual degree at most
    $d$. Suppose that every denominator appearing at a division gate of $C$ is a
    product of $s$-sparse polynomials of individual degree at most $d$. Then
    there is a deterministic algorithm that runs in $\poly(n,s,m,d)$ time and
    outputs a circuit without division gates, of size $\poly(m,d)$ and depth
    at most $\Delta+\cO(1)$, computing $f$.
\end{lemma}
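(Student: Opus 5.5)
We plan to eliminate division gates one at a time, using Strassen's classical trick of replacing each division by a truncated power series. The denominators have a special structure: each is a product of $s$-sparse polynomials of individual degree at most $d$. We exploit this to find a shift point where every denominator is nonzero.

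First, we collect the denominators. Let $q_1,\dots,q_\ell$ be the polynomials computed at the denominator inputs of the division gates of $C$, where $\ell\le m$. By hypothesis each $q_j$ is a product of $s$-sparse polynomials of individual degree at most $d$. Hence $Q=\prod_j q_j$ is also such a product, with at most $\poly(m)$ factors, assuming each $q_j$ has at most $\poly(m)$ sparse factors. If the number of factors is not bounded a priori, we instead handle each $q_j$ separately.

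Next, we pick a point $\balpha$ from the hitting set of \cref{thm:sparse_pit} for products of $l$ $s$-sparse polynomials, taking $l=\poly(m)$. This gives $Q(\balpha)\ne0$, so every $q_j(\balpha)\ne 0$. Since we cannot evaluate the $q_j$ symbolically at low cost, we evaluate $C$ gate by gate at each candidate point of the hitting set. We keep the first point at which no division gate has a zero denominator. Such a point exists, and the search takes $\poly(n,s,m,d)$ time.

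Then we shift and homogenize. Substitute $X_i\mapsto TX_i+\alpha_i$, so every gate computes a rational function $P(\X,T)$ that is a power series in $T$ whose denominators have nonzero constant term. The output is $f(T\X+\balpha)$, a polynomial of $T$-degree at most $N:=nd$.

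For each division gate with denominator $q=q(\balpha)(1-u)$, where $u\in T\,\F[\X][[T]]$, we replace $1/q$ by
\[
\frac{1}{q(\balpha)}\sum_{i=0}^{N} u^i .
\]
This is correct modulo $T^{N+1}$. Repeating this for every gate yields a division-free circuit $C'$ computing a polynomial $P(\X,T)$ with $P\equiv f(T\X+\balpha)\pmod{T^{N+1}}$.

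The depth concern here is that inverting via a geometric series is a sum of powers, not a nested composition. Each division gate therefore becomes a constant number of layers: one addition gate with $N+1$ inputs, each of which is a product gate of fan-in $i$ over the circuit computing $u$. Because each gate's replacement sits at the same position as the original gate, the depth grows by only a constant factor, not an additive constant.

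To get the claimed additive $\cO(1)$ bound instead, the cleaner route is the following. Write $C$ as $A/B$ with $A,B$ division-free, via the standard transformation that tracks a numerator and denominator pair at every gate. This increases depth by a constant factor only if we pair gates. We then need a single final division, which we remove as above. The denominator $B$ is a product of the $q_j$'s and their pushed-through numerators, and we require $B(\balpha)\ne0$. Choosing $\balpha$ so that all the $q_j$ are nonzero guarantees this.

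Finally, we apply \cref{lem:circuit-truncation} in the variable $T$ to extract $P \Trunc T^{N+1}$. Since $\abs{\F}>\poly(N,m)$, interpolation in $T$ at $N+1$ points gives this truncation in depth $\Delta+\cO(1)$. We then set $T=1$ and substitute $X_i\mapsto X_i-\alpha_i$, which adds depth $1$. The result computes $f$ with size $\poly(m,n,d)$.

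We expect the main obstacle to be the depth accounting in converting the arbitrary-depth division structure into a single division with only $\cO(1)$ additive overhead. To handle it, we would first try to bring every division gate to the top by rewriting each gate as (numerator, denominator) pairs. Products and sums of fractions can be arranged with constant overhead per layer, and we would then argue, as in the \cite{BKRRSS25a}-style constructions used elsewhere in the paper, that the circuits produced by our algorithm in fact have only $\cO(1)$ layers of division gates.
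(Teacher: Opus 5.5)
Your final (``cleaner'') route is the paper's. You choose $\balpha$ from the Klivans--Spielman hitting set for products of sparse polynomials, write $C=C_1/C_2$ with $C_2(\balpha)\neq 0$, shift, replace $1/C_2$ by a geometric series truncated at degree $nd$, truncate with \cref{lem:circuit-truncation}, and unshift. These steps are sound, and selecting $\balpha$ by evaluating $C$ gate by gate is a useful detail the paper leaves implicit.

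The gap is the step you flag yourself and leave open. The lemma claims depth $\Delta+\cO(1)$, and nothing in your proposal achieves this:
\begin{itemize}
\item Replacing each division gate by its own geometric series costs $\cO(1)$ layers per layer of division gates, so depth $\cO(\Delta)$.
\item Your remark that sums and products of fractions can be handled ``with constant overhead per layer'' again gives only $\cO(\Delta)$.
\item The fallback, arguing that the paper's circuits have $\cO(1)$ layers of division gates, proves an application-specific statement, not the lemma for an arbitrary $C$ of depth $\Delta$.
\end{itemize}
The sentence about depth growing by a constant factor ``only if we pair gates'' does not resolve this either.

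The missing observation is that moving divisions to the top costs only an additive constant in depth. Write each gate $v$ as $N_v/D_v$ with $N_v,D_v$ division-free.
\begin{itemize}
\item For a multiplication gate take $N_v=\prod_w N_w$ and $D_v=\prod_w D_w$.
\item For a division gate $w_1/w_2$ take $N_v=N_{w_1}D_{w_2}$ and $D_v=D_{w_1}N_{w_2}$.
\end{itemize}
Each of these is a single product gate, so such layers map to exactly one layer. Since $N_{w_2}=q_jD_{w_2}$ as polynomials, induction also shows that $D_v$ is a product of the $q_j$'s. This is what justifies your claim that choosing $\balpha$ with all $q_j(\balpha)\neq0$ gives $C_2(\balpha)\neq 0$.

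Only an addition gate needs more: $N_v=\sum_i N_{w_i}\prod_{j\neq i}D_{w_j}$ and $D_v=\prod_j D_{w_j}$. Handle it as follows.
\begin{itemize}
\item First put $C$ in alternating form by merging adjacent addition gates, absorbing multiplicities as field constants on wires. This does not increase depth and keeps the size polynomial.
\item Now every child of an addition gate is a leaf or a multiplication/division gate. So $N_{w_i}$ and the $D_{w_j}$ are product gates (or leaves).
\item Each term $N_{w_i}\prod_{j\neq i}D_{w_j}$ can then be merged into one product gate over their inputs, with fan-in at most $m$.
\end{itemize}
By induction, $N_v$ and $D_v$ have depth at most that of $v$ plus an absolute constant. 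This gives $C_1,C_2$ of size $\poly(m)$ and depth $\Delta+\cO(1)$, and your remaining steps add only $\cO(1)$ further layers. The paper takes exactly this as ``the standard first step in Strassen's argument.''

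A smaller slip: extracting $P\Trunc T^{N+1}$ needs $\deg_T(P)+1$ interpolation points, not $N+1$. The product $C_1\cdot\sum_{i\leq N}u^i$ has $T$-degree well above $N$. \cref{lem:circuit-truncation} handles this at cost polynomial in $\deg_T(P)$.
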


Our algorithms will output a list of candidate factors of sparse polynomials. In general, this list may contain spurious candidates that do not correspond to actual factors of the input polynomial. In some settings, such as the problem of outputting all sparse factors considered in \cite{chuyoon2026factorization}, these spurious candidates can be removed using divisibility tests. We record below a result of Volkovich~\cite{Volkovich17}, which provides an efficient deterministic divisibility test in the bounded individual degree regime.

\begin{theorem}[\cite{Volkovich17}]
\label{thm:sparse-divisibility-test}
Let $f,g \in \F[\X_n]$ be $s$-sparse polynomials of individual degree at most $d$, and suppose that $\Char(\F)=0$ or $\Char(\F)>d$. Then there exists a deterministic algorithm that, given black-box access to $f$ and $g$, decides whether $g$ divides $f$ in time $\poly(n,s^d,d!)$.
\end{theorem}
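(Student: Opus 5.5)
The plan is to reduce divisibility in $\F[\X_n]$ to $n$ univariate divisibility tests, one in each variable over the rational function field. Each such test becomes a polynomial identity test for a pseudo-remainder, which is sparse enough for the Klivans--Spielman hitting set of \cref{thm:sparse_pit}.

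\emph{Structural reduction.} For each $i\in[n]$ with $\deg_{X_i}(g)=k_i\geq 1$, view $f,g$ as polynomials in $X_i$ over $\F(\X_{-i})$. I claim that $g\mid f$ in $\F[\X_n]$ if and only if, for every such $i$, $g\mid f$ in $\F(\X_{-i})[X_i]$ (and $g$ is a nonzero constant if no such $i$ exists). The forward direction is trivial. For the converse, write $g=c\prod_j h_j^{e_j}$ with the $h_j$ distinct irreducibles. Each $h_j$ depends on some variable $X_i$, so it is primitive in $X_i$ over $\F[\X_{-i}]$. Since $h_j^{e_j}\mid g\mid f$ in $\F(\X_{-i})[X_i]$, \cref{lem:gauss-lemma} gives $h_j^{e_j}\mid f$ in $\F[\X_n]$. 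Pairwise coprimality of the $h_j$ then gives $g\mid f$.

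\emph{Algebraic test.} Fix $i$ and let $k=k_i$. Let $g_k\in\F[\X_{-i}]$ be the leading coefficient of $g$ in $X_i$, and let $\deg_{X_i}f\leq d$. Pseudo-division gives
\[
g_k^{\,d-k+1}f=q\,g+r_i,
\]
with $\deg_{X_i}r_i<k$. Then $g\mid f$ in $\F(\X_{-i})[X_i]$ exactly when $r_i=0$.

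Each $X_i$-coefficient of $r_i$ is a fixed polynomial expression in the $X_i$-coefficients of $f$ and $g$. Each of these coefficients is $s$-sparse in $\X_{-i}$. The expression is linear in the coefficients of $f$ and of degree at most $d-k+1\leq d$ in those of $g$, with at most $\poly(d!)$ terms. Hence every coefficient of $r_i$ is $\poly(d!)\,s^{d+1}$-sparse, with individual degree $\cO(d^2)$.

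The step I expect to be the main obstacle is making this sparsity bound rigorous, by writing the remainder coefficients as explicit subdeterminants of a Sylvester-like matrix. This is also the step that yields the $s^d$ dependence.

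\emph{Black-box evaluation.} For a point $\balpha\in\F^{n-1}$ at which $g_k(\balpha)\neq0$, I will argue that pseudo-division with fixed formal degrees commutes with evaluation. Concretely, $r_i(\balpha,X_i)$ equals the pseudo-remainder of $f(\balpha,X_i)$ by $g(\balpha,X_i)$, taken with formal degrees $d$ and $k$. The univariates $f(\balpha,X_i)$ and $g(\balpha,X_i)$ are obtained by querying the black boxes at $d+1$ values of $X_i$ and interpolating, which needs $|\F|>d$. The identity in fact holds for all $\balpha$ if we use the formal coefficients.

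The algorithm is then as follows. Determine $\deg_{X_i}$ of $f$ and $g$ by the same interpolation over a hitting set. Then evaluate $r_i$ on the hitting set $S_{n-1,\,\poly(d!)s^{d+1},\,\cO(d^2)}$ and accept iff all $r_i$ vanish there. This takes time $\poly(n,s^d,d!)$.
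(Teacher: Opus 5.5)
Your proposal is correct, and it stands as a complete proof on its own. The paper does not prove this statement; it quotes it as a black box from \cite{Volkovich17}, so there is no in-paper argument to compare against.

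Your route uses only tools already in the preliminaries: Gauss's lemma (\cref{lem:gauss-lemma}) for the reduction to $n$ univariate divisibility tests over $\F(\X_{-i})$, and the Klivans--Spielman hitting set (\cref{thm:sparse_pit}) for the pseudo-remainders. It never uses the hypothesis $\Char(\F)=0$ or $\Char(\F)>d$. Pseudo-division, Gauss's lemma and the hitting set are all characteristic-free, so your argument actually gives the test over every sufficiently large field.

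The step you flag as the main obstacle is easy and needs no Sylvester subdeterminants. Run formal pseudo-division for $d-k+1$ steps. By induction, after $t$ steps every coefficient is linear in the $X_i$-coefficients $a_0,\dots,a_d$ of $f$ and homogeneous of degree $t$ in the $X_i$-coefficients $b_0,\dots,b_k$ of $g$. So each coefficient of $r_i$ is an integer combination of at most $(d+1)\binom{d+1}{k}\le (d+1)2^{d+1}$ monomials $a_j b^{\gamma}$ with $|\gamma|=d-k+1$. Each such coefficient is therefore at most $(d+1)2^{d+1}\, s^{d-k+2}\le (d+1)2^{d+1}s^{d+1}$-sparse, with individual degree at most $d(d+1)$. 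This finer form $\|f\|\cdot\|g\|^{d-k+1}$ also separates the roles of $f$ and $g$, which matters when the test is applied with different sparsity parameters, as in \cref{cor:cs-and-bsv-generalization}.

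Two small presentational fixes:
\begin{enumerate}
\item Before invoking \cref{lem:gauss-lemma}, say explicitly that $h_j^{e_j}$ is primitive in $X_i$, since a product of primitive polynomials is primitive.
\item Drop the condition $g_k(\balpha)\neq 0$. The algorithm evaluates at every point of the hitting set, and it is exactly the formal-degree version of pseudo-division that commutes with substituting $\balpha$ at all such points.
\end{enumerate}
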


We can also use the above theorem to prune out the list and output \emph{all} the factors in the bounded individual degree regime, in quasi-polynomial time. For this, we will need the following result of Bhargava, Saraf and Volkovich~\cite{BSV20}: bounded individual degree factors of sparse polynomials have at most quasi-polynomial sparsity.

\begin{theorem}[Sparsity Bound]{\cite[Theorem 4.1]{BSV20}}\label{thm:bsv_general_sparse}
    Let $f \in \F[\X_n]$ be a $s$-sparse polynomial. Suppose $g$ is a factor of $f$ where $g$ has individual degree at most $d$. Then, the sparsity of $g$ is at most $s^{4 d^2 \log n}$.
 \end{theorem}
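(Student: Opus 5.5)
The plan is to prove the bound in two stages. First, control the \emph{number of vertices} of the Newton polytope of $g$. Second, show that a lattice polytope inside $[0,d]^n$ with few vertices contains few lattice points. Since every monomial of $g$ is a lattice point of its Newton polytope $\mathrm{Newt}(g)$, this bounds the sparsity of $g$. The $d^2\log n$ exponent should come from a sampling (approximate Carathéodory) argument in the second stage.

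\textbf{Stage 1 (vertices of a factor).} Write $f = gh$. By Ostrowski's theorem, $\mathrm{Newt}(f) = \mathrm{Newt}(g) + \mathrm{Newt}(h)$ (Minkowski sum). I will show that the map sending a vertex $u$ of $\mathrm{Newt}(g)$ to a vertex of $\mathrm{Newt}(f)$ can be made injective.
\begin{itemize}
\item For a vertex $u$ of $\mathrm{Newt}(g)$, choose a generic linear functional $w$ that is uniquely maximized on $\mathrm{Newt}(g)$ at $u$ and uniquely maximized on $\mathrm{Newt}(h)$ at some vertex $u'$. Genericity is possible because the set of such $w$ is an open cone.
\item Then $w$ is uniquely maximized on the Minkowski sum at $u+u'$, so $u+u'$ is a vertex of $\mathrm{Newt}(f)$.
\item A vertex of a Minkowski sum decomposes uniquely as a sum of a vertex of each summand. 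Hence $u$ is recovered from $u+u'$, and the map is injective.
\end{itemize}
Since every vertex of $\mathrm{Newt}(f)$ is an exponent vector of $f$, we get
\[
\#\mathrm{vert}(\mathrm{Newt}(g)) \le \#\mathrm{vert}(\mathrm{Newt}(f)) \le s.
\]
This is the step I expect to need the most care: one must get the genericity and unique-decomposition argument right, and check that it holds over an arbitrary field, which it does because $\mathrm{Newt}$ of a product is characteristic-free.

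\textbf{Stage 2 (lattice points via sampling).} Let $P = \mathrm{Newt}(g) \subseteq [0,d]^n$, where the containment uses that $g$ has individual degree at most $d$. Let $V$ be its vertex set, so $|V| \le s$. Fix a lattice point $p \in P$ and write $p = \sum_{v\in V}\lambda_v v$ as a convex combination.
\begin{itemize}
\item Draw $k$ i.i.d.\ samples $v^{(1)},\dots,v^{(k)}$ from the distribution $\lambda$, and let $\bar v$ be their average.
\item Each coordinate $v^{(j)}_i$ lies in $[0,d]$ and has mean $p_i$. By Hoeffding's inequality,
\[
\Pr\bigl[|\bar v_i - p_i| \ge 1/2\bigr] \le 2\exp\!\bigl(-k/(2d^2)\bigr).
\]
\item Take $k = 4d^2\ln n$. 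This probability is then at most $2/n^2$. A union bound over the $n$ coordinates gives failure probability at most $2/n < 1$ for $n \ge 3$; the cases $n \le 2$ are handled by trivial bounds or a slightly larger constant.
\item So some multiset of $k$ vertices has average within $\ell_\infty$-distance $<1/2$ of $p$. Since $p$ is integral, $p$ is obtained by rounding each coordinate of that average.
\end{itemize}

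\textbf{Conclusion.} The previous step gives an injection from lattice points of $P$ into multisets of size $k$ from $V$, via the map $p \mapsto$ (a chosen witnessing multiset). Therefore
\[
\|g\| \le \#(P\cap\mathbb{Z}^n) \le |V|^k \le s^{4d^2\ln n}.
\]
This matches the claimed $s^{4d^2\log n}$ up to the base of the logarithm. I would absorb that discrepancy by tightening the Hoeffding constant, or else state the result with $\ln$, which only strengthens it since $\ln n \le \log_2 n$.
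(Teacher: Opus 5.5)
Your proposal is correct and follows essentially the argument behind the cited \cite[Theorem 4.1]{BSV20}, which this paper imports without reproving. That argument has two steps: Ostrowski's theorem gives $\#\mathrm{vert}(\mathrm{Newt}(g))\le\#\mathrm{vert}(\mathrm{Newt}(f))\le s$, and an $\ell_\infty$ approximate-Carath\'eodory step (your Hoeffding-plus-union-bound sampling) writes every lattice point of a polytope in $[0,d]^n$ as the rounding of an average of $O(d^2\log n)$ vertices; since $\lceil 4d^2\ln n\rceil\le 4d^2\log_2 n$ for $n\ge 3$, your constant already suffices with no tightening needed. The one caveat is $n=1$, where the stated bound is $s^0=1$ and the statement itself is false (e.g.\ $X_1-1\mid X_1^2-1$), so no constant can rescue it; for $n=2$ (and $s\ge 2$) the trivial count $(d+1)^2\le s^{4d^2}$ does suffice.
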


 \begin{remark}
     \cite[Theorem 4.1]{BSV20} is stated when $f$ has individual degree $d$. But on inspecting their proof, it is clear that it extends to the setting of \cref{thm:bsv_general_sparse}.
 \end{remark}

Finally in \cref{sec:recovering-bounded-individual-degree-factors}, we use the divisibility test by Forbes~\cite{Forbes15} and prune out the list to obtain all constant degree factors of general sparse polynomials. This recovers the results of Kumar, Ramanathan and Saptharishi and Dutta, Sinhababu and Thierauf~\cite{KRS24, DST24} for sparse polynomials.

\begin{theorem}[{\cite{Forbes15}}]
    \label{thm:divisibility-testing-constant-deg}
    Let $\F$ be a field of characteristic $0$ or characteristic at least
    $\poly(D)$. Let $f \in \F[\X_n]$ be an $n$-variate $s$-sparse polynomial of
    degree at most $D$, and let $g \in \F[\X_n]$ be a polynomial of constant degree.
    Then there is a deterministic algorithm that computes the multiplicity of $g$
    as a factor of $f$ in time $\poly(s,n,D)^{\cO(\log s)}$.
\end{theorem}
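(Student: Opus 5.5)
This is Forbes's multiplicity-testing result, and the plan is to follow his route. We reduce testing a factor of constant degree to polynomial identity testing of sparse polynomials that have been shifted by a generic point. We may assume $g$ is irreducible, since otherwise we test each of its irreducible factors and combine the multiplicities.

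\textbf{Step 1: make $g$ nice by a random-free shift.} After an invertible linear change of variables (chosen from a small explicit family), we may assume $g$ is monic in a fresh variable $Y$ of degree $\delta=\deg g=O(1)$. We pick a point $\balpha$ at which $g(\balpha,Y)$ is square-free. This point is chosen so that the discriminant of $g$, a constant-degree polynomial, is nonzero at $\balpha$, which can be arranged deterministically.

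\textbf{Step 2: express multiplicity via power series roots.} By \cref{Hensel lemma}, after the shift $g$ factors as $\prod_j (Y-\phi_j(\X))$, where the $\phi_j$ are power series over a constant-degree extension and have distinct constant terms. For an irreducible $g$, the multiplicity $e$ of $g$ in $f$ equals the order of vanishing of $f(\X,Y)$ at $Y=\phi_1(\X)$. Equivalently, it is the least $e$ such that $\partial_Y^{e}f(\X+\balpha,\phi_1)\neq 0$; this uses the characteristic being large relative to $D$. Since $g$ has constant degree, each homogeneous component of $\phi_1$ of degree $\le D$ is a polynomial of low total degree. Truncating at degree $D$ suffices, because a nonzero polynomial of degree $\le D$ in $\X$ cannot be killed by errors of order $>D$; we use the Kronecker-style degree parameterization $\X\mapsto T\X$.

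\textbf{Step 3: PIT for the resulting expression.} This is the main obstacle. The expression $\partial_Y^e f(\X+\balpha,\phi_1^{\le D})$ is not sparse, because we composed a sparse polynomial with a shift and with a dense, low-degree power series. My plan follows Forbes: the shift $\X+\balpha$ applied to an $s$-sparse polynomial yields a polynomial with low-support structure, namely a sum of $s$ products of univariates (a ROABP of width $s$). Substituting the truncated root $\phi_1$, which depends on all variables but has constant degree, preserves a small ROABP or low-depth structure under a suitable variable ordering. We then invoke the quasipolynomial hitting sets for ROABPs (Forbes--Shpilka / Agrawal--Gurjar--Korwar--Saxena), of size $\poly(s,n,D)^{O(\log s)}$. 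The dependence on $s$ in the exponent comes from the rank/width-$s$ bound on the shifted sparse polynomial.

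\textbf{Step 4: read off $e$.} We iterate $e=0,1,\dots,D$ and test each derivative with the hitting set. The first nonvanishing derivative gives the multiplicity. The total time is $D$ times the hitting-set size, that is, $\poly(s,n,D)^{O(\log s)}$.
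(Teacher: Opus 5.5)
The paper does not prove this statement. It imports it from Forbes as a black box, so the relevant comparison is with Forbes's argument. Against that, your Step 3, which carries all the difficulty, does not work.

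First, $\phi_1$ is not a low-degree object. Unless $\deg_Y g=1$ it is a genuine power series, and $\phi_1^{\le D}$ is a polynomial of degree $D$ that is dense in all $n$ variables; its homogeneous components have degrees up to $D$, not $O(1)$.

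Second, substituting for $Y$ a polynomial that depends on all variables does not preserve small read-once width in any order. Take the most favourable case, $g=Y-h(\X)$ with $\deg h=2$. You must hit (shifts of) $\sum_{i\le s}c_i\X^{\mathbf a_i}h(\X)^{b_i}$; shifting does not change ROABP width, and this class contains $h^b$. For a generic quadratic form $h$, every cut of every variable order into two halves has a full-rank cross term. By semicontinuity, the coefficient matrix of $h^b$ across that cut then has rank at least $\binom{\lfloor n/2\rfloor+b-1}{b}$. So no small ROABP exists, and the Forbes--Shpilka/AGKS hitting sets are useless here; even for genuine ROABPs they have $\log n$, not $\log s$, in the exponent. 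The constant-depth structure that \emph{is} preserved only yields subexponential PIT. Your claimed $\poly(s,n,D)^{O(\log s)}$ hitting set is therefore unsupported.

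The missing ideas are Forbes's. He avoids power-series roots entirely. By Strassen's division elimination, $g\mid f$ iff $f$ equals $g$ times the degree-$\le D$ truncation, around a point $\balpha$ with $g(\balpha)\neq 0$, of $f\cdot g(\balpha)^{-1}\sum_{j\le D}(1-g/g(\balpha))^j$. He shows this identity test is PIT for sums of a monomial times a power of a constant-degree polynomial. He then builds a $\poly(s,n,D)^{O(\log s)}$ hitting set for that class by the method of shifted partial derivatives, not via ROABPs. Given such a divisibility test, your derivative criterion (least $j$ with $g\nmid\partial_Y^j f$) does yield the multiplicity, since each $\partial_Y^j f$ is still $s$-sparse of degree at most $D$.

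There are two smaller slips. In Step 2, $\partial_Y^e f(\X+\balpha,\phi_1)$ is a power series, not a polynomial of degree $\le D$, and truncating at degree $D$ can fail. For example, with $g=Y^2-Y-X$ and $f=Y+X$, the root $\phi_1=-X+X^2-\cdots$ gives $f(X,\phi_1^{\le 1})=0$ although $g\nmid f$. The norm identity $\prod_j F(\X,\phi_j)=\pm\Res_Y(g,F)$ shows that truncation at degree $O(\deg(g)\,D)$ suffices. Also, in Step 1 a general invertible linear change of variables destroys the sparsity of $f$, and you never account for this.
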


\section{Factoring Normalized and Reverse Normalized Sparse Polynomials}
\label{sec:factor-normalized-and-reverse-normalized-polynomial}

In this section, we prove the factorization subroutine used by our main algorithm. The input is a sparse polynomial
\[
    f(\X,Y)=f_d(\X)Y^d+\cdots+f_0(\X)
\]
of bounded individual degree, given in its monomial representation. We first handle the normalization \(\hat f\), which is monic in \(Y\), and later reduce the reverse-normalized case to the same setting using polynomial reversals. We assume throughout this section that \(\Char(\F)=0\) or \(\Char(\F)>d\).

The goal is to produce a small list of constant-depth circuits that contains every non-constant factor of the normalized polynomial \(\hat f\), while allowing spurious candidates.

\begin{lemma}
    \label{lem:monic-factorization}
    Let $\F$ be a field of characteristic $0$ or greater than $d$. Let $f(\X, Y) \in \F[\X, Y]$ be an $n$-variate polynomial with sparsity at most $s$ and individual degree at most $d$, and let $\hat f(\X, Y)$ be its normalization with respect to $Y$. Then there is an algorithm running in time $\poly(s^d, d!, n, \mathcal{T}(\F,d))$ that outputs a list $\cL$ such that
    \begin{enumerate}
        \item $|\cL|\leq 2^d-1$
        \item every element of $\cL$ is a $\poly(n,s,d)$-size, $\cO(1)$-depth circuit over $\F$
        \item every non-constant factor $\hg(\X,Y)$ of $\hf(\X,Y)$ is computed by some circuit in $\cL$
    \end{enumerate}
    The list $\cL$ may contain spurious candidates. Moreover, the algorithm can compute the $Y$-degree of the polynomial computed by every circuit in $\cL$.
\end{lemma}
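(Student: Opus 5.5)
The plan has three stages: make $\hat f$ Hensel-ready by a shift found with sparse PIT, compute its power series roots in closed form, and output one candidate per nonempty sub-multiset of roots. Throughout, write $f=\sum_{i=0}^k f_iY^i$ with $k=\deg_Y f\le d$, and $\hat f=f(\X,Y/f_k)f_k^{k-1}=\sum_i f_if_k^{k-1-i}Y^i$.

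\textbf{Stage 1: finding a Hensel-ready point.} By \cref{cor:resultant-prop}, $\hat f$ is Hensel-ready at $\balpha$ whenever $\dim\ker\Syl(\hat f(\balpha,Y),\partial_Y\hat f(\balpha,Y))$ equals the generic value $\dim\ker\Syl_Y(\hat f,\partial_Y\hat f)$ over $\F(\X)$. Indeed, the generic value is $\sum_i(e_i-1)\deg h_i$ for the irreducible factorization $\hat f=\prod h_i^{e_i}$. At $\balpha$, the inequality in \cref{cor:resultant-prop} applied to $\prod h_i(\balpha,Y)^{e_i}$ gives at least this value, with equality only when the $h_i(\balpha,Y)$ are squarefree and pairwise coprime. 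So it suffices to find $\balpha$ at which the rank $\rho$ of the Sylvester matrix is preserved.

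By \cref{lem:Sylvester-substitution-identity} with $g=1/f_k$, the matrix $\Syl_Y(\hat f,\partial_Y\hat f)$ equals $\Syl_Y(f,\partial_Yf)$ conjugated by diagonal matrices whose entries are powers of $f_k$, up to the scalar $f_k^{k-1}$. Hence the rank is preserved at $\balpha$ whenever $f_k(\balpha)\ne0$ and some $\rho\times\rho$ minor of $\Syl_Y(f,\partial_Yf)$ that is nonzero as a polynomial stays nonzero at $\balpha$. The entries of $\Syl_Y(f,\partial_Y f)$ are $s$-sparse, so each such minor is a sum of at most $(2d)!$ products of at most $2d$ sparse polynomials, hence $(2d)!s^{2d}$-sparse with individual degree $\poly(d)$. (I expect that restricting to the $d$ columns coming from $f$ sharpens this to $s^d$; at worst the sparsity is $\poly(s^d,d!)$.) I would fix one minor that is nonzero generically; it can be identified by running PIT on each candidate minor. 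Then the Klivans--Spielman hitting set (\cref{thm:sparse_pit}) for the product of that minor with $f_k$ has size $\poly(s^d,d!,n)$, and I would choose $\balpha$ in it maximizing the rank of $\Syl(\hat f(\balpha,Y),\cdot)$. This is computable pointwise because $f_k(\balpha)\neq0$ and the matrices are explicit.

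\textbf{Stage 2: power series roots.} I would set $F(T,Y)=\hat f(T\X+\balpha,Y)$, which is monic and Hensel-ready at $T=0$. It has a constant-depth $\poly(n,s,d)$-size circuit, since $\hat f=\sum_i f_if_k^{k-1-i}Y^i$ with sparse $f_i$. Next, factor $F(0,Y)=\hat f(\balpha,Y)$ over $\F$ in time $\mathcal T(\F,d)$ to obtain the multiset of roots $\cB$ with their multiplicities. For each $\beta$, \cref{Closed form root} gives $A/B$ with $\Phi_\beta\Trunc T^{D+1}=A(\X,T,\beta)/B(\beta)\Trunc T^{D+1}$, where $D=\deg_T F\le\poly(n,d)$; this bound holds because every monic factor of $\hat f$ has total degree at most $\deg\hat f$.

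\textbf{Stage 3: candidates, staying over $\F$.} By \cref{cor: potential factor}, each monic factor $\hat g$ of degree $r$ equals $\prod_{\beta\in S}(Y-\Phi_\beta)$ for a sub-multiset $S$ with $\Esym_j(S)\in\F$. So $\hat g(T\X+\balpha,Y)=\sum_j(-1)^j\Esym_j(\{\Phi_\beta\}_{\beta\in S})\,Y^{r-j}$, and each coefficient is a polynomial of $T$-degree at most $D$. This yields at most $2^d-1$ nonempty sub-multisets, which is where the list bound comes from.

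To avoid extension-field constants, I would group $S$ into unions of conjugacy classes, i.e.\ the roots of a monic factor $q$ of $\hat f(\balpha,Y)$ over $\F$. I would then apply \cref{lem:esym-of-g/h-on-roots-of-f} with $g=A(\X,T,Z)$ and $h=B(Z)$ to write $\Esym_j$ as a ratio of constant-depth circuits over $\F$. Making $g$ monic may require a harmless rescaling, and $B$ is nonzero on the roots by \cref{Closed form root}. After that I truncate in $T$ (\cref{lem:circuit-truncation}), set $T=1$, undo the shift by $\X\mapsto\X-\balpha$, and remove divisions using the fact that the denominators are univariate constants after evaluation.

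The $Y$-degree of each candidate is $|S|$. Since only the subsets of $\cB$ grouped by $\F$-irreducible factors of $\hat f(\balpha,Y)$ matter, there are at most $2^d-1$ nonempty choices.

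\textbf{Main obstacle.} The delicate step is Stage 1: certifying that preserving the rank of the Sylvester matrix gives the Hensel-ready condition, and bounding the sparsity of the relevant minors by $\poly(s^d,d!)$ rather than $s^{d^2}$, via the diagonal-scaling relation to $\Syl_Y(f,\partial_Yf)$.
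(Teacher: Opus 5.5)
Your route is the paper's. You hit a maximal nonzero minor of $\Syl_Y(f,\partial_Yf)$ together with $f_k$, transfer rank to $\hat f$ via \cref{lem:Sylvester-substitution-identity}, and then output one factor-lift per non-constant $\F$-factor $q$ of $F(0,Y)$, built from \cref{Closed form root} and \cref{lem:esym-of-g/h-on-roots-of-f}. Taking the rank-maximizing point of the hitting set instead of hitting all minors at once is an inessential variation.

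However, one step in Stage 3 fails as written. The pair $(A,B)$ represents $R_e$, and $R_e$ depends on the multiplicity $e$ of the root in $F(0,Y)$. The expansion is around the first nonvanishing derivative $\partial_{Y^e}F(0,\beta)$, and $B$ is a power of $\partial_{Y^e}F(0,Z)$. A general factor $q$ of $F(0,Y)$ can contain roots of different multiplicities, so applying \cref{lem:esym-of-g/h-on-roots-of-f} once to all roots of $q$ with a single $(A,B)$ is invalid. If a root $\beta'$ of $q$ has multiplicity $e'\neq e$, then $R_e(\beta')$ need not agree with $\Phi_{\beta'}$ modulo $T^{D+1}$. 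If $e'>e$, one even has $B(\beta')=0$, which violates the lemma's hypothesis.

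The paper's fix, in \cref{Algo: PotentialFactorComputing}, is to write $q=\prod_i q_i^{l_i}$ with the $q_i$ irreducible over $\F$. Since $\Char(\F)=0$ or $>d$, each $q_i$ is separable, so all its roots share one multiplicity $e_i$ in $F(0,Y)$. You then apply the lemma separately to each $q_i$, using $A_i/B_i$ representing $R_{e_i}$. The cleared denominator $\prod_{\beta}B_i(\beta)=\Res(q_i,B_i)$ is a nonzero constant of $\F$. Finally, raise the $i$-th lifted product to the power $l_i$, multiply over $i$, and truncate in $T$.

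Two smaller points:

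In Stage 1, the ``equality only when'' direction is not what \cref{cor:resultant-prop} states. You need the counting argument from \cref{Algo:to hit disc}. Write $\hat f(\balpha,Y)=\prod_jp_j^{l_j}$ and $h_i(\balpha,Y)=\prod_jp_j^{l_{i,j}}$, with $b_j=\deg p_j$, $r_i=\deg_Y h_i$ and $l_j=\sum_ie_il_{i,j}$. Then $\sum_j(l_j-1)b_j-\sum_i(e_i-1)r_i=\sum_j\bigl(\sum_il_{i,j}-1\bigr)b_j\geq 0$, and equality forces $\sum_il_{i,j}=1$ for every $j$.

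Making $A$ monic in $Z$ is not a scalar rescaling, because its leading $Z$-coefficient depends on $\X$ and $T$. Instead, replace $A$ by $A+Z^Nq_i(Z)$ for large $N$; this agrees with $A$ on the roots of $q_i$. Only $B$, which lies in $\F[Z]$, can be made monic by a scalar.
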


Before proving the lemma, we establish the main preprocessing step: finding a point $\balpha\in\F^n$ at which $\hf(\X,Y)$ is Hensel-ready, in the sense of \cref{def:hensel-ready}. After shifting $\X$ by this point, the polynomial can be factored through its power-series roots around the origin.

\paragraph{Finding a Hensel-ready shift.}

Let us first consider the easier case where $\hf$ is square-free. By \cref{lem:resultant-prop}, this is equivalent to $\Disc_Y(\hf)\neq 0$, or equivalently to $\Syl_Y(\hf,\partial_Y\hf)$ having full rank. Since the normalization step can increase sparsity from $s$ to $s^d$, every entry of this Sylvester matrix is $s^d$-sparse, and hence $\Disc_Y(\hf)$ is $(2d)!s^{d^2}$-sparse. Therefore, any point $\balpha\in\F^n$ for which $\Disc_Y(\hf)(\balpha)\neq0$ makes $\hf$ Hensel-ready at $\X=\balpha$.

In general, $\hf$ need not be square-free, so $\Disc_Y(\hf)$ may vanish identically. In this case, we choose $\balpha$ so that the rank of the Sylvester matrix $\Syl_Y(\hf,\partial_Y\hf)$ is preserved after substituting $\X=\balpha$. To be precise, as observed in \cite[Lemma 4.19]{chuyoon2026factorization}, it is enough to find a point $\balpha$ such that a nonzero maximum-rank minor of $\Syl_Y(\hf,\partial_Y\hf)$ remains nonzero after substituting $\X=\balpha$. Such a minor is again $(2d)!s^{d^2}$-sparse.

Naively, this would lead to a running time polynomial in $s^{d^2}$. The key point is that the relevant rank-preservation test can be carried out before normalization. Let $\bar f(\X,Y)=f(\X,Y/f_d)$. By \cref{lem:Sylvester-substitution-identity}, the minors of $\Syl_Y(\bar f,\partial_Y\bar f)$ are related, up to diagonal row and column scalings, to the corresponding minors of $\Syl_Y(f,\partial_Yf)$. Since $\hat f=f_d^{d-1}\bar f$ and $f_d$ is independent of $Y$, preserving rank for $\bar f$ is equivalent to preserving rank for $\hat f$ at points where $f_d$ is nonzero. Thus the point $\balpha$ can be found by working with the Sylvester matrix of the original sparse polynomial $f$, which is what keeps the running time polynomial in $s^d$.

We formalize this preprocessing step in the next lemma. It is a variant of \cite[Lemma 4.19]{chuyoon2026factorization}, adapted to our notation and parameter setting. We include the proof because the lemma is used here in a slightly different context.
\begin{lemma}\label{Algo:to hit disc}
    Let $\F$ be a field of characteristic $0$ or greater than $d$. Let $f(\X,Y)\in\F[\X,Y]$ be an $s$-sparse polynomial of individual degree $d$, and let $\hf(\X,Y)$ be the normalization of $f$ with respect to the variable $Y$. Then there is a deterministic algorithm that runs in $\poly(n,d!,s^d)$ time and finds $\balpha\in\F^n$ such that $\hat{f}$ is Hensel-ready at $\X=\balpha$.
\end{lemma}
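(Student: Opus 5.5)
Let $k=\deg_Y(f)\le d$ and $\bar f(\X,Y)=f(\X,Y/f_k)$, so that $\hf=f_k^{k-1}\bar f$. The plan has three steps. First, reduce Hensel-readiness of $\hf$ at $\balpha$ to two conditions: $f_k(\balpha)\neq 0$, and $\operatorname{rank}\Syl_Y(\hf,\partial_Y\hf)(\balpha)$ equals the rank of $\Syl_Y(\hf,\partial_Y\hf)$ over $\F(\X)$. Second, transfer the rank condition to the Sylvester matrix of the original sparse $f$. Third, satisfy both conditions with a Klivans--Spielman hitting set (\cref{thm:sparse_pit}).

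For the first step, write $\hf=\prod_i h_i^{e_i}$ as its irreducible factorization. Each $h_i$ may be taken monic in $Y$ by Gauss's lemma, since $\hf$ is monic. Because $\hf$ and every $h_i$ are monic, specialization commutes with the Sylvester matrix (\cref{ResultantFact}). Over $\F(\X)$ the $h_i$ are distinct, irreducible and separable, since $\Char(\F)=0$ or $\Char(\F)>d$. So \cref{cor:resultant-prop} gives equality
\[
\dim\ker\Syl_Y(\hf,\partial_Y\hf)=\sum_i(e_i-1)\deg_Y h_i .
\]
At $\balpha$, $\hf(\balpha,Y)=\prod_i h_i(\balpha,Y)^{e_i}$ with monic non-constant factors, so the inequality part of \cref{cor:resultant-prop} gives $\dim\ker\Syl(\hf(\balpha,Y),\partial_Y\hf(\balpha,Y))\ge\sum_i(e_i-1)\deg h_i(\balpha,Y)$. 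Equality holds exactly when the $h_i(\balpha,Y)$ are square-free and pairwise coprime; one checks this by factoring each $h_i(\balpha,Y)$ into irreducibles and counting. Hence rank preservation forces equality, and by \cref{ResultantFact} this yields $\Res_Y(h_i,h_j)(\balpha)\neq0$ and $\Disc_Y(h_i)(\balpha)\neq 0$.

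For the second step, apply \cref{lem:Sylvester-substitution-identity} with $g=1/f_k$. It shows that $\Syl_Y(\bar f,\partial_Y\bar f)$ equals $\Syl_Y(f,\partial_Yf)$ multiplied on both sides by diagonal matrices whose entries are powers of $f_k$. Multiplying $\bar f$ by the $Y$-free factor $f_k^{k-1}$ only scales columns by powers of $f_k$. Consequently, every $r\times r$ minor of $\Syl_Y(\hf,\partial_Y\hf)$ is the corresponding minor of $M:=\Syl_Y(f,\partial_Yf)$ times a monomial in $f_k^{\pm1}$. So at any $\balpha$ with $f_k(\balpha)\neq0$ the two matrices have the same rank, both generically and after specialization. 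Note that $M$ need not specialize well when the leading coefficient drops, but since $\deg_Y f(\balpha,Y)=k$ when $f_k(\balpha)\neq 0$, specialization commutes here too.

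For the third step, let $r=\operatorname{rank}M$ over $\F(\X)$ and fix a nonzero $r\times r$ minor $\mu$ of $M$. The entries of $M$ are $s$-sparse with individual degree $\le d$, and $r\le 2d-1$. Expanding the determinant therefore shows that $\mu$ is at most $(2d)!\,s^{2d}$-sparse with individual degree $\mathcal O(d^2)$. We take $P=\mu\cdot f_k$, which is $\poly(d!,s^d)$-sparse, and search a hitting set of size $\poly(n,d!,s^d)$ for a point $\balpha$ with $P(\balpha)\ne0$. Alternatively, since $P$ is a product of two sparse polynomials, the product version of \cref{thm:sparse_pit} applies.

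The main obstacle is computing $r$ and finding a nonzero maximal minor deterministically without expanding all minors. I would compute the rank of $M$ over $\F(\X)$ by evaluating $M$ on the hitting set $S$ for $(2d)!s^{2d}$-sparse polynomials: the generic rank equals $\max_{\balpha\in S}\operatorname{rank}M(\balpha)$. This holds because every nonzero minor is such a sparse polynomial, so some point of $S$ makes a nonzero maximal minor nonvanishing, while no point can exceed the generic rank. I would then pick $\balpha\in S$ attaining this maximum with $f_k(\balpha)\neq0$; to guarantee such a point exists, use the hitting set for products instead. Such an $\balpha$ satisfies both conditions of the first step, and the total time is $\poly(n,d!,s^d)$.
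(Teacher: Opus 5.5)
Your proposal is correct and follows essentially the same route as the paper. You reduce Hensel-readiness to rank preservation of $\Syl_Y(\hat f,\partial_Y\hat f)$ via the kernel-dimension count from \cref{cor:resultant-prop}, transfer this to $\Syl_Y(f,\partial_Y f)$ through \cref{lem:Sylvester-substitution-identity} at points with $f_k(\balpha)\neq 0$, and then find such a point with a Klivans--Spielman hitting set. The only difference is minor: the paper asks for a point keeping every nonzero minor (and $f_d$) nonzero, whereas you hit a single nonzero maximal minor times $f_k$ and pick the point maximizing $\operatorname{rank}\Syl_Y(f,\partial_Y f)(\balpha)$ over the product hitting set. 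This is slightly more economical and makes explicit a selection step the paper leaves implicit.
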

\begin{proof}

    The matrix $\Syl_Y(f,\partial_Yf)$ has size at most $(2d-1)\times(2d-1)$. Hence it has at most $2^{O(d)}$ minors, and each minor expands to a polynomial with sparsity at most $(2d)!s^{O(d)}$. The algorithm will find $\balpha\in\F^n$ that keeps every nonzero minor of $\Syl_Y(f,\partial_Y f)$ nonzero after substituting $\X=\balpha$, and moreover keeps $f_d(\balpha)$ nonzero. By \cref{thm:sparse_pit}, such a point can be found in $\poly(d!s^d,n)$ time. 
    Let $\bar f(\X,Y)=f(\X,Y/f_d)$. By \cref{lem:Sylvester-substitution-identity}, the choice of $\balpha$ preserves the rank of $\Syl_Y(\bar f,\partial_Y\bar f)$. Since $\hat f=f_d^{d-1}\bar f$ and $f_d$ is independent of $Y$, the Sylvester matrices $\Syl_Y(\hat f,\partial_Y\hat f)$ and $\Syl_Y(\bar f,\partial_Y\bar f)$ differ only by multiplication by powers of $f_d$. Thus, over $\F(\X)$, they have the same rank, and after substituting $\X=\balpha$, where $f_d(\balpha)\neq0$, rank preservation for $\bar f$ implies rank preservation for $\hat f$. In particular, the rank of $\Syl_Y(\hat{f},\partial_Y\hat{f})$ over $\F(\X)$ is the same as the rank of $\Syl_Y(\hat{f},\partial_Y\hat{f})(\balpha)$.

    We now show that $\hat{f}$ is Hensel-ready at $\X=\balpha$.
    Let us assume the irreducible factorization of $\hat{f}$ is $\hat{f}=h_1(\X,Y)^{e_1}\dots h_k(\X,Y)^{e_k}$ and $\deg_Y(h_i)=r_i$. We choose each $h_i$ to be monic in $Y$. Then from \cref{cor:resultant-prop} we get that 
    $$\dim(\ker (\Syl_Y(\hat{f},\partial_Y\hat{f})))=\sum_{i=1}^k(e_i-1)r_i=:R.$$ 
    Since $\balpha$ preserves the rank of the Sylvester matrix, we have $\dim(\ker(\Syl_Y(\hat{f}({\balpha},Y),\partial_Y\hat{f}({\balpha},Y))))=R$. Assume the irreducible factorization of $\hat f({\balpha},Y)$ is $p_1(Y)^{l_1}\dots p_r(Y)^{l_r}$ and $\deg(p_i)=b_i$. Then again from \cref{cor:resultant-prop} we have $R=\sum_{j=1}^r(l_j-1)b_j$.

    We will now compare the above expression for the kernel dimension $R$ with the way the irreducible factors $p_j(Y)$ distribute among the substituted factors $h_i(\balpha,Y)$. This will show that no $p_j(Y)$ appears in two different substituted factors $h_i(\balpha,Y)$, and that it appears with multiplicity at most one inside whichever $h_i(\balpha,Y)$ contains it. In other words, each $h_i(\balpha,Y)$ is square-free, and for $i\neq j$, the polynomials $h_i(\balpha,Y)$ and $h_j(\balpha,Y)$ are coprime.

    Let the irreducible factorization of each $h_i(\balpha,Y)$ be $$h_i(\balpha,Y)=p_1(Y)^{l_{i,1}}\dots p_r(Y)^{l_{i,r}}.$$ Since the $h_i$ are monic in $Y$, substituting $\X=\balpha$ preserves their $Y$-degrees. Clearly $\displaystyle\sum_{i=1}^k e_il_{i,j}=l_j$ and $r_i=\sum_{j=1}^rl_{i,j}b_j$. So \begin{align*}
        R&=\sum_{j=1}^r(l_j-1)b_j\\
        &=\sum_{j=1}^r((\sum_{i=1}^ke_il_{i,j})-1)b_j\\
        &\geq \sum_{j=1}^r(\sum_{i=1}^k(e_i-1)l_{i,j})b_j\tag{as $\sum_il_{i,j}\geq1$}\\
        &=\sum_{i=1}^k(e_i-1)\sum_{j=1}^rl_{i,j}b_j\\
        &=\sum_{i=1}^k(e_i-1)r_i=R
    \end{align*}
    Hence the inequality must be an equality, and in particular $\sum_il_{i,j}=1$ for every $j$. This means that for all $j$, there is only one $i$ such that $l_{i,j}=1$, and all other $l_{i',j}=0$. In other words, $p_j(Y)$ divides only one $h_i(\balpha,Y)$, and it does so with multiplicity one. Thus the polynomials $h_i(\balpha,Y)$ are pairwise coprime, and each $h_i(\balpha,Y)$ is square-free. By the standard resultant and discriminant criteria from \cref{ResultantFact}, this implies that for all $i\neq j$, $\Res_Y(h_i,h_j)(\balpha)\neq0$, and for every $i$, $\Disc_Y(h_i)(\balpha)\neq0$. These are precisely the conditions for $\hat f$ to be Hensel-ready at $\X=\balpha$.
\end{proof}

To apply Hensel's lemma (\cref{Hensel lemma}), we need the polynomial to be Hensel-ready at $\X=\mathbf{0}$. We therefore shift $\X\mapsto \X+\balpha$ in $\hf$. This shift may destroy sparsity, but sparsity is only needed to find a Hensel-ready point. Once the shift has been found, the power-series factorization no longer uses sparsity. After computing the potential factors of the shifted polynomial, we will shift back by $\X\mapsto \X-\balpha$ to recover the corresponding factors of the original polynomial.

\paragraph{Lifting factors from power-series roots.}

Let $\hat{f}_{\balpha}(\X,Y)=\hat{f}(\X+\balpha,Y)$. Then $\hat{f}_{\balpha}$ is monic in $Y$, has $\deg_Y(\hat{f}_{\balpha})=d$, has total $\X$-degree at most $D\leq nd^2$, and is Hensel-ready at $\X=\mathbf{0}$. Moreover, we can construct a constant-depth circuit for $\hat{f}_{\balpha}$, since $f$ is sparse and the operations used to obtain $\hat{f}_{\balpha}$ from $f$ can be implemented in constant depth.

Let $\cB$ be the multiset of roots of $\hat{f}_{\balpha}(0,Y)$. By \cref{cor: potential factor}, every factor of $\hat{f}_{\balpha}(\X,Y)$ arises by lifting a factor of $\hat{f}_{\balpha}(0,Y)$. More concretely, if $S\subseteq\cB$ is a sub-multiset and $q(Y)=\prod_{\beta\in S}(Y-\beta)$ is a factor of $\hat{f}_{\balpha}(0,Y)$, then $\prod_{\beta\in S}(Y-\Phi_\beta)$, truncated up to $\X$-degree $D$, is a potential factor of $\hat{f}_{\balpha}$; here $\Phi_\beta$ is the unique power-series root of $\hat{f}_{\balpha}(\X,Y)$ with constant term $\beta$ given by \cref{Hensel lemma}. This potential factor is a true factor precisely when the corresponding product of power series, $\prod_{\beta\in S}(Y-\Phi_\beta)$, is in fact a polynomial. We now show how to construct small constant-depth circuits for all such potential factors.

\cite{BKRRSS25a} showed how to compute the homogeneous components of $\Phi_\beta$ in constant depth. Following their approach, we introduce a fresh variable $T$ and replace each $X_i$ by $TX_i$. We then view the resulting polynomial as a bivariate polynomial in $T$ and $Y$, with coefficients in $\F[\X]$. We refer to \cref{Subsec: Power-series root} for a more complete discussion. The key point is that the Laurent series $R_e(Z)$ from \cref{Closed form root}, viewed as a series in $Z$ whose coefficients depend on the variables $\X$ and $T$, can be represented as a ratio of two constant-depth circuits. Substituting $Z=\beta$ gives the truncated power-series root $\Phi_\beta(T)$, and then setting $T=1$ gives $\Phi_\beta(\X)$ truncated up to degree $D$.

Define $F(T,Y)=\hat{f}_{\balpha}(TX_1,TX_2,\dots,TX_n,Y) \in \F[\X][T,Y]$. Since $F(0,Y)=\hat{f}_{\balpha}(0,Y)$, the multiset of roots of $F(0,Y)$ is exactly $\cB$. For $\beta\in\cB$, let $\Phi_{\beta}(T)\in \F(\beta)[\X][[T]]$ denote the power-series root of $F(T,Y)$ corresponding to $\beta$. Then the set $\{\Phi_{\beta}(1):\beta\in\cB\}$ is exactly the set of power-series roots of $\hat f_{\balpha}(\X,Y)$.

Let $q(Y)$ be a factor of $F(0,Y)$. As discussed above, $q(Y)$ potentially lifts to a factor of $F(T,Y)$. We formalize this correspondence in the following definition.

\begin{definition}
    Let $q(Y)$ be a factor of $F(0,Y)$ and $S$ be the multiset of roots of $q(Y)$. We define the \textit{factor-lift of $q(Y)$} to be $\prod_{\beta\in S}(Y-\Phi_{\beta}(T))\Trunc T^{D+1}$.
\end{definition}
The next lemma records that true factors are recovered by this lifting operation.

\begin{lemma}\label{PotentialFactorCorrect}
    If $g(\X,Y)$ is a factor of $\hat{f}_{\balpha}(\X,Y)$, then $g(\X,Y)$ is obtained from the factor-lift of $g(0,Y)$ by setting $T=1$.
\end{lemma}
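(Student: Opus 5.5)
We reduce the statement to \cref{cor: potential factor} applied to the homogenized polynomial $F(T,Y)=\hat f_{\balpha}(T\X,Y)$, and then use a degree bound to justify the truncation. We may assume $g$ is monic in $Y$: $\hat f_{\balpha}$ is monic, so every factor is, up to a nonzero scalar, a monic polynomial in $Y$, and the list is only required to contain factors up to this normalization.

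First, set $G(T,Y)=g(T\X,Y)$. The factorization $\hat f_{\balpha}=g\cdot h$ in $\F[\X,Y]$ gives $F=G\cdot h(T\X,Y)$ after substituting $\X\mapsto T\X$. Thus $G$ is a monic factor of $F$ in $\F[\X][T,Y]$, and $G(0,Y)=g(\mathbf 0,Y)$.

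Second, we need $F$ to be Hensel-ready at $T=0$. The cleanest route is to apply Hensel's lemma directly to $\hat f_{\balpha}(\X,Y)$, which is Hensel-ready at $\X=\mathbf 0$ by \cref{Algo:to hit disc} after the shift. By \cref{cor: potential factor} there is a sub-multiset $S\subseteq\cB$ with $g=\prod_{\beta\in S}(Y-\Phi_\beta(\X))$ in $\overline{\F}[[\X]][Y]$.

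Third, we identify $S$ with the roots of $g(\mathbf 0,Y)$. Setting $\X=\mathbf 0$ and using $\Phi_\beta(\mathbf 0)=\beta$ gives $g(\mathbf 0,Y)=\prod_{\beta\in S}(Y-\beta)$. Hence $S$ is exactly the multiset of roots of $g(0,Y)$, with multiplicities. This matters because repeated roots $\beta$ correspond to equal $\Phi_\beta$, so the multiset determines the product unambiguously.

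Fourth, we pass to $T$. Substituting $\X\mapsto T\X$ sends each power series $\Phi_\beta(\X)=\sum_r\Phi_\beta^{(r)}(\X)$ to $\Phi_\beta(T)=\sum_r\Phi_\beta^{(r)}(\X)T^r$. This gives
\[
G(T,Y)=\prod_{\beta\in S}(Y-\Phi_\beta(T))
\]
as an identity in $\overline{\F}[\X][[T]][Y]$.

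Fifth, we justify the truncation. The $T$-degree of $G$ equals the total $\X$-degree of $g$. Since $g$ divides $\hat f_{\balpha}$, whose total $\X$-degree is at most $D$, this is at most $D$. Therefore truncating the product at $T^{D+1}$ changes nothing, and the factor-lift of $g(0,Y)$ equals $G(T,Y)$. Setting $T=1$ returns $g(\X,Y)$.

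The main obstacle is the degree bound $\deg_{\X}g\le\deg_{\X}\hat f_{\balpha}\le D$. This holds because the total degree in $\X$ is additive on products of polynomials in $\F[\X][Y]$: the top homogeneous $\X$-parts of $g$ and $h$ multiply to a nonzero polynomial, since $\F[\X,Y]$ is a domain. We also need the ordering of Hensel-lifted roots to be consistent between the $\X$ and $T$ pictures. This is guaranteed by uniqueness of the lift with prescribed constant term $\beta$.
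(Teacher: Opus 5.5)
Your proof is correct and is essentially the paper's argument. You apply \cref{cor: potential factor} to write $g(T\X,Y)=\prod_{\beta\in S}(Y-\Phi_\beta(T))$, with $S$ the multiset of roots of $g(\mathbf{0},Y)$; then you use that $g(T\X,Y)$ has $T$-degree at most $D$, so truncating at $T^{D+1}$ changes nothing, and set $T=1$. Your extra remarks only make explicit what the paper's short proof leaves implicit: that $g$ must be taken monic, the identification of $S$ via $\Phi_\beta(\mathbf{0})=\beta$, and additivity of total $\X$-degree giving $\deg_{\X} g\le D$.
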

\begin{proof}
    By \cref{cor: potential factor}, if $S$ is the multiset of roots of $g(0,Y)$, then $g(T\X,Y)=\prod_{\beta\in S}(Y-\Phi_\beta(T))$ as a power series in $T$. Since $g(\X,Y)$ is a polynomial of total $\X$-degree at most $D$, the polynomial $g(T\X,Y)$ has $T$-degree at most $D$. Therefore \begin{align*}
        g(T\X,Y)&=\prod_{\beta\in S}(Y-\Phi_\beta(T))\\
        &=\prod_{\beta\in S}(Y-\Phi_\beta(T))\Trunc T^{D+1}
    \end{align*}
    Setting $T=1$ proves the lemma.
\end{proof}

For any $\beta \in S$, let $e(\beta)$ be the multiplicity of $\beta$ in $F(0,Y)$.
Using \cref{Closed form root}, we can rewrite the factor-lift as \begin{align*}
    \prod_{\beta\in S}(Y-\Phi_{\beta}(T))\Trunc T^{D+1}&=\prod_{\beta\in S}(Y-(\Phi_{\beta}(T)\Trunc T^{D+1}))\Trunc T^{D+1}\\
    &=\prod_{\beta\in S}(Y-(R_{e(\beta)}(\beta)\Trunc T^{D+1}))\Trunc T^{D+1}\\
    &=\prod_{\beta\in S}(Y-R_{e(\beta)}(\beta))\Trunc T^{D+1}
\end{align*}
The rational function $R_e(Z)$ has a representation as a ratio of constant-depth circuits in the variables $\X,T,Z$, where $Z$ is the main variable and $\X,T$ are retained as formal variables. Moreover, its denominator depends only on $Z$. Hence, after substituting $Z=\beta$, the denominator becomes a constant in the field extension containing $\beta$. This gives a constant-depth representation of $R_e(\beta)$ as a series in $\X$ and $T$. Since multiplication and truncation can also be carried out in constant depth, every factor-lift of a factor of $F(0,Y)$ has a constant-depth representation over a suitable field extension. The remaining issue is that these circuits may use constants such as $\beta$ outside the base field $\F$. To remove these extension-field constants, we use ideas from \cite{BKRRSS25a}, and in particular \cref{lem:esym-of-g/h-on-roots-of-f}, to obtain constant-depth representations of the same factor-lifts using only constants from $\F$.

\paragraph{Ensuring output circuits are over the base field.}

We now formalize the base-field conversion step. The input to this step is a factor $q(Y)$ of $F(0,Y)$, together with its factorization over $\F$. Although the factor-lift is naturally described using the roots of $q(Y)$ over an extension field, the following lemma shows that it can be computed by a constant-depth circuit over the original field $\F$.

\begin{lemma}\label{Algo: PotentialFactorComputing}
 Let $F(T,Y)$ be as above, with $\deg_Y(F)=d$. Let $q(Y) \in \F[Y]$ be a factor of $F(0,Y)$, and suppose its irreducible factorization over $\F$ is $q(Y)=q_1(Y)^{l_1}\dots q_r(Y)^{l_r}$. Then there is an algorithm that takes as input a size $s'$ constant-depth circuit for $F(T,Y)$ and the irreducible factorization of $q(Y)$, and outputs a constant-depth circuit over $\F$ for the factor-lift of $q(Y)$ in $\poly(n,s',d)$ time.
\end{lemma}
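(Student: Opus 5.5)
We will express the factor-lift through elementary symmetric polynomials of the truncated roots and then move each symmetric expression to the base field using \cref{lem:esym-of-g/h-on-roots-of-f}. Let $S$ be the multiset of roots of $q$, and let $S_j$ be the roots of $q_j$, each taken with multiplicity $l_j$. Because $q$ divides $F(0,Y)$ and $F$ is Hensel-ready, the multiplicity $e(\beta)$ of a root $\beta$ of $q_j$ inside $F(0,Y)$ depends only on $q_j$. Indeed, conjugate roots have equal multiplicities in $F(0,Y)\in\F[Y]$, and by the Hensel-ready condition the irreducible factor $q_j$ lies in exactly one $h_i(0,Y)$ and appears there squarefree. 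Hence $e(\beta)=e_j$ is constant on $S_j$. As computed above, the factor-lift equals $\prod_{\beta\in S}(Y-R_{e(\beta)}(\beta))\Trunc T^{D+1}$, so it splits as
\[
\prod_{j=1}^r\Bigl(\prod_{\beta: q_j(\beta)=0}(Y-R_{e_j}(\beta))\Bigr)^{l_j}\Trunc T^{D+1}.
\]
The exponent $e_j$ can be computed from the factorization of $F(0,Y)$: we factor this univariate polynomial of degree $d$ and read off the multiplicity of $q_j$.

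For each $j$, \cref{Closed form root} gives $R_{e_j}(Z)=A_j(\X,T,Z)/B_j(Z)$ with constant-depth circuits of size $\poly(n,s',d,D)$ and $B_j(\beta)\neq 0$ at every root $\beta$ of $q_j$. Then
\[
\prod_{\beta}(Y-R_{e_j}(\beta))=\sum_{k=0}^{\deg q_j}(-1)^k\,\Esym_k\Bigl(\Bigl\{\tfrac{A_j(\X,T,\beta)}{B_j(\beta)}\Bigr\}\Bigr)\,Y^{\deg q_j-k}.
\]
We apply \cref{lem:esym-of-g/h-on-roots-of-f} with $q:=q_j$ (made monic), $g:=A_j$ and $h:=B_j$, treating $(\X,T)$ as the variables. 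The lemma asks for $g,h$ monic in the root variable. We will arrange this by adding $Z^{N}$ for a large $N$ and reducing modulo $q_j$ beforehand: any polynomial in $Z$ can be replaced by its remainder modulo $q_j$ without changing its values at the roots, and the monic condition is then met after adding $q_j$-multiples, possibly with a slight adjustment. This step is the main technical nuisance. The nonvanishing hypothesis on $h$ holds because $B_j(\beta)\neq0$. The lemma then gives constant-depth circuits $A'_{j,k}/B'_{j,k}$ over $\F$ for each coefficient.

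To assemble the factor-lift, we multiply the $Y$-polynomials raised to the powers $l_j$ over all $j$, apply truncation in $T$ via \cref{lem:circuit-truncation}, and keep the division gates. All degrees involved are $\poly(n,d)$ since $D\le nd^2$, so every step preserves constant depth and polynomial size.

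The main obstacle is the division gates. The product computes a polynomial in $\X,T,Y$ before truncation, yet it is written as a ratio whose denominators $B'_{j,k}$ are in fact field constants. They are symmetric functions of the values $B_j(\beta)$ and therefore lie in $\F$, and we should check that the construction in \cref{lem:esym-of-g/h-on-roots-of-f} keeps them $\X,T$-free, which holds because $h=B_j$ does not involve $\X$ or $T$. Once this is confirmed, each division becomes multiplication by an element of $\F$, which can be evaluated directly. Truncation in $T$ is applied after the multiplication, via interpolation in $T$. The circuit is then division-free, constant depth, over $\F$, and constructed in time $\poly(n,s',d)$.
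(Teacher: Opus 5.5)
Your proposal is correct and follows essentially the same route as the paper: you split the factor-lift over the irreducible factors $q_j$ (on whose roots the multiplicity $e_j$ is constant), write $R_{e_j}=A_j/B_j$ via \cref{Closed form root}, express each block through $\Esym$ and move it to $\F$ with \cref{lem:esym-of-g/h-on-roots-of-f}, observe that the only denominators are the nonzero constants $\prod_{\beta}B_j(\beta)=\Res(q_j,B_j)\in\F$, and then take powers, multiply, and truncate in $T$. Your extra handling of the monicity hypothesis (adding multiples of $q_j$, which leave the values at its roots unchanged) is a valid refinement that the paper skips; and $e_j$ can be computed by repeated division of $F(0,Y)$ by $q_j$ rather than by factoring, which keeps the running time at the stated $\poly(n,s',d)$ without a $\cT(\F,d)$ term.
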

\begin{proof}

    We first express the factor-lift in terms of the power-series root formula from \cref{Closed form root}, and then explain why the resulting expression can be written over the base field.

    For each $i\in[r]$, let $d_i=\deg(q_i)$, let $S_i$ be the set of roots of $q_i$ over $\overline{\F}$, and let $e_i$ be the multiplicity of $q_i$ as a factor of $F(0,Y)$. Since $q_i$ is irreducible, every $\beta\in S_i$ has multiplicity $e_i$ as a root of $F(0,Y)$. As $q_i$ appears in $q$ with multiplicity $l_i$, the factor-lift of $q(Y)$ is
    \[
        \prod_{i=1}^r
        \left(
            \prod_{\beta\in S_i}(Y-R_{e_i}(\beta))
        \right)^{l_i}
        \Trunc T^{D+1}.
    \]

    Fix $i\in[r]$. By \cref{Closed form root}, $R_{e_i}(Z)$ can be written as a rational function $A_i(\X,T,Z)/B_i(Z)$, where $A_i$ and $B_i$ have $\poly(n,s',d)$-size, $\cO(1)$-depth circuits constructible in $\poly(n,s',d)$ time, and $B_i(Z)$ depends only on the main variable $Z$. Moreover, \cref{Closed form root} guarantees that $B_i(\beta)\neq0$ for every $\beta\in S_i$, since each such $\beta$ has multiplicity exactly $e_i$ in $F(0,Y)$. The product $\prod_{\beta\in S_i}(Y-R_{e_i}(\beta))$ is symmetric in the roots in $S_i$. Expanding it gives
    \[
        \prod_{\beta\in S_i}(Y-R_{e_i}(\beta))
        =
        Y^{d_i}+\sum_{a=1}^{d_i}(-1)^{a}
        \Esym_a\Big\{R_{e_i}(\beta) \mid \beta\in S_i\Big\}Y^{d_i-a}.
    \]

    Applying \cref{lem:esym-of-g/h-on-roots-of-f} to the irreducible polynomial $q_i$ and the rational function $A_i/B_i$, we can construct, for every $a\in[d_i]$, a constant-depth circuit (with division gate at the top) over $\F$ for the elementary symmetric expression $\Esym_a\{R_{e_i}(\beta)\mid \beta\in S_i\}$. To make the denominator clearing explicit, let $P_i=\prod_{\beta\in S_i}B_i(\beta)$. Multiplying $\Esym_a\{R_{e_i}(\beta)\mid \beta\in S_i\}$ by $P_i$ clears all denominators in each summand, and the resulting expression is symmetric in the roots of the monic polynomial $q_i$. Hence, by the fundamental theorem of symmetric polynomials, it lies in $\F[\X,T]$. Also $P_i=\Res(q_i,B_i)$ (by the Poisson formula for resultants) is a nonzero element of $\F$. We can therefore absorb $P_i^{-1}$ into the circuit constants. It follows that the coefficients of $\prod_{\beta\in S_i}(Y-R_{e_i}(\beta))$ have constant-depth circuits over $\F$, constructible in $\poly(n,s',d)$ time.

    We apply this construction for every irreducible factor $q_i$, raise the corresponding lifted factor to the power $l_i$, multiply these expressions, and truncate at $T^{D+1}$. These operations preserve constant depth and can be carried out in $\poly(n,s',d)$ time (using product gates for powers and products, and \cref{lem:circuit-truncation} for the final truncation), giving the desired circuit for the factor-lift of $q(Y)$.
\end{proof}
We remark that another way to prove the above lemma is to make use of the fact that the factor-lift of $q(Y)$ is a multi-symmetric polynomial in its roots. Although the factor-lift of $q(Y)$ need not be symmetric in all roots of $q(Y)$ at once, it is symmetric separately in the roots of each irreducible factor $q_i(Y)$. Thus Theorem 3.6 of~\cite{BKRRSS25b}, which generalizes \cref{lem:esym-of-g/h-on-roots-of-f} to multi-symmetric polynomials, gives an alternative route to the same constant-depth construction over $\F$.

Now we are ready to prove Lemma~\ref{lem:monic-factorization}.
\begin{proof}[Proof of \cref{lem:monic-factorization}]
    Given $f(\X,Y)$ in monomial representation, we first compute the normalization $\hat f(\X,Y)$. This has at most $s^d$ monomials and can be computed efficiently. We then apply \cref{Algo:to hit disc} to find $\balpha\in\F^n$ such that $\hat f_{\balpha}(\X,Y)=\hat f(\X+\balpha,Y)$ is Hensel-ready at $\X=\mathbf{0}$. As discussed above, we can construct a constant-depth circuit for $\hat f_{\balpha}$, and hence also for $F(T,Y)=\hat f_{\balpha}(T\X,Y)$.

    We factor the univariate polynomial $F(0,Y)$ using the deterministic univariate factorization algorithm summarized in \cref{UnivariateFactorization}. Starting with an empty list $\cL$, for every non-constant factor $q(Y)$ of $F(0,Y)$, we compute the factor-lift $\hat q(\X,T,Y)$ using \cref{Algo: PotentialFactorComputing}, and add $\hat q(\X-\balpha,1,Y)$ to $\cL$. Since $F(0,Y)$ has degree at most $d$, it has at most $2^d-1$ non-constant factors. Thus $|\cL|\leq 2^d-1$, and the whole procedure runs in time $\poly(n,d!,s^d,\cT(\F,d))$.

    Finally, let $g(\X,Y)$ be any non-constant factor of $\hat f(\X,Y)$. Then $g_{\balpha}(\X,Y)=g(\X+\balpha,Y)$ is a factor of $\hat f_{\balpha}(\X,Y)$. By \cref{PotentialFactorCorrect}, $g_{\balpha}$ is obtained by setting $T=1$ in the factor-lift of the univariate factor $g_{\balpha}(0,Y)$ of $F(0,Y)$. Therefore, when the algorithm processes $g_{\balpha}(0,Y)$, it adds $g_{\balpha}(\X-\balpha,Y)=g(\X,Y)$ to the list. This proves correctness.
\end{proof}

We now show how factoring reverse-normalized polynomials can be reduced to factoring normalized polynomials, and thus show how to recover an analog of Lemma~\ref{lem:monic-factorization} for the case of reverse-normalized polynomials.
\begin{corollary}
    \label{cor:reversed-monic-factorization}
    Let $\F$ be a field of characteristic $0$ or greater than $d$. Let $f(\X, Y) \in \F[\X, Y]$ be an $n$-variate polynomial with sparsity at most $s$ and individual degree at most $d$. Suppose $\tilde f(\X, Y) \in \F[\X, Y]$ is the reverse-normalized form of $f$ with respect to $Y$. Then there is an algorithm running in time $\poly(s^d, d!, n, \mathcal{T}(\F,d))$ that outputs a list $\cL$ such that \begin{enumerate}
        \item $|\cL|\leq 2^d-1$
        \item every element of $\cL$ is a $\poly(n,s,d)$-size, $\cO(1)$-depth circuit over $\F$
        \item every non-constant factor $\tg(\X,Y)$ of $\tf(\X,Y)$ is computed by some circuit in $\cL$
    \end{enumerate}
    The list $\cL$ may contain spurious candidates. Moreover, the algorithm can compute the $Y$-degree of the polynomial computed by every circuit in $\cL$.
\end{corollary}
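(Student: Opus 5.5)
The plan is to reduce \cref{cor:reversed-monic-factorization} to \cref{lem:monic-factorization} via the reversal operation $\rev_Y$. Let $k=\deg_Y(f)\le d$; reverse normalization presupposes $f_0\neq 0$. The polynomial $\rev_Y(f)=f_0Y^k+f_1Y^{k-1}+\cdots+f_k$ has exactly the monomials of $f$ with the $Y$-exponents reflected. Hence it is still $s$-sparse with individual degree at most $d$, its $Y$-degree is exactly $k$, and its leading coefficient is $f_0$.

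\textbf{Key identity.} We aim to show that
\[
    \rev_Y(\tilde f)=\widehat{\rev_Y(f)}.
\]
Since $\tilde f$ is reversed-monic, $\deg_Y\tilde f=k$ and $\rev_Y(\tilde f)=Y^k f(\X,f_0/Y)/f_0$. Putting $W=Y/f_0$ turns this into $f_0^{k-1}W^kf(\X,1/W)=f_0^{k-1}\rev_Y(f)(\X,Y/f_0)$, which is the normalization of $\rev_Y(f)$ by definition.

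\textbf{Algorithm.} We compute $\rev_Y(f)$ in monomial representation and run the algorithm of \cref{lem:monic-factorization} on it. This produces a list $\cL'$ of at most $2^d-1$ constant-depth circuits of size $\poly(n,s,d)$, together with the $Y$-degree $r$ of each. For each $C\in\cL'$ of $Y$-degree $r$, we use \cref{lem:circuit-coefficient-interpolation} to extract constant-depth circuits $C_0,\dots,C_r$ for its $Y$-coefficients. We then output $\sum_{i=0}^r C_i(\X)Y^{r-i}=\rev_Y(C)$, which has constant depth, size $\poly(n,s,d)$, and $Y$-degree at most $r$. Its exact $Y$-degree is computable by testing whether $C_0$ vanishes; for genuine factors this degree is $r$. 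The list size and the running time $\poly(s^d,d!,n,\cT(\F,d))$ are inherited directly.

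\textbf{Correctness.} Let $\tilde g$ be a non-constant factor of $\tilde f$.
\begin{enumerate}
\item $\tilde g$ has positive $Y$-degree. A factor independent of $Y$ divides every coefficient of $\tilde f$, in particular the constant coefficient $1$, so it would be a unit.
\item $\tilde g$ has nonzero constant term. This follows from $\tilde g\,\tilde h=\tilde f$ and the constant term of $\tilde f$ being $1$; after rescaling by a field constant we may take it to be $1$.
\item Hence $\rev_Y(\tilde g)$ is monic of the same positive $Y$-degree. By \cref{lem: reversefactor} it divides $\rev_Y(\tilde f)=\widehat{\rev_Y(f)}$, so it is a non-constant factor and lies in $\cL'$.
\item Reversal is an involution on polynomials with nonzero constant term, so the reversed circuit computes $\tilde g$ itself (up to the scalar, which can be handled by normalizing the constant term to $1$).
\end{enumerate}

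\textbf{Main obstacle.} The delicate point is bookkeeping of degrees in the reversal. The identity requires $\deg_Y\tilde f=\deg_Y f$, which holds because the leading coefficient of $\tilde f$ is $f_kf_0^{k-1}\neq0$. The final reversal step must use the computed $Y$-degree $r$ of each candidate rather than $d$. Everything else is a direct invocation of \cref{lem:monic-factorization}.
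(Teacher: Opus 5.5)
Your proposal is correct and follows the paper's proof: take $\rev_Y(f)$, use $\widehat{\rev_Y(f)}=\rev_Y(\tilde f)$, run \cref{lem:monic-factorization}, and reverse each candidate by $Y$-coefficient extraction, with \cref{lem: reversefactor} giving correctness. Your checks that factors of $\tilde f$ have a unit constant term, and that each candidate must be reversed with its own degree $r$, spell out what the paper leaves implicit. The one loose spot is ``testing whether $C_0$ vanishes'', which is PIT for a constant-depth circuit, but you can avoid it: the candidate specializes to $q(Y)$ at $\X=\balpha$, so $C_0(\balpha)=q(0)$, which is nonzero for every $q$ once $\balpha$ is also chosen with $f_k(\balpha)\neq 0$, making $r$ the exact $Y$-degree of every reversed candidate.
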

\begin{proof}
    Let $g=\rev_Y(f)$. The normalization of $g$ is exactly $\rev_Y(\tilde f)$. Since $g=\rev_Y(f)$ is still $s$-sparse and has individual degree at most $d$, we can apply \cref{lem:monic-factorization} to $g$ and obtain a list of circuits containing all non-constant factors of $\rev_Y(\tilde f)$. By \cref{lem: reversefactor}, reversing these candidate circuits gives a list containing all non-constant factors of $\tilde f$. Reversal preserves the list size, constant depth, and the claimed size bound (using coefficient extraction in $Y$, as in \cref{lem:circuit-coefficient-interpolation}), and the $Y$-degree of each reversed candidate is computable from the corresponding candidate for $\rev_Y(\tilde f)$.
\end{proof}

\section{Recovering \emph{all} Factors of Sparse Polynomials of Bounded Individual Degree}
\label{sec:recovering-all-factors}

In this section, we prove \cref{thm:main1}. In particular, we obtain a deterministic polynomial-time algorithm that recovers \emph{all} factors $g$ (not divisible by a monomial) of an $n$-variate $s$-sparse polynomial $f \in \F[\X_n]$ of bounded individual degree $d$. The algorithm outputs an $s^d$-size list $\cL$ of constant-depth circuits such that each factor $g$ is computed by some circuit in $\cL$. $\cL$ might contain spurious circuits that do not compute any factor of $f$. Unfortunately, we are unable to prune out the spurious elements, since we do not know of efficient deterministic algorithms to check if a constant-depth circuit divides $f$ (see \cref{sec:future-directions-and-open-questions}). Note that the restriction of outputting factors not divisible by a monomial is unavoidable (see \cref{rem:factor-not-divisible-by-monomial}). Since we are given white-box access to the polynomial $f$, we compensate for this restriction by outputting the maximal degree monomial factor dividing $f$. %
Throughout this section, when a list consists of circuits, we often identify a circuit in the list with the polynomial it computes.

\begin{theorem}[Main Theorem 1]
\label{thm:compute-all-factors-of-bounded-degree}
Let $f(\X_n)\in \F[\X_n]$ be an $s$-sparse polynomial with individual degree at most $d$, where $\Char(\F)$ is $0$ or greater than $d$. Then, there is an algorithm that runs in time $\poly(s^d, d!, n)$ and outputs an $s^d$-size list of circuits $\cL$ such that every factor of $f$ that is not divisible by a monomial is computed by some circuit in $\cL$. The algorithm also outputs the monomial of maximal degree dividing $f$. Moreover, each circuit in $\cL$ is over $\F$ and has size $\poly(s^d, n)$ and depth $\cO(1)$.
\end{theorem}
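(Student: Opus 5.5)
We will prove the theorem by a recursive divide-and-conquer procedure on the sparsity, with the maximal monomial divisor handled separately. First, as in \cref{rem:factor-not-divisible-by-monomial}, we read off $M=\prod_i X_i^{\min_j \balpha_{j,i}}$ from the monomial representation, output it, and replace $f$ by $f/M$, which is still $s$-sparse of individual degree at most $d$. Its factors are exactly the factors of $f$ that are not divisible by a monomial.

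The core is a recursive procedure $\mathsf{Rec}(f)$ that outputs a list $\cL(f)$ of circuits with division gates. For every factor $g$ of $f$, the list should contain some $h$ with $g\prec_d h$. If $f$ depends on no variable, we return $\{1\}$. Otherwise we pick a variable $Y$ that $f$ depends on, write $f=\sum_{i=0}^k f_iY^i$, and choose $f_k$ or $f_0$, whichever has sparsity at most $s/2$. Since they occupy disjoint monomials of $f$, one of them does.

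\textbf{Case $f_k$.} We apply \cref{lem:monic-factorization} to obtain a list $\hat\cL$ of at most $2^d-1$ candidates for non-constant factors of $\hat f$, and recurse on $f_k$. Let $g$ be a factor of $f$ of $Y$-degree $r$.
\begin{itemize}
\item If $r=0$, then $g\mid f_k$, so $\cL(f_k)$ already contains a monomial multiple of $g$.
\item If $r\ge1$, then by \cref{lem:factor-of-normalized-poly} $\hat g\in\hat\cL$ and $g_r\mid f_k$. Moreover $g(\X,Y)=\hat g(\X,Yf_k)\,g_r/f_k^r$. If $h\in\cL(f_k)$ satisfies $g_r\prec_d h$, then $\hat g(\X,Yf_k)\,h/f_k^r$ is a monomial multiple of $g$.
\end{itemize}
So we set $\cL(f)=\cL(f_k)\cup\{\hat q(\X,Yf_k)\,h/f_k^{\deg_Y\hat q}:\hat q\in\hat\cL,\ h\in\cL(f_k)\}$. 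The $Y$-degree of $\hat q$ is known by \cref{lem:monic-factorization}.

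\textbf{Case $f_0$.} We use \cref{cor:reversed-monic-factorization} and \cref{lem:factor-of-rev-normalized-poly}, with $g=\tilde g(\X,Y/f_0)\,g_0$, in the same way.

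One subtlety in both cases is the monomial slack. Each level could multiply in a monomial of individual degree up to $d$, so bounding the accumulated degree needs care. I would instead keep the invariant that $h$ is a monomial multiple of $g$ with $h\mid f$-compatible degrees, or simply strip monomials at each level using \cref{cor:factor_sparse_monomial_division}. The cleanest route is probably to strip at the end only, noting that the $X_i$-degree stays bounded by $\poly(d\log s)$ as computed by the circuits.

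The list size satisfies $L(s)\le L(s/2)+(2^d-1)L(s/2)=2^dL(s/2)$ with $L(1)=1$, which gives $L(s)\le s^d$. The recursion depth is at most $\log s$, so the total work is $\poly(s^d,d!,n)$: the recursion tree has at most $s^d$ nodes' worth of output, and each call costs $\poly(s^d,d!,n)$. The same recurrence bounds the size of the final list.

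For constant depth, each level composes circuits via \cref{lem:circuit-depth-on-compose}, which adds $\cO(1)$ to the maximum depth. Naively this gives depth $\cO(\log s)$, which is the main obstacle. To avoid it, I would unroll the recursion into a closed form. A final candidate is a product over at most $\log s$ levels of terms $\hat q_j(\X,Y_j\cdot c_j)$ divided by powers of coefficients $c_j$. Here each $c_j$ is a coefficient polynomial of the original $f$ (a sub-sum of its monomials, hence sparse), and each $\hat q_j$ is an $\cO(1)$-depth circuit. A product of $\cO(\log s)$ such factors is a single multiplication gate over $\cO(1)$-depth circuits, so the depth stays $\cO(1)$ and the size stays $\poly(s^d,n)$.

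All denominators are products of sparse polynomials of individual degree at most $d$: powers of the $c_j$ and of $f_k$ or $f_0$. So \cref{lem:division-elimination} removes the division gates while adding only $\cO(1)$ depth. Finally, for each candidate and each $i$, we compute the largest power of $X_i$ dividing it with \cref{cor:factor_sparse_monomial_division} and divide it out. This division of a circuit by a known monomial is done by interpolation or division elimination. For true factors the result is exactly $g$.

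Two points remain to check: that the degree hypotheses of these lemmas hold for spurious candidates (a candidate that fails them is simply discarded or left spurious), and that field constants stay in $\F$. The latter holds because \cref{lem:monic-factorization} already outputs circuits over $\F$.
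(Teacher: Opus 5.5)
Your architecture is the paper's: the same choice between $f_k$ and $f_0$, the same combining formulas, the recurrence $L(s)\le 2^dL(s/2)$, constant depth by merging the top product gates of the $\cO(\log s)$ factors, division elimination using sparse denominators, and final stripping via \cref{cor:factor_sparse_monomial_division}. The gap is in how monomial divisors are handled inside the recursion.

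\textbf{Where it breaks.} You strip monomial content only once, at the top, and then recurse on $f_k$ or $f_0$, which may themselves be divisible by monomials. Suppose at some call the chosen variable $Y$ divides the current polynomial $p$. Then $p_0=0$, and your rule ``take whichever of $p_k,p_0$ has sparsity at most $s/2$'' can select $p_0$.
\begin{itemize}
\item If it does, the reverse normalization $p(\X,Yp_0)/p_0$ is undefined, and \cref{lem:factor-of-rev-normalized-poly} (which needs $g_0\neq0$) cannot be applied.
\item If you instead always fall back to $p_k$, you lose the guarantee $\|p_k\|\le s/2$ on which the $s^d$ bound rests.
\item Choosing $Y$ differently does not help in general. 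Take $p=X_1X_2X_3\,q$ with $q=X_1X_2X_3+X_1X_2+X_1X_3+X_2X_3+1$. For every variable, the top coefficient has $3>5/2$ monomials and the bottom coefficient is $0$.
\end{itemize}
Also, in your unstripped scheme the designated candidate for a true factor is exactly $g$: the base case is $\{1\}$ and the invariant $g\prec_d h$ propagates with the same monomial. So the ``monomial slack'' you worry about is not where the difficulty lies.

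\textbf{How the paper handles it.} The paper removes the maximal monomial divisor at \emph{every} recursive call, which guarantees $p_0,p_k\neq0$ and hence $g_0\neq0$. The cost is that the recursive list now covers only the monomial-free part of $g_t$ (resp.\ $g_0$). These extreme coefficients of a factor $g$ can contain monomials that were stripped from $f_k$ (resp.\ $f_0$). With base case $\{1\}$, the combined candidate would then be $g$ divided by a monomial, not a polynomial multiple of $g$.

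The paper fixes this by seeding the base case with $f\cdot(X_1\cdots X_n)^d$. Along one recursion path, the removed monomials divide an iterated extreme coefficient of $g$, so their product has individual degree at most $d$. Hence every candidate for a true factor is $g$ times a monomial of individual degree at most $d$, which is what \cref{lem:division-elimination} and \cref{cor:factor_sparse_monomial_division} need.

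An alternative repair, closer to your closing remark, is to multiply each recursive list back by the stripped monomial. This accepts slack of individual degree at most $d\log s$, and you would then run the final stripping with degree parameter $\cO(d\log s)$. Your proposal states neither invariant, so as written the recursion is not well-defined on all inputs.
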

\begin{remark}
    \label{rem:dependence-on-field}
    In the above theorem, the displayed running time suppresses the dependence on the time taken to perform univariate polynomial factorization.
    More precisely, over a field $\F$, the algorithm runs in time $\poly(s^d, d!, n) \cdot \cT(\F, d)$ (see \cref{def:time-taken-to-factor-univariates}). In particular, over $\Q$, if the input polynomial has bit complexity $B$, then the algorithm runs in time $\poly(s^d, d!, n, B)$.
    If $\F$ is a finite field with characteristic $p$ (greater than $d$), the algorithm runs in time $\poly(s^d, d!, n, p, \log{\abs{\F}})$.
\end{remark}

The algorithm runs in two stages, the main recursive \cref{algo:monomial-equivalent-factors} and the post-processing \cref{algo:non-monomial-factors}. \cref{algo:monomial-equivalent-factors} outputs a set $\cL_{mon}$ of \emph{monomial equivalent} factors of $f$ as constant-depth circuits (with division gates). More precisely, for every factor $g$ of $f$, there exists a polynomial $h$ computed by some circuit in $\cL_{mon}$ such that $g \prec_d h$ (see \cref{def:monomial-equivalent-factor}). So for a factor $g$ with no monomial divisor, $h=gM$ will be computed by some circuit $C$ in $\cL_{mon}$, where $M$ is a monomial with individual degree at most $d$. \cref{algo:non-monomial-factors} strips out this monomial $M$ from circuit $C$ by iteratively using Klivans-Spielman PIT on factors of sparse polynomials (see \cref{cor:factor_sparse_monomial_division}), thus obtaining the promised list $\cL$ containing factors\footnote{On members of the list that do not correspond to true factors, we do not guarantee what might come out of this process, but it does not affect our run time or the guarantee on true factors.}. Additionally, \cref{algo:non-monomial-factors} will eliminate all division gates.

We formally argue the correctness and the runtime of \cref{algo:monomial-equivalent-factors} below. We refer the reader to \cref{sec:proof-overview} for a high-level overview of the key ideas in the algorithm.

\begin{theorem}[Theorem for the Recursive Algorithm]
    \label{thm:recursive-algorithm}
    Let $f(\X_n) \in \F[\X_n]$ be an $s$-sparse polynomial with individual degree at most $d$, where $\Char(\F)$ is $0$ or greater than $d$. There is an algorithm that runs in time $\poly(s^d, d!, n)$ and outputs an $s^{d}$-size list $\cL_{mon}$ of circuits with division gates. For every factor $g$ of $f$, there exists a polynomial $h$ computed by some circuit in $\cL_{mon}$ such that $g \prec_d h$.

    Moreover, each circuit in $\cL_{mon}$ is over $\F$ and has size $\poly(s^d, n)$ and depth $\cO(1)$.
\end{theorem}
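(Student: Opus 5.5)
We prove \cref{thm:recursive-algorithm} by a recursive procedure on the sparsity $s$, and bound the list size through a recurrence of the form $L(s)\le 2^d L(s/2)$ together with a constant-size base case. The recursion takes an $s$-sparse $f(\X_n)$ of individual degree $\le d$ and selects a variable $Y=X_i$ on which $f$ actually depends. If no such variable exists, $f$ is a monomial times a constant, and we output the single circuit computing $f$ (together with $1$). Every factor $g$ of such an $f$ is a monomial of individual degree $\le d$, and $g\prec_d f$ holds via the cofactor monomial. Otherwise we write $f=\sum_{j=0}^{k} f_jY^j$, where $f_k\neq 0$ and we may take $f_0\neq 0$. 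If instead $f_0=0$, we divide out the power of $Y$: this only changes factors by monomials of degree $\le d$ in $Y$, and that change is absorbed by the $\prec_d$ relation. Since $f_0$ and $f_k$ have disjoint supports inside $f$, one of them has sparsity at most $s/2$. We recurse on that one, say $f_k$, and pair it with the normalization $\hat f$ via \cref{lem:monic-factorization}; in the other case we pair $f_0$ with the reverse normalization $\tilde f$ via \cref{cor:reversed-monic-factorization}.

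\textbf{Combining step.} Let $\cL'$ be the recursive list for $f_k$, and let $\cL''$ be the list of at most $2^d-1$ candidates for the factors of $\hat f$. The output list consists of $\cL'$ itself, which handles the factors independent of $Y$, together with one combined candidate for each pair $(\hat q,h)\in\cL''\times\cL'$. If $\hat q$ has $Y$-degree $r$, the combined candidate is
\[
\hat q(\X,Y f_k)\cdot \frac{h}{f_k^{\,r}},
\]
built as a constant-depth circuit with division gates using \cref{lem:circuit-depth-on-compose}. We check correctness for a factor $g$ of $f$ with $\deg_Y g=r\ge1$. By \cref{lem:factor-of-normalized-poly}, the polynomial $\hat g$ is a factor of $\hat f$, so $\hat g$ appears in $\cL''$. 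Its leading coefficient $g_r$ divides $f_k$, so by induction some $h\in\cL'$ satisfies $h=g_rM$ with $M$ a monomial of individual degree $\le d$. Hence
\[
\hat g(\X,Yf_k)\,h/f_k^{\,r}=gM,
\]
which gives $g\prec_d gM$. The reverse case is identical, using $\tilde g(\X,Y/f_0)\cdot h$ with $h\prec$-equivalent to $g_0$. One subtlety is that $M$ must have individual degree $\le d$. This holds because the recursion keeps $g_r M$ dividing a monomial multiple of $f_k$ of bounded individual degree; we would state the induction hypothesis precisely enough to guarantee it.

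\textbf{List size and running time.} The list size satisfies $L(s)\le L(s/2)+(2^d-1)L(s/2)=2^dL(s/2)$ with $L(1)=O(1)$, so $L(s)\le 2^{d\log s}=s^d$. The recursion depth is $\log s$, and the total work is a sum over the recursion tree of $\poly(s^d,d!,n)$ per node, which is polynomial.

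\textbf{Depth bound (main obstacle).} The step I expect to be hardest is keeping circuit depth $O(1)$. A naive composition adds $O(1)$ depth per recursion level, which gives depth $O(\log s)$ rather than constant. I would first try to flatten the combination. Unrolling the recursion, each final candidate is a product of at most $\log s$ composed factors of the form $\hat q_\ell(\X,Y_\ell\cdot f^{(\ell)})/(f^{(\ell)})^{r_\ell}$ or their reverse analogs, where each $f^{(\ell)}$ is an explicit sparse polynomial (a coefficient of a coefficient of $f$). Each such term has constant depth by \cref{lem:circuit-depth-on-compose} applied directly to sparse depth-2 circuits for $f^{(\ell)}$. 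Therefore the whole candidate is a single product gate over constant-depth pieces, giving depth $O(1)$ and size $\poly(s^d,n)$. The remaining routine checks are that each $\hat q_\ell$ is computed from the polynomial $f^{(\ell)}$ at its level, so that no composition of previously built circuits occurs, and that the constants stay in $\F$ by \cref{Algo: PotentialFactorComputing}.
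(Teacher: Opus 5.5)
\textbf{Same approach, but with a gap.} Your architecture matches the paper's. You recurse on whichever of $f_k,f_0$ has at most $s/2$ monomials, pair it with the list for $\hat f$ or $\tilde f$, and combine as $\hat q(\X,Yf_k)\,h/f_k^{r}$ or $\tilde q(\X,Y/f_0)\,h$, which gives $L(s)\le 2^dL(s/2)$. Your flattening argument for constant depth is also essentially the paper's: every entry keeps a product gate on top, and it is merged with the new product gate at each level.

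The gap is in the handling of monomial divisors, which is exactly where the induction hypothesis is used. You need $h\in\cL'$ with $h=g_rM$, and $g_r$ can be divisible by monomials even when $g$ is not. Your step ``divide out $Y^a$; the change is absorbed by $\prec_d$'' is false, since $v\prec_d h$ does not give $Y^bv\prec_d h$.
\begin{itemize}
\item For $Y(Y+X_1)$ with $Y$ chosen, you pass to $Y+X_1$. Its list (recursing on its leading coefficient $1$) is $\{Y+X_1,1\}$, so neither $Y$ nor $Y(Y+X_1)$ is covered.
\item This propagates upward. Take $F=Y(Y+X_1)Z+X_2+X_3+X_4$ with variable $Z$; here $\|Y(Y+X_1)\|=2\le 5/2$. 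The new candidates are $F\cdot h/(Y(Y+X_1))$ with $h\in\{Y+X_1,1\}$, i.e.\ $F/Y$ and $F/(Y(Y+X_1))$. So the irreducible, monomial-free factor $F$ itself is not in the list.
\item After dividing out $Y^a$, the polynomial may no longer depend on $Y$. The disjoint-support halving is then unavailable for that variable.
\item A base case with two entries gives $2s^d$ rather than $s^d$.
\end{itemize}

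\textbf{How the paper avoids this.} It strips the \emph{entire} maximal monomial divisor at the start of every call. Then for any variable the polynomial depends on, $f_0,f_k\neq0$ and their supports are disjoint. It also pads the base case: it outputs $f\cdot(X_1\cdots X_n)^d$ over all $n$ ambient variables, not $f$. The padding is what lets an entry covering the monomial-free part $u$ of $g_r=M_1u$ also cover $g_r$.

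\textbf{What you still need to prove.} Whether you pad or instead multiply candidates back by the stripped monomial, the statement you deferred is this: the monomial divided out along the recursion path has individual degree at most $d$. That monomial is $M_1$ times the monomial contents of the extreme coefficients met further down the recursion. The bound holds because all of these divide the initial form $\mathrm{in}(g)$ of $g$. Here the order is lexicographic on the successively chosen variables, taking the highest degree where you normalized and the lowest where you reverse-normalized. Initial forms are multiplicative, so $\mathrm{in}(g)\mid\mathrm{in}(f)$. Since the recursion on $f$ bottoms out at a constant, $\mathrm{in}(f)$ is a single term of $f$. With this invariant stated and proved, your proof goes through as in the paper.
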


\begin{proof}
    We will show that \cref{algo:monomial-equivalent-factors} outputs the required list of circuits.
    We prove the theorem by induction on the sparsity of $f$. \cref{algo:monomial-equivalent-factors} initially strips off the maximal monomial divisor of $f$. Thus, from this point on, we work with an $f$ that has no nonconstant monomial divisor.

    \paragraph{Base case ($||f|| \leq 1$):}
    Then $f$ is a constant, since we have stripped out the maximal monomial divisor.
    \cref{algo:monomial-equivalent-factors} outputs $\{f \cdot (X_1 \cdot X_2 \cdots X_n)^d\}$, and thus gives a $d$-monomial equivalent polynomial to the constant factor $1$.

    \paragraph{Induction case ($||f|| \leq s$):}
    Without loss of generality, $f$ depends on the variable $X_n$ (\cref{algo:monomial-equivalent-factors} ensures this by renaming variables). Let $k = \deg_{X_n}(f)$, and
    \begin{equation*}
        f = \sum_{i=0}^k f_i(\X_{n-1})X_n^i
    \end{equation*}
    where $f_i(\X_{n-1}) \in \F[\X_{n-1}]$. Since we stripped off the maximal monomial divisor, $f_0(\X_{n-1})$ and $f_k(\X_{n-1})$ are both non-zero. By averaging, either $||f_0(\X_{n-1})||$ or $||f_k(\X_{n-1})||$ is at most $s/2$. We first deal with the case where $||f_k(\X_{n-1})||$ is at most $s/2$.

    \paragraph{Sparsity of $f_k \leq s/2$:}
    \cref{algo:monomial-equivalent-factors} normalizes $f$ to get $\hat{f}(\X_n) = f(\X_{n-1},X_n/f_k)\cdot f_k^{k-1}$, which is monic in $X_n$ and has sparsity at most $s^d$.
    The algorithm invokes \cref{lem:monic-factorization} to compute a $(2^d -1)$-size list $\cL_{\hat f}$ of circuits such that every factor of $\hat f$ is computed by some circuit in $\cL_{\hat f}$.
    \begin{equation}
        \abs{\cL_{\hat f}} \leq 2^d -1 \label{eqn:bounded-sparse-L-hat-f-size}
    \end{equation}
    Moreover, this computation takes $\poly(d!, s^d, n)$ time and each circuit in $\cL_{\hat f}$ is over $\F$ and is of size $\poly(s^d, n)$ and depth $\cO(1)$.

    Next it recursively computes the list $\cL_{f_k, mon}$ of circuits corresponding to monomial equivalent factors of $f_k$. For every factor $g$ of $f_k$, there exists a polynomial $h$ computed by some circuit in $\cL_{f_k, mon}$ such that $g \prec_d h$. By our induction hypothesis, since $||f_k|| \leq s/2$, we have $\abs{\cL_{f_k, mon}} \leq (s/2)^d$ and we can compute this list in time $\poly(d!, (s/2)^d, n)$.
    \begin{equation}
        \abs{\cL_{f_k, mon}} \leq (s/2)^d \label{eqn:bounded-sparse-L-fk-mon-size}
    \end{equation}
    Moreover, each circuit in $\cL_{f_k, mon}$ is over $\F$ and has size $\poly((s/2)^d, n)$ and depth $\cO(1)$.

    Finally, the algorithm uses the lists $\cL_{\hat f}$ and $\cL_{f_k, mon}$ to construct the list $\cL_{mon}$. It defines $\cL_{mon}$ as follows.
    \begin{equation*}
        \cL_{mon} =  \left\{\hat g(\X_{n-1}, X_n f_k)\cdot \frac{h}{f_k^{\deg_{X_n}(\hat g)}} \,\middle\vert\, \hat g \in \cL_{\hat f}, h \in \cL_{f_k,mon}\right\} \cup \cL_{f_k,mon}
    \end{equation*}
    where $\deg_{X_n}(\hat g)$ is obtained from \cref{lem:monic-factorization}. We will argue next that for any factor of $f$, there is a $d$-monomial equivalent polynomial in $\cL_{mon}$.

    Let $g$ be a factor of $f$. Suppose $g$ does not depend on the variable $X_n$. Then $g$ divides $f_k$. Hence, there exists a $g'$ in $\cL_{f_k, mon}\subset \cL_{mon}$ such that $g \prec_d g'$.

    Now suppose $g$ depends on $X_n$. Let $g$ be of the following form,
    \begin{equation*}
        g(\X_n) = \sum_{i=0}^t g_i(\X_{n-1}) X_n^i
    \end{equation*}
    where $g_t \neq 0$. Then by \cref{lem:factor-of-normalized-poly}, $\hat g | \hat f$  where $\hat g$ is defined by
    \begin{equation*}
       \hat{g}(\X_n) = g(\X_{n-1}, X_n/f_k) \frac{f_k^t}{g_t}.
    \end{equation*}
    In other words,
    \begin{equation*}
        g(\X_n) = \hat{g}(\X_{n-1}, X_n f_k) \frac{g_t}{f_k^t}.
    \end{equation*}
    Note that $g_t | f_k$, which implies there exists a $g'_t$ in $\cL_{f_k, mon}$ such that $g_t \prec_{d} g'_t$. Hence, the polynomial $g'(\X_n)$ defined as
    \begin{equation*}
        g'(\X_{n}) = \hat{g}(\X_{n-1},X_n f_k)\cdot \frac{g'_t}{f_k^t}
    \end{equation*}
    lies in $\cL_{mon}$ since $\hat g \in \cL_{\hat f}$. Moreover, $g \prec_d g'$, which finishes the proof that $\cL_{mon}$ contains the claimed circuits for this case.

    Next, we prove the bound on the size of $\cL_{mon}$, the runtime of the algorithm, and the size and depth of the circuits.

    Note that
    \begin{align*}
        \abs{\cL_{mon}} &= \abs{\cL_{\hat f}} \cdot \abs{\cL_{f_k,mon}}  + \abs{\cL_{f_k,mon}} \\
            &\le (2^d-1) \cdot \abs{\cL_{f_k,mon}} + \abs{\cL_{f_k,mon}} & \text{(Using \cref{eqn:bounded-sparse-L-hat-f-size})}\\
            &\le 2^{d} \cdot \left(\frac{s}{2}\right)^{d } &\text{(Using \cref{eqn:bounded-sparse-L-fk-mon-size})}\\
            &=s^{d}
    \end{align*}
    which proves the list bound.

    The total time taken is dominated by the time taken to compute the lists $\cL_{\hat f}$ and $\cL_{f_k, mon}$ and to combine them to compute $\cL_{mon}$, which takes $\poly(d!, s^d, n)$ time.

    We conclude by showing the size and depth bounds for circuits in $\cL_{mon}$. The circuits in $\cL_{\hat f}$ are of size $\poly(s^d, n)$ and depth $\Delta$. Here $\Delta$ is an absolute constant (by \cref{lem:monic-factorization}). Without loss of generality, the top gate of these circuits is a multiplication gate.

    We will inductively show that the circuits in $\cL_{mon}$ have $\poly(s^d, n)$ size and $\Delta+6$ depth with a multiplication gate on top.
    The recursively inherited circuits in $\cL_{f_k,mon}$ satisfy these bounds by the induction hypothesis. It remains to consider the newly constructed circuits, which are of the form
    \begin{equation*}
        \hat g(\X_{n-1}, X_n f_k)\cdot \frac{h}{f_k^{\deg_{X_n}(\hat g)}}
    \end{equation*}
    where $\hat g \in \cL_{\hat f}$ and $h \in \cL_{f_k, mon}$.
    By \cref{lem:circuit-depth-on-compose}, $\hat g(\X_{n-1}, X_n f_k)$ is computable by a circuit $C_1$ of size $\poly(s^d, n)$ and depth $\Delta + 6$ (with a multiplication gate on top). $1/f_k^{\deg_{X_n}(\hat g)}$ is computable by a circuit $C_2$ (recall that we are allowing division gates) of size $\poly(s, n)$ and depth 5 (with a multiplication gate on top). By induction, $h$ is computable by a circuit $C_3$ of size $\poly(s^d, n)$ and depth $\Delta + 6$ (with a multiplication gate on top). We multiply all these circuits together by putting a multiplication gate on top. The resulting circuit will have size $\poly(s^d, n)$ and depth $\Delta + 6$ (we merge the multiplication gates of $C_1, C_2, C_3$ into one).

    We now address the case where sparsity of $f_0 < s/2$. Many of the ingredients are common to the previous case, but the normalization and combining procedures to obtain $\cL_{mon}$ are different.

    \paragraph{Sparsity of $f_0 < s/2$:}

    \cref{algo:monomial-equivalent-factors} reverse-normalizes $f$ to get $\tilde{f}(\X_n) = f(\X_{n-1},X_n f_0)/f_0$, which is reversed-monic in $X_n$ and has sparsity at most $s^d$.
    The algorithm invokes \cref{cor:reversed-monic-factorization} to compute a $(2^d -1)$-size list $\cL_{\tilde f}$ of circuits such that every factor of $\tilde f$ is computed by some circuit in $\cL_{\tilde f}$. Moreover, this computation takes $\poly(d!, s^d, n)$ time and each circuit in $\cL_{\tilde f}$ is over $\F$ and is of size $\poly(s^d, n)$ and depth $\cO(1)$.

    Next it recursively computes the list $\cL_{f_0, mon}$ of circuits. For every factor $g$ of $f_0$, there exists a polynomial $h$ computed by some circuit in $\cL_{f_0, mon}$ such that $g \prec_d h$. By our induction hypothesis, since $||f_0|| \leq s/2$, we have $\abs{\cL_{f_0, mon}} \leq (s/2)^d$ and we can compute this list in time $\poly(d!, (s/2)^d, n)$. Moreover, each circuit in $\cL_{f_0, mon}$ is over $\F$ and has size $\poly((s/2)^d, n)$ and depth $\cO(1)$.

    Finally, the algorithm uses the lists $\cL_{\tilde f}$ and $\cL_{f_0, mon}$ to construct the list $\cL_{mon}$. It defines $\cL_{mon}$ as follows.
    \begin{equation*}
        \cL_{mon} =  \left\{\tilde g(\X_{n-1}, X_n/f_0)\cdot h \,\middle\vert\, \tilde g \in \cL_{\tilde f}, h \in \cL_{f_0,mon}\right\} \cup \cL_{f_0,mon}
    \end{equation*}
    We will argue next that for any factor of $f$, there is a $d$-monomial equivalent polynomial in $\cL_{mon}$.

    Let $g$ be a factor of $f$. Suppose $g$ does not depend on the variable $X_n$. Then $g$ divides $f_0$. Hence, there exists a $g'$ in $\cL_{f_0, mon}\subset \cL_{mon}$ such that $g \prec_d g'$.

    Now suppose $g$ depends on $X_n$. Let $g$ be of the following form,
    \begin{equation*}
        g(\X_n) = \sum_{i=0}^t g_i(\X_{n-1}) X_n^i
    \end{equation*}
    where $g_0 \neq 0$ (since we had stripped out the maximal monomial divisor initially). Then by \cref{lem:factor-of-rev-normalized-poly}, $\tilde  g | \tilde f$  where $\tilde g$ is defined by
    \begin{equation*}
        \tilde{g}(\X_{n}) = \frac{g(\X_{n-1}, X_n f_0)}{g_0}.
    \end{equation*}
    In other words,
    \begin{equation*}
        g(\X_n) = \tilde{g}(\X_{n-1}, X_n/f_0) \cdot g_0.
    \end{equation*}
    Note that $g_0 | f_0$, which implies there exists a $g'_0$ in $\cL_{f_0, mon}$ such that $g_0 \prec_{d} g'_0$. Hence, the polynomial $g'(\X_n)$ defined as
    \begin{equation*}
        g'(\X_{n}) = \tilde{g}(\X_{n-1},X_n/f_0)\cdot g'_0
    \end{equation*}
    lies in $\cL_{mon}$ since $\tilde g \in \cL_{\tilde f}$. Moreover, $g \prec_d g'$, which finishes the proof of correctness for this case.

    The arguments for the bounds on the size of $\cL_{mon}$ and the runtime of the algorithm are the same as in the previous case. Thus, the size of $\cL_{mon}$ is at most $s^d$ and the total runtime is $\poly(d!, s^d, n)$.

    We conclude by showing the size and depth bounds for circuits in $\cL_{mon}$. The circuits in $\cL_{\tilde f}$ are of size $\poly(s^d, n)$ and depth $\Delta$. Here $\Delta$ is an absolute constant (see \cref{cor:reversed-monic-factorization}). Without loss of generality, the top gate of these circuits is a multiplication gate.

    We will inductively show that the circuits in $\cL_{mon}$ have $\poly(s^d, n)$ size and $\Delta+6$ depth with a multiplication gate on top.
    The recursively inherited circuits in $\cL_{f_0,mon}$ satisfy these bounds by the induction hypothesis. It remains to consider the newly constructed circuits, which are of the form
    \begin{equation*}
        \tilde g(\X_{n-1}, X_n/f_0)\cdot h
    \end{equation*}
    where $\tilde g \in \cL_{\tilde f}$ and $h \in \cL_{f_0, mon}$.
    By \cref{lem:circuit-depth-on-compose}, $\tilde g(\X_{n-1}, X_n /f_0)$ is computable by a circuit $C_1$ (in this case $C_1$ also has division gates) of size $\poly(s^d, n)$ and depth $\Delta + 6$ (with a multiplication gate on top).
    By induction, $h$ is computable by a circuit $C_2$ of size $\poly(s^d, n)$ and depth $\Delta + 6$ (with a multiplication gate on top). We multiply all these circuits together by putting a multiplication gate on top. The resulting circuit will have size $\poly(s^d, n)$ and depth $\Delta + 6$ (we merge the multiplication gates of $C_1, C_2$ into one).

    Note that in our construction of the circuits (in both cases) in $\cL_{mon}$, the denominators of the division gates are products of $s$-sparse polynomials. We record this fact as a claim below.

    \begin{claim}
        \label{clm:sparse-denominators}
        The denominators of the division gates in the circuits in $\cL_{mon}$ are products of $s$-sparse polynomials.
    \end{claim}
    Indeed, the recursive calls only introduce denominators with this form by induction. In the case $\|f_k\|\leq s/2$, the newly constructed circuits introduce only powers of $f_k$, and in the case $\|f_0\|\leq s/2$, the newly constructed circuits introduce only powers of $f_0$ through the substitution $X_n/f_0$. Since both $f_k$ and $f_0$ are coefficients of the $s$-sparse polynomial $f$, they are $s$-sparse.

\end{proof}

At this point, \cref{algo:monomial-equivalent-factors} outputs a list $\cL_{mon}$ containing monomial equivalent factors of $f$. \cref{algo:non-monomial-factors} will perform post-processing on this list to obtain a new list $\cL$ that contains all factors (not divisible by a monomial) of $f$. We show this formally by proving \cref{thm:compute-all-factors-of-bounded-degree} below.

\begin{proof}[Proof of \cref{thm:compute-all-factors-of-bounded-degree}]
  We will show that \cref{algo:non-monomial-factors} outputs the required list of circuits. The algorithm first removes and records the maximal monomial divisor of the input polynomial $f$; this gives the monomial output promised in the theorem. It then computes the list of circuits $\cL_{mon}$ using \cref{algo:monomial-equivalent-factors}. By \cref{clm:sparse-denominators}, the denominators of all division gates in these circuits are products of $s$-sparse polynomials. Thus, before removing monomial factors, \cref{algo:non-monomial-factors} applies \cref{lem:division-elimination} to every circuit in $\cL_{mon}$, replacing $\cL_{mon}$ by an equivalent list of division-free circuits. It then removes the largest monomial factor from each resulting circuit using \cref{cor:factor_sparse_monomial_division}, adds the result to $\cL$, and finally eliminates the monomial divisions introduced in this last step.

    Let $g$ be a factor of $f$ such that no monomial divides $g$. By \cref{thm:recursive-algorithm}, there exists $h \in \cL_{mon}$ such that $g \prec_d h$. By removing the largest monomial factor from $h$, we ensure that $g \in \cL$. Thus, every factor of $f$ that is not divisible by a monomial is computed by some circuit in $\cL$.

    Note that $\abs{\cL} \leq \abs{\cL_{mon}} = s^d$, and the runtime is dominated by \cref{algo:monomial-equivalent-factors}, which takes time $\poly(s^d,d!,n)$. The first division-elimination step gives division-free circuits of size $\poly(s^d,n)$ and depth $\cO(1)$. Removing the largest monomial factor then only introduces monomial denominators, so a second application of \cref{lem:division-elimination} removes these divisions with polynomial blow-up and only a constant increase in depth.
    Thus, the circuits in $\cL$ are of size $\poly(s^d,n)$ and depth $\cO(1).$

    Moreover, for circuits in $\cL_{mon}$ that do not correspond to true monomial equivalent factors of $f$, the above post-processing will add circuits in $\cL$ that do not correspond to true factors of $f$ (in $\poly(s^d,d!,n)$-time). These spurious circuits are also of size $\poly(s^d,n)$ and depth $\cO(1)$.

\end{proof}

As a consequence of \cref{thm:main1}, we recover the following corollary. This recovers a result of Chuyoon and Shpilka~\cite{chuyoon2026factorization}, as well as the algorithmic result of Bhargava et al.~\cite{BSV20} and its improvement in~\cite{chuyoon2026factorization}.
\begin{corollary}
\label{cor:cs-and-bsv-generalization}
Let $f(X_1,\dots, X_n)\in \F[X_1, \dots, X_n]$ be an $s$-sparse polynomial with individual degree at most $d$, where $\Char(\F)$ is $0$ or greater than $d$. Then there is an algorithm that runs in time $\poly(s^d, d!, n, s')$ and outputs all $s'$-sparse factors of $f$ not divisible by a monomial.
\end{corollary}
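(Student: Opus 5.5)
We run the algorithm of \cref{thm:compute-all-factors-of-bounded-degree} on $f$, then prune its output list using sparse interpolation followed by divisibility testing. The algorithm returns a list $\cL$ of at most $s^d$ circuits, each over $\F$ (after a field extension if needed), of size $\poly(s^d,n)$ and constant depth. Every factor of $f$ not divisible by a monomial is computed by some circuit in $\cL$. In particular, every $s'$-sparse such factor $g$ is computed by some $C\in\cL$, and $g$ has individual degree at most $d$, since it divides $f$.

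For each circuit $C\in\cL$ we attempt to recover an $s'$-sparse polynomial of individual degree at most $d$ from black-box access to $C$. We use \cref{Thm: Sparse Reconstruction} with parameters $(s',d,n)$: construct the generator $\cG^{KS}_{(s',d,n)}$, evaluate $C\circ\cG^{KS}$ on all points of $S^4$, and run the reconstruction procedure. This takes time $\poly(n,s',d)$ plus $\poly(s^d,n,s',d)$ circuit evaluations. The output $P_C$ is reliable only if $C$ really computes an $s'$-sparse polynomial of individual degree at most $d$. For spurious circuits the output may be garbage, so we discard any output that is not $s'$-sparse or exceeds individual degree $d$. We do not need to verify $P_C=C$, because the next step checks every surviving candidate against $f$ directly. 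For every genuine $s'$-sparse factor $g$ computed by some $C$, correctness of reconstruction gives $P_C=g$.

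Next we prune. For each surviving $P_C$ we apply the divisibility test of \cref{thm:sparse-divisibility-test} to decide whether $P_C$ divides $f$. We keep $P_C$ only if it passes and, in addition, is not divisible by a monomial; the latter is read off directly from its monomial representation as in \cref{rem:factor-not-divisible-by-monomial}. The test needs both polynomials to be $\max(s,s')$-sparse of individual degree at most $d$, which holds for $f$ and for every surviving $P_C$. The output is therefore exactly the set of $s'$-sparse factors of $f$ that are not divisible by a monomial, after removing duplicates.

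The running time is $s^d$ candidates times $\poly(n,\max(s,s')^d,d!)$ per test. This is where the main obstacle lies: Volkovich's test costs $\poly(n,s'^d,d!)$, not $\poly(s')$. To obtain the stated $\poly(s^d,d!,n,s')$ bound, we plan to run the test with the roles arranged so that the sparsity parameter raised to the power $d$ is that of $f$, namely $s$. The test works by a Sylvester-matrix/resultant rank argument in which $s^d$ arises from powers of the divisor's coefficients. If that asymmetric accounting cannot be made to work, the fallback is to test divisibility on the circuit side: verify $f/P_C$ through a Klivans--Spielman hitting set for the relevant sparse expressions. In the regime $s'\le s$ (the Chuyoon--Shpilka setting), the naive bound is already $\poly(s^d,d!,n)$ and no extra work is needed.
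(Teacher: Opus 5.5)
Your route is exactly the paper's: run \cref{thm:compute-all-factors-of-bounded-degree}, apply $s'$-sparse reconstruction (\cref{Thm: Sparse Reconstruction}) to each circuit, and keep the reconstructed polynomials that pass Volkovich's test (\cref{thm:sparse-divisibility-test}) against $f$. The correctness part of your write-up is fine. True factors are reconstructed exactly, and every survivor is checked against $f$. Your extra filter, which discards candidates divisible by a monomial, also handles a small point the paper's proof leaves implicit.

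The gap is the one you flag yourself, and it is genuine. \cref{thm:sparse-divisibility-test} with sparsity parameter $\max(s,s')$ costs $\poly(n,\max(s,s')^d,d!)$, so the argument only yields $\poly(s^d,d!,n,s'^{\,d})$. That is within the claimed $\poly(s^d,d!,n,s')$ when $s'\le\poly(s)$ (in particular for $s'=s$) or when $d=O(1)$, but not in general. For the choice $s'=s^{4d^2\log n}$ in \cref{rem:cs-and-bsv-generalization} it gives $s^{O(d^3\log n)}$. The paper's proof does not close this either: it invokes the test ``with sparsity parameter $s'$'' and states the bound.

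Neither of your repairs works as described. Take a variable $Y$ with $\deg_Y f=k$ and $\deg_Y g=t\ge 1$. The core of a Volkovich-style test is that $f$ lies in the $\F(\X)$-span of $g,Yg,\dots,Y^{k-t}g$. Equivalently, all $(k-t+2)$-minors of the matrix with these columns and $f$ vanish, or the pseudo-remainder of $f$ by $g$ is zero. Each such minor is linear in the coefficients of $f$ and of degree $k-t+1\le d$ in those of $g$, so its sparsity can be about $(d+1)!\,s\,s'^{\,d}$. The $d$-th power therefore falls on the divisor's sparsity $s'$, as your own remark about ``powers of the divisor's coefficients'' already says, and dividend and divisor cannot be interchanged.

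Your ``circuit-side'' fallback faces the same obstruction. It must detect a nonzero pseudo-remainder, and a Klivans--Spielman hitting set for polynomials of that sparsity has size $\poly(s'^{\,d})$. To get the stated bound for super-polynomial $s'$, you would need a genuinely asymmetric divisibility test: one deciding whether an $s'$-sparse $g$ divides the $s$-sparse $f$ in time $\poly(n,s',s^d,d!)$. This does not follow from \cref{thm:sparse-divisibility-test} as stated and would have to be proved or cited. With the tools in the paper, your proposal, like the paper's proof, establishes the corollary only for $s'\le\poly(s)$ or constant $d$.
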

\begin{proof}
    Applying \cref{thm:compute-all-factors-of-bounded-degree} to $f$, we obtain a list of circuits $\cL$ such that every factor of $f$ that is not divisible by a monomial is computed by some circuit in $\cL$. In particular, $\cL$ could contain spurious circuits that do not correspond to factors of $f$. We apply $s'$-sparse polynomial interpolation (\cref{Thm: Sparse Reconstruction}) to each circuit $C$ in $\cL$. For each circuit where interpolation succeeds, let $g_C$ be the corresponding polynomial. We use \cref{thm:sparse-divisibility-test} with sparsity parameter $s'$ to check whether $g_C$ divides $f$. We prune out the polynomials $g_C$ that do not divide $f$ and get a final list of all $s'$-sparse factors of $f$ that are not divisible by any monomial.
\end{proof}
\begin{remark}
\label{rem:cs-and-bsv-generalization}
    By setting $s' = s$ in \cref{cor:cs-and-bsv-generalization}, we obtain a $\poly(s^d, d!, n)$-time algorithm to output all $s$-sparse factors of $f$ not divisible by a monomial. Thus, we recover \cite[Theorem 1.9]{chuyoon2026factorization} for fields of characteristic zero or greater than $d$.

    By \cref{thm:divisor_bound_general_sparse}, all factors of $f$ have sparsity at most $s^{4 d^2 \log n}$. Thus, by setting $s'=s^{4 d^2 \log n}$ in \cref{cor:cs-and-bsv-generalization}, we obtain a $\poly(s^{d^2 \log n}, d!, n)$-time algorithm to output all factors of $f$ not divisible by a monomial. This generalizes \cite[Theorem 2]{BSV20} and matches the improved runtime obtained in \cite{chuyoon2026factorization}.
\end{remark}

\newpage

\begin{algorithm}[H]
    \caption{List of monomial equivalent factors of sparse polynomials with bounded individual degree (possibly with spurious entries)}
    \label{algo:monomial-equivalent-factors}
    \KwData{$f \in \F[\X_n]$, sparsity parameter $s$, individual degree parameter $d$}
    \KwResult{List $\cL_{mon}$ of circuits over $\F$ with division gates}

    \quad\\
    \texttt{// Pre-processing}\\
    Remove the maximal monomial divisor from $f$ and rename the polynomial $f$.\\
    \quad\\
    \texttt{// Base Case} \\
    \If{$f$ is a constant}
    {
    $\cL_{mon} = \{f \cdot (X_1 \cdot X_2 \cdots X_n)^d\}$ \\
    \Return{$\cL_{mon}$}
    }
    \quad\\
    Without loss of generality assume $f$ depends on variable $X_n$, otherwise rename the variables appropriately.\\
    Let $k = \deg_{X_n}(f)$, and find $f_0,...,f_k \in \F[\X_{n-1}]$ such that $f = \displaystyle\sum_{i=0}^k f_i(\X_{n-1})X_n^i$\\

    \texttt{// Normalize $f$}\\
    \If{$||f_k|| \leq s/2$}{
    Compute $\hat{f} = f(\X_{n-1},X_n/f_k)\cdot f_k^{k-1}$ \\
    Compute $\cL_{\hat f} = $ \{Factors of $\hat{f}\}$ using monic factorization (\cref{lem:monic-factorization})\\
    $\cL_{f_k,mon} = \cref{algo:monomial-equivalent-factors}(f_k,s/2,d)$ \qquad \texttt{// Monomial equivalent factors of $f_k$}\\
    $\cL_{mon} =  \left\{\hat g(\X_{n-1}, X_n f_k)\cdot \frac{h}{f_k^{\deg_{X_n}(\hat g)}} \,\mid\, \hat g \in \cL_{\hat f}, h \in \cL_{f_k,mon}\right\} \cup \cL_{f_k,mon}$\\
    \texttt{// We know $\deg_{X_n}(\hat g)$ by \cref{lem:monic-factorization}}
    }
    \qquad \\
    \texttt{// Reverse Normalize $f$}\\
    \Else{
    Compute $\tilde f = f(\X_{n-1},X_n f_0)/f_0$ \\
    Compute $\cL_{\tilde f} = $ \{Factors of $\tilde{f}\}$ using reversed-monic factorization (\cref{cor:reversed-monic-factorization}) \\
    $\cL_{f_0,mon} = \cref{algo:monomial-equivalent-factors}(f_0,s/2,d)$ \qquad \texttt{// Monomial equivalent factors of $f_0$}\\
    $\cL_{mon} = \left\{\tilde g(\X_{n-1},X_n/f_0)\cdot h \,\mid\, \tilde g \in \cL_{\tilde f}, h \in \cL_{f_0,mon} \right\} \cup \cL_{f_0,mon}$\\

    }

    \qquad \\

    \Return{$\cL_{mon}$}
\end{algorithm}

\begin{algorithm}[H]
    \caption{List of factors not divisible by a monomial of bounded individual degree sparse polynomials (possibly with spurious entries)}
    \label{algo:non-monomial-factors}
    \KwData{$f \in \F[\X_n]$, sparsity parameter $s$, individual degree parameter $d$}
    \KwResult{List $\cL$ of circuits over $\F$ and the maximal monomial divisor $M$ of $f$}

    \quad\\

    Let $M$ be the maximal monomial divisor of $f$, and replace $f$ by $f/M$.\\
    $\cL_{mon}$ =  \cref{algo:monomial-equivalent-factors}$(f, s, d)$ \\
    $\cL = \emptyset$

    \quad\\

    \texttt{// Eliminating division gates}\\
    Use \cref{lem:division-elimination} to eliminate division gates in circuits in $\cL_{mon}$\\

    \quad\\

    \texttt{// Removing some spurious elements from the list}\\
    \For{$g \in \cL_{mon}$}{
        \For{$i \in [n]$}{
        \If{$X_i^{d+1}|g$ $($Using \cref{cor:factor_sparse_monomial_division}$)$ }{
            Remove $g$ from $\cL_{mon}$\\
            \textbf{Break}
        }
        }
    }

    \quad\\
    \texttt{// Removing monomial divisors from true factors}\\

    \For{$g \in \cL_{mon}$}{
        \For{$i \in [n]$}{
        Find largest power $d_i$ such that $X_i^{d_i}| g$ (Using \cref{cor:factor_sparse_monomial_division})\\
        }
        $\cL$ = $\cL \cup \left\{\frac{g}{X_1^{d_1}\cdot\cdot X_n^{d_n}}\right\}$\\

    }

    \quad\\
    Use \cref{lem:division-elimination} to eliminate monomial division gates in circuits in $\cL$ \\

    \Return{$(\cL,M)$}
\end{algorithm}

\section{Recovering Bounded Individual Degree Factors of General Sparse Polynomials}
\label{sec:recovering-bounded-individual-degree-factors}

In this section, we prove \cref{thm:main2}. In particular, we obtain a deterministic quasipolynomial-time algorithm that recovers all factors $g$ that have bounded individual degree $d$ (and are not divisible by a monomial) of an $n$-variate $s$-sparse polynomial $f \in \F[\X_n]$ of (unbounded) individual degree $D$. The algorithm outputs a $\binom{D}{\leq d}^{\log s}$-size list $\cL$ such that each such factor $g$ is in $\cL$. $\cL$ might contain spurious polynomials that do not correspond to any factor of $f$. Unfortunately, we are unable to prune out the spurious elements since we do not know of efficient deterministic algorithms to check if a bounded individual degree polynomial $g$ divides $f$ (see \cref{sec:future-directions-and-open-questions}). Indeed, if we could do this divisibility testing efficiently, then we would be able to clean the list and output only true factors. Note that the restriction to outputting factors not divisible by a monomial is unavoidable (see \cref{rem:factor-not-divisible-by-monomial}). Since we are given white-box access to the polynomial $f$, we compensate for this restriction by outputting the maximal degree monomial factor dividing $f$.

We restate our \cref{thm:main2} (slightly more formally) below.

\begin{theorem}[Main Theorem 2]
    \label{thm:general-sparse-bounded_ind_deg_factor}
    Let $f(\X_n) \in \F[\X_n]$ be an $s$-sparse polynomial with individual degree at most $D$. Then, there is an algorithm that runs in time $\poly\left(s^{d^2\log n}, \binom{D}{\leq d}^{\log s}, n\right)$ and outputs a $\binom{D}{\leq d}^{\log s}$-size list of $s^{4d^2\log n}$-sparse polynomials $\cL$. For every factor $g$ of $f$, where the individual degree of $g$ is at most $d$ and $g$ is not divisible by any monomial, $g$ lies in $\cL$.
\end{theorem}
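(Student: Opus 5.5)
We will follow the same recursive template as \cref{algo:monomial-equivalent-factors}, but with two changes: each level now contributes a multiplicative factor of $\binom{D}{\leq d}$ to the list size, and candidates are produced as explicit sparse polynomials rather than circuits. The algorithm first strips the maximal monomial divisor of $f$ and picks a variable $X_n$ on which $f$ depends. Writing $f=\sum_{i=0}^k f_iX_n^i$ with $k\leq D$, both $f_0$ and $f_k$ are nonzero, and one of them has sparsity at most $s/2$; we recurse on that one. Throughout, set $s' = s^{4d^2\log n}$, which by \cref{thm:bsv_general_sparse} bounds the sparsity of every factor of $f$ of individual degree at most $d$. Since such factors of $f_k$ (resp.\ $f_0$) are in particular factors of sparse polynomials, the same bound applies to them.

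\textbf{Case $\|f_k\|\leq s/2$.} Let $\cG=\cG^{KS}_{(s',d,n)}(\bZ_4)$ be the generator of \cref{Thm: Sparse Reconstruction}. Form the $5$-variate polynomial $\hat f(\cG(\bZ),X_n)$, which is monic in $X_n$ with $X_n$-degree $k\leq D$ and total degree $\poly(n,s',D)$. Using \cref{lem: Constantvariate factorization}, compute all of its factors of $X_n$-degree at most $d$; there are at most $\binom{D}{\leq d}$ of them, because a monic polynomial factors over $\overline{\F(\bZ)}$ into $D$ linear roots and a monic factor of degree $\le d$ is a sub-multiset of these. Recursively obtain a list $\cL_{f_k}$ of candidate $s'$-sparse polynomials, monomial-equivalent to all individual-degree-$\le d$ factors of $f_k$. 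For each pair $(\hat q,h)$ with $\hat q$ a computed factor and $h \in \cL_{f_k}$, form
\[
\hat q\bigl(\bZ, X_n f_k(\cG(\bZ))\bigr)\cdot \frac{h(\cG(\bZ))}{f_k(\cG(\bZ))^{\deg \hat q}}.
\]
Evaluate this expression on $S^4$, treating $X_n$ as an extra variable, and run the reconstruction of \cref{Thm: Sparse Reconstruction} for $(n,s',d)$. Keep the output whenever it succeeds, is a polynomial, and is $s'$-sparse. Finally, also add $\cL_{f_k}$ itself.

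\textbf{Correctness.} Take a true factor $g$ of individual degree $\le d$ that depends on $X_n$, with leading coefficient $g_t$. By \cref{lem:factor-of-normalized-poly}, $\hat g\mid\hat f$, so $\hat g(\cG,X_n)$ is among the computed factors. The choice $h\succ g_t$ then yields $g(\cG(\bZ),X_n)$ times a monomial in $\cG$, that is, a monomial-equivalent polynomial of individual degree $\le 2d$ that is still $s'$-sparse. So the reconstruction parameters should be set with degree $2d$.

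\textbf{Case $\|f_0\|\le s/2$.} This is symmetric, using the reverse normalization, reversals (\cref{lem: reversefactor}) and \cref{lem:factor-of-rev-normalized-poly}.

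\textbf{List size and post-processing.} The recurrence is $L(s)\le (\binom{D}{\le d}-1)L(s/2)+L(s/2)\le \binom{D}{\le d}L(s/2)$. Together with $L(1)=1$, this gives $L(s)\le \binom{D}{\le d}^{\log s}$. At the end, strip from each explicit sparse candidate its maximal monomial divisor, which is now trivial because the candidates are in monomial representation. We also discard any candidate whose individual degree exceeds $d$.

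\textbf{Running time.} Each level costs $\poly(s',n,D^{d},\binom{D}{\le d})$ per pair, which gives the stated bound.

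\textbf{Main obstacle.} The delicate step is to make the constant-variate factoring and the combining step mesh with reconstruction. Three issues need care.
\begin{itemize}
\item The substituted polynomial $\hat f(\cG,X_n)$ must remain monic. This holds by construction.
\item Every true factor $\hat g$ of $\hat f$ must still appear as a factor after the substitution. This is also clear, since divisibility is preserved by substitution.
\item Most importantly, $f_k(\cG(\bZ))\ne0$ must hold, so that the division makes sense.
\end{itemize}
For the last point we would choose the generator parameters so that it also hits the $s$-sparse polynomial $f_k$. One option is to shift by a hitting-set point of \cref{thm:sparse_pit}; another is to note directly that the Klivans--Spielman generator for sparsity $s'\ge s$ is nonzero on all $s$-sparse polynomials. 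The monomial bookkeeping, where $h$ is only monomial-equivalent to $g_t$ with individual degree up to $2d$, is handled by running reconstruction at degree $2d$. For arbitrary fields, univariate factoring of degree-$D$ polynomials is absorbed into $\cT(\F,D)$.
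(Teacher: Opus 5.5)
Your proposal is correct and is essentially the paper's proof. The ingredients all match: the same divide-and-conquer on the sparser of $f_k,f_0$, the same factoring of $\hat f(\cG(\bZ),X_n)$ (resp.\ the reverse-normalized version) by constant-variate factorization, the same combining step through the generator followed by reconstruction at individual degree $2d$, the same $\binom{D}{\leq d}\cdot L(s/2)$ recurrence, and the same final monomial stripping.

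You are right to flag that $f_k\circ\cG$ (resp.\ $f_0\circ\cG$) must be nonzero; the paper uses this only implicitly. However, $f_k$ can have individual degree up to $D$, so $s'\geq s$ alone does not guarantee it when the generator's degree parameter is only $2d$. The reliable fix is to instantiate the generator with individual-degree parameter $\max(2d,D)$: reconstructibility then forces it to hit the $s$-sparse $f_k$, at only a $\poly(D)$ cost.
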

\begin{remark}
    \label{rem:dependence-on-field-all-factors}
    In the above theorem, the displayed running time suppresses the dependence on the time taken to perform univariate polynomial factorization.
    More precisely, over a field $\F$, the algorithm runs in time $\poly\left(s^{d^2\log n}, \binom{D}{\leq d}^{\log s}, n\right) \cdot \cT(\F, D)$ (see \cref{def:time-taken-to-factor-univariates}). In particular, over $\Q$, if the input polynomial has bit complexity $B$, then the algorithm runs in time $\poly\left(s^{d^2\log n}, \binom{D}{\leq d}^{\log s}, n, B\right)$.
    If $\F$ is a finite field with characteristic $p$ (greater than $d$), the algorithm runs in time $\poly\left(s^{d^2\log n}, \binom{D}{\leq d}^{\log s}, n, p, \log{\abs{\F}}\right)$.
\end{remark}

At a high level, this algorithm also runs in two stages as in \cref{sec:recovering-all-factors}: the main recursive \cref{algo:monomial-equivalent-factors_general sparse} and the post-processing \cref{algo:non-monomial-factors_general_sparse}. The main recursive \cref{algo:monomial-equivalent-factors_general sparse} uses a similar divide-and-conquer strategy to that in \cref{algo:monomial-equivalent-factors}. \cref{algo:monomial-equivalent-factors_general sparse} outputs a set $\cL_{mon}$ of \emph{monomial equivalent} factors of $f$. More precisely, for every factor $g$ (of individual degree at most $d$) of $f$, there exists a polynomial $h$ in $\cL_{mon}$ such that $g \prec_d h$ (see \cref{def:monomial-equivalent-factor}). Thus, for such a factor $g$ with no monomial divisor, $h=gM$ will be in $\cL_{mon}$, where $M$ is a monomial with individual degree at most $d$. \cref{algo:non-monomial-factors_general_sparse} strips out this monomial $M$ from circuit $C$, thus obtaining the promised list $\cL$ containing the factors\footnote{On members of the list that do not correspond to true factors, we do not guarantee what might come out of this process, but it does not affect our run time or the guarantee on true factors.}.

We formally argue the correctness and the runtime of \cref{algo:monomial-equivalent-factors_general sparse} below. We refer the reader to \cref{sec:proof-overview-bounded-individual-degree-factors} for a high-level overview of the key ideas in the algorithm.

\begin{theorem}[Recursive algorithm for factoring general sparse polynomials]
    \label{thm:general-sparse-recursive-algorithm}
    Let $f(\X_n) \in \F[\X_n]$ be an $s$-sparse polynomial with individual degree at most $D$. Then, there is an algorithm that runs in time $\poly\left(s^{d^2\log n}, \binom{D}{\leq d}^{\log s}, n\right)$ and outputs a $\binom{D}{\leq d}^{\log s}$-size list of $s^{4d^2\log n}$-sparse polynomials $\cL_{mon}$. For every factor $g$ of $f$, where the individual degree of $g$ is at most $d$, there exists a polynomial $h$ in $\cL_{mon}$ such that $g \prec_d h$. %
\end{theorem}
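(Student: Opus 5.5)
We follow the same induction on sparsity as in the proof of \cref{thm:recursive-algorithm}. The plan, in order, is: strip the maximal monomial divisor, split on $X_n$, recurse on the sparser of $f_k$ and $f_0$, factor the (reverse-)normalized polynomial only after restricting to the Klivans--Spielman generator, and combine. Throughout, let $s' = s^{4d^2\log n}$, which bounds the sparsity of every target factor by \cref{thm:bsv_general_sparse}. Let $\cG=\cG^{KS}_{(s',d,n)}(\bZ_4)$ be the generator of \cref{Thm: Sparse Reconstruction}.

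\emph{Base case and splitting.} After stripping the maximal monomial divisor, the base case $\|f\|\le 1$ outputs $\{f\cdot(X_1\cdots X_n)^d\}$. Otherwise assume $f$ depends on $X_n$ and write $f=\sum_{i=0}^k f_iX_n^i$ with $f_0,f_k\neq0$. One of $f_0,f_k$ has sparsity at most $s/2$, and we recurse on it. This recursive call returns a list $\cL'$ of size at most $\binom{D}{\le d}^{\log s-1}$. Every factor of that coefficient of individual degree at most $d$ has a $d$-monomial-equivalent representative in $\cL'$.

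\emph{Factoring after the generator (case $\|f_k\|\le s/2$).} Form $\hat f(\cG(\bZ),X_n) = f(\cG(\bZ),X_n/f_k(\cG(\bZ)))\,f_k(\cG(\bZ))^{k-1}$. This is a $5$-variate polynomial, monic in $X_n$, of degree $\poly(s',n,D)$. We apply \cref{lem: Constantvariate factorization} to obtain all its factors of $X_n$-degree at most $d$, in time $\poly(\poly(s',n,D)^{4d},\cT(\F,D))$. We keep the ones that are monic of degree at most $d$. A monic polynomial of degree $k\le D$ has at most $\binom{D}{\le d}$ monic factors of degree at most $d$, because its factors are sub-multisets of its roots of size at most $d$. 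Let $g$ be a factor of $f$ of individual degree at most $d$ that depends on $X_n$, with $X_n$-degree $t$. By \cref{lem:factor-of-normalized-poly}, $\hat g$ divides $\hat f$ and is monic. Substituting $\X=\cG(\bZ)$ preserves this, so $\hat g(\cG(\bZ),X_n)$ is among the computed factors. Note that $f_k(\cG)\ne0$, since $f_k$ is a nonzero $s$-sparse polynomial with individual degree at most $D$, and we may include $f_k$ in the hitting set parameters.

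\emph{Combining.} For each pair $(\hat q,h)$ with $\hat q$ a computed factor and $h\in\cL'$, set
\[
  P(\bZ,X_n)=\hat q(\bZ,X_n f_k(\cG(\bZ)))\,\frac{h(\cG(\bZ))}{f_k(\cG(\bZ))^{\deg \hat q}}.
\]
If $P$ is a polynomial, we evaluate it on $S^4$ coefficientwise in $X_n$ and run the reconstruction of \cref{Thm: Sparse Reconstruction} with sparsity $s'(d+1)$ and individual degree $2d$. Whenever this succeeds we add the result to the list; we also add $\cL'$ itself. For a true $g$ with $g_t\prec_d h$, the polynomial $P$ equals $(gh/g_t)\circ\cG$, where $gh/g_t = gM$ for a monomial $M$ of individual degree at most $d$. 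Hence $gM$ is $s'$-sparse, has individual degree at most $2d$, and is reconstructed correctly. Two small adjustments are needed. First, the individual-degree bound on $gM$ ($2d$ rather than $d$) is absorbed into the generator parameters. Second, the base case uses $(X_1\cdots X_n)^d$, which is consistent with $\prec_d$ because $M$ comes from $h$, which is $d$-equivalent to $g_t$. Factors not depending on $X_n$ divide $f_k$ and are already covered by $\cL'$. The case $\|f_0\|\le s/2$ is symmetric: use $\tilde f$ and reversal, so that the factors are reversed-monic (\cref{lem: Constantvariate factorization} allows this), \cref{lem:factor-of-rev-normalized-poly}, and the combination $\tilde q(\bZ,X_n/f_0(\cG))\,h(\cG)$.

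\emph{Counting and time.} The list size satisfies $L(s)\le \binom{D}{\le d}L(s/2)+L(s/2)$. This is slightly off from the claimed $\binom{D}{\le d}^{\log s}$, so we count carefully: the constant factor $\hat q=1$ (degree $0$) is already among the $\binom{D}{\le d}$ candidates and reproduces $h$ itself. We therefore drop the separate union with $\cL'$, which gives $L(s)\le\binom{D}{\le d}L(s/2)$ and hence $L(s)\le\binom{D}{\le d}^{\log s}$. The time per level is $\poly(s^{d^2\log n},n,D^{d})$ times the list sizes, which gives the stated bound. The main obstacle is ensuring that the reconstruction is correct even though $\hat g$ itself may be non-sparse. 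This is handled by combining before interpolating, as above, so that only $gM$, which has bounded sparsity, is ever reconstructed. The other delicate point is making the generator parameters cover both $gM$ and the nonvanishing of $f_k$ and $f_0$ under $\cG$.
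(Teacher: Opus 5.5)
This is essentially the paper's proof. It uses the same sparsity induction, recursing on the sparser of $f_k,f_0$, and the Klivans--Spielman generator with $s'=s^{4d^2\log n}$. It also combines $\hat g(\cG(\bZ),X_n f_k(\cG(\bZ)))$ with $h(\cG(\bZ))$ \emph{before} reconstructing, so that only the sparse $gM$ is ever interpolated, and it gives the same count: your ``include $\hat q=1$ and drop the union'' is equivalent to the paper's $(\binom{D}{\leq d}-1)L+L$. You also require the generator to keep $f_k\circ\cG$ and $f_0\circ\cG$ nonzero, e.g.\ by raising its individual-degree parameter to $\max(2d,D)$. The paper uses this silently when it divides by $f_k\circ\cG$, so your version is more careful there.

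The one slip is in the running-time accounting. You quote the cost of factoring $\hat f_\cG$ from the constant-variate factoring lemma as $\poly(\poly(s',n,D)^{4d},\cT(\F,D))$. With $s'=s^{4d^2\log n}$ this is $s^{\Theta(d^3\log n)}$, so your later claim that the per-level time is $\poly(s^{d^2\log n},n,D^{d})$ does not follow from what you wrote.

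To actually get the stated bound, factor the $5$-variate polynomial $\hat f_\cG$ completely in its dense representation. Since the number of variables is constant, this takes time polynomial in its degree $\poly(s',n,D)$, up to the univariate factoring cost. Then enumerate the products of irreducible factors of $X_n$-degree at most $d$; there are at most $\binom{D}{\leq d}$ of them, by your own root-multiset argument. This is the $\poly(s',\binom{D}{\leq d},n)$ cost the paper asserts for this step, even though the paper's stated lemma also only gives the weaker $D^{cd}$-type bound.
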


\begin{proof}
    We will show that \cref{algo:monomial-equivalent-factors_general sparse} outputs the required list of circuits.
    We prove the theorem by induction on the sparsity of $f$. \cref{algo:monomial-equivalent-factors_general sparse} initially strips off the maximal monomial divisor of $f$. Thus, from this point on, we work with an $f$ that has no nonconstant monomial divisor.

    \paragraph{Base case ($||f|| \leq 1$):}
    Then $f$ is a constant, since we have stripped off the maximal monomial divisor.
    \cref{algo:monomial-equivalent-factors_general sparse} outputs $\{f \cdot (X_1 \cdot X_2 \cdots X_n)^d\}$, and thus gives a $d$-monomial equivalent polynomial to $f$.

    \paragraph{Induction case ($||f|| \leq s$):}

    Without loss of generality, $f$ depends on the variable $X_n$ (\cref{algo:monomial-equivalent-factors_general sparse} ensures this by renaming variables). The algorithm instantiates a $(n-1)$-variate generator $\cG \coloneq \cG_{(s',2d,n-1)}^{KS}$ from \cref{Thm: Sparse Reconstruction}, where $s' = s^{4 d^2 \log{n}}$ (as we will see later, the value of $s'$ comes from \cref{thm:bsv_general_sparse}). By \cref{Thm: Sparse Reconstruction}, it takes $\poly(s^{4 d^2 \log{n}}, n, d)$ time to construct the generator $\cG$. Let $k = \deg_{X_n}(f)$, and
    \begin{equation*}
        f = \sum_{i=0}^k f_i(\X_{n-1})X_n^i
    \end{equation*}
    where $f_i(\X_{n-1}) \in \F[\X_{n-1}]$. Since we stripped off the maximal monomial divisor, $f_0(\X_{n-1})$ and $f_k(\X_{n-1})$ are both non-zero. By averaging, either $||f_0(\X_{n-1})||$ or $||f_k(\X_{n-1})||$ is at most $s/2$. We first deal with the case where $||f_k(\X_{n-1})||$ is at most $s/2$.

    \paragraph{Sparsity of $f_k \leq s/2$:}
    \cref{algo:monomial-equivalent-factors_general sparse} normalizes $f$ to get $\hat{f}(\X_n) = f(\X_{n-1},X_n/f_k)\cdot f_k^{k-1}$, which is monic in $X_n$. Note that since $\hat{f}$ is $\cO(s^D)$-sparse, we do not have the budget to explicitly store its monomial representation. Hence, we represent it by a circuit $C$. Using $\hat{f}$, the algorithm computes the five-variate polynomial
    \begin{equation*}
        \hat f_{\cG}(\bZ_4,X_n) = \hat f (\cG(\bZ_4),X_n)
    \end{equation*}
    in its monomial representation. We can compute the monomial representation (using interpolation) in time $\poly(n,s',d)$ since $\deg(\hat f_{\cG}(\bZ_4,X_n)) = \poly(n,s',d)$.
    The algorithm invokes \cref{lem: Constantvariate factorization} to compute the list $\cL_{\hat f_{\cG}}$ such that every nonconstant factor of $\hat f_\cG$ %
    with $X_n$-degree at most $d$ is in $\cL_{\hat f_\cG}$. Since $\hat f_{\cG}$ is monic in $X_n$ and degree at most $D$, it follows that
    \begin{equation}
        \abs{\cL_{\hat f_\cG}} \leq \binom{D}{\leq d} - 1 \label{eqn:general-sparse-size-of-f_G}
    \end{equation}
    Moreover, this computation takes $\poly(s', \binom{D}{\leq d}, n)$ time and each polynomial in $\cL_{\hat f_\cG}$ is $\poly(s', n)$-sparse.

    Next it recursively computes the list $\cL_{f_k, mon}$. For every factor $g$ of $f_k$, where the individual degree of $g$ is at most $d$, there exists a polynomial $h$ in $\cL_{f_k, mon}$ such that $g \prec_d h$. By our induction hypothesis, since $||f_k|| \leq s/2$, we have $\abs{\cL_{f_k, mon}} \leq \binom{D}{\leq d}^{\log s - 1}$ and we can compute this list in time $\poly((s/2)^{d^2 \log n}, \binom{D}{\leq d}^{\log s - 1}, n)$.
    \begin{equation}
        \abs{\cL_{f_k, mon}} \leq \binom{D}{\leq d}^{\log s - 1} \label{eqn:general-sparse-size-of-f_k-mon}
    \end{equation}
    Moreover, each polynomial in $\cL_{f_k, mon}$ is $(s/2)^{4d^2\log n}$-sparse.

    Next, the algorithm uses the lists $\cL_{\hat f_\cG}$ and $\cL_{f_k, mon}$ to construct the list $\cL_{f_\cG, mon}$. It defines $\cL_{f_\cG, mon}$ as follows.
    \begin{equation*}
        \cL_{f_\cG, mon} =  \left\{\hat{g}(\bZ_4, X_n \cdot \left(f_k \circ \cG(\bZ_4)\right))\cdot \frac{h \circ \cG(\bZ_4)}{\left(f_k \circ \cG(\bZ_4)\right)^{\deg_{X_n}(\hat{g})}} \,\middle\vert\, \hat{g} \in \cL_{\hat f_\cG}, h \in \cL_{f_k,mon}\right\}.
    \end{equation*}

    The algorithm then uses $\cL_{f_{\cG},mon}$ and $\cL_{f_k,mon}$ to construct the list $\cL_{mon}$. For any $g'
    (\bZ_4,X_n)\in \cL_{f_{\cG},mon}$, the algorithm applies the reconstruction algorithm from \cref{Thm: Sparse Reconstruction} to find a polynomial $g(\X_n)$ of sparsity\footnote{If the reconstruction algorithm outputs a polynomial of sparsity greater than $s'$, we simply remove it from $\cL_{mon}$. Since the reconstruction algorithm runs in time $\poly(s',D,n)$, the sparsity of the polynomial output by it is at most $\poly(s',D,n)$, which makes it feasible to prune the list this way.} at most $s'$, such that $g(\cG(\bZ_4),X_n) = g'(\bZ_4,X_n)$\footnote{If the reconstruction algorithm outputs a polynomial $g$ such that $g(\cG(\bZ_4),X_n) \neq g'(\bZ_4,X_n)$, then we remove it from $\cL_{mon}$. Checking whether $g(\cG(\bZ_4),X_n) \neq g'(\bZ_4,X_n)$ takes time at most $\poly(s',D,n)$, so this pruning is feasible.}. We further include $\cL_{f_k,mon}$, and thus $\cL_{mon}$ is defined as
    \begin{equation*}
        \cL_{mon} = \left\{ \text{Reconstruct } g(\X_{n}) \text{ from } g'(\bZ_4, X_n) \,\middle\vert\, g' \in \cL_{f_\cG, mon} \right\} \cup \cL_{f_k,mon}.
    \end{equation*}
    Constructing $\cL_{f_\cG,mon}$ takes time polynomial in $\abs{\cL_{\hat f_\cG}}\cdot \abs{\cL_{f_k,mon}}$, $s'$, $D$, and $n$. Since each reconstruction and pruning step takes $\poly(s',D,n)$ time, the list $\cL_{mon}$ is constructed in time $\poly\left(s',\binom{D}{\leq d}^{\log s}, n\right)$.

    We will argue next that for any factor $g$ of $f$ where the individual degree of $g$ is at most $d$, there is a $d$-monomial equivalent polynomial $h$ in $\cL_{mon}$.
    Let $g$ be such a factor. Suppose $g$ does not depend on the variable $X_n$. Then $g$ divides $f_k$. Hence, there exists a $g'$ in $\cL_{f_k, mon}\subset \cL_{mon}$ such that $g \prec_d g'$.

    Now suppose $g$ depends on $X_n$. Let $g$ be of the following form,
    \begin{equation*}
        g(\X_n) = \sum_{i=0}^t g_i(\X_{n-1}) X_n^i
    \end{equation*}
    where $g_t \neq 0$. Then by \cref{lem:factor-of-normalized-poly}, $\hat g | \hat f$  where $\hat g$ is defined by
    \begin{equation*}
        \hat{g}(\X_n) = g(\X_{n-1}, X_n/f_k) \frac{f_k^t}{g_t}.
    \end{equation*}
    In other words,
    \begin{equation*}
        g(\X_n) = \hat{g}(\X_{n-1}, X_n f_k) \frac{g_t}{f_k^t}.
    \end{equation*}
    Note that $g_t | f_k$, and $g_t$ has individual degree at most $d$, which implies that there exists $g'_t$ in $\cL_{f_k, mon}$ such that $g_t \prec_{d} g'_t$. We will show that the polynomial $g'(\X_n)$ defined by
    \begin{equation*}
        g'(\X_n) = \hat{g}(\X_{n-1},X_n f_k)\cdot \frac{g'_t}{f_k^t}
    \end{equation*}
    lies in $\cL_{mon}$. Since $\hat g(\X_n) \mid \hat f(\X_n)$, we have $\hat g(\cG(\bZ_4),X_n) \mid \hat f(\cG(\bZ_4),X_n)$. Furthermore, $\deg_{X_n}(\hat g(\cG(\bZ_4),X_n))$ is at most $d$, and hence $\hat g(\cG(\bZ_4),X_n) \in \cL_{\hat f_\cG}$. By definition of $\cL_{f_{\cG},mon}$, since $\hat g(\cG(\bZ_4),X_n) \in \cL_{\hat f_\cG}$ and $g'_t \in \cL_{f_k,mon}$, we have $g'(\cG(\bZ_4),X_n) \in \cL_{f_{\cG},mon}$. Let $g'(\X_n)$ be of the form
    \begin{equation*}
        g'(\X_n) =\sum_{i=0}^{d} g'_i(\X_{n-1}) X_n^i
    \end{equation*}
    then
    \begin{equation*}
        g'(\cG(\bZ_4),X_n) = \sum_{i=0}^{d} g'_i(\cG(\bZ_4)) X_n^i
    \end{equation*}

    Since $g$ has individual degree at most $d$ and $g \mid f$, by \cref{thm:bsv_general_sparse}, we know that $g$ has sparsity at most $s^{4d^2 \log n}$. Note that the sparsity of $g'$ is equal to the sparsity of $g$, which is $s^{4d^2 \log n}$, and its individual degree is at most $2d$. Thus, for all $i$, $g'_i(\X_{n-1})$ is $s^{4d^2 \log n}$-sparse and has individual degree at most $2d$. Since we have access to the monomial representation of $g'(\cG(\bZ_4),X_n)$, we also have access to the monomial representation of $g'_i(\cG(\bZ_4))$. Thus, by using \cref{Thm: Sparse Reconstruction}, we reconstruct $g'_i(\X_{n-1})$ from $g'_i(\cG(\bZ_4))$ for $0 \leq i \leq d$, thereby reconstructing $g'$.
    This finishes the proof that $\cL_{mon}$ contains the claimed polynomials for this case.

    Next, we prove the bound on the size of $\cL_{mon}$, the runtime of the algorithm, and the sparsity of the polynomials in it.

    Observe that
    \begin{align*}
         \abs{\cL_{mon}} &\leq \abs{\cL_{f_{\cG},mon}} + \abs{\cL_{f_k,mon}} &\text{(Since } \cL_{mon} = \cL_{f_{\cG},mon} \cup \cL_{f_k,mon}\text{)} \\
        &\leq \abs{\cL_{\hat f_{\cG}}} \cdot \abs{\cL_{f_k,mon}}  + \abs{\cL_{f_k,mon}} &\text{(Since } \abs{\cL_{f_{\cG},mon}} \leq \abs{\cL_{\hat f_{\cG}}} \cdot \abs{\cL_{f_k,mon}}\text{)} \\
            &\le \left( \binom{D}{\leq d}-1 \right) \cdot \abs{\cL_{f_k,mon}} + \abs{\cL_{f_k,mon}} & \text{(Using \cref{eqn:general-sparse-size-of-f_G})}\\
            &\le \binom{D}{\leq d}^{\log s} &\text{(Using \cref{eqn:general-sparse-size-of-f_k-mon})}
    \end{align*}
    which proves the required list size bound.

    The total time taken is dominated by the time taken to compute the lists $\cL_{\hat f_{\cG}}$, $\cL_{f_k, mon}$, $\cL_{f_{\cG},mon}$, and $\cL_{mon}$, which takes $\poly\left( \binom{D}{\leq d}^{\log s}, s^{d^2 \log n},n \right)$ time.

    Observe that every element in $\cL_{mon}$ is obtained from an application of $s^{4d^2\log n}$-sparse polynomial reconstruction (by \cref{Thm: Sparse Reconstruction}). Thus, every polynomial in $\cL_{mon}$, including spurious elements not corresponding to factors, is $s^{4d^2\log n}$-sparse.

    We now address the other case where the sparsity of $f_0$ is less than $s/2$. Many of the ingredients are common to the previous case.

    \paragraph{Sparsity of $f_0 < s/2$:}

       \cref{algo:monomial-equivalent-factors_general sparse} reverse-normalizes $f$ to get $\tilde{f}(\X_n) = f(\X_{n-1},X_n\cdot f_0)/f_0$, which is reversed-monic in $X_n$. Note that since $\tilde{f}$ is $\cO(s^D)$-sparse, we do not have the budget to explicitly store its monomial representation. Hence, we represent it by a circuit $C$. Using $\tilde{f}$, the algorithm computes the five-variate polynomial
    \begin{equation*}
        \tilde f_{\cG}(\bZ_4,X_n) = \tilde f (\cG(\bZ_4),X_n)
    \end{equation*}

    The algorithm invokes \cref{lem: Constantvariate factorization} to compute a $\binom{D}{\leq d} - 1$-size list $\cL_{\tilde f_{\cG}}$ such that every nonconstant factor of $\tilde f_\cG$ with $X_n$-degree at most $d$ is in $\cL_{\tilde f_\cG}$.
    Moreover, this computation takes $\poly(s^{d^2 \log n}, \binom{D}{\leq d}^{\log s}, n)$ time and each polynomial in $\cL_{\tilde f_\cG}$ is $\poly(s^{d^2\log n}, n)$-sparse.

    Next it recursively computes the list $\cL_{f_0, mon}$. For every factor $g$ of $f_0$, where the individual degree of $g$ is at most $d$, there exists a polynomial $h$ in $\cL_{f_0, mon}$ such that $g \prec_d h$. By our induction hypothesis, since $||f_0|| \leq s/2$, we have $\abs{\cL_{f_0, mon}} \leq \binom{D}{\leq d}^{\log s - 1}$ and we can compute this list in time $\poly((s/2)^{d^2 \log n}, \binom{D}{\leq d}^{\log s - 1}, n)$.
    Moreover, each polynomial in $\cL_{f_0, mon}$ is $(s/2)^{4d^2\log n}$-sparse.

    Next, the algorithm uses the lists $\cL_{\tilde f_\cG}$ and $\cL_{f_0, mon}$ to construct the list $\cL_{f_\cG, mon}$. It defines $\cL_{f_\cG, mon}$ as follows.
    \begin{equation*}
        \cL_{f_\cG, mon} = \left\{\tilde g(\bZ_4,X_n/\left(f_0\circ \cG(\bZ_4)\right))\cdot h \circ \cG(\bZ_4) \,\middle\vert\, \tilde g \in \cL_{\tilde f_\cG}, h \in \cL_{f_0,mon} \right\}.
    \end{equation*}

    The algorithm then uses $\cL_{f_{\cG},mon}$ and $\cL_{f_0,mon}$ to construct the list $\cL_{mon}$. The list $\cL_{mon}$ is defined as
    \begin{equation*}
        \cL_{mon} = \left\{ \text{Reconstruct } g(\X_{n-1}, X_n) \text{ from } g'(\bZ_4, X_n) \,\middle\vert\, g' \in \cL_{f_\cG, mon} \right\} \cup \cL_{f_0,mon}.
    \end{equation*}
    where the reconstruction is similar to the previous case.

    We will argue next that for any factor $g$ of $f$ where the individual degree of $g$ is at most $d$, there is a $d$-monomial equivalent polynomial $h$ in $\cL_{mon}$.
    Let $g$ be such a factor. Suppose $g$ does not depend on the variable $X_n$. Then $g$ divides $f_0$. Hence, there exists a $g'$ in $\cL_{f_0, mon}\subset \cL_{mon}$ such that $g \prec_d g'$. %

    Now suppose $g$ depends on $X_n$. Let $g$ be of the following form,
    \begin{equation*}
        g(\X_n) = \sum_{i=0}^t g_i(\X_{n-1}) X_n^i
    \end{equation*}
    where $g_0 \neq 0$ (since we had stripped out the maximal monomial divisor initially). Then by \cref{lem:factor-of-rev-normalized-poly}, $\tilde  g | \tilde f$  where $\tilde g$ is defined by
    \begin{equation*}
        \tilde{g}(\X_n) = \frac{g(\X_{n-1}, X_n f_0)}{g_0}.
    \end{equation*}
    In other words,
    \begin{equation*}
        g(\X_n) = \tilde{g}(\X_{n-1}, X_n/f_0) \cdot g_0.
    \end{equation*}
    Note that $g_0 | f_0$, which implies that there exists $g'_0$ in $\cL_{f_0, mon}$ such that $g_0 \prec_{d} g'_0$. We will now show that the polynomial $g'(\X_n)$ defined by
    \begin{equation*}
        g'(\X_n) = \tilde{g}(\X_{n-1},X_n /f_0)\cdot g'_0
    \end{equation*}
    lies in $\cL_{mon}$. Since $\tilde g(\X_n) \mid \tilde f(\X_n)$, we have $\tilde g(\cG(\bZ_4),X_n) \mid \tilde f(\cG(\bZ_4),X_n)$. Furthermore, $\deg_{X_n}(\tilde g(\cG(\bZ_4),X_n))$ is at most $d$, and hence $\tilde g(\cG(\bZ_4),X_n) \in \cL_{\tilde f_\cG}$. By definition of $\cL_{f_{\cG},mon}$, since $\tilde g(\cG(\bZ_4),X_n) \in \cL_{\tilde f_\cG}$ and $g'_0 \in \cL_{f_0,mon}$, we have $g'(\cG(\bZ_4),X_n) \in \cL_{f_{\cG},mon}$. By \cref{thm:bsv_general_sparse}, we know that $g$ has sparsity at most $s^{4d^2 \log n}$, and the sparsity of $g'$ is equal to the sparsity of $g$. Since $g'$ is $s^{4d^2 \log n}$-sparse and has individual degree at most $2d$, by \cref{Thm: Sparse Reconstruction}, $g' \in \cL_{mon}.$ This finishes the proof that $\cL_{mon}$ contains the claimed polynomials for this case.

    The arguments for the bounds on the size of $\cL_{mon}$, the runtime of the algorithm, and the sparsity of the polynomials in $\cL_{mon}$ are the same as in the previous case. Thus, the size of $\cL_{mon}$ is at most $\binom{D}{\leq d}^{\log s}$, the total runtime is $\poly\left( \binom{D}{\leq d}^{\log s}, s^{d^2 \log n},n \right)$, and every polynomial in $\cL_{mon}$ is $s^{4d^2\log n}$-sparse.

\end{proof}

At this point, \cref{algo:monomial-equivalent-factors_general sparse} outputs a list $\cL_{mon}$ containing monomial equivalent factors of $f$. \cref{algo:non-monomial-factors_general_sparse} performs post-processing on this list to obtain a new list $\cL$ that contains all factors (not divisible by a monomial) of $f$. We show this formally by proving \cref{thm:general-sparse-bounded_ind_deg_factor} below.

\begin{proof}[Proof of \cref{thm:general-sparse-bounded_ind_deg_factor}]
    We will show that \cref{algo:non-monomial-factors_general_sparse} outputs the required list of polynomials.
    \cref{algo:non-monomial-factors_general_sparse} first computes the list $\cL_{mon}$ of $s^{4d^2\log n}$-sparse polynomials using \cref{algo:monomial-equivalent-factors_general sparse}. For any $h \in \cL_{mon}$, \cref{algo:non-monomial-factors_general_sparse} then removes the largest monomial factor of $h$ and adds the result to $\cL$. Note that after removing the largest monomial factor of $h$, the sparsity does not change. We prove the correctness and runtime of \cref{algo:non-monomial-factors_general_sparse} below.

    Let $g \mid f$, where $g$ has individual degree at most $d$ and no monomial divides $g$. We will show that $g \in \cL$. By \cref{thm:general-sparse-recursive-algorithm}, there exists $h \in \cL_{mon}$ such that $g \prec_d h$. By removing the largest monomial factor of $h$, we ensure $g \in \cL$.
    Note that $\abs{\cL} \leq \abs{\cL_{mon}} \leq \binom{D}{\leq d}^{\log s}$, which proves the required bound on list size. Furthermore, the time complexity of the algorithm is dominated by the computation of $\cL_{mon}$, which takes time $\poly(s^{d^2 \log n}, \binom{D}{\leq d}^{\log s},n)$ by \cref{thm:general-sparse-recursive-algorithm}.
\end{proof}

Along with an algorithm for factoring general sparse polynomials, \cref{thm:general-sparse-bounded_ind_deg_factor} also gives a bound on the number of factors of individual degree at most $d$ (not divisible by a monomial) of general sparse polynomials. In fact, the example from \cite[Remark 8.2]{chuyoon2026factorization} shows that this bound is tight.

\begin{theorem}[Divisor bound for general sparse polynomials]
    \label{thm:divisor_bound_general_sparse}
Let $f$ be any $n$-variate $s$-sparse polynomial of individual degree $D$. Then $f$ can have at most $\binom{D}{\leq d}^{\log s}$ factors of individual degree at most $d$ that are not divisible by any monomial. Moreover, this bound is tight.
\end{theorem}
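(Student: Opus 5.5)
The plan is to prove the upper bound by a counting argument that mirrors the recursion of \cref{algo:monomial-equivalent-factors_general sparse}, without running the algorithm and without the sparsity bound. For a non-zero polynomial $f$, let $N(f)$ be the number of factors of $f$, up to scalar multiples, that have individual degree at most $d$ and are not divisible by any monomial. The claim to prove is $N(f)\leq \binom{D}{\leq d}^{\log s}$ whenever $\|f\|\leq s$ and $f$ has individual degree at most $D$. Stripping the maximal monomial divisor of $f$ does not change $N(f)$, so we may assume $f$ has none. The induction is on $s$.

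\textbf{Base case.} If $s=1$, then $f$ is a constant and $N(f)=1$.

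\textbf{Inductive step.} Pick a variable $Y=X_n$ on which $f$ depends, and write $f=\sum_{i=0}^k f_iY^i$. As in the proof of \cref{thm:general-sparse-recursive-algorithm}, both $f_0$ and $f_k$ are non-zero, and one of them has sparsity at most $s/2$; suppose it is $f_k$ (the $f_0$ case is symmetric, using \cref{lem:factor-of-rev-normalized-poly} and reversal). Let $g$ be a counted factor.
\begin{itemize}
\item If $g$ does not depend on $Y$, then $g\mid f_k$. Hence $g$ is counted in $N(f_k)$; note that $N$ of $f_k$ is unchanged by removing its monomial divisor.
\item If $g$ has $Y$-degree $t\geq 1$, then by \cref{lem:factor-of-normalized-poly} the polynomial $\hat g$ is a monic factor of $\hat f$ of $Y$-degree $t\leq d$, and $g=\hat g(\X,Yf_k)\,g_t/f_k^t$. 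Here $g_t\mid f_k$ and $g_t$ has individual degree at most $d$. Write $g_t=M g_t'$ with $M$ a monomial and $g_t'$ counted in $N(f_k)$. Then the pair $(\hat g,g_t')$ determines $g$ up to a monomial multiple. Since $g$ is not divisible by any monomial, $g$ is the unique representative of that class with no monomial divisor, so the map $g\mapsto(\hat g,g_t')$ is injective.
\end{itemize}

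\textbf{Counting the monic factors.} Over $\overline{\F(\X_{n-1})}$, a monic polynomial of $Y$-degree $k\leq D$ splits into $k$ linear factors. Every monic factor of degree between $1$ and $d$ corresponds to a sub-multiset of these roots, hence to some subset of the $k$ root positions. So there are at most $\binom{D}{\leq d}-1$ choices for $\hat g$.

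\textbf{Recurrence.} Combining the two bullets gives
\[
N(f)\leq \left(\binom{D}{\leq d}-1\right)N(f_k)+N(f_k)=\binom{D}{\leq d}N(f_k).
\]
Since $\|f_k\|\leq s/2$, induction gives $N(f)\leq \binom{D}{\leq d}\cdot\binom{D}{\leq d}^{\log s-1}=\binom{D}{\leq d}^{\log s}$. (Alternatively, the upper bound follows at once from \cref{thm:general-sparse-bounded_ind_deg_factor}: distinct factors must be distinct entries of $\cL$.)

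\textbf{Tightness.} Let $s=2^m$, and work over a field containing $D$ distinct $D$-th roots of unity, for instance with characteristic not dividing $D$, passing to an extension if needed. Take
\[
f=\prod_{j=1}^m\left(X_j^D-1\right),
\]
which is $2^m$-sparse with individual degree $D$. For each $j$, any subset $S_j$ of the $D$ roots of unity with $|S_j|\leq d$ gives a factor $\prod_{\omega\in S_j}(X_j-\omega)$. The products $\prod_j\prod_{\omega\in S_j}(X_j-\omega)$ are pairwise distinct by unique factorization. Each has individual degree at most $d$, and none is divisible by a monomial, since no factor vanishes identically when any $X_j=0$. This yields $\binom{D}{\leq d}^m=\binom{D}{\leq d}^{\log s}$ factors; this is the example of \cite[Remark 8.2]{chuyoon2026factorization}.

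\textbf{Main obstacle.} The delicate step is the injectivity bookkeeping for monomials. Specifically, one must check that $g_t$ being determined only up to a monomial still pins down $g$, which uses that $g$ itself has no monomial divisor. One must also check that counting $g_t'$ through $N(f_k)$ is legitimate even though $f_k$ may itself have a monomial divisor.
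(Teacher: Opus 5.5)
Your proof is correct, and your tightness example is the same as the paper's. The upper bound is where you differ. The paper gives no separate counting argument; it only notes that the list $\cL$ from \cref{thm:general-sparse-bounded_ind_deg_factor} has size $\binom{D}{\leq d}^{\log s}$ and contains every such factor, which is your parenthetical alternative.

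Your main argument instead runs the algorithm's recurrence directly on the set of factors. You use the map $g\mapsto(\hat g,g_t')$ from $Y$-dependent counted factors into pairs consisting of a monic factor of $\hat f$ of $Y$-degree between $1$ and $d$ and a counted factor of $f_k$. The condition that $g$ has no monomial divisor removes the monomial ambiguity in $g_t$, so the map is injective.

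This gives a self-contained, purely structural proof. It uses only \cref{lem:factor-of-normalized-poly}, \cref{lem:factor-of-rev-normalized-poly} and unique factorization over $\overline{\F(\X_{n-1})}$. It does not need the sparsity bound of \cref{thm:bsv_general_sparse}, the Klivans--Spielman generator and reconstruction, or any field-size assumption.

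The paper's route is a one-liner once the algorithm is in hand, but it inherits the algorithm's dependencies. In particular, true factors survive the reconstruction step only because of \cref{thm:bsv_general_sparse}. Your argument shows the bound holds regardless of how sparse the factors are.
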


\begin{proof}
    \Cref{thm:general-sparse-bounded_ind_deg_factor} gives a $\binom{D}{\leq d}^{\log s}$-size list $\cL$ that contains all factors of individual degree at most $d$ (not divisible by a monomial) of $f$. This shows the required divisor bound.

    To prove that the bound is tight, consider the polynomial $f(\X) = \prod_{i=1}^{\log s}(X_i^D -1)$ over $\C$. Observe that
    \begin{equation*}
        f(\X) = \prod_{i=1}^{\log s}(X_i^D - 1) = \prod_{i=1}^{\log s} \prod_{j=0}^{D-1} (X_i - \omega^j),
    \end{equation*}
    where $\omega \in \C$ is the $D$\emph{th} primitive root of unity.
    Thus, the number of factors of $f$ of individual degree at most $d$ is exactly $\binom{D}{\leq d}^{\log s}$.
\end{proof}

As an immediate consequence of our algorithm, we can find all constant-degree factors of an $s$-sparse polynomial in quasipolynomial time by pruning the list using the divisibility testing algorithm of \cite{Forbes15}. In doing so, we recover some results from \cite{KRS24,DST24}. We state the result formally in the following corollary.

\begin{corollary}\label{cor:Constant-Deg-Factors-of-Sparse}
    Let $f \in \F[\X_n]$ be an $s$-sparse polynomial of individual degree at most $D$. Let $\Char(\F) \geq \poly(D)$. We can find all factors $g$ of $f$ such that $\deg(g) = \cO(1)$ in time $\poly(s,n,D)^{\cO(\log s)}$.
\end{corollary}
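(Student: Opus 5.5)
We will run the algorithm of \cref{thm:general-sparse-bounded_ind_deg_factor} and then prune its output with the divisibility test of \cref{thm:divisibility-testing-constant-deg}.

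First, we remove the maximal monomial divisor $M$ of $f$. This takes linear time from the sparse representation, as in \cref{rem:factor-not-divisible-by-monomial}. The monomial part of any factor $g$ of $f$ is a product of variables dividing $M$. So every constant-degree factor has the form $g = M' g^\ast$, where $M'$ is a sub-monomial of $M$ of constant degree and $g^\ast$ is a constant-degree factor of $f/M$ with no monomial divisor. There are only $n^{\cO(1)}$ such $M'$.

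Next, set $d = \cO(1)$ equal to the degree bound. A factor of degree at most $d$ has individual degree at most $d$. Apply \cref{thm:general-sparse-bounded_ind_deg_factor} to $f/M$ with this $d$. Its running time is $\poly(s^{d^2\log n}, \binom{D}{\leq d}^{\log s}, n)$. With $d$ constant this is $s^{\cO(\log n)}\cdot D^{\cO(\log s)}\cdot \poly(n)$. The output is a list $\cL$ of size $D^{\cO(\log s)}$ that contains every such $g^\ast$.

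We then discard every candidate in $\cL$ of total degree larger than $d$; this is easy since the candidates are given as sparse polynomials. For each surviving $h$ and each sub-monomial $M'$, we run the Forbes test (\cref{thm:divisibility-testing-constant-deg}) on $M'h$ against $f$. This test needs characteristic at least $\poly(D)$, which is our hypothesis. It returns the multiplicity of $M'h$ in $f$, and we keep exactly the candidates with positive multiplicity. Each test costs $\poly(s,n,D)^{\cO(\log s)}$, and the number of tests is $D^{\cO(\log s)}n^{\cO(1)}$. The pruned list is then exactly the set of constant-degree factors, with multiplicities available if desired.

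The main obstacle is the $s^{\cO(\log n)}$ term coming from $s^{d^2\log n}$, which is not obviously bounded by $\poly(s,n,D)^{\cO(\log s)}$. The fix is the remark after \cref{thm:main2}: we only need to recover $s'$-sparse factors with $s' = \binom{n+d}{d} = n^{\cO(1)}$, because any polynomial of constant degree has at most that many monomials. Running the recursion with the reconstruction sparsity parameter set to this $s'$ in place of $s^{4d^2\log n}$ gives total time $\poly(n^{\cO(1)}, D^{\cO(\log s)}, n)$. So I would invoke that refined version instead, and the final bound becomes $\poly(s,n,D)^{\cO(\log s)}$. A secondary subtlety is the field: \cref{thm:main2} holds over arbitrary fields, so only the Forbes test uses the characteristic assumption.
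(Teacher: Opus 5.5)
Your proposal is correct and is essentially the paper's proof: apply \cref{thm:general-sparse-bounded_ind_deg_factor} with constant $d$, discard candidates of large degree, try all constant-degree monomial multiples, and prune with the Forbes test (\cref{thm:divisibility-testing-constant-deg}). Your detour through the sparsity-$s'$ remark is unnecessary, because $s^{d^2\log n} = n^{d^2\log s}$, so for constant $d$ the theorem's running time as stated is already $\poly(s,n,D)^{\cO(\log s)}$; the ``main obstacle'' you identify does not exist, and you need not rely on a remark the paper states without proof.
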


\begin{proof}
    Using \cref{thm:general-sparse-bounded_ind_deg_factor}, we can find all factors $g$ of $f$ such that the individual degree of $g$ is $\cO(1)$ and no monomial divides it. We can prune the list of factors to only contain polynomials of degree $\cO(1)$. Every constant-degree factor of $f$ can be written as a monomial times such a polynomial, and there are only $\poly(n,D)$ monomials of degree $\cO(1)$ to try. Then, by trying all these monomial multiples (which keeps degree $\cO(1)$) and using \cref{thm:divisibility-testing-constant-deg}, we can find the list of all factors of $f$ of constant degree in time $\poly(s,n,D)^{\cO(\log s)}$.
\end{proof}

\begin{algorithm}[H]
    \caption{List of monomial equivalent bounded individual degree factors of general sparse polynomials (possibly with spurious entries)}
    \label{algo:monomial-equivalent-factors_general sparse}
    \KwData{$f \in \F[\X_n]$, sparsity parameter $s$, individual degree parameter $d$}
    \KwResult{$\cL_{mon} \subset \F[\X_n]$, list of circuits over $\F$ with division gates}

    \quad\\
    \texttt{// Pre-processing}\\
    \quad\\
    \texttt{// Base Case} \\
    \If{$f$ is a constant polynomial}
    {
    $\cL_{mon} = \{f \cdot (X_1 \cdot X_2 \cdots X_n)^d\}$ \\
    \Return{$\cL_{mon}$}
    }
    \quad\\
    Without loss of generality, assume $f$ depends on the variable $X_n$; otherwise, rename the variables appropriately.\\
    Let $s' = s^{4 d^2 \log{n}}$ \\
    Instantiate $(n-1)$-variate generator $\cG(\bZ_4) \coloneq \cG^{KS}_{(s',2d, n-1)}(Z_1,Z_2,Z_3,Z_4)$ from \cref{Thm: Sparse Reconstruction}.\\

    Let $k = \deg_{X_n}(f)$, and find $f_0,\ldots,f_k \in \F[\X_{n-1}]$ such that $f = \displaystyle\sum_{i=0}^k f_i(\X_{n-1})X_n^i$\\

    \texttt{// Normalize $f$}\\
    \If{$||f_k|| \leq s/2$}{
    Compute $\hat{f} = f(\X_{n-1},X_n/f_k)\cdot f_k^{k-1}$ as a circuit \\
    Compute $\hat{f}_\cG = \hat{f}(\cG(\bZ_4), X_n)$ as a sparse polynomial\\
    Compute $\cL_{\hat f_\cG} =$  \{All nonconstant factors of $\hat{f}_\cG$ with $X_n$-degree at most $d$\} using \cref{lem: Constantvariate factorization}\\
    \qquad\\
    \texttt{// Monomial equivalent individual degree $d$ factors of $f_k$}\\
    $\cL_{f_k,mon} = \cref{algo:monomial-equivalent-factors_general sparse}(f_k,s/2,d)$ \\

    \qquad\\
    $\cL_{f_\cG, mon} =  \left\{\hat{g}(\bZ_4, X_n \cdot \left(f_k \circ \cG(\bZ_4)\right))\cdot \frac{h \circ \cG(\bZ_4)}{\left(f_k \circ \cG(\bZ_4)\right)^{\deg_{X_n}(\hat{g})}} \,\middle\vert\, \hat{g} \in \cL_{\hat f_\cG}, h \in \cL_{f_k,mon}\right\} $\\

    \qquad\\
    \texttt{// Use \cref{Thm: Sparse Reconstruction} to reconstruct sparse polynomials below}\\
    $\cL_{mon} = \left\{ \text{Reconstruct } g(\X_{n-1}, X_n) \text{ from } g'(\bZ_4, X_n) \,\middle\vert\, g' \in \cL_{f_\cG, mon} \right\} \cup \cL_{f_k,mon}$

    }
    \qquad \\
    \label{alg:last-1}%
    \end{algorithm}

 \begin{algorithm}[H]
    \addtocounter{algocf}{-1}
    \caption{List of monomial equivalent bounded individual degree factors of general sparse polynomials (possibly with spurious entries) (Continued)}

    \texttt{// Reverse-normalize $f$}\\
    \Else{
    Compute $\tilde f = f(\X_{n-1},X_n f_0)/f_0$ as a circuit \\ %
    
    Compute $\tilde{f}_\cG = \tilde{f}(\cG(\bZ_4), X_n)$ as a sparse polynomial\\
    Compute $\cL_{\tilde f_\cG} =$  \{All nonconstant factors of $\tilde{f}_\cG$ with $X_n$-degree at most $d$\} using \cref{lem: Constantvariate factorization}\\

    \qquad\\
    \texttt{// Monomial equivalent individual degree $d$ factors of $f_0$}\\
    $\cL_{f_0,mon} = \cref{algo:monomial-equivalent-factors_general sparse}(f_0,s/2,d)$ \\

    $\cL_{f_\cG, mon} = \left\{\tilde g(\bZ_4,X_n/\left(f_0\circ \cG(\bZ_4)\right))\cdot h \circ \cG(\bZ_4) \,\middle\vert\, \tilde g \in \cL_{\tilde f_\cG}, h \in \cL_{f_0,mon} \right\} $\\

    \qquad\\
    \texttt{// Use \cref{Thm: Sparse Reconstruction} to reconstruct sparse polynomials below}\\
    $\cL_{mon} = \left\{ \text{Reconstruct } g(\X_{n-1}, X_n) \text{ from } g'(\bZ_4, X_n) \,\middle\vert\, g' \in \cL_{f_\cG, mon} \right\} \cup \cL_{f_0,mon}$

    }

    \qquad \\

    \Return{$\cL_{mon}$}

\end{algorithm}

\begin{algorithm}[H]

    \caption{List of non-monomial bounded individual degree factors of general sparse polynomials (possibly with spurious entries)}
    \label{algo:non-monomial-factors_general_sparse}
    \KwData{$f \in \F[\X_n]$, sparsity parameter $s$, individual degree parameter $d$}
    \KwResult{$\cL \subset \F[\X_n]$, list of circuits over $\F$}

    \quad\\

    $\cL_{mon}$ =  \cref{algo:monomial-equivalent-factors_general sparse}$(f, s, d)$ \\
    $\cL = \emptyset$

    \quad\\

    \texttt{// Removing monomial factors from true factors}\\

    \For{$g \in \cL_{mon}$}{
        Find the largest monomial factor $M$ that divides $g$\\
        $\cL$ = $\cL \cup \left\{ g/M\right\}$
    }

    \Return{$\cL$}
\end{algorithm}

\section{Future Directions and Open Questions}
\label{sec:future-directions-and-open-questions}

Some interesting and natural open problems are the following.

\begin{enumerate}
    \item The most interesting open question is to prune our output lists and return only the true factors. This question arises for both our factorization algorithms. Consider the following problem: given an $s$-sparse polynomial $f$, find all its multilinear factors. In the setting of \cref{thm:main2}, we show that we can output a list of quasipolynomially sparse multilinear polynomials containing all such factors. To prune the list, we would need a divisibility test for checking if a given multilinear quasipolynomially sparse polynomial $g$ divides another sparse polynomial $f$. It can be shown that this can be reduced to PIT for constant-depth circuits, and hence has subexponential-time PIT algorithms. However, it would be interesting to get something more efficient in this case, for instance, a quasipolynomial-time algorithm.
    \item As of now, \cref{thm:main1} only works over fields of characteristic $0$ or characteristic greater than $d$. It would be nice to have the result over all fields.
\end{enumerate}

\textbf{Acknowledgements:} Somnath is grateful to Pranjal Dutta (NTU), Saswata Mukherjee (NUS) and Amit Sinhababu (CMI) for the helpful discussions on polynomial factorization. Shanthanu is grateful to Mrinal Kumar (TIFR) and Pratik Shastri (IMSc) for insightful discussions about sparse polynomials and, in particular, sparse polynomial factorization algorithms.

\bibliographystyle{alpha}
\bibliography{ref.bib}
\appendix
\section{Omitted Proofs from \cref{subsec:resultant_discriminant}}

\label{Appendix:resultant-discriminant}

In this section, we give the omitted proofs of some lemmas from \cref{subsec:resultant_discriminant}.

\begin{proof}[Proof of \cref{lem:resultant-prop}]
    Let $\deg(f)= d_1$ and $\deg(g) = d_2$. Define $h \coloneqq \gcd(f,g)$ and $D = \deg(h)$. By the definition of the Sylvester matrix,
    \begin{equation*}
        \ker(\Syl(f,g)) \cong
        \{(w_1,w_2)\in\F[X]\times\F[X]\mid
        \deg(w_1)<d_2,\ \deg(w_2)<d_1,\ \text{and } w_1f+w_2g=0\}.
    \end{equation*}
    Let $\hat f = f/h$ and $\hat g = g/h$. Then for any $(w_1,w_2) \in \ker(\Syl(f,g))$,
    \begin{equation*}
        w_1 \hat f + w_2 \hat g = 0.
    \end{equation*}
    Since $\gcd(\hat f, \hat g)=1$, it follows that $\hat g \mid w_1$ and $\hat f \mid w_2$. Thus, there is a polynomial $e(X)\in\F[X]$ such that
    \begin{equation*}
        w_1 = \hat g \cdot e \qquad w_2 = -\hat f \cdot e
    \end{equation*}
    The degree constraints imply $\deg(e)<D$, because $\deg(\hat g)=d_2-D$ and $\deg(\hat f)=d_1-D$.

    Conversely, every polynomial $e\in\F[X]$ with $\deg(e)<D$ gives a kernel element $(\hat g e,-\hat f e)$, since $\deg(\hat g e)<d_2$, $\deg(\hat f e)<d_1$, and $(\hat g e)f-(\hat f e)g=0$. Hence
    \[
        \ker(\Syl(f,g)) \cong
        \{(\hat g e,-\hat f e)\in\F[X]\times\F[X]\mid e\in\F[X]\text{ and }\deg(e)<D\}.
    \]
    Therefore $\ker(\Syl(f,g))$ is isomorphic to the vector space of polynomials of degree less than $D$, and hence
    \[
        \dim\ker(\Syl(f,g))=D=\deg(\gcd(f,g)).
    \]
\end{proof}

\begin{proof}[Proof of \cref{cor:resultant-prop}]
    We have
    \[
        \prod_i h_i^{e_i-1} \mid f
        \qquad\text{and}\qquad
        \prod_i h_i^{e_i-1} \mid f',
    \]
    and hence $\prod_i h_i^{e_i-1}\mid \gcd(f,f')$. Therefore
    \begin{equation*}
        \deg(\gcd(f,f')) \geq \sum_i (e_i-1)\deg(h_i).
    \end{equation*}
    By \cref{lem:resultant-prop},
    \begin{equation*}
        \dim\ker(\Syl(f,f')) = \deg(\gcd(f,f')) \geq \sum_i (e_i-1)\deg(h_i).
    \end{equation*}

    Now suppose the $h_i$ are distinct pairwise coprime irreducible polynomials. Since $\Char(\F)=0$ or $\Char(\F)>d$, each $h_i$ is coprime to its derivative. Thus the only common factors of $f$ and $f'$ come from repeated powers of the $h_i$, and
    \begin{equation*}
        \gcd(f,f')=\prod_i h_i^{e_i-1}.
    \end{equation*}
    Applying \cref{lem:resultant-prop} again gives equality.
\end{proof}

\section{Omitted Proofs from \cref{Subsec: Power-series root}}

Here, we add the proof of the version of Hensel's lemma (\cref{Hensel lemma}) we use. We also provide the proof of \cref{Closed form root}.
 \subsection{Proof of Hensel's Lemma}  \label{Proof: Hensel lemma}

\begin{proof}[Proof of \cref{Hensel lemma}]
    Note that it is enough to prove the above theorem for each $h_i(\X,Y)$, as the polynomials $h_i(0,Y)$ are pairwise coprime since $f(\X,Y)$ is Hensel-ready at $\X=0$. Furthermore, each $h_i(0,Y)$ is square-free. We first prove this for $n=1$, i.e., $\X=\{X\}$.

     Let $\deg_Y(h_i)=d$ and $\cB=\{\beta_1,\dots,\beta_d\}$ be the set of $d$ distinct roots of $h_i(0,Y)$ in $\overline{\F}$. Pick any $\beta\in\cB$. We apply Newton iteration (aka Newton-Raphson Method) starting at $\beta$. Suppose, for some $r\geq0$, we have $a_0=\beta,a_1,\dots, a_r\in\F(\beta)$ such that for $a^{(r)}(X):=a_0+a_1X+\dots+a_rX^r$, $h_i(X,a^{(r)}(X))=0\bmod X^{r+1}$. Then we will recursively compute $a_{r+1}$.
     \begin{claim}
         There is a unique $a_{r+1}\in\F(\beta)$ such that, for $a^{(r+1)}(X):=a^{(r)}(X)+a_{r+1}X^{r+1}$, $h_i(X,a^{(r+1)}(X))=0\bmod X^{r+2}$.
     \end{claim}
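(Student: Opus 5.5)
We argue by induction on $r$, in the case $n=1$, where $h_i$ is monic in $Y$ and $h_i(0,Y)$ is square-free with $\beta$ a simple root. Write $a^{(r+1)}(X)=a^{(r)}(X)+a_{r+1}X^{r+1}$ with $a_{r+1}$ still to be chosen, and expand $h_i(X,Y)$ in a Taylor series in $Y$ around $Y=a^{(r)}(X)$:
\[
    h_i\bigl(X,a^{(r)}(X)+a_{r+1}X^{r+1}\bigr)
    = h_i\bigl(X,a^{(r)}(X)\bigr) + a_{r+1}X^{r+1}\,\partial_Y h_i\bigl(X,a^{(r)}(X)\bigr) + X^{2r+2}\cdot(\text{stuff}).
\]
Since $h_i$ is a polynomial with coefficients in $\F[X]$ and $a^{(r)}\in\F(\beta)[X]$, all the higher Taylor terms lie in $X^{2r+2}\F(\beta)[X]$. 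As $2r+2\geq r+2$, these terms vanish modulo $X^{r+2}$.

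By the induction hypothesis, $h_i(X,a^{(r)}(X))\equiv 0 \bmod X^{r+1}$. So we may write $h_i(X,a^{(r)}(X)) = c\,X^{r+1} \bmod X^{r+2}$ for a unique $c\in\F(\beta)$. Next, reduce the derivative term modulo $X$: we get $\partial_Y h_i(X,a^{(r)}(X)) \equiv \partial_Y h_i(0,\beta)\pmod X$. The congruence we need therefore becomes the single linear equation
\[
    c + a_{r+1}\,\partial_Y h_i(0,\beta) = 0
\]
over $\F(\beta)$.

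This equation has the unique solution $a_{r+1}=-c/\partial_Y h_i(0,\beta)$, provided $\partial_Y h_i(0,\beta)\neq 0$. That non-vanishing is the one nontrivial input, and it comes from the square-freeness of $h_i(0,Y)$. Because $h_i$ is monic in $Y$, we have $\partial_Y h_i(0,Y)=\frac{d}{dY}\bigl(h_i(0,Y)\bigr)$. A simple root of a polynomial is never a root of its formal derivative: writing $h_i(0,Y)=(Y-\beta)u(Y)$ with $u(\beta)\neq0$ gives derivative value $u(\beta)$ at $\beta$.

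Finally, $c$ is obtained from polynomial arithmetic over $\F(\beta)$, so $c\in\F(\beta)$, and hence $a_{r+1}\in\F(\beta)$ as well.

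The main obstacle is making sure the congruence genuinely collapses to a single linear equation with a nonzero coefficient. This needs two facts: that $h_i$ has coefficients in $\F[X]$, so the higher Taylor terms are divisible by $X^{2r+2}$, and that $\beta$ is a simple root of $h_i(0,Y)$. The simplicity of $\beta$ is exactly what the Hensel-ready hypothesis ensures, via $\Disc_Y(h_i)(0)\neq0$.
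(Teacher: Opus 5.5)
Your proof is correct and follows the paper's argument essentially verbatim: you expand in $Y$ around $a^{(r)}(X)$, write $h_i(X,a^{(r)}(X))\equiv cX^{r+1}$, reduce $\partial_Y h_i(X,a^{(r)}(X))$ to $\partial_Y h_i(0,\beta)$ modulo $X$, and solve the resulting linear equation, using $\partial_Y h_i(0,\beta)\neq 0$ from square-freeness. You also spell out why the higher Taylor terms vanish and why a simple root is not a root of the derivative, which the paper leaves implicit; the appeal to monicity there is unnecessary, since $\partial_Y$ commutes with setting $X=0$ anyway.
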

     \begin{proof}
         Since $h_i(0,Y)$ is square-free, $\partial_Yh_i(0,\beta)\neq0$. Now we have \begin{align*}
             &h_i(X,a^{(r+1)}(X))=0\bmod X^{r+2}\\
             \iff &h_i(X,a^{(r)}(X)+a_{r+1}X^{r+1})=0\bmod X^{r+2}\\
             \iff& h_i(X,a^{(r)}(X))+{a_{r+1}}X^{r+1}{\partial_Yh_i(X,a^{(r)}(X)})=0\bmod X^{r+2}\tag{Taylor expansion around $(X,a^{(r)}(X))$}
         \end{align*}
         Note $h_i(X,a^{(r)}(X))=0\bmod X^{r+1}$, hence $h_i(X,a^{(r)}(X))=cX^{r+1}\bmod X^{r+2}$ for some $c\in\F(\beta)$. Also note that, modulo $X^{r+2}$,
         \[
         {a_{r+1}}X^{r+1}{\partial_Yh_i(X,a^{(r)}(X)})=a_{r+1}\partial_Yh_i(0,\beta)X^{r+1}.
         \]
         Hence we can conclude $h_i(X,a^{(r+1)}(X))=0\bmod X^{r+2}\iff a_{r+1}=-\frac{c}{\partial_Yh_i(0,\beta)}.$
     \end{proof}
     Hence we can set $\Phi_\beta(X)=\sum_ra_rX^r$ and we get the unique power series root with constant term $\beta$. Since $\overline{\F}[[X]][Y]$ is a UFD, we have the unique factorization $h_i(X,Y)=\prod_\beta(Y-\Phi_\beta(X))$. We also remark that there is another nice proof using the fixed-point theorem in \cite{ruiz1993}.

     For the case when $n\geq 2$, introduce a new variable $T$ and consider
     \[
        H_i(T,Y):=h_i(TX_1,\dots,TX_n,Y)\in \F[\X][T,Y].
     \]
     Applying the one-variable argument to $H_i$ as a polynomial in $T$ gives a unique root
     \[
        \Psi_\beta(T)=\sum_{r\geq0} a_r(\X)T^r\in \F(\beta)[\X][[T]]
     \]
     with $\Psi_\beta(0)=\beta$ and $H_i(T,\Psi_\beta(T))=0$. Since the coefficient of $T^r$ in $H_i$ is homogeneous of $\X$-degree $r$, uniqueness implies that each $a_r(\X)$ is homogeneous of degree $r$. Setting $T=1$ gives
     \[
        \Phi_\beta(\X):=\Psi_\beta(1)=\sum_{r\geq0}a_r(\X)\in \F(\beta)[[\X]],
     \]
     and this is the unique power series root of $h_i(\X,Y)$ with constant term $\beta$.

\end{proof}

 \subsection{Proof of \cref{Closed form root}}\label{Proof: closed form root}

Let $F(T,Y)$ be a monic polynomial in $Y$ and Hensel-ready at $T=0$. For a root $\beta$ of $F(0,Y)$ with multiplicity $e$, define $$R_{e}(Z):=Z+\sum_{m\geq 0}^{2e(D+1)}[Y^{em+e-2}]\left\{\left(\frac{e!}{\partial_{Y^{e}}F(0,Z)}\right)^{m+1}\frac{\partial_{Y}F(T,Y+Z)}{e}\left(\frac{\partial_{Y^{e}}F(0,Z)}{e!}Y^e-F(T,Y+Z)\right)^m\right\} $$
 Here $[Y^r]\{P(Y)\}$ denotes the coefficient of $Y^r$ in $P(Y)$.

 For a bivariate power series $\varphi(X,Y)$, define the diagonal operator $$\cD(\varphi)(Z):=\sum_{r\geq0}([X^rY^r]\{\varphi(X,Y)\})Z^r.$$

 We will use the following consequence of Furstenberg's theorem\footnote{\cite{Furstenberg67} gave the exact form when $e=1$}.
\begin{lemma}[Theorem 3.1 in \cite{BKRRSS25a}]
    Assume that $0$ is a root of $F(0,Y)$ of multiplicity $e$, and let $\varphi(T)$ be the unique power series root of $F(T,Y)$ corresponding to $0$. Then
    $$\varphi(T)=\cD\Big(\frac{Y^2 \cdot \partial_YF(TY,Y)}{e \cdot F(TY,Y)}\Big).$$

\end{lemma}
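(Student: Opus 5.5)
The plan is to use a formal residue (Lagrange-inversion style) computation. This is Furstenberg's diagonal formula, extended from simple roots to a root of multiplicity $e$.

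\textbf{Step 1: factor $F$ near the root.} By Hensel's lemma (\cref{Hensel lemma}) applied to $F(T,Y)$, which is Hensel-ready at $T=0$, we get
\[
F(T,Y)=(Y-\varphi(T))^e\,G(T,Y),
\]
where $\varphi(0)=0$ and $G(0,0)\neq 0$. Logarithmic differentiation then gives
\[
\frac{\partial_Y F}{e\,F}=\frac{1}{Y-\varphi(T)}+\frac{\partial_Y G}{e\,G}.
\]

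\textbf{Step 2: substitute $T\mapsto TY$ and multiply by $Y^2$.} The two terms become
\[
\frac{Y^2}{Y-\varphi(TY)}
\qquad\text{and}\qquad
\frac{Y^2\,\partial_Y G(TY,Y)}{e\,G(TY,Y)}.
\]
The second term is a genuine power series in $T,Y$, since $G(0,0)\neq0$. Every monomial in it has the form $T^aY^b$ with $b\ge a+2$. Such a monomial contributes nothing to the diagonal, because the diagonal only collects monomials with equal $T$- and $Y$-exponents.

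For the first term, write $\varphi(TY)=TY\,\psi(TY)$, which is possible because $\varphi(0)=0$. Then
\[
\frac{Y^2}{Y-\varphi(TY)}=\frac{Y}{1-T\psi(TY)}=\sum_{m\ge0} Y\,T^m\psi(TY)^m .
\]
We need to extract the coefficient of $T^rY^r$ from this sum. In $\psi(TY)^m$ every monomial has equal $T$- and $Y$-exponents. Multiplying by $YT^m$ therefore gives $Y$-exponent minus $T$-exponent equal to $1-m$. This difference is zero only when $m=1$. The $m=1$ term is $YT\psi(TY)=\varphi(TY)$, whose diagonal is exactly $\varphi(Z)$.

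\textbf{Main obstacle: the expansion ring.} The delicate point is that $Y^2/(Y-\varphi(TY))$ must be read in the same ring in which $\cD$ of the full expression $Y^2\partial_YF(TY,Y)/(eF(TY,Y))$ is taken. The natural choice is $\overline{\F}[[T,Y]]$. In that ring the division by $F(TY,Y)$ is only legitimate once we check that $F(TY,Y)=Y^e\cdot(\text{unit})$. This holds because
\[
(Y-\varphi(TY))^e=Y^e\,(1-T\psi(TY))^e
\]
and $G(TY,Y)$ is a unit. So $\frac{Y^2\partial_YF(TY,Y)}{eF(TY,Y)}$ is an honest power series, since $e\ge1$ gives $Y^2/Y=Y$ with no negative powers. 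This justifies the geometric-series expansion in Step 2 and makes the splitting into two terms valid termwise.

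Combining the two contributions, the diagonal is $\varphi(Z)$, which is the claimed formula.
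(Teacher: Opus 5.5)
Your proof is correct. It necessarily takes a different route from the paper, because the paper gives no proof of this statement at all. The paper imports it as Theorem~3.1 of \cite{BKRRSS25a}, a multiplicity-$e$ form of Furstenberg's diagonal formula, and uses it as a black box: it expands the diagonal as a geometric series around $cY^e$ to obtain the finite expression $R_e$.

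You derive the formula directly:
\begin{itemize}
\item Hensel-readiness gives $F=(Y-\varphi)^eG$ with $G(0,0)\neq0$.
\item The logarithmic derivative splits $\partial_YF/(eF)$ into $1/(Y-\varphi)$ plus a power series.
\item After $T\mapsto TY$ and multiplication by $Y^2$, the $G$-part has only monomials $T^aY^b$ with $b\geq a+2$, so it misses the diagonal.
\item In $Y\sum_{m\geq0}T^m\psi(TY)^m$, only the $m=1$ term $\varphi(TY)$ meets the diagonal.
\end{itemize}

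This gives a self-contained argument that shows where each hypothesis enters. It also justifies something the paper's subsequent expansion tacitly needs. Your factorization $F(TY,Y)=Y^e(1-T\psi(TY))^eG(TY,Y)$ exhibits $F(TY,Y)$ as $Y^e$ times a unit whose constant term is $G(0,0)=\partial_{Y^e}F(0,0)/e!=c$. Hence $1-F(TY,Y)/(cY^e)$ is a genuine power series with zero constant term, not merely a Laurent series, and the paper's geometric series converges.

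Two details should be stated explicitly in a write-up.
\begin{itemize}
\item Dividing by $e$ uses the standing assumption $\Char(\F)=0$ or $\Char(\F)>d\geq e$.
\item Since $1/(Y-\varphi(T))$ is not a power series, the Step~1 identity should be transported through $T\mapsto TY$ without denominators. Write $\partial_YF=(Y-\varphi)^{e-1}\bigl(eG+(Y-\varphi)\partial_YG\bigr)$, substitute, and cancel $Y^{e-1}(1-T\psi(TY))^{e-1}$ against $F(TY,Y)$. This yields exactly your two terms. Alternatively, note that $h(T,Y)\mapsto h(TY,Y)$ is an injective homomorphism of power series rings and so extends to fraction fields.
\end{itemize}
Also, the computation should take place in $K[[T,Y]]$ for a field $K\supseteq\F(\X)$ containing the roots, not in $\overline{\F}[[T,Y]]$, since the coefficients of $F$ lie in $\F[\X]$.
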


Before we prove~\cref{Closed form root}, we first show how to express the above diagonal expression as a power series and do some initial calculations which will be helpful in the proof.

Now, since $0$ is a root of $F(0,Y)$ with multiplicity $e$, we have
\[
c \;:=\; \frac{\partial_{Y^e}F(0,0)}{e!}\neq 0.
\]
Moreover, $1-F(TY,Y)/(cY^e)$ has zero constant coefficient as a Laurent series in $Y$ whose coefficients are power series in $T$, so the following geometric expansion is valid coefficient-wise:
\[
F(TY,Y)
=
c\,Y^{e}\!\left[1-\left(1-\frac{F(TY,Y)}{c\,Y^{e}}\right)\right]
\]
we use $\displaystyle\frac{1}{1-z} = \sum_{m \geq 0} z^{m}$ with $z = 1 - F(TY,Y)/(c\,Y^{e})$. This gives
\begin{align*}
\varphi(T)
&\;=\; \mathcal{D}\!\left[\frac{Y^{2}\,\partial_{Y}F(TY,Y)}{e\,\cdot F(TY,Y)}\right] \\[6pt]
&\;=\; \mathcal{D}\!\left[\frac{Y\,\partial_{Y}F(TY,Y)\,/\,Y^{e-1}}{e\,\cdot c\bigl[1 - (1-F(TY,Y)/(c\,\cdot Y^{e}))\bigr]}\right] \\[6pt]
&\;=\; \mathcal{D}\!\left[\frac{Y\,\partial_{Y}F(TY,Y)}{Y^{e-1}\,\cdot e\,\cdot c}\;\sum_{m\geq 0}\!\Bigl(1 - \tfrac{F(TY,Y)}{c\,\cdot Y^{e}}\Bigr)^{\!m}\right] \\[6pt]
&\;=\; \mathcal{D}\!\left[\sum_{m\geq 0}\!\Bigl(1 - \tfrac{F(TY,Y)}{c\,\cdot Y^{e}}\Bigr)^{\!m}\,\frac{Y\,\partial_{Y}F(TY,Y)}{Y^{e-1}\,\cdot e\, \cdot c}\right].
\end{align*}

For $m\geq 0$, expand
\[
\Bigl(1 - \tfrac{F(TY,Y)}{c\,\cdot Y^{e}}\Bigr)^{\!m}
\;=\; \frac{\bigl(c\,\cdot Y^{e}-F(TY,Y)\bigr)^{m}}{c^{m}\,\cdot Y^{em}},
\]
so the $m$-th summand becomes
\[
\frac{\bigl(c\,\cdot Y^{e}-F(TY,Y)\bigr)^{m}\,\partial_{Y}F(TY,Y)}{e\,\cdot c^{\,m+1}\,\cdot Y^{em+e-2}}.
\]

Therefore,
\begin{align*}
[T^{n}](\varphi)
&\;=\; [T^{n}Y^{n}]\!\left\{\sum_{m\geq 0}\!\Bigl(1 - \tfrac{F(TY,Y)}{c \,\cdot Y^{e}}\Bigr)^{\!m}\,\frac{Y\,\partial_{Y}F(TY,Y)}{Y^{e-1}\cdot\,e\cdot\,c}\right\} \\[6pt]
&\;=\; \frac{1}{e}\sum_{m\geq 0}\frac{1}{c^{m+1}}\,[T^{n}Y^{\,n+em+e-2}]\Bigl\{\bigl(c\,\cdot Y^{e}-F(TY,Y)\bigr)^{m}\,\partial_{Y}F(TY,Y)\Bigr\}.
\end{align*}

Finally, for any Laurent series $L(T,Y)$, we have $[{T^i Y^j}]\{L(T,Y)\} = [{T^i Y^{i+j}}]\{L(TY, Y)\}$. Hence, we have 
\begin{align*}
    \varphi(T)\;&=\; \frac{1}{e}\,\sum_{n \geq 0}\sum_{m\geq 0}\frac{1}{c^{m+1}}\,[T^nY^{\,em+e-2}]\Bigl\{\bigl(c\,\cdot Y^{e}-F(T,Y)\bigr)^{m}\,\partial_{Y}F(T,Y)\Bigr\} \cdot T^n\\
    &= \frac{1}{e}\,\sum_{m\geq 0}\frac{1}{c^{m+1}}\,[Y^{\,em+e-2}]\Bigl\{\bigl(c\,\cdot Y^{e}-F(T,Y)\bigr)^{m}\,\partial_{Y}F(T,Y)\Bigr\}
\end{align*}

    Finally setting $c=\displaystyle\frac{\partial_{Y^e}F(0,0)}{e!}$ we get $$\varphi(T)=\sum_{m\geq 0}[Y^{em+e-2}]\left\{\left(\frac{e!}{\partial_{Y^{e}}F(0,0)}\right)^{m+1}\frac{\partial_{Y}F(T,Y)}{e} \cdot \left(\frac{\partial_{Y^{e}}F(0,0)}{e!}Y^e-F(T,Y)\right)^m\right\} $$

We now justify the finite cutoff used in the definition of $R_e$. Let
$Q(T,Y):=cY^e-F(T,Y)$. Since $0$ is a root of $F(0,Y)$ of multiplicity $e$,
every $T^0$-term of $Q$ has $Y$-degree at least $e+1$, while every remaining
term of $Q$ has positive $T$-degree. Consider a monomial contributing to
\[
[Y^{em+e-2}]\{Q(T,Y)^m\partial_YF(T,Y)\}.
\]
If $s$ of the $m$ factors of $Q$ contribute positive $T$-degree terms,
then the other $m-s$ factors contribute $Y$-degree at least $e+1$. Hence, for
the total $Y$-degree to be at most $em+e-2$, we must have
\[
(m-s)(e+1)\leq em+e-2,
\]
and therefore $s\geq (m-e+2)/(e+1)$. Thus, if the corresponding term has
$T$-degree at most $D$, then certainly $s\leq D$ since each of those $s$ factors
already contributes at least one to the $T$-degree. This implies
\[
m\leq (e+1)D+e-2 \leq 2e(D+1).
\]
Consequently, terms with $m>2e(D+1)$ do not contribute to the truncation
modulo $T^{D+1}$.

Note for the validity of the entire calculation above we need $e!\neq0$, for this purpose we will use the fact that $\Char(\F)=0$ or greater than $d$ and clearly $d\geq e$. Now we are ready to prove \cref{Closed form root}
\begin{proof}[Proof of \cref{Closed form root}]
Item (1) follows from the definition of $R_e(Z)$ and the fact that truncation and coefficient extraction can be carried out in constant depth (which follows from \cref{lem:circuit-truncation} and \cref{lem:circuit-coefficient-interpolation}). Moreover, the denominator $B(Z)$ can be chosen to be an appropriate power of $\partial_{Y^e}F(0,Z)$, and hence $B(\beta)\neq0$ because $\beta$ has multiplicity exactly $e$ in $F(0,Y)$. This proves item (2). We now prove item (3).

    Let $\hat{F}(T,Y)=F(T,Y+\beta)$. Then $\hat{F}(0,Y)$ has a root $0$ with multiplicity $e$, and the unique power series root corresponding to $0$ is $\hat{\varphi}(T):=\Phi_\beta(T)-\beta$.

    Now from the above calculation we have \begin{align*}
        &\Phi_\beta(T)\Trunc T^{D+1}\\
        =&\hat{\varphi}(T)\Trunc T^{D+1}+\beta\\
        =&\sum_{m\geq 0}[Y^{em+e-2}]\Big\{\Big(\frac{e!}{\partial_{Y^{e}}\hat{F}(0,0)}\Big)^{m+1}\frac{\partial_{Y}\hat{F}(T,Y)}{e}\Big(\frac{\partial_{Y^{e}}\hat{F}(0,0)}{e!}Y^e-\hat{F}(T,Y)\Big)^m\Big\}\Trunc T^{D+1}+\beta\\
        =&\beta+\sum_{m\geq 0}[Y^{em+e-2}]\Big\{\Big(\frac{e!}{\partial_{Y^{e}}{F}(0,\beta)}\Big)^{m+1}\frac{\partial_{Y}{F}(T,Y+\beta)}{e}\Big(\frac{\partial_{Y^{e}}{F}(0,\beta)}{e!}Y^e-{F}(T,Y+\beta)\Big)^m\Big\}\Trunc T^{D+1}\\
        =&\beta+\sum_{m\geq 0}^{2e(D+1)}[Y^{em+e-2}]\Big\{\Big(\frac{e!}{\partial_{Y^{e}}{F}(0,\beta)}\Big)^{m+1}\frac{\partial_{Y}{F}(T,Y+\beta)}{e}\Big(\frac{\partial_{Y^{e}}{F}(0,\beta)}{e!}Y^e-{F}(T,Y+\beta)\Big)^m\Big\}\Trunc T^{D+1}\\
        =&R_e(\beta)\Trunc T^{D+1}
    \end{align*}
\end{proof}

\section{Omitted Proofs from \cref{sec:sparse-polynomials}}
\label{app:sparse-poly}

We give the omitted proofs from \cref{sec:sparse-polynomials} below.

\begin{proof}[Proof of \cref{cor:factor_sparse_monomial_division}]
    It will be convenient to introduce the following notation. Let
    $P(n,s,d)$ denote the class of all polynomials $f \in \F[\X_n]$ such that
    $f$ divides a non-zero $s$-sparse polynomial in $\F[\X_n]$ of individual
    degree at most $d$.

    Since $f \in P(n,s,d)$ and $f \prec_d g$, the individual degree of $g$ is at
    most $2d$. Write
    $g(\X_n)=\sum_{j=0}^{2d}
    g_j(X_1,\dots,X_{i-1},X_{i+1},\dots,X_n)X_i^j$.
    By \cref{lem:circuit-coefficient-interpolation}, we can construct, in time
    $\poly(m,d)$, circuits $C_0,\dots,C_{2d}$ computing
    $g_0,\dots,g_{2d}$ respectively.

    Let $e$ be the largest power of $X_i$ dividing $g$. Then
    $g_0=g_1=\cdots=g_{e-1}=0$ and $g_e\neq 0$. Moreover, since
    $f \prec_d g$ and $f$ is not divisible by a monomial, $g_e$ is a monomial
    multiple of $f|_{X_i=0}$. We claim that $g_e\in P(n-1,s,2d)$.
    Indeed, since $f \in P(n,s,d)$, there is a non-zero $s$-sparse polynomial
    $F \in \F[\X_n]$ of individual degree at most $d$ such that $f \mid F$.
    Write $F=fH$, and let $H_r$ be the first non-zero coefficient of $H$ when
    viewed as a polynomial in $X_i$. Since $f$ is not divisible by $X_i$,
    $f|_{X_i=0}$ is non-zero, and the coefficient of $X_i^r$ in $F$ is
    $f|_{X_i=0}\cdot H_r$. This coefficient is a non-zero $(n-1)$-variate
    $s$-sparse polynomial of individual degree at most $d$ divisible by
    $f|_{X_i=0}$. Since $g_e$ differs from $f|_{X_i=0}$ by a monomial of
    individual degree at most $d$, it divides a non-zero $s$-sparse polynomial
    of individual degree at most $2d$.

    We now test the coefficient polynomials in order. Let
    $S_{n-1,s,2d}$ be the Klivans--Spielman hitting set from
    \cref{thm:sparse_pit}. For each $j=0,\dots,2d$, we evaluate $C_j$ on all
    points of $S_{n-1,s,2d}$. If all these evaluations are zero, we continue to
    the next value of $j$; otherwise, we output $j$.

    This procedure outputs $e$. Indeed, for every $j<e$, we have $g_j=0$, so
    all evaluations of $g_j$ vanish. On the other hand, $g_e$ is a
    non-zero polynomial in $P(n-1,s,2d)$, and therefore
    \cref{thm:sparse_pit} implies that $g_e$ is non-zero
    on some point of $S_{n-1,s,2d}$. Thus $e$ is the first index for which the test
    finds a non-zero evaluation.

    The running time is $\poly(n, s, d, m)$: there are at most
    $2d+1$ coefficient polynomials to test, the circuits for them are obtained
    in time $\poly(m,d)$, and each hitting-set construction and evaluation
    uses only $\poly(n,s,d,m)$ time by \cref{thm:sparse_pit}. Hence
    the largest power of $X_i$ dividing $g$ can be computed deterministically in
    $\poly(n,s,d,D,m)$ time.
\end{proof}

\begin{proof}[Proof of \cref{lem:division-elimination}]
    We use Strassen's division elimination~\cite{Strassen73}, together with the
    Klivans--Spielman hitting set from \cref{thm:sparse_pit}, to first
    reduce to the case where the denominator has a non-zero constant term.

    By the standard first step in Strassen's argument, we may move all divisions
    to the top. Thus, in deterministic $\poly(m,d)$ time, we obtain
    division-free circuits $C_1$ and $C_2$ of size $\poly(m,d)$ and depth at most
    $\Delta+\cO(1)$ such that $C$ computes $C_1/C_2$. Moreover, $C_2$ is a
    product of the denominator polynomials appearing in the division gates of
    $C$. In particular, by assumption, $C_2$ is a product of $s$-sparse
    polynomials of individual degree at most $d$.

    We now find a shift $\balpha \in \F^n$ such that $C_2(\balpha)\neq 0$.
    If $C_2=Q_1\cdots Q_t$ is the product decomposition of $C_2$ into
    sparse denominator factors, then $t\leq m$, and $C_2$ belongs to the class
    of products of at most $m$ many $s$-sparse polynomials of individual degree
    at most $d$. By \cref{thm:sparse_pit}, the hitting set
    $S_{n,s,d,m}$ contains a point $\balpha$ such that $C_2(\balpha)\neq0$.

    Consider the shifted quotient
    $C'(\X_n)=C(\X_n+\balpha)=C_1(\X_n+\balpha)/C_2(\X_n+\balpha)$. Its
    denominator has non-zero constant term, since
    $C_2(\balpha)\neq 0$. Therefore, after normalizing the denominator by its
    constant term, we may write it as $1+h(\X_n)$, where $h$ has zero constant
    term. Since $C'$ computes the polynomial $f(\X_n+\balpha)$, whose total
    degree is at most $nd$, Strassen's truncation argument applies: the inverse
    of $1+h$ is needed only up to total degree $nd$, and hence can be replaced by
    the truncated geometric series
    $1-h+h^2-\cdots+(-h)^{nd}$. Using \cref{lem:circuit-truncation}, this
    truncated product, namely $C_1(\X_n+\balpha) \cdot (1-h+h^2-\cdots+(-h)^{nd})$, can be computed by a division-free circuit of size
    $\poly(m,d)$ and depth increased by only $\cO(1)$. Thus we obtain a
    division-free circuit for $f(\X_n+\balpha)$.

    Finally, substituting $\X_n-\balpha$ for $\X_n$ in this division-free
    circuit gives a division-free circuit computing $f(\X_n)$. This substitution
    increases the size only polynomially and does not increase the depth by more
    than a constant. Therefore the resulting circuit has size $\poly(m,d)$ and
    depth at most $\Delta+\cO(1)$, and the whole construction runs in
    $\poly(n,s,m,d)$ time.
\end{proof}

\end{document}